\documentclass[10pt,a4paper]{article}

\usepackage{a4wide}

\usepackage[dvipsnames]{xcolor}

\usepackage{paralist}

\title{A decision procedure for\\ intuitionistic modal logic IS4 (and IK4)}
\author{Marianna Girlando, Roman Kuznets, Sonia Marin, 
  Lutz Stra\ss burger}

\usepackage{amsmath} 
\usepackage{amssymb} 
\usepackage{amsthm}
\usepackage{xspace}
\usepackage{pdfpages}
\usepackage{cmll} 

\usepackage{latexsym} 
\usepackage{colonequals} 
\usepackage{wasysym}
\usepackage{standalone}
\usepackage{virginialake}
\vlnosmallleftlabels

\usepackage{graphicx} 
\usepackage{tikz}
\usetikzlibrary{decorations.pathreplacing,patterns}
\usepackage{tikz-qtree}
\usepackage[shortlabels]{enumitem}
\usepackage{multirow}

\usepackage{thmtools} 
\usepackage{thm-restate}

\usepackage[hidelinks]{hyperref} 

\newdimen\mydisplayskip
\newenvironment{smallequation*}
{\par\nobreak\vskip\mydisplayskip\noindent\bgroup\small\csname equation*\endcsname}{\csname endequation*\endcsname\egroup}

\theoremstyle{plain}
\newtheorem{theorem}{Theorem}[section]
\newtheorem{proposition}[theorem]{Proposition}
\newtheorem{lemma}[theorem]{Lemma}
\newtheorem{corollary}[theorem]{Corollary}
\theoremstyle{definition}
\newtheorem{observation}[theorem]{Observation}
\newtheorem{notation}[theorem]{Notation}
\newtheorem{definition}[theorem]{Definition}
\newtheorem{construction}[theorem]{Construction}
\newtheorem{algorithm}[theorem]{Algorithm}
\newtheorem{remark}[theorem]{Remark}
\newtheorem{example}[theorem]{Example}

\DeclareFontFamily{U} {MnSymbolC}{}
\DeclareFontShape{U}{MnSymbolC}{m}{n}{
	<-6>  MnSymbolC5
	<6-7>  MnSymbolC6
	<7-8>  MnSymbolC7
	<8-9>  MnSymbolC8
	<9-10> MnSymbolC9
	<10-12> MnSymbolC10
	<12->   MnSymbolC12}{}
\DeclareFontShape{U}{MnSymbolC}{b}{n}{
	<-6>  MnSymbolC-Bold5
	<6-7>  MnSymbolC-Bold6
	<7-8>  MnSymbolC-Bold7
	<8-9>  MnSymbolC-Bold8
	<9-10> MnSymbolC-Bold9
	<10-12> MnSymbolC-Bold10
	<12->   MnSymbolC-Bold12}{}
\DeclareSymbolFont{MnSyC}         {U}  {MnSymbolC}{m}{n}
\DeclareMathSymbol{\diamondplus}{\mathbin}{MnSyC}{124}
\DeclareMathSymbol{\boxtimes}{\mathbin}{MnSyC}{117}

\vlnostructuressyntax
\newcommand{\vlderivationauxnc}[1]{#1{\box\derboxone}\vlderivationterm}
\newcommand{\vlderivationnc}{\vlderivationinit\vlderivationauxnc}
\makeatletter
\newbox\@conclbox
\newdimen\@conclheight
\newcommand*{\A}{\mathcal{A}}
\newcommand{\quand}{\quad\mbox{and}\quad}
\newcommand{\qquand}{\qquad\mbox{and}\qquad}

\newcommand{\proviso}[1]{\mbox{\scriptsize #1}}
\newcommand{\defn}[1]{{\textit{\textbf{#1}}\/}}

\newcommand{\lefmark}{\bullet}
\newcommand{\rigmark}{\circ}
\def\lef#1{#1^\bullet}
\def\rig#1{#1^\circ}

\def\wbox{\BOX}
\def\wdia{\DIA}

\newcommand{\w}{\rn{weak}}

\newcommand{\weakr}{\rn{weak}}
\newcommand{\necr}{\rn{nec}}
\newcommand{\mpr}{\rn{mp}}

\newcommand{\monl}{\lef{\rn{mon}}}
\newcommand{\idr}{\rn{id}}

\newcommand*{\ax}[1]{\mathsf{#1}}
\newcommand*{\kax}[1][]{\ax{k_{#1}}}

\newcommand{\vax}{\ax{4}}

\newcommand{\tax}{\ax{t}}
\newcommand*{\IK}{\mathsf{IK}}
\newcommand*{\K}{\mathsf{K}}
\newcommand*{\IKfour}{\mathsf{IK4}}

\newcommand*{\ISfour}{\mathsf{IS4}}

\newcommand*{\Sfour}{\mathsf{S4}}

\newcommand*{\NOT}{\neg}
\newcommand*{\AND}{\mathbin{\wedge}}

\newcommand*{\cand}{\mathbin{\wedge}}
\newcommand*{\TOP}{\mathord{\top}}
\newcommand*{\OR}{\mathbin{\vee}}

\newcommand*{\BOT}{\mathord{\bot}}
\newcommand*{\IMP}{\mathbin{\supset}}

\newcommand*{\BOX}{\mathord{\Box}}

\newcommand*{\DIA}{\mathord{\Diamond}}

\newcommand{\lseq}[3]{#1 , #2 \SEQ #3}
\newcommand{\B}{\mathcal{R}}
\newcommand{\Left}{\Gamma} 
\newcommand{\Right}{\Delta} 
\newcommand*{\fm}[1]{{\color{notgreen}#1}}
\newcommand*{\lb}[1]{{\color{blue}#1}}
\newcommand*{\lbG}[2]{{\color{blue}#1^{\sq{#2}}}}

\newcommand*{\lbc}[1]{{\color{blue}#1}}

\newcommand{\fmb}[1]{\fm{#1^{{\color{black}\bullet}}}}
\newcommand{\fmw}[1]{\fm{#1^{{\color{black}\circ}}}}
\newcommand*{\rel}{R}
\newcommand{\rrel}{R}
\newcommand{\lrel}[1]{R^{\leftrightarrow}_{\sq{#1}}}

\newcommand{\grel}[1]{R_{\sq #1}}
\newcommand*{\labels}[2]{\lb{#1}\mathord{:}\fm{#2}}
\newcommand*{\labelsb}[2]{\lb{#1}\mathord{:}\fmb{#2}}
\newcommand*{\labelsw}[2]{\lb{#1}\mathord{:}\fmw{#2}}
\newcommand*{\accs}[2]{\lb{#1}R\lb{#2}}

\newcommand{\sle}[1]{\mathbin{\le_{\sq{#1}}}}
\newcommand{\sleq}[1]{\mathbin{=_{\sq{#1}}}}
\newcommand{\nsle}[1]{\mathbin{\not\le_{\sq{#1}}}}
\newcommand{\srel}[1]{\mathbin{\rel_{\sq{#1}}}}
\newcommand{\srelx}[1]{\mathbin{\rel_{#1}}}
\newcommand{\svis}[1]{\mathbin{\Rsh_{\!\sq{#1}}}}
\newcommand{\svisx}[1]{\mathbin{\Rsh_{\!{#1}}}}
\newcommand*{\accsq}[2]{\lb{#1}\srel{G}\lb{#2}}
\newcommand{\accsqq}[3][G]{\lb{#2}\srel{#1}\lb{#3}}
\newcommand{\accsqx}[3][G]{\lb{#2}\srelx{#1}\lb{#3}}

\newcommand*{\accsqh}[2]{\lb{#1}\mathord{R^{\leftrightarrow}_{\sq{G}}}\lb{#2}}
\newcommand*{\accsqc}[2]{\lbc{#1}\mathord{\srel{G}}\lbc{#2}}

\newcommand*{\accsqp}[3]{\lb{#1}\srel{#3}\lb{#2}}

\newcommand*{\futsq}[2]{\lb{#1}\sle{G}{\lb{#2}}}
\newcommand*{\futsqq}[3][G]{\lb{#2}\sle{#1}\lb{#3}}
\newcommand*{\vissq}[2]{\lb{#1}\svis{G}{\lb{#2}}}

\newcommand*{\vissqx}[3][G]{\lb{#2}\svisx{#1}{\lb{#3}}}
\newcommand*{\futsqc}[2]{\lbc{#1}\mathord{\sle{G}}{\lbc{#2}}}

\newcommand*{\futsqp}[3]{\lb{#1}\mathord{\sle{#3}}{\lb{#2}}}
\newcommand*{\nfutsq}[2]{\lb{#1}\nsle{G}{\lb{#2}}}

\newcommand*{\futs}[2]{\lb{#1}\mathord\le{\lb{#2}}}

\newcommand{\labelsof}[1]{\ell(\sq{#1})}
\newcommand{\toplabelsof}[1]{\ell^\top(\sq{#1})}
\newcommand{\labelsofx}[1]{\ell({#1})}

\newcommand{\emb}{{\color{black}\mathsf{e}}}
\newcommand{\embh}{{\color{black}\hat{\mathsf{e}}}}
\newcommand{\embof}[1]{{\color{black}\emb(}\lb{#1}\textcolor{black}{)}}
\newcommand{\embofh}[1]{{\color{black}\hat\emb(}\lb{#1}\textcolor{black}{)}}

\newcommand{\embofhn}[2]{{\color{black}\embh_{#1}}(\lb{#2})}

\newcommand{\SEQ}{\Longrightarrow}

\newcommand*{\rn}[1]  {\ensuremath{\mathsf{#1}}}
\newcommand*{\lab}{\mathsf{lab}}
\newcommand*{\rr}{\mathsf{r}}
\newcommand*{\lr}{\mathsf{p}}

\newcommand*{\labrn}[2][]  {\rn{#2}_{#1}}

\newcommand{\M}{\mathfrak{M}}

\newcommand{\F}{\mathfrak{F}}
\newcommand{\force}[3]{#1,\lb{#2}\Vdash\fm{#3}}
\newcommand{\nforce}[3]{#1,\lb{#2}\not\Vdash\fm{#3}}

\newcommand{\der}{\delta}
\newcommand{\hder}{\hat\delta}
\newcommand{\deri}{\delta}
\newcommand{\derib}{\vartheta}
\newcommand{\deric}{\xi}

\newcommand{\hderib}{\hat\vartheta}

\newcommand{\height}[1]{|#1|}
\newcommand{\Rref}{\rel\rn{rf}}

\newcommand{\Rtr}{\rel\rn{tr}}

\newcommand{\Lref}{\mathord\le\rn{rf}}

\newcommand{\Ltr}{\mathord\le\rn{tr}}

\def\gG{\mathcal{G}}

\newcommand{\fone}{\rn{fc}}

\newcommand{\ftwo}{\rn{bc}}

\newcommand{\fhat}{{\rn{fbc}}}

\def\tuple#1{\langle#1\rangle}

\newcommand{\treeof}[1]{\mathsf{T}(\lb {#1})}

\newcommand{\lbeq}{\mathord\sim}

\newcommand{\seqsetredsymb}{\mathbin{\lower.25ex\rlap{$\rightsquigarrow$}\raise.25ex\hbox{$\rightsquigarrow$}}}

\newcommand{\hv}{\hat v}
\newcommand{\hx}{\hat x}
\newcommand{\hy}{\hat y}
\newcommand{\hz}{\hat z}

\newcommand{\hG}{{\hat G}}

\newcommand{\hL}{{\hat L}}

\newcommand{\hw}{{\hat w}}

\newcommand{\ssatof}[1]{\lceil{\sq{#1}}\rceil}
\newcommand{\satof}[1]{\left\lceil{#1}\right\rceil}
\newcommand{\liftssatof}[1]{{\sq{#1}}^{\Uparrow}}
\newcommand{\liftsatof}[1]{{#1}^{\Uparrow}}

\newcommand{\Nat}{\mathbb{N}}

\newcommand{\set}[1]{\{#1\}}

\newcommand{\sq}[1]{\mathcal{#1}}

\newcommand{\sqset}[1]{\rlap{$\mathcal{#1}$}\mkern0.75mu\mathcal{#1}}

\makeatletter
\newcommand*{\overRightarrow}{\mathpalette{\overarrow@\Rightarrowfill@}}
\newcommand*{\overLeftarrow}{\mathpalette{\overarrow@\Leftarrowfill@}}
\makeatother

\newcommand{\lliftf}[3]{\sq{#1}\mathord\uparrow^{\labels{#2}{#3}}}

\newcommand{\layerof}[1]{L_{\lb{#1}}}
\newcommand{\layerxof}[2]{L_{\lb{#1}}^{\sq{#2}}}
\newcommand{\layerx}[2]{L_{#1}^{\sq{#2}}}
\newcommand{\layerxxof}[2]{L_{\lb{#1}}^{#2}}
\newcommand{\clusterof}[1]{\lbc{C_{\lb{#1}}}}

\newcommand{\futuresof}[1]{\lb{#1\mathord{\uparrow}}} 

\newcommand{\sizeof}[1]{\left|#1\right|}

\newcommand{\modelof}[1]{\M_{\sq{#1}}}
\newcommand{\mmodelof}[1]{\M_{#1}}

\newcommand{\blackf}[3]{\labels{\!#1}{#2}^{\bullet}\in\sq{#3}}
\newcommand{\whitef}[3]{\labels{\!#1}{#2}^{\circ}\in\sq{#3}}

\newcommand{\blackff}[2]{\labels{\!#1}{#2}^{\bullet}}
\newcommand{\whiteff}[2]{\labels{\!#1}{#2}^{\circ}}
\newcommand{\simul}{\rn{S}}

\newcommand*{\labISf}{\lab\ISfour}
\newcommand*{\labIKf}{\lab\IKfour}
\newcommand{\labISfs}{\labISf^{\mathsf{s}}}
\newcommand{\labIKfs}{\labIKf^{\mathsf{s}}}
\newcommand{\labISloop}{\labISf^{\circlearrowleft}}
\newcommand{\labISloops}{\labISf^{\circlearrowleft}_{\mathsf{s}}}
\newcommand{\labISloopp}{\labISf^{\circlearrowleft}_{\mathsf{e}}}
\newcommand{\labISnoloop}{\labISf^{\rightrightarrows}}

\newcommand{\sysS}{\mathsf{S}}
\newcommand{\id}{\labrn{id}}

\newcommand{\tr}[1]{\mathcal{#1}}

\newcommand{\shrinkr}{\mathsf{shrink}}
\newcommand{\satr}{\mathsf{s}\text{-}\mathsf{sat}}
\newcommand{\ssatr}{\mathsf{s}\text{-}\mathsf{sat}^\ast}
\newcommand{\bdiam}{\lef\DIA_\mathsf{m}}
\def\rigsym{\circ}
\newcommand{\lliftfw}[3]{\sq#1\mathord\uparrow^{\labelsw{#2}{#3}}}
\newcommand{\liftrim}{\rig\IMP_\mathsf{m}}
\newcommand{\liftrbox}{\rig\BOX_\mathsf{m}}
\newcommand{\liftr}{\rig{\mathsf{r}}_\mathsf{m}}

\newcommand{\eqlab}[2]{\lb{#1} \lbeq \lb{#2}}

\newcommand{\LEAF}{\mbox{\scriptsize\textleaf}}
\newcommand{\TRI}{\triangledown}

\newcommand{\labelsubst}[2]{ [\lb{#1} / \lb{#2}]}
\newcommand{\loopr}{\mathsf{loop}}
\newcommand{\looprule}{\mathsf{loop}}
\newcommand{\dialoopr}{\mathsf{loop}_{\DIA}}
\newcommand{\boxloopr}{\mathsf{loop}_{\BOX}}

\newcommand{\Xloopr}{\mathsf{loop}_{\mathsf X}}
\newcommand{\Rloopr}{\mathsf{loop}_{\mathsf R}}
\newcommand{\Sloopr}{\mathsf{loop}_{\mathsf S}}
\newcommand{\imploopr}{\mathsf{loop}_{\IMP}}
\newcommand{\leafloopr}{\mathsf{loop}_{\LEAF}}

\newcommand{\triangleloopr}{\mathsf{loop}_{\TRI}}

\newcommand{\lford}{\prec}
\newcommand{\lord}[2]{\lb{#1}\prec\lb{#2}}

\newcommand{\cF}{\mathcal{F}}

\newcommand{\unhappyof}[1]{U(\sq{#1})}
\newcommand{\liftof}[1]{\sq{#1}\mathord{\uparrow}\mkern-3.5mu\mathord{\uparrow}}

\newcommand{\interval}[2]{\left[\lb{#1}..\lb{#2}\right)}

\newcommand{\trans}[1]{#1^\circlearrowleft}

\newcommand{\clsubsymb}{\Subset}

\newcommand{\clsubG}[4]{\lbc{#1^{\sq{#2}}}\clsubsymb\lbc{#3^{\sq{#4}}}}

\newcommand{\lseqto}{\mathbin{\stackrel{\Uparrow}{\longrightarrow}}}
\newcommand{\sseqto}{\mathbin{\stackrel{\satof{\cdot}}{\longrightarrow}}}

\newcommand{\evils}{\sigma}

\newcommand{\ldepthof}[1]{\mathsf{d}(\lb{#1})}
\newcommand{\ldepthi}[2][i]{\mathsf{d}_{#1}(#2)}
\newcommand{\lreldepth}[2]{\partial(\lb{#1},\lb{#2})}
\newcommand{\lreldepthij}[3]{\partial_{#1}^{#2}(#3)}

\newcommand{\lreldepthi}[2][i]{\hat\partial_{#1}(#2)}

\newcommand{\subf}[1]{\mathsf{s}(\fm{#1})}

\newcommand{\depthf}[1]{\mathsf{D}(\fm{#1})}
\newcommand{\heightf}[1]{\mathsf{H}(\fm{#1})}
\newcommand{\maxlayerf}[1]{\mathsf{L}(\fm{#1})}

\newcommand{\lht}[1]{\mathsf{lht}(#1)}

\newcommand{\sr}{\mathsf{s}}

\newcommand{\below}{\mathbin{\ll}}
\newcommand{\sabove}{\mathbin{\gg}}
\newcommand{\notbelow}{\mathbin{\not\ll}}
\newcommand{\notsabove}{\mathbin{\not\gg}}
\newcommand{\psbelow}{\below}

\newcommand{\lonion}{\curvearrowright}
\newcommand{\ldom}{\rightharpoonup}
\newcommand{\ldominv}{\leftharpoonup}
\newcommand{\lsib}{\curlyvee}
\newcommand{\topeq}{\coh}
\newcommand{\lbigger}{\succ}

\newcommand{\looprank}[1]{\mathrm{rk}(#1)}
\newcommand{\loopsib}[1]{\#^\curlyvee#1}

\newcommand{\kurucz}{Kurucz's formula\xspace}

\newcommand{\satofgen}{\lceil\cdot\rceil}
\newcommand{\liftssatofgen}{{\Uparrow}}

\definecolor{notgreen}{rgb}{.1,.6,.1}
\definecolor{lutzgreen}{rgb}{.2,.5,.3}

\begin{document}
	
\maketitle

\tableofcontents
\section{Introduction}
\label{sec:intro}
%
In this paper, we show that the two intuitionistic modal logics $\ISfour$~and~$\IKfour$ are decidable.
We provide a constructive decision procedure, that, given a formula, produces either  a proof showing the formula to be valid or a finite countermodel falsifying the formula. This means, we also show the \emph{finite model property} for both logics.

Our procedure is based on a \emph{fully labelled proof system}  presented in~\cite{mar:mor:str:2021}.
This system inherits the advantages of labelled systems for intuitionistic propositional logic and for classical modal logics, in particular, all inference rules are invertible 
and there is a direct correspondence between sequents and models.
This choice was crucial to the emergence of the solution, which also required an intricate organization of  proof search and loop-checking to reach a full decision procedure. 

A preliminary version of this work has been presented at the LICS 2023 conference~\cite{girlando2023intuitionistic}. However, the proof search algorithm described in that work is incorrect. More precisely, it does not terminate because of an insufficient loop condition. The mistake has been found by Agi Kurucz\footnote{Personal communication.}, and we discuss her counterexample later in Section~\ref{sec:agi-ex}. 

The main purpose of this work is to provide a correction to~\cite{girlando2023intuitionistic}, which is the reason that this introduction is quite short. For a more detailed discussion on related work, we refer the reader to~\cite{girlando2023intuitionistic}.

\bigskip

\textbf{Organization.}
In Section~\ref{sec:prelims} we first introduce the logic $\ISfour$ and explain why proving decidability turned out to be so difficult. Then we recall the fully labelled proof system for $\ISfour$ that will be the basis for our proof search algorithm. Its key property is that all rules are invertible. Additionally, we will show how these sequents relate to the birelational models of $\ISfour$, and we introduce some structural properties of our sequents that we use in the course of this paper. 

In Section~\ref{sec:macro-rules} we give all the details of our proof search algorithm. There are three main ingredients: First, we introduce macro-rules that are derivable in our fully labelled proof system and that keep the structure of the sequents in a certain nice way. Second, the proof search alternates between two phases that we call \emph{saturation} and \emph{lifting}. And third, we introduce \emph{unsound} $\loopr$-rules, that allow the continuation of the proof search after a loop has been detected. We exemplify the proof search algorithm with Kurucz's formula in Section~\ref{sec:agi-ex}. 

In Section~\ref{sec:termination}, we prove that our algorithm always terminates.
In Section~\ref{sec:countermodel}, we demonstrate  how to retrieve a countermodel from a failed proof search, and by this showing completeness of our algorithm.
In Section~\ref{sec:soundness} we show soundness by eliminating all $\loopr$-rule instances from the derivation constructed by our algorithm.

Finally, in Section~\ref{sec:IK4}, we explain the necessary modification needed to adapt our algorithm to the case of the modal logic $\IKfour$.

\bigskip

\textbf{Acknowledgements.} We are deeply grateful to Agi Kurucz for finding the counterexample and communicating it to us. We also thank the reviewers of the conference paper for their helpful comments and suggestions. 
Finally, we wish to thank Anupam Das, David Fern\'andez Duque, Marianela Morales, Nicola Olivetti, Elaine Pimentel, Revantha Ramanayake, Alex Simpson, and Yde Venema  for insightful discussions. \\
Roman Kuznets was supported by ERDF-Project Knowledge in the Age of Distrust, project  No.~CZ.02.01.01/00/23$\_$025/0008711. 

\bigskip 

\textbf{Usage of AI.} No AI tools were used in the preparation of the paper, neither in assistance with the proofs nor in editing the text or the figures. 

\section{Preliminaries}
\label{sec:prelims}
	\subsection{What is $\ISfour$?}
	\label{sec:IS4}
	
	In this paper, formulas are denoted 
	by capital letters~$\fm A$, $\fm B$, $\fm C$,~\ldots\ and are constructed from a countable set~$\mathcal{A}$ of \emph{atomic propositions} (denoted by lowercase~$\fm a$, $\fm b$, $\fm c$,~\ldots) as
	\[
	\scalebox{1}{$
		\fm A \coloncolonequals
		\fm \BOT \mid \fm a \mid \fm{(A \AND A)} \mid \fm{(A \OR A)} \mid  \fm{(A \IMP A)} \mid \fm{\BOX A} \mid \fm{\DIA A}$}
	\]
	
	Intuitionistic modal logic~$\K$, or $\IK$~for short, is obtained from
	intuitionistic propositional logic by extending the syntax with 
	two modalities~$\wbox$~and~$\wdia$, standing most generally for \emph{necessity} and \emph{possibility} respectively, and by adding \emph{$\kax{}$-axioms}
	 \begin{equation}
		 	\label{eq:kax}
		 	\begin{array}{rc}
			 		\kax[1]\colon&\fm{\BOX(A\IMP B)\IMP(\BOX A\IMP\BOX B)}\\
			 		\kax[2]\colon&\fm{\BOX(A\IMP B)\IMP(\DIA A\IMP\DIA B)}\\
			 		\kax[3]\colon&\fm{\DIA(A\OR B)\IMP(\DIA A\OR\DIA B)}\\
			 		\kax[4]\colon&\fm{(\DIA A\IMP \BOX B)\IMP\BOX(A\IMP B)}\\
			 		\kax[5]\colon&\fm{\DIA\BOT\IMP\BOT}\\
			 	\end{array}
		 \end{equation}

	A formula is a theorem of~$\IK$ if{f} it is derivable from this set via the rules of \emph{necessitation} and \emph{modus ponens}:
	\[
	\vlinf{\necr}{}{\fm{\wbox A}}{\fm A}
	\qquand
	\vliiinf{\mpr}{}{\fm B}{\fm A}{}{\fm{A\IMP B}}
	\quad
	\]
	Intuitionistic modal logic~$\Sfour$, or $\ISfour$~for short, is obtained from~$\IK$ by
	adding  axioms
	\begin{equation}
		\label{eq:vax}
		\begin{array}{r@{\;}l}
			\vax\colon&
			\fm{(\wdia\wdia A\IMP\wdia A)\cand(\wbox A\IMP\wbox\wbox A)}
			\\ 
			\tax\colon&
			\fm{(A\IMP\wdia A)\cand(\wbox A\IMP A)}
		\end{array}
	\end{equation}
	
	Note that in the classical case  axioms~$\kax[2]$--$\kax[5]$
	in~\eqref{eq:kax} would follow from~$\kax[1]$, but due to the lack of
	De~Morgan duality, this is not the case in intuitionistic
	logic. Similarly, in~\eqref{eq:vax} both conjuncts are
	needed because they do not follow from each other as in the classical
	case.
	
	Let us now recall
	the \emph{birelational models}~\cite{plotkin:stirling:86,ewald:86} for
	intuitionistic modal logics, which combine the Kripke
	semantics for intuitionistic propositional logic and
	classical modal logics.
	
	\begin{definition}
		A \defn{birelational frame}~$\F$ is a triple~$\langle W, \rel, \le \rangle$
		of a nonempty set~$W$ of \defn{worlds}  equipped with an \defn{accessibility relation}~$\rel$ and a preorder~$\le$ (i.e.,~a reflexive and transitive relation) satisfying:
		\begin{enumerate}
			\item[($\fone$)] \emph{Forward confluence}: For all~$\lb x, \lb y, \lb z \in W$, if $\futs xz$ and $\accs xy$, there exists~$\lb u \in W$ such that $\accs zu$ and $\futs yu$ (see figure below right);
			\item[($\ftwo$)] \emph{Backward confluence}: For all~$\lb x, \lb y, \lb z \in W$, if $\accs xy$ and $\futs yz$, there exists~$\lb u \in W$ such that $\futs xu$ and $\accs uz$ (see figure below left).
		\end{enumerate}	
		\begin{center}
				
	\begin{tikzpicture}[thick, every node/.style={scale=1.2}]

		\tikzstyle{node}=[circle,draw]
		\tikzstyle{s-node}=[circle,draw,inner sep=2pt]
		\tikzstyle{nonode}=[inner sep=0pt]
		\tikzstyle{access}=[->,blue]
		\tikzstyle{future}=[->,dashed]
		\tikzstyle{deleted}=[->,gray]
		\tikzstyle{loop}=[->,red]
		
		\node[] (3) at (5,3) [] {$\ftwo$};
		
		\node[label=below :{}] (3) at (4,2) [node, 
		] {$x$};
		\node[label=below :{}] (4) at (6,2) [node,
		] {$y$};
		\node[label=above :{}] (10) at (4,4) [node, 
		] {$u$};
		\node[label=above :{}] (11) at (6,4) [node, 
		] {$z$};
		
		\draw[access] (3) edge[below] node {$R$} (4);
		\draw[access,dotted] (10) -- (11);

		\draw[future,dotted] (3) -- (10);
		\draw[future] (4) edge[right] node {$\le$}  (11);
	\end{tikzpicture}

\hspace*{2cm}

\begin{tikzpicture}[thick, every node/.style={scale=1.2}]

	\tikzstyle{node}=[circle,draw]
	\tikzstyle{s-node}=[circle,draw,inner sep=2pt]
	\tikzstyle{nonode}=[inner sep=0pt]
	\tikzstyle{access}=[->,blue]
	\tikzstyle{future}=[->,dashed]
	\tikzstyle{deleted}=[->,gray]
	\tikzstyle{loop}=[->,red]
	
	\node[] (3) at (5,3) [] {$\fone$};
	
	\node[label=below :{}] (3) at (4,2) [node, 
	] {$x$};
	\node[label=below :{}] (4) at (6,2) [node, 
	] {$y$};
	\node[label=above :{}] (10) at (4,4) [node,
	] {$z$};
	\node[label=above :{}] (11) at (6,4) [node, 
	] {$u$};
	
	\draw[access] (3) edge[below] node {$R$} (4);
	\draw[access,dotted] (10) -- (11);
	
	\draw[future] (3) edge[left] node {$\le$} (10);
	\draw[future,dotted] (4) -- (11);
\end{tikzpicture}
		\end{center}
	\end{definition}
	
	\begin{definition}
		\label{model}
		A \defn{birelational model}~$\M$ is a quadruple~$\langle W, \rel,\le,V \rangle$ with $\langle W, \rel, \le \rangle$ a birelational frame and $V\colon W \to 2^\mathcal{A}$ a \defn{valuation function}, that is, a function mapping each world~$\lb w$ to the subset of propositional atoms that are true at~$\lb w$, additionally subject to the \defn{monotonicity condition}:
		if $\futs w{w'}$, then $V(\lb w)\subseteq V(\lb{w'})$.
		
		We write $\force\M wa$ if{f} $\fm a \in V(\lb w)$ and recursively extend the relation~$\Vdash$ to all formulas following the rules for both intuitionistic and modal Kripke models: 
		\[
		\begin{array}{@{\;\;}l@{\;}c@{\;\;}l}
			\nforce\M w\BOT; & & \\
			\force\M w{A \AND B} & \mbox{if{f}} & \force\M wA \mbox{ and } \force\M wB;\\
			
			\force\M w{A \OR B} & \mbox{if{f}} & \force\M wA \mbox{ or } \force\M wB;\\
			
			\force\M w{A \IMP B} & \mbox{if{f}} & \mbox{for all } \lb{w'} \mbox{ with } \futs w{w'},\mbox{if $\force\M{w'}A$, then $\force\M{w'}B$};\\
			
			\force\M w{\BOX A} & \mbox{if{f}} & \mbox{for all } \lb{w'} \mbox{ and } \lb u \mbox{ with } \futs w{w'} \mbox{and } \accs{w'}u, \mbox{ we have } \force\M uA; \hfill \\ 
			
			\force\M w{\DIA A} & \mbox{if{f}} & \mbox{there exists } \lb u \mbox{ such that } \accs wu \mbox{ and } \force\M uA.
		\end{array}
		\]
	\end{definition}
	
	From the monotonicity of valuation function~$V$, we get the monotonicity property for relation~$\Vdash$:
	
	\begin{proposition}[Monotonicity] 
		For any formula~$\fm A$ and for any~$\lb w, \lb{w'} \in W$, if $\futs w{w'}$ and $\force\M wA$, then $\force\M{w'}A$.
	\end{proposition}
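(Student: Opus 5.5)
We proceed by structural induction on the formula~$\fm A$, proving simultaneously for all pairs of worlds that $\futs w{w'}$ and $\force\M wA$ imply $\force\M{w'}A$. The base cases are immediate. For~$\fm\BOT$ the claim is vacuous, since $\nforce\M w\BOT$. For an atom~$\fm a$, $\force\M wa$ means $\fm a\in V(\lb w)$, and the monotonicity condition on~$V$ gives $\fm a\in V(\lb{w'})$.

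For conjunction and disjunction we apply the induction hypothesis to the immediate subformulas. If $\force\M w{A\AND B}$, then both conjuncts hold at~$\lb w$, hence at~$\lb{w'}$. If $\force\M w{A\OR B}$, then one disjunct holds at~$\lb w$, hence at~$\lb{w'}$.

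The cases~$\fm{A\IMP B}$ and~$\fm{\BOX A}$ need no induction hypothesis; they follow from transitivity of~$\le$, since both clauses quantify over all $\le$-successors.
\begin{itemize}
\item Suppose $\force\M w{A\IMP B}$ and $\futs w{w'}$. Take any~$\lb{w''}$ with $\futs{w'}{w''}$ and $\force\M{w''}A$. Then $\futs w{w''}$ by transitivity, so $\force\M{w''}B$.
\item Suppose $\force\M w{\BOX A}$. Take any $\lb{w''}$ and~$\lb u$ with $\futs{w'}{w''}$ and $\accs{w''}u$. Again $\futs w{w''}$ by transitivity, so $\force\M uA$.
\end{itemize}

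The only case that uses the frame conditions is~$\fm{\DIA A}$, and we expect it to be the main (though still short) step. Suppose $\force\M w{\DIA A}$, so there is~$\lb u$ with $\accs wu$ and $\force\M uA$, and suppose $\futs w{w'}$. We apply forward confluence~($\fone$) with $\lb x:=\lb w$, $\lb y:=\lb u$ and $\lb z:=\lb{w'}$. This yields some~$\lb{u'}$ with $\accs{w'}{u'}$ and $\futs u{u'}$. The induction hypothesis for~$\fm A$, applied to the pair $\lb u,\lb{u'}$, gives $\force\M{u'}A$, and therefore $\force\M{w'}{\DIA A}$.

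Backward confluence~($\ftwo$) is not needed here.
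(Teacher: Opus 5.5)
Your proof is correct and is the standard structural induction that the paper leaves implicit; the paper only remarks that monotonicity of $\Vdash$ follows from monotonicity of $V$. Your version is more explicit than that one-line justification, since the $\IMP$ and $\BOX$ cases also need transitivity of $\le$ and the $\DIA$ case needs forward confluence, and you invoke each exactly where it is required.
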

	
	\begin{definition}[Validity]
		A formula~$\fm A$ is \defn{valid in a model}~$\M = \langle W, \rel, \le, V \rangle$ if{f}  $\force\M wA$ for all~$\lb w \in W$.
		A formula~$\fm A$ is \defn{valid in a frame}~$\F = \langle W, \rel, \le \rangle$ if{f} it is valid in~$\langle W, R, \le, V \rangle$ for all valuations~$V$.
	\end{definition}
	
	The correspondence between syntax and semantics for~$\ISfour$ can be stated as follows:
	
	\begin{theorem}[Completeness~\cite{fischer-servi:84,plotkin:stirling:86}]\label{thm:plotkin}
		A formula~$\fm A$ is a theorem of\/~$\ISfour$ if and only if $\fm A$~is valid in every birelational frame~$\langle W, \rel, \le \rangle$ where $\rel$~is reflexive and transitive.
	\end{theorem}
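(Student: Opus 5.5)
The plan is the standard two-direction argument: soundness by induction on Hilbert derivations, completeness by a canonical model construction.

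\emph{Soundness.} Fix a birelational frame with $\rel$ reflexive and transitive. I will show by induction on derivations that every theorem of $\ISfour$ is valid in it.
\begin{itemize}
\item Modus ponens preserves validity, using reflexivity of $\le$.
\item Necessitation preserves validity directly from the clause for $\BOX$.
\item The intuitionistic propositional axioms are valid by Monotonicity.
\item $\kax[1]$ is immediate from the $\BOX$ clause and transitivity of $\le$.
\item $\kax[2]$ and $\kax[5]$ are routine.
\item $\kax[3]$ uses forward confluence ($\fone$): a $\DIA$-witness at $\lb w$ must persist to $\le$-successors.
\item $\kax[4]$ uses backward confluence ($\ftwo$). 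Given $\lb w\le\lb{w'}\rel\lb u\le\lb{u'}$ with $\lb{u'}$ forcing $\fm A$, $\ftwo$ yields $\lb v\ge\lb{w'}$ with $\lb v\rel\lb{u'}$. Hence $\fm{\DIA A}$ holds at $\lb v$, so $\fm{\BOX B}$ holds at $\lb v$, and therefore $\fm B$ holds at $\lb{u'}$.
\item For $\tax$: $\fm{A\IMP\DIA A}$ follows from reflexivity of $\rel$, and $\fm{\BOX A\IMP A}$ from reflexivity of both $\rel$ and $\le$.
\item For $\vax$: $\DIA\DIA\IMP\DIA$ follows from transitivity of $\rel$. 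For $\BOX\IMP\BOX\BOX$, take $\lb w\le\lb{w_1}\rel\lb u\le\lb{u_1}\rel\lb v$. Apply $\ftwo$ to $\lb{w_1}\rel\lb u\le\lb{u_1}$ to get $\lb{w_2}\ge\lb{w_1}$ with $\lb{w_2}\rel\lb{u_1}$. Then $\lb{w_2}\rel\lb v$ by transitivity, and $\lb w\le\lb{w_2}$.
\end{itemize}

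\emph{Completeness.} I will build a canonical model following Fischer Servi and Plotkin--Stirling. Worlds are pairs $(\Gamma,\Delta)$ of sets of formulas, where $\Gamma$ is a prime $\ISfour$-theory (deductively closed, disjunction property, not containing $\BOT$). Here $\Gamma$ records which formulas are forced, and the second component $\Delta$ records which formulas are $\DIA$-possible. Set:
\begin{itemize}
\item $\le$ as inclusion of first components;
\item $\rel$ as: $\fm{\BOX A}\in\Gamma$ implies $\fm A\in\Gamma'$, and $\fm A\in\Gamma'$ implies $\fm{\DIA A}\in\Gamma$;
\item the valuation by atoms in $\Gamma$.
\end{itemize}
The truth lemma is proved by induction on formulas.
\begin{itemize}
\item The $\IMP$ case uses a prime-extension (Lindenbaum) lemma.
\item The $\DIA$ case needs: if $\fm{\DIA A}\in\Gamma$, there is an $\rel$-successor containing $\fm A$. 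I obtain it by extending $\{\fm B:\fm{\BOX B}\in\Gamma\}\cup\{\fm A\}$ to a prime theory avoiding every $\fm C$ with $\fm{\DIA C}\notin\Gamma$, using $\kax[2]$, $\kax[3]$ and $\kax[5]$.
\item The $\BOX$ case uses $\kax[4]$ to build an $\le$-then-$\rel$ path to a refuting world.
\end{itemize}
The frame conditions then need checking. Reflexivity of $\rel$ comes from $\tax$. Transitivity comes from $\vax$, after closing $\rel$ transitively or checking it directly on the canonical definition. The two confluence properties are verified by further prime-extension arguments.

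\emph{Main obstacle.} The hard step is forward and backward confluence of the canonical frame, particularly $\fone$. It requires a joint extension lemma that produces a successor compatible both with the $\BOX$-part of the larger world and with the $\DIA$-part of the smaller one. I would first try the Plotkin--Stirling device of working with pairs (theory, set of admissible diamond formulas) rather than bare theories. Since the paper cites these results, in the text I would simply appeal to \cite{fischer-servi:84,plotkin:stirling:86}, noting that the reflexive-transitive frame class matches $\tax$ and $\vax$.
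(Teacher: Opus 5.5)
Your proposal is correct and matches the paper, which gives no proof of this theorem and simply cites Fischer Servi and Plotkin--Stirling; your soundness induction plus prime-theory canonical model is the standard argument behind those references. Two small slips: $\kax[3]$ needs no frame condition at all ($\fone$ is what makes $\DIA$-formulas persistent along $\le$, so it is really used in the Monotonicity you invoke for the propositional axioms); and the second component of your canonical worlds is idle, since with single prime theories $\fone$ follows routinely from $\kax[2]$, $\kax[3]$, $\kax[5]$ and $\ftwo$ from $\kax[4]$, so the step you flag as the main obstacle is not really one.
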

	
	Unless stated otherwise, in the remainder of the article,  relation $R$ is assumed reflexive and transitive  in all considered  birelational frames and models.
	
	\subsection{Why proving decidability of $\ISfour$ is difficult?}
	\label{sec:difficulty}
	
	In this section we highlight the main difficulties that we encountered in tackling the decidability problem for  $\ISfour$ and give a hint at the key ingredients of our method in the process.
	
	One way to prove decidability for a logic is to perform proof search in a sound and complete deductive system with the intention of either finding a proof or constructing a countermodel from a failed proof search. %
	
	For~$\ISfour$ several such deductive systems exist, the first being Simpson's labelled systems~\cite{simpson:phd}. 
	Moreover, there are
	various kinds of nested sequents systems: single-conclusion~\cite{str:fossacs13}, multiple conclusion~\cite{kuz:str:2017maehara}, and also focused variants~\cite{cha:mar:str:fscd16}.
	A natural question to ask is why none of these systems has been used to prove decidability of~$\ISfour$.  
	
	\textbf{The need for more labelling.}
	The aforementioned systems rely on what we could call a mixed approach: they internalize the modal accessibility relation $R$ within the sequent syntax, using either  labels and relational atoms or nesting, but they rely on a traditional structural approach for the intuitionistic aspect of the logic, e.g., single-conclusion sequents (at least in certain rules). 
	One might think that combining the traditional loop-check for the intuitionistic part with the label-based loop-check for the modal part would be a way to a decision procedure.
	But the situation is more complicated because
	\begin{enumerate}
		\item 
		the classical $\Sfour$-loop test, which is looking for a
		repetition along the $\rel$-relation, cannot be applied to the right-hand-side of a sequent, as the conclusion formula can sometimes  be replaced by a new one;
		
		\item the structural approach to the intuitionistic system also means the rules are not all invertible and the procedure requires backtracking, so the
		modality loop-check also needs to be combined with the necessary backtracking.
	\end{enumerate}
	
	Both of these problems can be overcome by using a fully labelled proof system that incorporates both relations~$\rel$~and~$\le$~\cite{mar:mor:str:2021, maffezioli:naibo:negri:13}.
	This has the same advantages as moving from a structural to a labelled approach for intuitionistic propositional logic, mentioned in the introduction.
	Not only does this system  re-establish the close relationship between a sequent and a model, as is known from classical modal logic, it also enables us to make all rules in the system invertible.
	Moreover, explicit relational atoms in the sequent syntax  make it easy to implement the loop-checks and to represent the back edges explicitly when constructing a countermodel.\looseness=-1

	\textbf{The backtracking/termination trade-off.}
	Naive proof search is not terminating, with two possible
	sources of non-termination: the first inherited from the classical
	modal logic~$\Sfour$, and the second from intuitionistic
	propositional logic. 
	In both those cases independently, the problem would be solved by a
	simple loop-check. However, it is not straightforward to combine the two, as the following example shows:
	
	\begin{example}
		Consider the following formula, which is not provable in~$\ISfour$:
		\begin{equation}
			\label{eq:back-example}
			\fm{\wbox\Bigl((\wbox
				a\IMP\BOT)\cand\bigl((a\IMP\BOT)\IMP\BOT\bigr)\Bigr)\IMP
				\BOT}
			\text{.}
		\end{equation}
		Let $\fm A=\fm{\wbox a\IMP\BOT}$ and $\fm B=\fm{(a\IMP\BOT)\IMP\BOT}$.  In
		order to construct a countermodel~$\M$, we need a world~$\lb {w_1}$ that forces
		$\fm{\wbox(A\cand B)}$ and, therefore, also  $\fm A$~and~$\fm B$.
		Consequently, every world~$\lb{w'}$ such that $\futs {w_1}{w}$ and $\accs w{w'}$ for some~$\lb w$ should force these formulas. 
		Then all these worlds must
		force neither $\fm{\wbox a}$ nor $\fm{a\IMP\BOT}$. The latter means that for
		each such world~$\lb{w'}$, there must be a world~$\lb v$ with $\futs{w'}v$ and
		$\force\M va$. Of course, in turn, $\lb v$~must not force~$\fm{\wbox a}$, so there must be  worlds~$\lb u$ and $\lb{u'}$ with $\futs v{u'}$, $\accs{u'}u$, and
		$\nforce\M ua$. But this~$\lb u$ must not force~$\fm{a\IMP\BOT}$ so there must be a world~$\lb{v_1}$ with $\futs u{v_1}$ and $\force\M{v_1}a$,  and so on.  Thus, a naive implementation of a
		countermodel construction via proof search will keep adding worlds
		\emph{ad~infinitum} because neither of the two loop-checks will detect the repetition (see Fig.~\ref{fig:inf-model}, Left).
		\begin{figure}
			\begin{center}
					\begin{tikzpicture}[thick, every node/.style={scale=1.2}]
		
		\tikzstyle{node}=[circle,draw]
		\tikzstyle{s-node}=[circle,draw,inner sep=2pt]
		\tikzstyle{nonode}=[inner sep=0pt]
		\tikzstyle{access}=[->,blue]
		\tikzstyle{future}=[dashed,->]
		\tikzstyle{suricata}=[dashed,red]
		\tikzstyle{deleted}=[->,gray]
		\tikzstyle{loop}=[->,red]		
		
		\node[label={[align=center] right :{$\Vdash \BOX(A \AND B), \Vdash A, \Vdash B$\\$\not\Vdash \BOX a, \not\Vdash a \IMP\BOT$}}] (1) at (2,0) [node] {$w_1$}; 
		
		\node[] (2) at (2,2) [node] {$w_2$};
		\node[label= right :{$\not\Vdash a$}] (3) at (4,2) [node, fill=green!30] {$w_3$};
		%
		
		\node[label= right :{$\Vdash a$}] (4) at (4,4) [node, fill=purple!30] {$w_4$};
		\node[] (5) at (2,4) [node] {$w_5$};
		
		\node[label= right:{}] (6) at (4,6) [node] {$w_6$};
		\node[label= right :{$\not\Vdash a$}] (7) at (6,6) [node,fill=green!30] {$w_7$};
		\node[] (8) at (2,6) [node] {$w_8$};
		
		\node[label= right :{$\Vdash a$}] (9) at (6,8) [node,fill=purple!30] {$w_9$};
		\node[label= right :{}] (10) at (4,8) [s-node] {$w_{10}$};
		\node[] (11) at (2,8) [s-node] {$w_{11}$};

		\draw[access] (2) -- (3);
		
		\draw[access] (5) -- (4);
		
		\draw[access] (8) -- (6);
		\draw[access] (6) -- (7);
		
		\draw[access] (11) -- (10);
		\draw[access] (10) -- (9);
		
		\draw[future] (1) -- (2);
		\draw[future] (2) -- (5);
		\draw[future] (5) -- (8);
		\draw[future] (8) -- (11);
		\draw[future] (11) -- (2,9);
		\draw[future] (3) -- (4);
		\draw[future] (4) -- (6);
		\draw[future] (6) -- (10);
		\draw[future] (10) -- (4,9);
		\draw[future] (7) -- (9);
		\draw[future] (9) -- (6,9);

	\end{tikzpicture}
\hspace*{2cm}
\begin{tikzpicture}[thick, every node/.style={scale=1.2}]

\tikzstyle{node}=[circle,draw]
\tikzstyle{s-node}=[circle,draw,inner sep=2pt]
\tikzstyle{nonode}=[inner sep=0pt]
\tikzstyle{access}=[->,blue]
\tikzstyle{future}=[dashed,->]
\tikzstyle{suricata}=[dashed,red]
\tikzstyle{deleted}=[->,gray]
\tikzstyle{loop}=[->,red]		

\node[] (1) at (2,0) [node] {$w_1$}; 

\node[] (2) at (2,2) [node] {$w_2$};
\node[label= right :{$\not\Vdash a$}] (3) at (4,2) [node, fill=green!30] {$w_3$};
%

\node[label= right :{$\Vdash a$}] (4) at (4,4) [node, fill=purple!30] {$w_4$};
\node[] (5) at (2,4) [node] {$w_5$};

\node[label= right:{}] (6) at (4,6) [node] {$w_6$};
\node[] (8) at (2,6) [node] {$w_8$};

\node[label= right :{}] (10) at (4,8) [s-node] {$w_{10}$};
\node[] (11) at (2,8) [s-node] {$w_{11}$};

\draw[access] (2) -- (3);

\draw[access] (5) -- (4);

\draw[access] (8) -- (6);
\draw[access] (6) edge[bend right] (3);

\draw[access] (11) -- (10);
\draw[access] (10) edge[bend left] (4);

\draw[future] (1) -- (2);
\draw[future] (2) -- (5);
\draw[future] (5) -- (8);
\draw[future] (8) -- (11);
%
\draw[future] (3) -- (4);
\draw[future] (4) -- (6);
\draw[future] (6) -- (10);
%
%

\end{tikzpicture}
	
			\end{center}
			\caption{Left: Illustration of the potential non-termination issue. \quad Right: Illustration of a break of condition $\ftwo$ when identifying nodes unrestrictedly.}
			\label{fig:inf-model}
		\end{figure}
		In other words, 
		\begin{enumerate}\setcounter{enumi}{2}
			\item to account for the interaction of modalities and intuitionistic implications, the two loop-checks must get along well with each other. 
		\end{enumerate}
		
	\end{example}

\textbf{The interleaving of proof search and loops.}
For solving this third problem, we have to implement a more sophisticated loop-check involving both relations.
This directly leads us to the fourth problem.

\begin{example}
	Assume, for the sake of example, that we designed a suitable loop-check such that we could stop proof search at the stage of the structure presented on Fig.~\ref{fig:inf-model} (Left) and that we identified $\lb{w_7}$ to $\lb{w_3}$ and $\lb{w_{9}}$ to $\lb{w_4}$. 
	This would create ``backlinks'' between $\lb{w_{10}}$ and $\lb{w_4}$ as well as between~$\lb{w_6}$~and~$\lb{w_3}$.
	However, this would lead to a violation of~$\ftwo$  as now $\lb{w_{10}} R \lb{w_4} \mathord{\le} \lb{w_6}$ but there is no $\lb{w'}$ such that $\lb{w_{10}} \mathord{\le} \lb{w'} R \lb{w_6}$. (see Fig.~\ref{fig:inf-model}, Right).
	This means that %
	\begin{enumerate}\setcounter{enumi}{3}
		\item the standard method of constructing a (finite) countermodel from a failed proof search via identifying labels/worlds that create a loop fails in the setting of birelational models. We break the $\fone$/$\ftwo$ properties, which in turn would force us to add new worlds, which would mean we have to continue proof search.
	\end{enumerate}
\end{example}

We solve this problem by identifying (substituting) labels not only after the completion of but also during   proof search.  
This preserves unprovability, but could, \emph{a priori}, be unsound. 
This means that, when terminating a branch on a non-axiom\-at\-ic sequent, it is still possible to extract a countermodel from it.
However, when reaching only axiomatic leaves, it remains to be shown that a sound proof can be obtained from the proof attempt (potentially containing identification of labels).
So, instead of doing naive proof search and then constructing a countermodel from a failed proof search by ``folding'' the failed sequent, we perform the  folding already during the proof search, which is now a countermodel search, and then construct a proper proof from a failed countermodel search by ``unfolding'' the search tree.
The loop-check ensuring termination has to be subtly calibrated for this final step of  unfolding the proof attempt into a real proof.

	\subsection{The proof system}
	\label{sec:proof:system}

	We now present the fully labelled sequent calculus $\labISfs$ for $\ISfour$ that forms the basis of our decision algorithm.
	It is an equivalent formulation (together with some notational variations) of the labelled sequent calculus $\labISf$ for $\ISfour$ presented in~\cite{mar:mor:str:2021}.
	
	To define a labelled proof system, we first enrich the language of $\ISfour$ by a countable set of \defn{labels}, denoted by $\lb x, \lb y, \lb z,$ \textit{etc.} 
	Then, we define \defn{relational atoms} as expression $\accs xy$ or $\futs xy$ where $\lb x$~and~$\lb y$~are labels, and \defn{labelled formulas} as pairs~$\labels xA$ of a label~$\lb x$ and a formula~$\fm A$.  
	The labelled calculus introduced by Simpson in~\cite{simpson:phd} employs only one kind of  relational atoms, $\accs xy$. We here follow the fully labelled approach from \cite{mar:mor:str:2021}, which instead employs relational atoms in correspondence to both relations of birelational frames. 
	
	In the literature, a \defn{labelled sequent} is usually defined as a triple~$\lseq\B\Left\Right$
	where $\B$~is a set of relational atoms and $\Left$~and~$\Right$~are multisets (or sets) of labelled formulas respectively, all written as comma-separated lists.  
	To simplify subsequent definitions and proofs, we employ a different notation for our sequents.  First, we assign a polarity $\lefmark$ (\defn{input}) or $\rigmark$~(\defn{output}) to each labelled formula, which will be written as $\labelsb xA$ or $\labelsw xA$, respectively.
        In this case, we will also say that the labelled formula $\labelsb xA$ or $\labelsw xA$ \defn{occurs} at label $\lb x$.
	Then, we define a \defn{polarized labelled sequent}~$\sq G$ as a \emph{set} of relational atoms and  labelled formulas with polarities; we write $\sq G, \sq G'$ for $\sq G \cup \sq G '$, and we write $\sq G, F$ for $\sq G \cup \set{F}$, where $F$ is a relational atom or a labelled formula with polarity. In the following, we simply write \defn{sequent} to mean \emph{polarized labelled sequent}.
	
	\begin{example} 
		\label{ex:pol_sq}
		$ \sq G  = \futs x y, \accs y z , \accs y u, \labelsb{x}{a\AND b}, \labelsw{y}{c}, \labelsw{z}{\BOX a}$ is such a sequent.
	\end{example}
	
	Intuitively, the polarities play the role of the sequent arrow $\SEQ$, allowing us to identify the left- and right-hand side of a sequent. The input formulas, or $^\lefmark$-formulas, are those that would occur in the antecedent of a labelled sequent, and the output formulas, or  $^\rigmark$-formulas, are those occurring in the succedent.\footnote{We do not assign polarities to relational atoms because, as in the case of standard labelled sequents, it is sufficient to restrict their use to  the antecedent.}
	Thus, every sequent can be written as a labelled sequent in the usual notation,  and vice versa. 
	\begin{example}
		The sequent $\sq G$ from Example~\ref{ex:pol_sq} can be written as labelled sequent in the usual notation (as in~\cite{mar:mor:str:2021}) as $\futs x y, \accs y z , \accs y u, \labels{x}{a\AND b} \SEQ  \labels{y}{c}, \labels{z}{\BOX a}$. 
	\end{example}

	\begin{figure*}[!t]
		\def\myskip{3ex}
		\scalebox{1}{\hbox{
				\begin{minipage}{.97\textwidth}
					\begin{center}
						\begin{tabular}{c c}
							$\vlinf{\labrn{id}}{}{\sq G,  \labelsb xa , \labelsw xa }{}$
							&
							$\vlinf{\lef\BOT}{}{\sq G, \labelsb x\BOT }{}$\\[\myskip]
							$ 	\vlinf{\lef\AND}{}{\sq G ,\labelsb x{A \AND B}}{\sq G,\labelsb x{A \AND B}, \labelsb xA, \labelsb xB }$
							&
							$\vliinf{\rig\AND}{}{\sq G, \labelsw x{A \AND B}}{\sq G,\labelsw x{A \AND B}, \labelsw xA}{\sq G, \labelsw x{A \AND B}, \labelsw xB} $\\[\myskip]
							$\vliinf{\lef\OR}{}{\sq G, \labelsb x{A \OR B}}{\sq G, \labelsb x{A \OR B}, \labelsb xA }{\sq G, \labelsb x{A \OR B}, \labelsb xB }$
							&
							$	\vlinf{\rig\OR}{}{\sq G, \labelsw x{A \OR B}}{\sq G,  \labelsw x{A \OR B},  \labelsw xA, \labelsw xB}$ \\[\myskip]
							$ 	\vliinf{\lef\IMP}{}{\sq G,   \labelsb x{A \IMP B}}{\sq G,  \labelsb x{A \IMP B},  \labelsw xA}{\sq G,  \labelsb x{A \IMP B}, \labelsb xB}$
							&
							$\vlinf{\rig\IMP}{\proviso{$\lb z$ fresh}}{\sq G, \labelsw x{A \IMP B}}{\sq G, \futs xz, \labelsb zA , \labelsw x{A \IMP B}, \labelsw zB} $
							\\[\myskip]
							$ 	\vlinf{\lef\BOX}{}{\sq G, \accs xy, \labelsb x{\BOX A} }{\sq G, \accs xy,  \labelsb x{\BOX A}, \labelsb yA }$ 
							&
							$\vlinf{\rig\BOX}{\proviso{$\lb u,\lb z$ fresh}}{\sq G, \labelsw x{\BOX A}}{\sq G, \futs xu, \accs uz, \labelsw x{\BOX A},  \labelsw zA} $
							\\[\myskip]
							$\vlinf{\lef\DIA}{\proviso{$\lb y$  fresh}}{\sq G, \labelsb x{\DIA A} }{\sq G, \accs xy, \labelsb x{\DIA A}, \labelsb yA }$
							&
							$		\vlinf{\rig\DIA}{}{\sq G ,\accs xy,  \labelsw x{\DIA A}}{\sq G, \accs xy,  \labelsw x{\DIA A}, \labelsw yA}$ \\ 
							\multicolumn{2}{l}{\hbox to .97\linewidth{\dotfill}}\\ 
							\multicolumn{2}{c}{
								$		
								\vlinf{\monl}{}{\sq G, \futs xy, \labelsb xA}{\sq G, \futs xy, \labelsb xA, \labelsb yA}
								$
								\qquad 
								$ 	\vlinf{\Lref}{}{\sq G}{\sq G, \futs xx}$
								\qquad 
								$\vlinf{\Ltr}{}{\sq G, \futs xy, \futs yz, }{\sq G, \futs xy, \futs yz, \futs xz}
								$
							}
							\\[\myskip]
							$  \vlinf{\fone}{\proviso{$\lb u$ fresh}}{\sq G, \accs xy, \futs xz}{\sq G, \accs xy, \futs xz, \futs yu, \accs zu} $
							&
							$ \vlinf{\ftwo}{\proviso{$\lb u$ fresh}
							}{\sq G, \accs xy, \futs yz }{\sq G, \accs xy, \futs yz, \futs xu, \accs uz}$
							\\[\myskip]
							 $ 	\vlinf{\Rref}{}{\sq G}{\sq G ,\accs xx}$
							&
							$	 \vlinf{\Rtr}{}{\sq G, \accs xy, \accs yz}{\sq G, \accs xy, \accs yz, \accs {x}z} $\\[0.5cm]
						\end{tabular}
					\end{center}
				\end{minipage}
			}		
		}
		\caption{System~$\labISfs$ 
		}
		\label{fig:labIKp}
	\end{figure*}
	
	The rules of $\labISfs$ are displayed in Figure~\ref{fig:labIKp}.\footnote{The superscript $\mathsf{s}$ stands for \emph{sets}, as our calculus works on sets, unlike the original $\labISf$~\cite{mar:mor:str:2021}, which works on multisets.}  The rules in the upper part of the Figure are \defn{logical rules}, and those in the lower part of the figure are \defn{structural rules}. 
	Observe that, since sequents are defined as sets, contraction is embedded into the system. We have chosen a cumulative version of the rules, with the principal formula repeated in the premise(s), as this will become useful in the definition of the proof-search algorithm. Finally, thanks to the presence of $\monl$, an explicit structural rule for monotonicity, rules $\id$, $\lef\IMP$, and $\lef\BOX$ 
	are the same labelled rules we would use in a calculus for classical modal logic (refer, e.g., to~\cite{negri:2005}). 
	
	\begin{definition}
		A \defn{derivation tree} (or \defn{derivation} for short) over a set $\sysS$ of inference rules is a tree whose nodes are labeled with sequents and whose inner nodes are additionally labelled by inference rules, such that  
                whenever a node is labeled with $\sq G$ and its children with $\sq G_1, \dots, \sq G_k$, respectively, where $k\geq 0$, then that node is also labelled by an inference rule $\rr$, such that $\vliiinf{\rr}{}{\sq G}{\sq G_1}{\ldots}{\sq G_k}$ is a correct instance of the  inference rule $\rr\in\sysS$. 
		A \defn{proof} is a derivation where each leaf is the conclusion of $\id$ or $\lef\BOT$ (and labelled by $\id$ or $\lef\BOT$, respectively).
		The \defn{height} of a derivation $\deri$ is the length of one of its longest branches, denoted by $\sizeof{\deri}$.
        \end{definition}

	An example of a proof in $\labISfs$ can be found in Figure~\ref{fig:ex-deriv}.
	Let us now introduce some standard terminology. A rule $\rr$ is \defn{admissible} if{f}, whenever there are proofs of its premise(s), 
	there is a proof of its conclusion. 
	In case the height of the derivation is preserved, i.e., if whenever there are proofs of the premise(s) of $\rr$ whose height is bounded by $n$, there is a proof  of the conclusion of $\rr$ whose height is also bounded by $n$, then we say  that $\rr$ is \defn{heigh-preserving admissible} (short: \defn{hp-admissible}). 
	Rule $\rr$  is \defn{derivable} whenever there is a derivation of its conclusion whose leaves 
	consist of its premise(s). 
	Rule $\rr$ is \defn{invertible} if{f}, whenever there is a proof of its conclusion, then there are proofs of all its premises. 
	
	\begin{lemma}
          The rule $\vlinf{\w}{}{\sq G, \sq H}{\sq G}$  is hp-admissible in $\labISfs$. 
	\end{lemma}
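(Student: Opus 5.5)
We argue by induction on the height $n$ of a given proof $\deri$ of $\sq G$, showing that there is a proof of $\sq G,\sq H$ of height at most $n$. Since sequents are sets, adding $\sq H$ may cause formulas or relational atoms to coincide. This is harmless, because every rule is cumulative: the principal formula and the active relational atoms are repeated in the premises.

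The one real complication is the freshness side condition on $\rig\IMP$, $\rig\BOX$, $\lef\DIA$, $\fone$ and $\ftwo$, since a label fresh for $\sq G$ may occur in $\sq H$. To handle it, we first prove an auxiliary statement: hp-admissibility of label substitution. If $\sq G$ has a proof of height at most $n$, then so does $\sq G\labelsubst{y}{x}$ for labels $\lb x,\lb y$, where $\lb y$ is chosen not occurring in $\sq G$, so the renaming is injective. The proof is by induction on $n$. Axioms $\id$ and $\lef\BOT$ are mapped to axioms. For rules with an eigenvariable $\lb z$, we first rename $\lb z$ in the subproof, by the induction hypothesis, to a label distinct from both $\lb x$ and $\lb y$. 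Then we apply the induction hypothesis for $\labelsubst yx$ and reapply the rule. All other rules commute with the substitution directly, because each of their instances remains an instance after an injective renaming.

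Now for weakening. In the base case, if $\sq G$ is the conclusion of $\id$ or $\lef\BOT$, then $\sq G,\sq H$ still contains $\labelsb xa,\labelsw xa$ or $\labelsb x\BOT$, so it is an axiom of height~$0$. In the inductive step, let $\sq G$ be the conclusion of a rule $\rr$ with premises $\sq G_1,\dots,\sq G_k$, each with a proof of height less than $n$. If $\rr$ has no freshness condition, then $\sq G_i,\sq H$ is provable with the same bound by the induction hypothesis. Moreover, $\rr$ applied to the same principal formula or atoms, now with context $\sq G\setminus\{\text{principal}\}\cup\sq H$, derives $\sq G,\sq H$ from the premises $\sq G_i,\sq H$.

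If $\rr$ introduces fresh labels $\lb z$ (or $\lb u,\lb z$), we first use the substitution lemma on the proofs of the premises. This renames the eigenvariables to labels $\lb z'$ occurring neither in $\sq G$ nor in $\sq H$, without increasing height. The renaming does not affect the conclusion $\sq G$, since the eigenvariables do not occur there. Then we weaken the renamed premises by $\sq H$ using the induction hypothesis and reapply $\rr$, whose side condition now holds.

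The main obstacle is exactly this interaction between freshness and weakening. It is resolved by proving the substitution lemma first, with its own induction, and making sure that lemma does not itself depend on weakening. The set-based formulation needs only the cumulative-rule remark above.
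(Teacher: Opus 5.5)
Your proof is correct and takes the same route as the paper, whose proof is just ``routine induction on the height.'' You spell out that routine: induction on height, preceded by a height-preserving label-renaming lemma that moves the eigenvariables of $\rig\IMP$, $\rig\BOX$, $\lef\DIA$, \fone\ and \ftwo\ away from the labels of $\sq H$.

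Two small repairs are needed in the renaming lemma.

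First, a notational point. In the paper's notation, $\sq G\labelsubst yx$ replaces $\lb y$ by $\lb x$, which is vacuous when $\lb y$ does not occur in $\sq G$. You mean the converse: renaming $\lb x$ to the new label $\lb y$.

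Second, a small hole. As displayed, $\Lref$ and $\Rref$ do not require the label of the added reflexive atom to occur in the conclusion. So an instance with conclusion $\sq G$ may have premise $\sq G,\futs yy$, where $\lb y$ is exactly your renaming target. Your induction hypothesis demands that $\lb y$ not occur, so it cannot be applied to that premise. Hence ``all other rules commute with the substitution directly'' fails in this case. There are two easy fixes:
\begin{itemize}
\item Treat such a label like an eigenvariable and rename it away first.
\item Alternatively, state the lemma for arbitrary, possibly non-injective, substitutions. Under such a substitution, every instance of a rule without a freshness condition remains an instance, so this case disappears.
\end{itemize}
The weakening argument itself then goes through as you wrote it.
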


	\begin{proof}
		By routine induction on the height of the derivation of the premiss of $\w$.	\end{proof}
	


	\newcommand{\idm}{\mathsf{id}_{\ast}}
	\newcommand{\lefim}{\lef{\IMP}_{\ast}}
	\newcommand{\lefbm}{\lef{\BOX}_{\ast}}
	
	Admissibility of cut in $\labISfs$ follows (indirectly) from soundness and completeness of $\labISfs$, which we establish by showing how to translate every proof in $\labISfs$ into a proof in $\labISf$ and back. System $\labISf$ is the fully labelled sequent calculus for  $\ISfour$ introduced in \cite{mar:mor:str:2021}, which employs
	labelled sequents $\lseq\B\Left\Right$ as discussed above in their multiset formulation. The rules of $\labISf$ are the same as the rules of $\labISfs$, except that they are not cumulative, rule $\monl$ is not primitive (it is admissible, see below), and rules $\id$, $\lef\IMP$, and $\lef\BOX$ are formulated as follows (employing our notation, and using an index $\ast$ to distinguish them from our rules in Figure~\ref{fig:labIKp}):
	\vspace{-0.5cm}
	\[
	\vlinf{\idm}{}{\sq G,  \futs x y, \labelsb xa , \labelsw ya }{}
	\]
	\vspace{-0.2cm}
	\[
	\vliinf{\lefim}{}{\sq G, \futs x y,  \labelsb x{A \IMP B}}{\sq G, \futs x y, \labelsb x{A \IMP B},  \labelsw yA}{\sq G,  \futs x y, \labelsb yB}
	\qquad\quad 
	\vlinf{\lefbm}{}{\sq G, \futs xy ,\accs y z, \labelsb x{\BOX A} }{\sq G, \futs xy ,\accs y z,  \labelsb x{\BOX A}, \labelsb zA }
	\]

\begin{figure}[!t]
	\scalebox{.9}{
		$
		\vlderivationnc{
			\vlin{\rig\IMP  }{}{\labelsw{1}{\BOX (\DIA A \AND \DIA b) \IMP \BOT}}{
				\vlin{\Rref+\lef\BOX+\lef\AND}{}{\futs{1}{2},\labelsw{1}{\BOX (\DIA A \AND \DIA b) \IMP \BOT},\labelsb{2}{\Box (\DIA A \AND \DIA b) },\labelsw{2}{\BOT}}{
					\vlin{\lef\DIA}{}{\futs{1}{2},\accs{2}{2},\labelsw{1}{\BOX (\DIA A \AND \DIA b) \IMP \BOT},\labelsb{2}{\Box (\DIA A \AND \DIA b) },\labelsb{2}{\DIA A \AND \DIA b }, \labelsb{2}{\Diamond A}, \labelsb{2}{\DIA b},\labelsw{2}{\BOT}}{
						\vliin{\Lref+\lef\IMP}{}{\futs{1}{2},\accs{2}{2}, \accs{2}{3},\Gamma,\labelsb{3}{A}}{
							\vlin{\rig\IMP}{}{\futs{1}{2},\accs{2}{2}, \accs{2}{3},\futs{3}{3},\Gamma,\labelsb{3}{A},\labelsw{3}{c \IMP \DIA b}}{
								\vlin{\lef\BOX+\lef\AND}{}{\futs{1}{2},\accs{2}{2}, \accs{2}{3},\futs{3}{3},\futs{3}{5},\Gamma,\labelsb{3}{A},\labelsw{3}{c \IMP \DIA b},\labelsb{5}{c},\labelsw{5}{\DIA b}}{
									\vlin{\lef\DIA}{}{\futs{1}{2},\accs{2}{2}, \accs{2}{3},\futs{3}{3},\futs{3}{5},\Gamma,\labelsb{3}{\DIA A \AND \DIA b}, \labelsb{3}{\DIA A}, \labelsb{3}{\DIA b},\labelsb{3}{A}, \labelsw{3}{c \IMP \DIA b},\labelsb{5}{c},\labelsw{5}{\DIA b}}{
										\vlin{\fone+\lef{\rn{mon}}}{}{\futs{1}{2},\accs{2}{2}, \accs{2}{3},\futs{3}{3},\futs{3}{5},\accs{3}{4},\Gamma,\Delta, \labelsb{4}{b},\labelsw{5}{\DIA b}}{
											\vlin{\rig\DIA}{}{\futs{1}{2},\accs{2}{2}, \accs{2}{3},\futs{3}{3},\futs{3}{5},\accs{3}{4}, \futs{4}{6}, \accs{5}{6},\Gamma,\Delta,\labelsb{4}{b}, \labelsb{6}{b},\labelsw{5}{\DIA b}}{
												\vlin{\idr}{}{\futs{1}{2},\accs{2}{2}, \accs{2}{3},\futs{3}{3},\futs{3}{5},\accs{3}{4}, \futs{4}{6}, \accs{5}{6},\Gamma,\Delta, \labelsb{4}{b}, \labelsb{6}{b},\labelsw{5}{\DIA b}, \labelsw{6}{b}}{
													\vlhy{}
												}
											}
										}
									}
								}
							}
						}{
							\vlin{\hspace*{-1.5cm}\lef\BOT}{}{\sq G}{\vlhy{}}
						}
					}
				}
			}
		}
		$
	}
	
	\caption{Proof in $\labISfs$ of  $\fm{\BOX (\DIA A \AND \DIA b) \IMP \BOT}$, where we set
	$ \fm A = \fm{(c \IMP \DIA b ) \IMP \BOT} $, and \\
	$\Gamma = \{\labelsw{1}{\BOX (\DIA A \AND \DIA b) \IMP \BOT}, \labelsb{2}{\Box (\DIA A \AND \DIA b) },\labelsb{2}{\DIA A \AND \DIA b }, \labelsb{2}{\DIA A}, \labelsb{2}{\DIA b}, \labelsw{2}{\BOT}\}$, \\
	$\Delta = \{\labelsb{3}{\DIA A \AND \DIA b }, \labelsb{3}{\DIA A}, \labelsb{3}{\DIA b},\labelsb{3}{A},\labelsw{3}{c \IMP \DIA b},\labelsb{5}{c}\}$, 
	and\\
	$ \sq G = \futs{1}{2},\accs{2}{2}, \accs{2}{3}, \futs{3}{3},\Gamma,\labelsb{3}{A},\labelsb{3}{\BOT}$. 
	}
	\label{fig:ex-deriv}
\end{figure}
        
	The following results, proved in~\cite{mar:mor:str:2021}, establish soundness and completeness of $\labISf$ with respect to the Hilbert-style axiomatisation for the logic.

	\begin{theorem}[\cite{mar:mor:str:2021}]
		\label{thm:labIK}
		A formula~$\fm A$~is a theorem of\/~$\ISfour$ if{f} for every $\lb x$, the sequent  $\labelsw xA$ has a proof in\/~$\labISf$.
	\end{theorem}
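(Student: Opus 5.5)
Theorem~\ref{thm:labIK} is cited from~\cite{mar:mor:str:2021}, but I would reconstruct its proof as follows. The argument splits into a soundness direction (proofs in $\labISf$ yield theorems) and a completeness direction (theorems yield proofs), both routed through the birelational semantics of Theorem~\ref{thm:plotkin}.

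\textbf{Soundness.} I would interpret sequents in birelational models with $\rel$ reflexive and transitive. A sequent $\sq G$ is \emph{satisfied} by a model $\M$ and a map $\rho$ from labels to worlds if either
\begin{itemize}
\item some relational atom of $\sq G$ fails under $\rho$, or
\item some $\labelsb xA\in\sq G$ has $\nforce\M{\rho(x)}A$, or
\item some $\labelsw xA\in\sq G$ has $\force\M{\rho(x)}A$.
\end{itemize}
Call $\sq G$ valid if it is satisfied for every model and every $\rho$. The core step is to check that each rule of $\labISf$ preserves validity from premises to conclusion, using the contrapositive: from a falsifying $(\M,\rho)$ for the conclusion, build a falsifying interpretation for some premise.
\begin{itemize}
\item $\idm$ and $\lefim$ use monotonicity of forcing along $\le$.
\item $\lefbm$ uses the clause for $\BOX$ directly.
\item The fresh-label rules $\rig\IMP$, $\rig\BOX$ and $\lef\DIA$ extend $\rho$ with the witness world given by the forcing clauses.
\item $\fone$, $\ftwo$, $\Lref$, $\Ltr$, $\Rref$ and $\Rtr$ extend or reuse $\rho$ via the corresponding frame conditions.
\end{itemize}
A proof of $\labelsw xA$ then yields validity of $A$ in all such frames, hence theoremhood by Theorem~\ref{thm:plotkin}.

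\textbf{Completeness.} I would go syntactically rather than through countermodels: show that every axiom of $\ISfour$ is derivable in $\labISf$ and that the rules are admissible.
\begin{itemize}
\item Each of $\kax[1]$--$\kax[5]$, $\vax$ and $\tax$ gets a short explicit derivation. For example, $\BOX A\IMP A$ uses $\Rref$, $\Lref$ and $\lefbm$, while $\BOX A\IMP\BOX\BOX A$ uses $\Ltr$, $\Rtr$ and $\ftwo$ to propagate the box through a $\le$-then-$\rel$ path.
\item $\necr$ is handled by a substitution lemma: replace $\lb x$ by a fresh label and weaken in the atoms $\futs xu,\accs uz$ needed by $\rig\BOX$.
\item $\mpr$ is the main obstacle. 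It requires cut admissibility for $\labISf$. I would establish, in order: hp-admissibility of weakening and label substitution; hp-invertibility of all rules; hp-admissibility of contraction; and admissibility of the general $\le$-monotonicity rule $\monl$.
\item Cut is then eliminated by the usual double induction on formula size and the sum of heights, with case analysis on principal formulas. The delicate cases are those where $\lefim$ or $\lefbm$ acts through a $\le$-atom; there, substitution and monotonicity let the cut be pushed to smaller formulas.
\end{itemize}
With cut available, $\mpr$ follows, and an induction on Hilbert derivations closes this direction.
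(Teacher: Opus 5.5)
Your proposal is correct and follows essentially the route of~\cite{mar:mor:str:2021}, on which the paper relies. Soundness is proved rule by rule in reflexive--transitive birelational models and combined with Theorem~\ref{thm:plotkin}; completeness comes from deriving the Hilbert axioms and obtaining $\mpr$ from cut admissibility. The paper's remark singles out that cut-elimination theorem and calls its own semantic argument an alternative proof of completeness.

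One point of precision: your hp-invertibility lemma needs the G3-style $\lefim$ that keeps $\labelsb x{A\IMP B}$ in the right premise as well. With the right premise $\sq G,\futs xy,\labelsb yB$ as displayed in the paper, invertibility fails. For example, $\futs xy,\labelsb x{a\IMP b},\labelsb xa,\labelsw xb$ is provable via $\Lref$, but the right premise of the instance using $\futs xy$, namely $\futs xy,\labelsb xa,\labelsw xb,\labelsb yb$, is not valid.
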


\begin{remark}
  To be precise, the system for $\ISfour$ in~\cite{mar:mor:str:2021} employs the following four rules
  \begin{equation}
    \label{eq:g-rules}
    \begin{array}{c}
\vlinf{\sf g_{2001}}{\proviso{$\lb z'$ fresh}}{\sq G,  \accs x y, \accs y z}{\sq G,  \accs x y, \accs y z, \futs z {z'}, \accs x {z'}}
\qquad
\vlinf{\sf g_{0120}}{\proviso{$\lb x'$ fresh}}{\sq G,  \accs x y, \accs y z}{\sq G,  \accs x y, \accs y z, \futs x {x'}, \accs {x'} z}
\\\\[-2ex]
\vlinf{\sf g_{0001}}{\proviso{$\lb x'$ fresh}}{\sq G}{\sq G,   \futs x {x'}, \accs x {x'}}
\qquad
\vlinf{\sf g_{0100}}{\proviso{$\lb x'$ fresh}}{\sq G}{\sq G,   \futs x {x'}, \accs {x'}x} 
    \end{array}
  \end{equation}
(corresponding to the frame conditions for the axioms $\sf g_{2001}\colon\fm{\DIA\DIA A\IMP\DIA A}$ and $\sf g_{0120}\colon\fm{\BOX A\IMP\BOX\BOX A}$ and $\sf g_{0001}\colon\fm{A\IMP\DIA A}$ and $\sf g_{0100}\colon\fm{\BOX A\IMP A}$), instead of $\Rtr$ and $\Rref$, corresponding to the transitivity and reflexivity frame conditions. However, it is easy to see that all proofs in~\cite{mar:mor:str:2021}, in particular, the cut elimination theorem, do also work with the rules $\Rtr$ and $\Rref$ in place of the rules in~\eqref{eq:g-rules} above.
\end{remark}

We are now ready to prove the two directions of our translation result. 

\begin{lemma}
	Let $\sq G$ be a sequent (i.e., a \emph{set} of relational atoms and labelled formulas with polarities). Then, $\sq G$ has a proof in $\labISf$ if{f} $\sq G$ has a proof in $\labISfs$. 
\end{lemma}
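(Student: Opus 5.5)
We prove the two directions separately, each by induction on the height of the given proof, translating one rule instance at a time. Throughout we use the known structural properties of $\labISf$ from~\cite{mar:mor:str:2021}: hp-admissibility of weakening and contraction, admissibility of $\monl$ and of cut. On the $\labISfs$ side we use hp-admissibility of $\w$ (the lemma above). A sequent that is a set can be read as a multiset without repetitions. Conversely, reading a multiset sequent as a set only identifies duplicates, which is harmless because contraction is admissible in $\labISf$.

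\emph{From $\labISf$ to $\labISfs$.} Each non-cumulative rule of $\labISf$ other than $\idm$, $\lefim$, $\lefbm$ is simulated by its cumulative counterpart in $\labISfs$: the premise of the $\labISf$ rule is contained in the premise of the cumulative rule, so we apply $\w$ to the translated premise proof and then the cumulative rule. For the three special rules we use $\monl$:
\begin{itemize}
\item For $\idm$, from $\futs xy,\labelsb xa,\labelsw ya$ apply $\monl$ to get $\labelsb ya$ and close with $\id$.
\item For $\lefim$, apply $\monl$ to obtain $\labelsb y{A\IMP B}$, then $\lef\IMP$ at $\lb y$. The two premises are weakenings of the translated premises of $\lefim$.
\item For $\lefbm$, apply $\monl$ to get $\labelsb y{\BOX A}$, then $\lef\BOX$ with $\accs yz$.
\end{itemize}
The principal formula at $\lb x$ remains available because the rules are cumulative.

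\emph{From $\labISfs$ to $\labISf$.} A cumulative rule of $\labISfs$ has the same conclusion as its $\labISf$ version, with premises that additionally retain the principal formula. Read bottom-up, such a rule is an instance of the $\labISf$ rule preceded by a contraction on the principal formula. For example, $\lef\AND$ in $\labISfs$ is obtained by contracting $\labelsb x{A\AND B}$ and then applying the $\labISf$ rule $\lef\AND$. Thus admissibility of contraction in $\labISf$ handles all these rules. The rules $\id$, $\lef\IMP$ and $\lef\BOX$ of $\labISfs$ are the special cases $\lb y=\lb x$ of $\idm$, $\lefim$, $\lefbm$, respectively. We obtain the needed atom $\futs xx$ by one application of $\Lref$, which does not harm the induction. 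Finally, $\monl$ must be shown admissible in $\labISf$, as is done in~\cite{mar:mor:str:2021}. Alternatively, $\labelsb yA$ can be derived from $\futs xy,\labelsb xA$ by an $\labISf$ derivation (the generalized identity $\futs xy,\labelsb xA\SEQ\labels yA$ by induction on $\fm A$), after which cut is applied.

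\emph{Main obstacle.} The delicate point is the $\labISfs$-to-$\labISf$ direction, where contraction and $\monl$ must be admissible in $\labISf$. Height preservation is not required here, only provability, so we may freely use cut admissibility from~\cite{mar:mor:str:2021}. The remark above guarantees that those results remain valid with $\Rref$ and $\Rtr$ in place of the rules in~\eqref{eq:g-rules}. Once these facts are invoked, each remaining step is a routine rule-by-rule check.
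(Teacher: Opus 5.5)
Your proposal is correct and takes the same route as the paper: a rule-by-rule translation by induction on proof height. In one direction you simulate $\idm$, $\lefim$, $\lefbm$ via $\monl$ and $\w$; in the other you handle $\id$, $\lef\IMP$, $\lef\BOX$ via $\Lref$, the starred rules and admissibility of $\monl$ in $\labISf$. Your extra bookkeeping for the cumulative versus non-cumulative and set versus multiset mismatches, using admissible contraction and weakening in $\labISf$, fills in details the paper leaves implicit.
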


\begin{proof}
	Both directions are proved by induction on the height of the proof of $\sq G$ in  $\labISf$ (resp. in $\labISfs$), by stepwise translating every rule employed in the proof into (combinations of) rules in in  $\labISfs$ (resp. in $\labISf$). 
	
	First suppose we have a proof of  $\sq G$ in $\labISfs$. To transform this proof into a proof of  $\sq G$ in $\labISf$, we only need to simulate rules  $\labrn{id}$, $\lef{\IMP}$, and $\lef \BOX$  of $\labISfs$ using rules of $\labISf$, as  rule $\monl$ is admissible in $\labISf$. In particular, any sequent which is a conclusion of $\labrn{id}$ is also provable in $\labISf$, using $\Lref$ and $\idm$. Similarly, rule  $\lef \BOX$ can be simulated using rules $\Lref$, $\lefbm$, and the admissible rule $\w$, as follows: 
	\[
		\vlinf{\lef \BOX}{}{\sq G, \accs x y, \labelsb x{\BOX A} }{\sq G, \accs x y,  \labelsb x{\BOX A}, \labelsb yA } 
		\quad 
		\leadsto 
		\quad 
		\vlderivation{
			\vlin{\Lref}{}{ \sq G, \accs x y, \labelsb x{\BOX A} }{
				\vlin{\lefbm}{}{  \sq G, \futs x x, \accs x y, \labelsb x{\BOX A}  }{
					\vlin{\w}{}{  \sq G, \futs x x, \accs x y, \labelsb x{\BOX A}, \labelsb{y}{A} }{
						\vlhy{  \sq G,\accs x y, \labelsb x{\BOX A}, \labelsb{y}{A} }
					}
				}
			}
		}
	\]
	We do not show the case of rule $\lef{\IMP}$, which is similar.
        
	For the other direction, suppose we have a proof of $\sq G$ in $\labISf$. To stepwise translate this proof into  a proof of $\sq G$ in $\labISfs$, we need to show how to simulate the rules $\idm$, $\lefim$ and $\lefbm$  in $\labISf$. This can be easily done using $\monl$, which is a primitive rule of $\labISfs$, and $\w$, which is admissible in $\labISfs$. In particular, the conclusion of  rule $\idm$ is derivable in $\labISfs$, by applying $\monl$ and $\labrn{id}$. To simulate rule  $\lefbm$ we proceed as follows (again, the case of $\lefim$ is similar):
	\[
	\vlinf{\lefbm}{}{\sq G, \futs xy ,\accs y z, \labelsb x{\BOX A} }{\sq G, \futs xy ,\accs y z,  \labelsb x{\BOX A}, \labelsb zA }
	\quad 
	\leadsto 
	\quad 
	\vlderivation{
		\vlin{\monl}{}{ \sq G, \futs xy ,\accs y z, \labelsb x{\BOX A} }{
			\vlin{\lef \BOX}{}{ \sq G, \futs xy ,\accs y z, \labelsb x{\BOX A}, \labelsb y{\BOX A} }{
				\vlin{\w}{}{ \sq G, \futs xy ,\accs y z, \labelsb x{\BOX A}, \labelsb y{\BOX A}, \labelsb{z}{A} }{
					\vlhy{  \sq G, \futs xy ,\accs y z, \labelsb x{\BOX A},  \labelsb{z}{A} }
				}
			}
		}
	}\qedhere
	\]
\end{proof}

The following result is then an easy corollary of the equivalence of the proof systems $\labISf$ and $\labISfs$. 
	
	\begin{theorem}
		\label{thm:labIKs}
		A formula~$\fm A$~is a theorem of\/~$\ISfour$ if{f} for every $\lb x$, the sequent  $\labelsw xA$ has a proof in\/~$\labISfs$.
	\end{theorem}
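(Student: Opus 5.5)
The argument chains the two equivalences already established: Theorem~\ref{thm:labIK}, which relates $\ISfour$-theoremhood to provability in $\labISf$, and the preceding lemma, which relates provability in $\labISf$ to provability in $\labISfs$ for arbitrary sequents.

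For the left-to-right direction, assume $\fm A$ is a theorem of $\ISfour$ and fix an arbitrary label $\lb x$. By Theorem~\ref{thm:labIK}, the sequent $\labelsw xA$ has a proof in $\labISf$. This sequent is a single output-polarized labelled formula with no relational atoms, so it is trivially a set and is admissible input to the preceding lemma. That lemma then yields a proof of $\labelsw xA$ in $\labISfs$. Since $\lb x$ was arbitrary, this direction is done.

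For the right-to-left direction, assume $\labelsw xA$ has a proof in $\labISfs$ for every $\lb x$. Applying the lemma in its other direction to each such sequent gives a proof of $\labelsw xA$ in $\labISf$ for every $\lb x$. Theorem~\ref{thm:labIK} then gives that $\fm A$ is a theorem of $\ISfour$.

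The only point needing care is a bookkeeping mismatch. $\labISf$ works on multisets, while the lemma is stated for sets, so it must be checked that the end-sequent $\labelsw xA$ read as a multiset coincides with the same object read as a set. It does: the sequent consists of one formula, so there is no multiplicity to lose. Any internal set/multiset discrepancies, such as contraction and the cumulative rules, are already absorbed by the proof of the lemma, so nothing further is required.
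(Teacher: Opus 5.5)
Your proposal is correct and follows the paper's route: the paper states this theorem as an easy corollary of Theorem~\ref{thm:labIK} combined with the lemma establishing that a (set-)sequent is provable in $\labISf$ iff it is provable in $\labISfs$. Your additional check that the single-formula end-sequent coincides as a set and as a multiset is the only bookkeeping needed; the rest of the set/multiset discrepancy is handled inside that lemma.
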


In what follows, we shall give an alternative proof of completeness of $\labISfs$. The completeness proof is semantic, and is a corollary of our decision procedure, which produces a countermodel for a sequent in case it is not provable in $\labISfs$.

	\subsection{Models from sequents}
	\label{sec:models}

A fully labelled  sequent contains sufficient information to extract a birelational model. This will be useful when proving completeness, as we will be able to immediately construct a (counter)model from a leaf of a failed proof-search tree.

\begin{notation}\label{not:fully-labelled}
	Let $\sq G$~be a sequent.  We write $\labelsof G$ for the set of labels occurring in $\sq G$. On this set we define three binary relations $\srel{G}$, and $\sle G$, and $\svis G$ as follows: 	$\accsq xy$ iff $\accs xy\in\sq G$, and $\futsq xy$ iff $\futs xy\in\sq G$, and $\vissq xy$ iff there is an $\lb {x'}\in\labelsof G$ such that $\futsq x{x'}$ and $\accsq {x'}y$.
\end{notation}

\begin{definition}[Model of a sequent]
	\label{dfn:modelofsequent}
	Let $\sq{G}$~be a sequent. We define the \defn{model~$\modelof{G}$ of}~$\sq{G}$ to be the quadruple
	$\modelof{G}=\tuple{\labelsof G,\srel G,\sle G,V}$ where $V\colon \labelsof G\to 2^\A$ is such that for all atoms $\fm a \in \A$ we have 
	$\fm a \in V(\lb w)$ if{f} $\blackf waG$.
\end{definition}

This model will be a proper birelational model, provided that the sequent satisfies a number of properties, as we will show in Theorem~\ref{thm:completeness} below.
Intuitively, we want all the rules of $\labISfs$ to have been exhaustively but non-redundantly applied to the sequent. We shall express these requirements with the notion of  a \emph{happy sequent} below.

\begin{definition}[Happy labelled formula]\label{def:happyformula}
	Let $\sq G$ be a sequent. We say that a labelled formula~ $\blackf xAG$ (resp.~$\whitef xAG$) is \defn{happy} 
	if{f} the following conditions hold: 
	\begin{itemize}[noitemsep]
		\item $\blackf xaG$ is always happy; 
		\item $\whitef xa G$ is happy if{f}\/ $\blackff xa \notin \sq G$;
		\item $\blackf x\BOT G $ is never happy;
		\item $\whitef x\BOT G$ is always happy;
		\item $\blackf x{A \AND B}G$ is happy if{f}\/ $\blackf xAG$ and\/ $\blackf xB G$; 
		\item
		$\whitef x{A \AND B}G$  is happy if{f}\/ $\whiteff xA \in \sq G$ or\/ $\whiteff xB \in \sq G$; 
		\item
		$\blackf x{A \OR B}G$  is happy  if{f}\/ $\blackff xA \in \sq G$ or\/ $\blackff xB \in \sq G$; 
		\item
		$\whitef x{A \OR B}G$ is happy  if{f}\/ $\whiteff xA \in \sq G$ and\/ $\whiteff xB \in \sq G$; 
		\item
		$\blackf x{A \IMP B}G$ is happy if{f}\/ 
		$\whitef xAG$ or\/ $\blackff xB \in \sq G$;
		\item
		$\whitef x{A \IMP B}G$ is happy if{f}\/ 
		$\blackff yA \in \sq G$ and\/ $\whiteff yB \in \sq G$ for some~$ \lb y $ with\/ $\futsq xy $;
		\item
		$\blackf x{\BOX A}G$ is happy if{f}\/ 
		  $\blackff zA \in \sq G$  
                  for  all~$ \lb z$ with $ \accsq xz $;
		\item
		$\whitef x{\BOX A}G$ is happy if{f}\/ 
		$\whitef zAG$ for some $ \lb y$, $\lb z $ s.t.\ $ \futsq xy$ and $ \accsq yz $, 
		i.e., if{f} $\whitef zAG$ for some  $\lb z $ s.t.\  $ \vissq xz $;
		\item
		$\blackf x{\DIA A}G$ is happy if{f}\/ 
		$\blackf yAG$ for some~$ \lb y $ with\/ $ \accsq xy $; 
		\item 
		$\whitef x{\DIA A}G$ is happy if{f}\/ 
		  $\whiteff yA \in \sq G$ 
                  for all~$\lb y$ s.t.\ $ \accsq xy $.
	\end{itemize}
	Otherwise, the labelled formula is \defn{unhappy}.	
\end{definition}

\begin{definition}[Happy label]\label{def:happy-label}
  A label~$ \lb x $ occurring in a sequent~$\sq{G}$ is \defn{happy} if{f} all labelled formulas that occur at~$ \lb x$ in~$\sq G$ are happy.
\end{definition}

\begin{definition}[Structurally happy sequent] 
	\label{def:struct-sound-seq}
	A sequent~$\sq{G}$ is \defn{structurally happy} if{f} we have:
	\begin{enumerate}[leftmargin=3.5em, noitemsep]
		\item[$( \monl )$] if $\futsq xy$  and $\blackf xCG$, then $\blackf yCG$;
		\item[$ (\fone )$] if $\accsq xy$ and  $\futsq xz$, then there is~$\lb u$ such that $\futsq yu $ and $ \accsq zu$;
		\item[$(\ftwo)$] if $ \accsq xy $ and  $ \futsq yz $, then there is~$\lb u$ such that $ \futsq xu $ and $ \accsq uz$;		
		\item[$ (\Ltr )$] if $ \futsq xy $ and $\futsq yz $, then $\futsq xz $;	
		\item[$ (\Lref )$] $\futsq xx$ for all $ \lb x $ occurring in $\sq{G}$; 
		\item[$ (\Rtr )$] if $ \accsq xy $ and $\accsq yz $, then $\accsq xz $;	
		\item[$ (\Rref )$] $\accsq xx$ for all $ \lb x $ occurring in $\sq{G}$.
	\end{enumerate}
\end{definition}

\begin{definition}[Happy sequent]\label{def:happysequent}
	A sequent~$\sq{G}$ is \defn{happy} if{f} it is structurally happy and all labels in the sequent are happy.
\end{definition}

We can now show that the model of a happy sequent is a birelational model. Moreover, all input formulas are satisfied in the model, and all output formulas are falsified in the model. 

\begin{figure}[!t]
	\begin{minipage}{0.6\textwidth}
		$$
		\begin{array}{c c l}
			\gG^{\sq {R}} & = & \futs 1 1, \futs 22, \futs 33, \futs 44, \futs 55, \futs 66,\\
			& & \accs 11, \accs 22, \accs 33, \accs 44, \accs 55, \accs 66,\\
			& & \futs{1}{2}, \futs{2}{5}, \futs{3}{4}, \futs 1 5,\accs{2}{3}, \accs{5}{4}, \accs{4}{6}, \accs{5}{6}\\[1ex]
			\lef\gG & = & \labelsb{1}{a \OR b}, \labelsb{2}{a \OR b}, \labelsb{5}{a \OR b},\\  
			& & \labelsb{1}{a}, \labelsb{2}{a}, \labelsb{5}{a},  \labelsb{4}{\DIA d},  \labelsb{6}{d} \\[1ex]
			\sq G^\rigmark & = & \labelsw{1}{\BOX (c \OR (\DIA d \IMP e))}, \labelsw{3}{ c \OR (\DIA d \IMP e) }, \\
			& &\labelsw{3}{c}, \labelsw{3}{\DIA d \IMP e},  \labelsw{4}{e}\\[1ex]
                        \gG&=&\gG^{\sq R}\cup\lef\gG\cup\rig\gG
		\end{array}
		$$
	\end{minipage}
	\begin{minipage}{0.3\textwidth}
		\begin{tikzpicture}[thick, every node/.style={scale=1.2}]

\tikzstyle{node}=[circle,draw]
\tikzstyle{s-node}=[circle,draw,inner sep=2pt]
\tikzstyle{nonode}=[inner sep=0pt]
\tikzstyle{access}=[->,blue]
\tikzstyle{future}=[->,dashed]
\tikzstyle{deleted}=[->,gray]
\tikzstyle{loop}=[->,red]

\node[label={[align=center]right :{
	$\lef a$
	}}] 
	(1) at (2,0) [node, 
	] {$1$};

\node[label={[align=center]above right :{
	$\lef a$
	}}] 
	(2) at (2,2) [node,
	] {$2$};
\node[label={[align=left]right :{
	}}] (3) at (4,2) [node, 
	] {$3$};

\node[label={[align=center]above right:{
			$\lef a$
	}}] (100) at (2,4) [node, 
	] {$5$};

\node
	  (10) at (4,4) [node, 
	  ] {$4$};
\node[label=right :{$\lef d$}] (11) at (6,4) [node,
] {$6$};

\draw[access] (2) -- (3);

\draw[access] (100) -- (10);
\draw[access] (10) -- (11);

\draw[future] (1) -- (2);
\draw[future] (3) -- (10);
\draw[future] (2) -- (100);
%
\end{tikzpicture}
	\end{minipage}
	\caption{Model $\modelof{G}$ of a happy sequent $\sq G$. The nodes represent the worlds of the model, the dashed arrows represent the $\le$-relation, and the solid arrows represent the $\rel$-relation. We do not represent the reflexive  $\le$-edges and $\rel$-edges of $\modelof{G}$, as well as edges $\futs 1 5$ and $\accs 5 6$, which follow by transitivity.  
	}
	\label{fig:happy}
\end{figure}

\begin{theorem}
	\label{thm:completeness}
	For a happy sequent\/~$\sq{G}$, its model~$\modelof{G}=\tuple{\labelsof G,\srel G,\sle G,V}$ is a birelational model satisfying the following:  
	$i)$ if\/ $\blackf xAG$, then $\force{\modelof{G}}xA$; and $ii)$ if\/ $\whitef xAG$, then $\nforce{\modelof{G}}xA$.
\end{theorem}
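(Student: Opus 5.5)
The plan has two parts. First we check that $\modelof{G}$ is a birelational model. Then we prove $i)$ and $ii)$ together by induction on the structure of~$\fm A$.

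For the first part, $\labelsof G$ is nonempty because $\sq G$ contains at least one labelled formula or relational atom. Reflexivity and transitivity of $\sle G$ are conditions $(\Lref)$ and $(\Ltr)$ of Definition~\ref{def:struct-sound-seq}. Reflexivity and transitivity of $\srel G$ are $(\Rref)$ and $(\Rtr)$. Forward and backward confluence are exactly the conditions $(\fone)$ and $(\ftwo)$. Monotonicity of $V$ is the case $\fm C=\fm a$ of $(\monl)$.

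For the second part, fix $\fm A$ and assume both claims for all proper subformulas, at all labels simultaneously. We go through the cases of Definition~\ref{def:happyformula}.

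\begin{itemize}
\item $\fm A=\fm a$. The input case holds by the definition of~$V$. In the output case, happiness gives $\blackff xa\notin\sq G$, hence $\nforce{\modelof G}xa$.
\item $\fm A=\fm\BOT$. The input case is impossible, since $\blackf x\BOT G$ is never happy and every label is happy. The output case is trivial.
\item $\AND$, $\OR$, input $\IMP$ and $\DIA$. These follow directly from happiness and the induction hypothesis.
\item Output $\IMP$ and output $\BOX$. Happiness provides a label $\lb y$ with $\futsq xy$, $\blackff yA$ and $\whiteff yB$, respectively a label $\lb z$ with $\vissq xz$ and $\whiteff zA$. The induction hypothesis then refutes $\fm{A\IMP B}$, respectively $\fm{\BOX A}$, at $\lb x$.
\item Output $\DIA$. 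Happiness gives $\whiteff yA$ for every $\lb y$ with $\accsq xy$, so by the induction hypothesis no $R$-successor of $\lb x$ forces $\fm A$.
\end{itemize}

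The only delicate cases are input $\IMP$ and input $\BOX$. Their forcing clauses quantify over all $\le$-successors, while happiness only speaks about $\lb x$ itself. Here we use $(\monl)$. Suppose $\blackf x{A\IMP B}G$ and $\futsq xw$. Then $\blackff w{A\IMP B}\in\sq G$. Since $\lb w$ is happy, $\whiteff wA\in\sq G$ or $\blackff wB\in\sq G$. In the first case the induction hypothesis gives $\nforce{\modelof G}wA$. In the second it gives $\force{\modelof G}wB$. The input $\BOX$ case works the same way. Suppose $\futsq x{w}$ and $\accsq wu$. By $(\monl)$ we get $\blackff w{\BOX A}\in\sq G$. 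Happiness of $\lb w$ then gives $\blackff uA\in\sq G$, and the induction hypothesis gives $\force{\modelof G}uA$.

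The main point to get right is that the induction must be on the formula for all labels at once, and that $(\monl)$ holds for arbitrary formulas~$\fm C$ and not just atoms. Everything else is routine.
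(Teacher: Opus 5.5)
Your proof is correct and is the structural induction on $\fm A$ (simultaneously for all labels) that the paper calls straightforward, with $(\monl)$ for arbitrary formulas handling the input $\IMP$ and input $\BOX$ cases. Two small points: input $\IMP$ should come off your list of ``direct'' cases, since you rightly treat it separately later; and nonemptiness of $\labelsof G$ is not automatic, because the empty sequent is vacuously happy, so that step tacitly assumes $\sq G\neq\emptyset$ (the paper glosses over this too).
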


\begin{proof}
  This is a straightforward induction on the structure of $\fm{A}$ (see, e.g., \cite{GKMMS:IK}).
\end{proof}
\begin{example}
	\label{ex:happy}
	Take the model $\modelof{G}$ of the happy sequent~$\sq{G}$ represented in Figure~\ref{fig:happy}. 
	It holds that, e.g., $\force{\modelof{G}}{\lb 4}{\DIA d}$  and  $\nforce{\modelof{G}}{1}{ \BOX (c \OR (\DIA d \IMP e))}$.
\end{example}

In what follows, we will often make use of the notion of axiomatic sequent.

\begin{definition}[Axiomatic sequent]
	A sequent~$\sq{G}$ is \defn{axiomatic} if{f} there is a label~$\lb x$ such that either $\blackf xaG$ and $\whitef xaG$ for some~$\fm a$, or $\blackf x\BOT G$.  Otherwise, $\sq{G}$~is called \defn{non-axiomatic}.
\end{definition}

\begin{remark}
	An axiomatic sequent~$\sq G$ is never happy, because either $\whitef xaG$ or $\blackf x\BOT G$ is unhappy.
\end{remark}

	\subsection{Layers, clusters}
	\label{sec:layers:clusters}

	Our algorithm relies on the fact that  the sequents created by the algorithm are all \emph{layered} (provided the endsequent contains only one label), where each layer can be thought of as a classical $\Sfour$-model.

	\begin{definition}[Layer]\label{def:layer}
		For a sequent~$\sq G$, we define the relation~$\lrel G$ to be the reflexive and transitive closure of $\grel G \cup \grel{G}^{-1}$. Since this is an equivalence relation, we can define
		a \defn{layer}~$L$ in~$\sq G$ to be an equivalence class of~$\lrel G$.
		A layer $L$ is \defn{happy} if{f} all labels $\lb x\in L$ are happy.
	\end{definition}

	\begin{definition}[Layered sequent]\label{def:lay-sq}
		We say that a sequent~$\sq{G}$ is \defn{layered} if{f} for any 
		labels~$\lb x$, $\lb{x'}$, $\lb y$,~and~$\lb{y'}$ occurring in~$\sq G$: 
		\begin{enumerate}
			\item if   $\accsqh xy$ for $\lb x \neq \lb y$, then $\nfutsq xy$ and $\nfutsq yx$; and
			\item 
			\label{def:lay-sq-it2}
			if  $\accsqh xy$ and $\accsqh{x'}{y'}$ and $\futsq{x}{x'}$ for $\lb x \neq \lb {x'}$,  then  $\nfutsq{y'}{y}$.
		\end{enumerate}
		For layers~$L_1$~and~$L_2$, we define $L_1 \le L_2$ whenever there are labels $\lb x \in L_1$ and $\lb y \in L_2$ such that $\futsq xy$. We write $L_1<L_2$ if{f} $L_1 \le L_2$ and $L_1 \neq L_2$.
	\end{definition}

        \begin{proposition}
          Let $L_1$ and $L_2$ be layers in a layered sequent $\sq G$, and let $\lb x\in\labelsof G$. If $\lb x\in L_1$ and $\lb x\in L_2$, then $L_1=L_2$.
        \end{proposition}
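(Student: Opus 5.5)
This is immediate from Definition~\ref{def:layer}: layers are the equivalence classes of $\lrel G$, and distinct equivalence classes of an equivalence relation are disjoint. So the plan is to check that $\lrel G$ is indeed an equivalence relation on $\labelsof G$ and then apply the standard fact about equivalence classes. The layeredness hypothesis is not needed for this; it is only there because the statement is made in that context.

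First I check that $\lrel G$ is an equivalence relation. It is reflexive and transitive by construction, since it is the reflexive–transitive closure of $\grel G\cup\grel G^{-1}$. For symmetry, note that $\grel G\cup\grel G^{-1}$ is itself symmetric, and the reflexive–transitive closure of a symmetric relation is symmetric: reverse the witnessing chain $\lb x=\lb{z_0},\dots,\lb{z_n}=\lb y$, each step of which lies in the symmetric generating relation (or is an identity step).

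Next comes the disjointness argument. Let $L_1=[\lb{a}]$ and $L_2=[\lb{b}]$ be the equivalence classes of $\lb a$ and $\lb b$, and suppose $\lb x\in L_1\cap L_2$. Then $\lb a\lrel G\lb x$ and $\lb b\lrel G\lb x$. By symmetry and transitivity we get $\lb a\lrel G\lb b$. Hence for any $\lb z$, we have $\lb z\lrel G\lb a$ if and only if $\lb z\lrel G\lb b$, so $L_1=L_2$.

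There is no real obstacle here. The only point deserving a sentence is symmetry of the closure, which the paper already asserts when it says ``Since this is an equivalence relation''. So the proof can simply cite that remark together with the fact that equivalence classes partition $\labelsof G$.
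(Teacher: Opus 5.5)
Your proof is correct and is essentially the paper's argument. The paper takes any $\lb y\in L_1$, notes that $\lb y$ is $\lrel G$-related to $\lb x$, concludes $\lb y\in L_2$, and then argues the same way in the other direction; this is exactly the disjointness of equivalence classes that you invoke. Your explicit check that the reflexive--transitive closure of $\grel G\cup\grel G^{-1}$ is symmetric is accurate, and so is your remark that layeredness plays no role (the paper's proof does not use it either).
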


        \begin{proof}
          Take $\lb y\in L_1$. This means that $\accsqh xy$. Therefore also $\lb y\in L_2$. And vice-versa.
        \end{proof}

        \begin{definition}[Layer of a label]\label{def:layer-of-label}
          Let $\sq G$ be a layered sequent and let $\lb x\in\labelsof G$. The \defn{layer of~$\lb x$}, denoted by $\layerof x$, is the unique layer with $\lb x\in\layerof x$. 
        \end{definition}

\begin{proposition}
	\label{prop:hat_satur}
	Structurally happy sequents\/~$\sq G$  are also structurally happy w.r.t.~$\lrel G$:
	\begin{enumerate}[leftmargin=3.5em]
		\item[$(\fhat)$] if $\accsqh xy $ and  $ \futsq xz $, then there is~$\lb u$ such that $ \futsq yu $ and $\accsqh zu$.
	\end{enumerate}
\end{proposition}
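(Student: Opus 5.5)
We argue by induction on the length of a zigzag witnessing $\accsqh xy$. By Definition~\ref{def:layer}, $\lrel G$ is the reflexive–transitive closure of $\srel G\cup\srel G^{-1}$, so $\accsqh xy$ means there is a chain $\lb x=\lb{x_0},\lb{x_1},\dots,\lb{x_n}=\lb y$ in which, for each $i$, either $\accsqq{x_i}{x_{i+1}}$ or $\accsqq{x_{i+1}}{x_i}$. The statement to prove is then: for every such chain and every $\lb z$ with $\futsq{x}{z}$, there is $\lb u$ with $\futsq yu$ and $\accsqh zu$. We induct on $n$.

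For the base case $n=0$ we have $\lb y=\lb x$, and we take $\lb u=\lb z$: then $\futsq yu$ holds by assumption, and $\accsqh zz$ holds by reflexivity of $\lrel G$.

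For the step, assume the claim for chains of length $n$. Given a chain of length $n+1$ ending at $\lb y$ and $\futsq xz$, the induction hypothesis applied to the prefix ending at $\lb{x_n}$ yields $\lb{u'}$ with $\futsq{x_n}{u'}$ and $\accsqh z{u'}$. It remains to cross the last edge between $\lb{x_n}$ and $\lb y$, and there are two cases.
\begin{enumerate}
\item If $\accsqq{x_n}{y}$, apply $(\fone)$ of Definition~\ref{def:struct-sound-seq} to $\accsqq{x_n}y$ and $\futsq{x_n}{u'}$. This gives $\lb u$ with $\futsq yu$ and $\accsqq{u'}u$.
\item If $\accsqq{y}{x_n}$, apply $(\ftwo)$ to $\accsqq y{x_n}$ and $\futsq{x_n}{u'}$. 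This gives $\lb u$ with $\futsq yu$ and $\accsqq u{u'}$.
\end{enumerate}
In both cases $\lb{u'}$ and $\lb u$ are related by $\srel G$ in one direction or the other, so $\accsqh{u'}u$. Combining this with $\accsqh z{u'}$ by transitivity of $\lrel G$ gives $\accsqh zu$, which completes the step.

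There is no real obstacle. The one point to notice is that $(\ftwo)$ is needed for the backward edges of the zigzag, which is why we must move along $\lrel G$ rather than along $\srel G$ alone. Reflexivity and transitivity of $\srel G$ are not used.
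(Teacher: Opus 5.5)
Your proof is correct and follows the paper's argument essentially verbatim. Both induct on the length of the zigzag witnessing $\accsqh xy$, take $u=z$ in the base case, and cross the last edge using $(\fone)$ for a forward edge and $(\ftwo)$ for a backward edge.
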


\begin{proof}
	There must be a sequence $\lb{x_0}, \dots, \lb{x_n}$ of labels with $n\ge0$ such that $\lb{x_0}=\lb x$, $\lb{x_n}=\lb y$, and  for each $0 \le i\le n-1$, either $\accsq{x_i}{x_{i+1}}$ or $\accsq{x_{i+1}}{x_{i}}$. 
	We use induction on $n$. 
	If $n=0$, i.e., $\lb x = \lb y$, then $\lb u \colonequals \lb z$ suffices. Otherwise, by IH, there is~$\lb v$ such that $\futsq{x_{n-1}}v$ and $\accsqh zv$. If $\accsq{x_{n}}{x_{n-1}}$, there is $\lb u$ by $(\ftwo)$-structural happiness  such that  $\futsq{x_n}u$ and $\accsq uv$. If  $\accsq{x_{n-1}}{x_n}$, there is $\lb u$ by $(\fone)$-structural happiness such that $\futsq{x_n}u$ and $\accsq vu$. Either way, $\futsq yu$ and $\accsqh zu$.
\end{proof}    

\begin{proposition}
	\label{prop:order}
	For a structurally happy and layered sequent $\sq G$, the relation\/~$\le$ is a partial order on the layers of $\sq G$.
\end{proposition}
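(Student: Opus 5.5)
We need to show that the relation $\le$ on layers is reflexive, transitive and antisymmetric.

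\emph{Reflexivity.} This is immediate from $(\Lref)$-structural happiness. Any layer $L$ is nonempty; pick $\lb x\in L$. Then $\futsq xx$, so $L\le L$.

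\emph{Transitivity.} Suppose $L_1\le L_2\le L_3$. Then there are $\lb x\in L_1$, $\lb y,\lb{y'}\in L_2$ and $\lb z\in L_3$ with $\futsq xy$ and $\futsq{y'}z$. Since $\lb y$ and $\lb{y'}$ lie in the same layer, $\accsqh{y'}y$. Apply Proposition~\ref{prop:hat_satur} ($\fhat$) to $\accsqh{y'}{y}$ and $\futsq{y'}{z}$. This gives $\lb u$ with $\futsq yu$ and $\accsqh zu$, so $\lb u\in L_3$. By $(\Ltr)$ we get $\futsq xu$, and hence $L_1\le L_3$.

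\emph{Antisymmetry.} This is the main step, and it is where layeredness is used. Suppose $L_1\le L_2$ and $L_2\le L_1$ with $L_1\ne L_2$. Choose $\lb x,\lb{x''}\in L_1$ and $\lb y,\lb{y'}\in L_2$ with $\futsq xy$ and $\futsq{y'}{x''}$.
\begin{itemize}
\item As in the transitivity argument, apply $(\fhat)$ to $\accsqh{y'}y$ and $\futsq{y'}{x''}$. This yields $\lb u\in L_1$ with $\futsq yu$.
\item By $(\Ltr)$, $\futsq xu$ with $\lb x,\lb u\in L_1$.
\item Since $L_1\ne L_2$ and layers are disjoint, $\lb x\ne\lb y$. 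We have $\accsqh{x}{u}$ (both in $L_1$), $\accsqh{y}{y}$ (reflexivity of the closure), and $\futsq xy$ with $\lb x\neq\lb y$.
\item Condition~\ref{def:lay-sq-it2} of Definition~\ref{def:lay-sq}, instantiated with $\lb x,\lb u$ in the roles of $\lb x,\lb y$ and $\lb y,\lb y$ in the roles of $\lb{x'},\lb{y'}$, forbids $\futsq{y}{u}$.
\item But we derived $\futsq yu$, a contradiction. Hence $L_1=L_2$.
\end{itemize}
The only delicate point is matching the quantifier pattern of condition~\ref{def:lay-sq-it2}: it needs the pair of $\le$-related labels to be distinct, and the side condition $\lb x\ne\lb{x'}$ holds here precisely because $\lb x$ and $\lb y$ lie in different layers.

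Together with the first two parts, this shows that $\le$ is a partial order on the layers of $\sq G$.
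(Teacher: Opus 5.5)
Your proof is correct and follows the paper's argument. Reflexivity comes from $(\Lref)$, transitivity from Proposition~\ref{prop:hat_satur} combined with $(\Ltr)$, and antisymmetry from condition~\ref{def:lay-sq-it2} of Definition~\ref{def:lay-sq}. The one difference is that your detour through Proposition~\ref{prop:hat_satur} and $(\Ltr)$ in the antisymmetry step is unnecessary. Since $x\le y$ with $x\neq y$, $x''\in L_1$ and $y'\in L_2$, condition~\ref{def:lay-sq-it2} already forbids $y'\le x''$ directly, which is how the paper concludes.
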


\begin{proof}
	For reflexivity, consider any layer~$L$. For any label $\lb x \in L$, by ($ \Lref $)-structural happiness, $\futsq xx$. Hence, $L \le L$ by Definition~\ref{def:lay-sq}.
	
	For transitivity, let $L_1 \le L_2$  and $L_2 \le L_3$ for layers~$L_1$, $L_2$,~and~$L_3$.  By Definition~\ref{def:lay-sq}, $\futsq x{y'}$ and $\futsq yz$ for some labels~$\lb x \in L_1$, $\lb {y'}, \lb y \in L_2$, and $\lb z \in L_3$. Since $\accsqh {y'}y$, by ($\fhat$)-structural happiness (Proposition~\ref{prop:hat_satur}), there is a label $\lb {z'}$ such that $\futsq {y'}{z'}$ and $\accsqh {z'}z$. The latter means that $\lb {z'} \in L_3$.  By ($ \Ltr $)-structural happiness, $\futsq x{z'}$. Hence, $L_1 \le L_3$ by Definition~\ref{def:lay-sq}.
	
	For antisymmetry, let $L_1\le L_2$ and $L_1\neq L_2$.
	By Definition~\ref{def:lay-sq}, there are labels $\lb{x} \in L_1$ and $\lb{x'} \in L_2$ such that  $\futsq{x}{x'}$. For arbitrary labels $\lb{y} \in L_1$ and $\lb{y'} \in L_2$, we have $\accsqh xy$ and $\accsqh{x'}{y'}$. Since $L_1 \neq L_2$, we have $\lb x \neq \lb{x'}$. Hence, $\nfutsq{y'}y$ by~\ref{def:lay-sq-it2}) of Definition~\ref{def:lay-sq}. Since $\nfutsq{y'}y$ for any $\lb{y'} \in L_2$ and $\lb{y} \in L_1$, it is not the case that $L_2 \leq L_1$. 
\end{proof}

        \begin{remark}
        \label{rem:antisymm}
        The second condition in Def.~\ref{def:lay-sq} is only needed to establish antisymmetry in Proposition~\ref{prop:order}. If $\sle G$ on labels is already antisymmetric then the second condition follows from the first. 
        \end{remark}

In a sequent $\sq G$  constructed by $\labISfs$,  each layer also has the structure of (the reflexive and transitive closure of) a tree  with respect to~$\grel G$. 
But, as stated before, in order to search for a proof and a countermodel at the same time, we need to weaken this tree structure on the layers.%

\begin{definition}[Having no past/no future]
	A label~$\lb z$ in a sequent~$\sq G$ \defn{does not have a past} if{f} $\futsq xz$~implies $\lb x =  \lb z$ for all~$\lb x\in\labelsof G$, and $\lb z$ \defn{does not have a future} if{f} $\futsq zx$~implies $\lb x =  \lb z$ for all~$\lb x\in\labelsof G$.
	%
\end{definition}

\begin{proposition}
\label{prop:no_future_layer}
Let $L$ be a layer in a layered and structurally happy sequent $\sq G$. Then the following three statements are pairwise equivalent
\begin{enumerate}[(i)]
\item $L$ is maximal w.r.t.~the partial order $\le$ on the layers of $\sq G$;
\item no label from $L$ has a future;
\item at least one label $\lb z \in L$ has no future.
\end{enumerate}
Such maximal layers will also be called \defn{topmost}.
\end{proposition}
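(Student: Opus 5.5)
We prove the cycle of implications (i)$\Rightarrow$(ii)$\Rightarrow$(iii)$\Rightarrow$(i), using the layered conditions of Definition~\ref{def:lay-sq}, structural happiness, and Proposition~\ref{prop:hat_satur}.

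\textbf{(ii)$\Rightarrow$(iii).} This step is immediate, since layers are nonempty equivalence classes.

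\textbf{(i)$\Rightarrow$(ii).} Suppose $L$ is maximal and let $\lb x\in L$ with $\futsq xy$. Then $L\le \layerof y$ by definition of the order on layers, so maximality gives $\layerof y = L$. Hence $\accsqh xy$. If $\lb x\neq\lb y$, the first condition of Definition~\ref{def:lay-sq} forbids $\futsq xy$, a contradiction. Therefore $\lb y=\lb x$, so $\lb x$ has no future.

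\textbf{(iii)$\Rightarrow$(i).} Let $\lb z\in L$ have no future, and suppose $L\le L'$. We must show $L'=L$.
\begin{itemize}
\item By definition of $\le$ on layers, there are $\lb x\in L$ and $\lb y\in L'$ with $\futsq xy$.
\item Since $\accsqh xz$, Proposition~\ref{prop:hat_satur} applied to $\accsqh xz$ and $\futsq xy$ yields $\lb u$ with $\futsq zu$ and $\accsqh yu$.
\item Since $\lb z$ has no future, $\lb u=\lb z$. Hence $\accsqh yz$, so $\lb y\in L$ and $L'=L$.
\end{itemize}

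The only step with content is the transfer of the future from $\lb x$ to $\lb z$ in (iii)$\Rightarrow$(i); Proposition~\ref{prop:hat_satur}, which rests on structural happiness, handles exactly this. The layered condition is used only in (i)$\Rightarrow$(ii).
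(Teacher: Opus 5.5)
Your proof is correct and follows essentially the paper's argument. For (iii)$\Rightarrow$(i) you transfer the future from an arbitrary label of $L$ to the future-less $\lb z$ via Proposition~\ref{prop:hat_satur}, exactly as the paper does; arguing directly that $L\le L'$ forces $L'=L$, rather than by contradiction from $L<L'$, is immaterial. Your (i)$\Rightarrow$(ii) spells out, via the first layeredness condition, what the paper only states as following directly from the definitions.
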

\begin{proof}
The implication from (i) to (ii) directly follows from the definition of layeredness and  $\le$ on layers. 
The implication from (ii) to (iii) is trivial. 
For the implication from (iii) to (i),
suppose towards a contradiction that $\lb z \in L$ has no future but $L < L'$ for some  layer $L'$. This means that there are labels $\lb y \in L$ and $\lb{y'} \in L'$ such that $\futsq y{y'}$. By $(\fhat)$-structural happiness (Prop.~\ref{prop:hat_satur}), since $\accsqh zy$, there is a label $\lb {z'}$ such that $\futsq z{z'}$ and $\accsqh {z'}{y'}$, i.e., $\lb{z'} \in L'$. Since $L \ne L'$, it follows that $\lb z \ne \lb{z'}$, hence, $\lb z$~has a future, in contradiction to the assumption.
\end{proof}

\begin{definition}[Cluster]
  \label{def:cluster}
  If~$\sq G$ is structurally happy, then $\grel G \cap \grel{G}^{-1}$~is an equivalence relation, and we can define a \defn{cluster}~$\lbc C$ in~$\sq G$ to be an equivalence class $\lbc C=\{\lb{x_1},...,\lb{x_n}\}$ of~$\grel G \cap \grel{G}^{-1}$. For a label $\lb x\in\labelsof G$, we write $\clusterof x$ for the unique cluster containing $\lb x$, and call it the \defn{cluster of~$\lb x$}. A cluster~$\lbc C=\set{\lb{x_1}}$ containing only one label is called \defn{singleton}. On clusters, we define two binary relations:
  \begin{itemize}
  \item $\futsqc{C_1}{C_2}$ if{f} for all $\lb y\in \lbc{C_2}$ there is $\lb x\in \lbc{C_1}$ with $\futsq xy$.
  \item $\accsqc{C_1}{C_2}$ if{f} there are $\lb x\in\lbc C_1$ and $\lb y\in\lbc C_2$ with $\accsq xy$. 
  \end{itemize}
  We sometimes abuse the notation and replace one of the clusters by a label $\lb x$ even when $\{\lb x\}$ is not a cluster. Nevertheless, the definitions are then applied verbatim to $\{\lb x\}$.
\end{definition}

\begin{proposition}[restate = OrderClusters, name = ] 
	\label{prop:order_clusters}
	For a structurally happy and layered sequent\/~$\sq G$, the relation $\grel G$ and the relation $\sle G$ are partial orders on the clusters of $\sq G$. 
\end{proposition}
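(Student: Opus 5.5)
We check reflexivity, transitivity and antisymmetry for each of the two relations on clusters, as in Definition~\ref{def:cluster}, using structural happiness and, for $\sle G$, layeredness.

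\textbf{The relation $\grel G$.} For reflexivity, take any $\lb x\in\lbc C$; by ($\Rref$) we have $\accsq xx$, hence $\accsqc CC$. For transitivity, let $\accsq xy$ with $\lb x\in\lbc C_1$, $\lb y\in\lbc C_2$, and $\accsq{y'}z$ with $\lb{y'}\in\lbc C_2$, $\lb z\in\lbc C_3$. Since $\lb y,\lb{y'}$ lie in the same cluster, $\accsq y{y'}$. Two applications of ($\Rtr$) give $\accsq xz$, so $\accsqc{C_1}{C_3}$. For antisymmetry, suppose $\accsqc{C_1}{C_2}$ via $\accsq xy$ and $\accsqc{C_2}{C_1}$ via $\accsq{y'}{x'}$. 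Using the cluster-internal $\grel G$-edges together with ($\Rtr$), we get $\accsq yx$. Hence $\lb x$ and $\lb y$ lie in the same cluster, i.e.\ $\lbc C_1=\lbc C_2$. Note that the relation on clusters is independent of the chosen representatives, by the same transitivity argument.

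\textbf{The relation $\sle G$.} Reflexivity follows from ($\Lref$): every $\lb y\in\lbc C$ satisfies $\futsq yy$. For transitivity, suppose $\futsqc{C_1}{C_2}$ and $\futsqc{C_2}{C_3}$. For $\lb z\in\lbc C_3$, pick $\lb y\in\lbc C_2$ with $\futsq yz$, and then $\lb x\in\lbc C_1$ with $\futsq xy$. By ($\Ltr$), $\futsq xz$.

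\textbf{Main obstacle: antisymmetry of $\sle G$.} Suppose $\futsqc{C_1}{C_2}$ and $\futsqc{C_2}{C_1}$, and take $\lb y\in\lbc C_2$. There is $\lb x\in\lbc C_1$ with $\futsq xy$. If $\lb x\neq\lb y$, then since $\lb x$ and $\lb y$ are in the same layer exactly when $\accsqh xy$, layeredness condition~1 would forbid $\futsq xy$. Hence $\lb x$ and $\lb y$ lie in different layers $L_1\neq L_2$ with $L_1\le L_2$. (All labels of a cluster are in one layer.) Symmetrically, from $\futsqc{C_2}{C_1}$ we get a label of $\lbc C_2$ below a label of $\lbc C_1$, so $L_2\le L_1$, unless that pair coincides. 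Proposition~\ref{prop:order} gives antisymmetry on layers, which contradicts $L_1\neq L_2$. Therefore every $\lb y\in\lbc C_2$ equals some $\lb x\in\lbc C_1$, so $\lbc C_2\subseteq\lbc C_1$. Symmetrically $\lbc C_1\subseteq\lbc C_2$.

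The delicate point is the case split: we must ensure that whenever $\lb x\neq\lb y$, the two labels really lie in different layers. This holds because condition~1 of layeredness excludes $\le$-edges between distinct labels of the same layer. Once that is settled, Proposition~\ref{prop:order} finishes the argument.
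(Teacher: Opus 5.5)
Your proof is correct. The treatment of $\grel G$, and of reflexivity and transitivity of $\sle G$, matches the paper. Your antisymmetry argument for $\grel G$ is slightly leaner: you only need one pair $\lb x,\lb y$ to be mutually $\grel G$-related, whereas the paper shows this for all pairs.

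The genuine difference is in antisymmetry of $\sle G$. The paper argues directly on labels using condition~2 of Definition~\ref{def:lay-sq}. It assumes $\futsqc{C_1}{C_2}$ and $\lbc{C_1}\neq\lbc{C_2}$. For each $\lb y\in\lbc{C_2}$ it picks $\lb{x'}\in\lbc{C_1}$ with $\futsq{x'}{y}$. For an arbitrary $\lb x\in\lbc{C_1}$ it then applies condition~2 to $\accsqh{x'}{x}$, $\accsqh yy$, $\futsq{x'}{y}$ and $\lb{x'}\neq\lb y$, obtaining $\nfutsq yx$. Hence no label of $\lbc{C_2}$ lies below any label of $\lbc{C_1}$.

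You instead use condition~1 to show that a $\le$-edge between distinct labels always leaves its layer. You then inherit antisymmetry from the layer order of Proposition~\ref{prop:order}. Your route makes explicit that distinct $\le$-related clusters sit in strictly increasing layers, so inside a single layer the cluster order is discrete. The cost is that it depends on Proposition~\ref{prop:order}, whose antisymmetry proof is essentially the same condition-2 argument. The paper's proof is self-contained and takes a single step.

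A minor point: your caveat ``unless that pair coincides'' is unnecessary. Since $L_1\neq L_2$, the two clusters are already disjoint, and even a coinciding pair would give $L_2\le L_1$ by $(\Lref)$.
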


\begin{proof}
	For reflexivity of~$\grel G$, consider any cluster~$\lbc C$. For any label $\lb x \in \lbc C$, by ($ \Rref $)-structural happiness, $\accsq xx$. Hence, $\accsq CC$ by Definition~\ref{def:cluster}. 
	For transitivity of~$\grel G$, let $\accsq{C_1}{C_2}$  and $\accsq{C_2}{C_3}$ for clusters~$\lbc C_1$, $\lbc C_2$,~and~$\lbc C_3$.  By Definition~\ref{def:cluster}, $\accsq xy$ and $\accsq uz$ for some labels~$\lb x \in \lbc{C_1}$, $\lb y, \lb u \in \lbc{C_2}$, and $\lb z \in \lbc{C_3}$. Since $\accsq yu$, by $(\Rtr)$-structural happiness $\accsq xz$. Thus, $\accsq{C_1}{C_3}$ by Definition~\ref{def:cluster}. 
	For antisymmetry of~$\grel G$, let $\accsq{C_1}{C_2}$ and $\accsq{C_2}{C_1}$.  
	By Definition~\ref{def:cluster}, there are labels $\lb{x}, \lb{y} \in \lbc{C_1}$ and $\lb{x'},\lb{y'} \in \lbc{C_2}$ such that  $\accsq{x}{x'}$ and $\accsq{y'}{y}$. For arbitrary labels $\lb{u} \in \lbc{C_1}$ and $\lb{v} \in \lbc{C_2}$, we have $\accsq ux$, and $\accsq{x'}v$, hence, by $(\Rtr)$-structural happiness, $\accsq uv$. Similarly, $\accsq vu$ because $\accsq v{y'}$ and $\accsq yu$. Since both $\accsq uv$ and $\accsq vu$  for all $\lb{u} \in \lbc{C_1}$ and $\lb{v} \in \lbc{C_2}$, all labels in these two clusters form one equivalence class w.r.t.~$\grel G \cap \grel{G}^{-1}$, i.e.,~$\lbc{C_1} = \lbc{C_2}$. 
	
	For reflexivity of~$\sle G$, consider any cluster~$\lbc C$. For every label $\lb y \in \lbc C$, by ($ \Lref $)-structural happiness, $\futsq yy$. Hence, $\futsq CC$ by Definition~\ref{def:cluster}. 
	For transitivity of~$\sle G$, let $\futsq{C_1}{C_2}$  and $\futsq{C_2}{C_3}$ for clusters~$\lbc C_1$, $\lbc C_2$,~and~$\lbc C_3$.  By Definition~\ref{def:cluster}, for every  $\lb z \in \lbc{C_3}$, there is $\lb y \in \lbc{C_2}$ such that $\futsq yz$. In its turn, for this~$\lb y$, there is $\lb x \in \lbc{C_1}$ such that $\futsq xy$. By $(\Ltr)$-structural happiness $\futsq xz$ for this~$\lb x$. Thus, $\futsq{C_1}{C_3}$ by Definition~\ref{def:cluster}. Hence, $\sle G$ is a preorder.
	
	Assume now additionally that $\sq G$ is layered.
	For antisymmetry of~$\sle G$, let $\futsq{C_1}{C_2}$ and $\lbc{C_1}\neq\lbc{C_2}$, i.e., $\lbc{C_1} \cap \lbc{C_2} = \varnothing$. To show that $\nfutsq{C_2}{C_1}$, we consider any label $\lb x \in \lbc{C_1}$ and show that $\nfutsq yx$ for any~$\lb y \in \lbc{C_2}$.
	By Definition~\ref{def:cluster}, for each $\lb y \in \lbc{C_2}$, there is some  label $\lb{x'} \in \lbc{C_1}$ such that  $\futsq{x'}y$. We have $\accsqh{x'}x$ because they belong to the same cluster, $\accsq yy$, and hence $\accsqh yy$, by $(\Rref)$-structural happiness, and $\lb{x'} \ne \lb y$ because they are from disjoint clusters. Hence,  $\nfutsq yx$ by Definition~\ref{def:lay-sq}. 
	Since $\lb y$ was chosen arbitrarily, it follows that $\nfutsq{C_2}{C_1}$ and $\sle G$~is a partial order.	
\end{proof}

\section{Proof Search with Macro-Rules}
\label{sec:macro-rules}

To define our algorithm, it is convenient to ``group together'' the rules of $\labISfs$ into a few \emph{macro-rules}. 
This allows us to include part of the proof search strategy within the macro-rules, or as side conditions to the macro-rules.

Furthermore, these macro-rules are such that every sequent $\sq G$ visited by our algorithm 
is structurally happy and satisfies two additional properties: the order relation $\le$ between layers of~$\sq G$  forms a tree and the order relation  $\grel G$ between clusters in the same layer also forms a tree. 
	\begin{definition}[Nice sequent]\label{def:nice}
		A sequent $\sq G$ is \defn{nice} if{f} it is (i) layered (Def.~\ref{def:lay-sq}), (ii) structurally happy (Def.~\ref{def:struct-sound-seq}), and is such that (iii) the set of its layers form a tree with respect to the relation~$\le$, and (iv) in every layer $L$, all the clusters in $L$ form a tree with respect to the relation~$\grel G$. 
	\end{definition}

\subsection{Semi-Saturation}	
\label{subs:semi-sat-dia}

First observe that all logical rules of $\labISfs$, except for $\rig\IMP$ and $\rig\BOX$, only work locally on a layer, whereas $\rig\IMP$ and $\rig\BOX$ create a new layer. Our objective is to first `exhaust' all the rules applicable to a layer before moving to the next one. However, because of the rule  $\lef\DIA$ (in combination with the structural rules), a layer might grow indefinitely. We thus need to handle with care applications of rule $\lef\DIA$, which introduces fresh labels in a layer. But we can eagerly apply rules $\lef\AND,\rig\AND,\lef\OR,\rig\OR,\lef\IMP,\lef\BOX$, and $\rig\DIA$ to a layer. 
This process is called \emph{semi-saturation} and will be encompassed by our first macro-rule.

\begin{definition}[Semi-saturation]\label{def:sat-sqset}
	Given a sequent $\sq G$, a \defn{semi-saturation tree} for $\sq G$ is a derivation tree $\tr T$ having sequents $\sq G_1, \dots, \sq G_n$ as leaves and 
 	constructed by applying the rules in $\set{\lef\AND,\rig\AND,\lef\OR,\rig\OR,\lef\IMP,\lef\BOX,\rig\DIA}$. 
	With this, we can define the following \emph{macro rule} 
	\begin{equation}
		\label{eq:satr}
		\vliiinf{\satr}{}{\sq G}{\sq G_1}{\ldots}{\sq G_n}
	\end{equation}
	called \defn{semi-saturation}. A sequent $\sq G$ is called \defn{semi-saturated} if{f} all labelled formulas occurring in $\sq G$ are happy, except those of shape $\fmb\BOT$, $\fmw a$, $\fmw{A \IMP B}$, $\fmw{\BOX A}$ and~$\fmb{\DIA A}$. We write
 	\begin{equation}
		\label{eq:ssatr}
		\vliiinf{\ssatr}{}{\sq G}{\sq G_1}{\ldots}{\sq G_n}
	\end{equation}
       for an instance of $\satr$ if all $\sq G_1,\ldots,\sq G_n$ are semi-saturated, and in this case we call the set $\set{\sq G_1,\ldots,\sq G_n}$ a \defn{semi-saturation} of $\sq G$.
\end{definition}

We immediately obtain the following:
	
\begin{lemma}\label{lem:satr}
  The $\satr$-rule is derivable in $\labISfs$.
\end{lemma}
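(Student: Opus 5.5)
The lemma is essentially immediate from the definitions. By Definition~\ref{def:sat-sqset}, an instance of $\satr$ with conclusion $\sq G$ and premises $\sq G_1,\ldots,\sq G_n$ is given together with a semi-saturation tree $\tr T$. This $\tr T$ is a derivation tree whose root is $\sq G$ and whose leaves are $\sq G_1,\ldots,\sq G_n$. Its inner nodes are labelled only by rules from $\set{\lef\AND,\rig\AND,\lef\OR,\rig\OR,\lef\IMP,\lef\BOX,\rig\DIA}$, all of which are rules of $\labISfs$ (Figure~\ref{fig:labIKp}). So the plan is simply to exhibit $\tr T$ itself as the required derivation in $\labISfs$ of the conclusion from the premises.

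Concretely, I would argue by induction on the height of $\tr T$. If $\tr T$ consists of a single node, then $n=1$ and $\sq G_1=\sq G$, and the trivial derivation suffices. Otherwise the root of $\tr T$ is the conclusion of an instance of some rule $\rr$ from the above set, with premises $\sq H_1,\ldots,\sq H_k$ ($k\in\{1,2\}$). The subtrees rooted at the $\sq H_j$ are again semi-saturation trees, whose leaves form a partition of (the multiset of occurrences of) $\sq G_1,\ldots,\sq G_n$. By the induction hypothesis each $\sq H_j$ is derivable in $\labISfs$ from its leaves. Composing these derivations with the instance of $\rr$ yields a derivation of $\sq G$ whose leaves are exactly $\sq G_1,\ldots,\sq G_n$.

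The only point that needs checking is that every rule instance in $\tr T$ is a correct instance of the corresponding $\labISfs$ rule, with the same cumulative formulation and the same side conditions. This holds because $\tr T$ is built from exactly these rules, and none of them carries a freshness condition. I do not expect a genuine obstacle. The same argument applies verbatim to $\ssatr$, since it is merely the special case of $\satr$ in which all premises are semi-saturated.
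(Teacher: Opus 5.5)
Your proposal is correct and follows the same approach as the paper: a $\satr$-instance is by definition a derivation tree built only from the rules $\lef\AND,\rig\AND,\lef\OR,\rig\OR,\lef\IMP,\lef\BOX,\rig\DIA$ of $\labISfs$. That tree is itself the required derivation. Your induction on the height of the tree just makes this explicit and is not needed.
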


\begin{proof}
  By definition, every instance of the $\satr$-rule is a derivation tree in  $\sysS=\set{\lef\AND,\rig\AND,\lef\OR,\rig\OR,\lef\IMP,\lef\BOX,\rig\DIA}$. 
\end{proof}

\begin{lemma}\label{lem:satr-nice}
  If an instance of the $\satr$-rule only modifies labels that do not have a future and if its conclusion is nice, then so is each of its premisses.
\end{lemma}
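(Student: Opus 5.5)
The rules allowed in a semi-saturation tree are $\lef\AND,\rig\AND,\lef\OR,\rig\OR,\lef\IMP,\lef\BOX,\rig\DIA$. None of them creates a fresh label or adds a relational atom; each only adds labelled formulas at labels already present. I will argue by induction on the height of the semi-saturation tree, so it suffices to treat one rule instance whose conclusion $\sq G$ is nice and which only adds formulas at a label $\lb x$ with no future. The premiss is $\sq G' = \sq G \cup \Phi$, where $\Phi$ is a set of polarized labelled formulas at $\lb x$. Two bookkeeping remarks close the induction. First, $\labelsof{G'}=\labelsof G$ and the relational atoms coincide, so $\lb x$ still has no future in $\sq G'$. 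Second, the side condition ``only modifies labels without a future'' restricts every step of the tree.

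Since the relational atoms of $\sq G$ and $\sq G'$ coincide, we have $\srel{G'}=\srel G$, $\sle{G'}=\sle G$ and $\lrel{G'}=\lrel G$. So the layers, the order on layers, and the clusters with their $\grel{G}$-order are literally the same. Layeredness (Def.~\ref{def:lay-sq}) and conditions (iii) and (iv) of Def.~\ref{def:nice} refer only to these relations, so they transfer immediately. The same holds for the purely relational clauses of structural happiness: $(\fone)$, $(\ftwo)$, $(\Ltr)$, $(\Lref)$, $(\Rtr)$ and $(\Rref)$.

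The only clause that mentions formulas is $(\monl)$, and this is where the hypothesis on $\lb x$ is used. Suppose $\futs uv \in \sq G'$ and $\labelsb uC \in \sq G'$. If $\labelsb uC\in\sq G$, then $\labelsb vC\in\sq G\subseteq\sq G'$ by $(\monl)$ for $\sq G$. Otherwise $\labelsb uC\in\Phi$, so $\lb u=\lb x$. Because $\lb x$ has no future, $\futs xv$ forces $\lb v=\lb x$, and $\labelsb xC\in\sq G'$ holds trivially. Output formulas in $\Phi$ play no role in $(\monl)$.

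I expect the main subtlety to be making the side condition precise: the only labels that receive new formulas are those of the principal formulas. For $\lef\BOX$ this is the $\rel$-successor $\lb y$ rather than $\lb x$, and for $\rig\DIA$ it is likewise the successor $\lb y$. The condition must therefore be read as ``the labels at which new formulas are added have no future''. With that reading, the single-step argument above applies uniformly to all seven rules, and the induction over the tree yields the lemma.
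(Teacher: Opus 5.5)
Your proof is correct and follows the paper's argument. The paper's proof is a single line: no new labels (and no relational atoms) are created, so every relational condition of niceness carries over. You add what that line leaves implicit, namely that the no-future hypothesis is exactly what keeps $(\monl)$ intact. The reading issue you flag for $\lef\BOX$ and $\rig\DIA$ is harmless in any case: $\accs xy$ puts $\lb y$ in the layer of $\lb x$, and by Proposition~\ref{prop:no_future_layer} one of them has no future iff the other has none.
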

\begin{proof}
  Immediate, as we do not generate any new labels in the premiss of the rule. 
\end{proof}

\subsection{Black Diamonds}

	The next macro-rule corresponds to an application of $\lef\DIA$ followed by structural rules. 
	\begin{equation}
          \label{eq:bdiam}
	\vlinf{\bdiam}{\proviso{$\lb y$  fresh}}{\sq G, \labelsb x{\DIA A} }{\sq G, \labelsb x{\DIA A} \cup \{ \accs xy, \accs y y,\futs y y,  \labelsb yA\} \cup \{ \accs v y \mid \accs v x \in \sq G\} }
        \end{equation}
        In the premise of the $\bdiam$-rule above, the label $\lb x$ is called the \defn{parent} of the new label~$\lb y$. 
        
	\begin{lemma}\label{lem:bdia}
	The $\bdiam$-rule is derivable in $\labISfs$.
        \end{lemma}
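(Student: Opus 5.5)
To show that the $\bdiam$-rule is derivable, I will read its conclusion upward and build a derivation in $\labISfs$ that ends in exactly its premise. Every rule used is cumulative, so each application only adds material to the sequent.

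\textbf{Step 1.} I apply $\lef\DIA$ to $\labelsb x{\DIA A}$ with the fresh label $\lb y$. This gives
\[\sq G, \labelsb x{\DIA A}, \accs xy, \labelsb yA.\]

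\textbf{Step 2.} I apply $\Rref$ and $\Lref$ at $\lb y$. This adds $\accs yy$ and $\futs yy$.

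\textbf{Step 3.} For each relational atom $\accs vx \in \sq G$, I apply $\Rtr$ to the pair $\accs vx$, $\accs xy$. This adds $\accs vy$. The rule $\Rtr$ only needs both of its atoms to be present and leaves them in the premise. Since $\sq G$ is finite, there are finitely many such $\lb v$, and the steps can be done in any order.

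After these steps, the topmost sequent is
\[\sq G, \labelsb x{\DIA A}, \accs xy, \accs yy, \futs yy, \labelsb yA \cup \{\accs vy \mid \accs vx\in\sq G\}.\]
This is exactly the premise of $\bdiam$. Since sequents are sets, duplicated atoms cause no problem. For example, if $\accs xx\in\sq G$, the atom $\accs xy$ is produced twice, but the set is unchanged.

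\textbf{Freshness.} The only side condition to check is that $\lb y$ is fresh for $\lef\DIA$. This holds by the proviso of $\bdiam$. All the other rules used ($\Rref$, $\Lref$, $\Rtr$) have no side conditions.

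I expect no real obstacle here: the argument is a direct unfolding of $\bdiam$ into these steps.
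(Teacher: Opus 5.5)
Your proposal is correct and follows essentially the same route as the paper's proof. Both read $\bdiam$ bottom-up as one application of $\lef\DIA$ with the fresh $\lb y$, then one instance each of $\Lref$ and $\Rref$ at $\lb y$, then one $\Rtr$ instance per atom $\accs vx\in\sq G$.
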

       
        \begin{proof}
          The rule $\bdiam$ is an application of $\lef\DIA$ followed bottom-up by an instance of $\Lref$ and $\Rref$, and several instances of $\Rtr$.
      \end{proof}

  \begin{lemma}\label{lem:bdiam-nice}
    If the conclusion of a $\bdiam$-rule instance is nice and $\lb x$ has no future, then its premise is also nice.
  \end{lemma}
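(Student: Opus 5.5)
We verify the four conditions of Definition~\ref{def:nice} for the premise $\sq G'$, given that they hold for the conclusion $\sq G$. The only new data in $\sq G'$ are the fresh label $\lb y$, the atoms $\accs xy$, $\accs yy$, $\futs yy$, the atoms $\accs vy$ for every $\lb v$ with $\accsq vx$, and the formula $\labelsb yA$. Since no new $\le$-atom other than $\futs yy$ is added, $\lb y$ has no past and no future in $\sq G'$, and every $\le$-fact between old labels is unchanged. The whole argument consists of checking that adding such an ``$R$-leaf'' above $\lb x$, within the layer $\layerof x$, preserves each property.

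\emph{Structural happiness.} The conditions $(\Lref)$, $(\Rref)$ and $(\Ltr)$ are immediate, as is $(\monl)$: the only new $\le$-atom is $\futs yy$. For $(\Rtr)$, the new chains are $\accs vx$, $\accs xy$ (which gives $\accs vy$, present by construction), and $\accs uv$, $\accs vy$ with $\accsq vx$ (which gives $\accsq ux$ by $(\Rtr)$ in $\sq G$, and hence $\accs uy$ was added). There are no chains with $\lb y$ in the middle except through $\accs yy$. For $(\ftwo)$, take $\accs vy$ and $\futs yz$. Then $\lb z=\lb y$, and $\lb u\colonequals \lb v$ works. For $(\fone)$, take $\accs vy$ and $\futsq vz$. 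This is the step that uses the hypothesis, and I expect it to be the main obstacle. If $\lb z=\lb v$, choose $\lb u=\lb y$. If $\lb z\neq \lb v$, then $\lb v$ has a proper future. Since $\accsq vx$ in the same layer, Proposition~\ref{prop:no_future_layer} (applied to the nice sequent $\sq G$, where $\lb x$ has no future) shows that no label in $\layerof x$ has a future, and $\lb v\in\layerof x$. So this case cannot occur.

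\emph{Layeredness and the layer tree.} The layers of $\sq G'$ are those of $\sq G$, except that $\layerof x$ is enlarged by $\lb y$. For condition~1 of Definition~\ref{def:lay-sq}, note that $\lb y$ has no $\le$-relations with other labels. For condition~2, a violation would need $\futsq{x}{x'}$ with $\lb x\neq\lb x'$ involving $\lb y$ in an $\le$-atom, but $\lb y$ occurs only in $\futs yy$. The same observation shows that the order $\le$ on layers is unchanged, so the set of layers still forms a tree.

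\emph{Cluster tree in $\layerof x$.} The new label $\lb y$ forms a singleton cluster, because only $\accs yy$ leaves $\lb y$, so no other label is $R$-reachable from it. Its cluster-predecessors are exactly the clusters $\lbc C$ with $\accsq{C}{\clusterof x}$, together with $\clusterof x$ itself. This makes $\{\lb y\}$ a new leaf attached directly below $\clusterof x$ in the tree order, and the order between old clusters is unchanged. Hence the clusters of the layer still form a tree, and we need only to confirm that the new $R$-edges are exactly the transitive consequences of attaching $\{\lb y\}$ as a child of $\clusterof x$. This holds because $\{\lb v\mid \accsq vx\}$ is precisely the set of labels in the clusters below or equal to $\clusterof x$.
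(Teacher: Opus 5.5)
Your proof is correct and follows the same route as the paper: you verify the four niceness conditions directly, and the hypothesis on $\lb x$ enters only through Proposition~\ref{prop:no_future_layer}, which shows that no $\lb v$ with a new atom $\accs vy$ has a proper future and so settles $(\fone)$. One spot needs tightening, in condition~2 of Definition~\ref{def:lay-sq}: the fact that the fresh label occurs only in $\futs yy$ does not by itself rule out the case where this reflexive atom plays the role of $\lb{y'}\le\lb{y}$, and for that case you also need that two distinct labels of $\layerof x$ are never $\le$-related, which is exactly condition~1 for $\sq G$.
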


  \begin{proof}
  Property (i) is preserved because the only new $\le$-relational atom  added is the reflexive $\futs y {y'}$.
    The only new label $\lb y$ and all new relational atoms $\accs u {u'}$ are added to the layer $\layerof x$, hence, the structure of layers in the sequent is not changed  and Property~(iii) holds trivially. Property (ii), i.e, that the premise is structurally happy,  also follows immediately, as we close $\grel G$ under transitivity and reflexivity, and add the necessary $\sle G$-reflexive loop for $\lb y$.  $(\monl)$ is preserved trivially, while $( \fone )$, and $(\ftwo)$  are preserved because $\lb x$~has no future, meaning that  $\layerof x$ is a topmost layer and no $\lb v$ with newly added $\accs v y$ has a future either (Prop.~\ref{prop:no_future_layer}). Finally, (iv) holds because adding $\lb y$ does not break the tree-structure on the clusters of the layer~$\layerof x$. 
  \end{proof}

\subsection{Layer Saturation}

\begin{definition}[Almost happy]\label{def:almost-happy}
  A label~$ \lb x $ in a sequent $\sq G$ is \defn{almost happy} if{f} all formulas occurring at~$\lb x$ are happy except, possibly, those of the shapes $\fmb\BOT$, $\fmw a$, $\fmw{A \IMP B}$, and~$\fmw{\BOX A}$.
  A layer~$L$ (resp.~sequent~$\sq G$) is \defn{almost happy} if{f} all labels in~$L$ (resp.~$\sq G$) are almost happy. 
\end{definition}

One step in our proof search algorithm will be to make sequents almost happy. As this does not involve the creation of new layers, we call this procedure \emph{layer saturation}. The basic idea is to use semi-saturation and then use the $\bdiam$-rule to make the $\lef\DIA$-formulas happy. As this might create new unhappy formulas, we have to semi-saturate again. And so on. As this might not terminate, we have to implement a loop check. This is very similar to what is usually done for classical $\Sfour$.

\begin{definition}[Equivalent labels]\label{def:eq-labels}
	Let $\sq G$ be a sequent and let $\lb x,\lb y\in\labelsof G$. We say that $\lb x$~and~$\lb y$~are \defn{equivalent}, denoted as $\lb x \lbeq \lb y$, if{f} for all formulas~$\fm A$, we have $\blackf xA G$ if{f}\/ $\blackf yA G$, and also\ $\whitef xA G$ if{f}\/ $\whitef yA G$. 
\end{definition}


\begin{notation}[Substituting and clustering]\label{not:subst}
  Let $\sq G$ be a sequent and let $\accsq xy$. We write $\sq G \labelsubst{y}{x}$ for the result of substituting $\lb y$ with $\lb x$ everywhere in $\sq G$. 
  And we write $\interval xy$ for the set $\set{\lb v\mid \accsq xv \mbox{ and } \accsq vy \mbox{ and } \lb v\neq\lb y}$ of labels  between $\lb x$ and $\lb y$, without $\lb y$. 
  Finally, for a set $V\subseteq\labelsof G$ of labels, we write $\trans V$ for the set $\set{\accs uv\mid \lb u,\lb v\in V}$ of relational atoms. In this paper we make heavy use of the set $\trans{\interval xy}$. When we add this to a sequent $\sq G$ with $\accsq xy$, then all labels in $\interval xy$ will form a cluster.
  The sequent $\sq G \labelsubst{y}{x}\cup\trans{\interval xy}$ is then obtained from $\sq G$ by adding the edges $\trans{\interval xy}$ and then substituting $\lb y$ with $\lb x$.
\end{notation}

\begin{figure}
		\begin{center}
			\begin{tikzpicture}[thick, every node/.style={scale=1.2},font = {\large}]
		
		\tikzstyle{node}=[circle,fill=black,inner sep=1.5pt]
		\tikzstyle{nonode}=[inner sep=0pt]
		\tikzstyle{access}=[blue]
		\tikzstyle{loop}=[blue]
		\tikzstyle{future}=[dashed,->]
		\tikzstyle{prev}=[densely dotted]
		
		\draw[loop] (4,5) circle [radius=1cm];
		
		\node[] (z0') at (1,5) [] {};
		\node[label=below:{$z_1$}] (z1') at (2,5) [node] {};
		\node[label=above:{$x$\,\,\,}] (x') at (3,5) [node] {};
		\node[label=above:{$z_2$}] (z2') at (4.4,5.9) [node] {};
		\node[label=right:{$z_3$}] (z3') at (5.4,6.5) [node] {};
		\node[label=below:{$z_4$}] (z4') at (4.45,4.1) [node] {};
		\node[label=right:{$z_5$}] (z5') at (5.4,4.7) [node] {};

		\node[] (z0) at (1,2) [] {};
		\node[label=below:{$z_1$}] (z1) at (2,2) [node] {};
		\node[label=below:{$x$}] (x) at (3,2) [node] {};
		\node[label=below:{$z_2$}] (z2) at (4,2) [node] {};
		\node[label=right:{$z_3$}] (z3) at (5,2.7) [node] {};
		\node[label=below:{$z_4$}] (z4) at (6,2) [node] {};
		\node[label=right:{$z_5$}] (z5) at (7,2.7) [node] {};
		\node[label=below:{$y$}] (y) at (7.5,1.3) [node] {};

		\draw[-,thin] (1,3.3) -- (7.8,3.3);
		\node[] (lb) at (8.4,3.3) {\small $\dialoopr$};

		\draw[prev] (z0) -- (z1);
		\draw[access] (z1) -- (x);
		\draw[access] (x) -- (z2);
		\draw[access] (z2) -- (z3);
		\draw[access] (z2) -- (z4);
		\draw[access] (z4) -- (z5);
		\draw[access] (z4) -- (y);
		
		\draw[prev] (z0') -- (z1');
		\draw[access] (z1') -- (x');
		\draw[access] (z2') -- (z3');
		\draw[access] (z4') -- (z5');

	\end{tikzpicture}
	
	\end{center}
\caption{An example of $\dialoopr$, where $\lb y$ is a $\DIA$-repetition of $\lb x$.}
\end{figure}

\begin{definition}[$\DIA$-repetition]
  Let $\sq G$ be a nice sequent, let $\lb x,\lb y\in\labelsof G$ such that $\lb y$ does not have a future and there is a labeled formula  $\labelsb y{\DIA A}\in \sq G$. 
  We say that $\lb y$ is a \defn{$\DIA$-repetition} of $\lb x$ if{f}
  $\lb x\neq \lb {y}$ and $\accsq xy$ and $\lb x\lbeq \lb y$
  and every $\lb z$ with $\accsq xz$ and $\accsq zy$ does not have a past in $\sq G$.\footnote{This includes $\lb x$ and $\lb y$.}
  This allows us to define the following rule:
  \begin{equation}
    \label{eq:dialoop}
    \vlinf{\dialoopr}{
      \proviso{$\lb y$ is a $\DIA$-repetition of $\lb x$ in $\sq G$}}{
      \sq G}{
      \sq G \labelsubst{y}{x}\cup\trans{\interval xy}
    } 
  \end{equation}
\end{definition}

\begin{lemma}\label{lem:dialoop-nice}
  If the conclusion of a $\dialoopr$-rule instance is nice, then so is its premise.
\end{lemma}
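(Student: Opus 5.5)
We write $\sq G' = \sq G\labelsubst{y}{x}\cup\trans{\interval xy}$ and check the four conditions of Definition~\ref{def:nice} one at a time. We use the niceness of $\sq G$ and the defining conditions of a $\DIA$-repetition. The first thing to record is where the substitution can act. All labels $\lb z$ with $\accsq xz$ and $\accsq zy$ (including $\lb x$ and $\lb y$) have no past. Also, $\lb y$ has no future, so by Proposition~\ref{prop:no_future_layer} the layer $\layerof y=\layerof x$ is topmost. Consequently the only $\le$-atom mentioning $\lb y$ is $\futs yy$, and it becomes $\futs xx$, which is already in $\sq G$. All remaining changes are $R$-atoms inside the single layer $\layerof x$. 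This gives (i) and (iii) at once. The $\le$-relation is unchanged apart from collapsing $\lb y$ into $\lb x$, and the layer partition is unchanged apart from identifying $\lb y$ with $\lb x$ inside one layer, so the layer tree is untouched. The layeredness conditions only concern $\le$ between distinct labels, and no such atom was created.

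For (ii), structural happiness, we go through the conditions of Definition~\ref{def:struct-sound-seq}. $(\Lref)$ and $(\Ltr)$ are clear, since $\le$ is essentially unchanged. For $(\monl)$, the only $\le$-atom touching $\lb x$ or $\lb y$ is reflexive, because $\lb x$ has no past and $\lb x$, $\lb y$ have no future. Hence no monotonicity obligation involves the merged label, and inputs at other labels are unchanged.

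$(\fone)$ and $(\ftwo)$ need a little more care. Every new $R$-atom has both endpoints in the topmost layer $\layerof x$, and every label there has no future. So the only $\le$-atom at any of its labels is the reflexive one, and the confluence obligations are trivially met with $\lb u$ equal to the other endpoint. We must also check the case where an existing $R$-atom $\accs vy$ is renamed to $\accs vx$ while $\lb v$ has a past $\futs wv$. Then $\lb v$ is not in $\interval xy$, but it lies in the same topmost layer, so $\lb v$ has no future. $(\fone)$ is therefore trivial. For $(\ftwo)$ we need an obligation of the form $\accs uv$, $\futs vz$ with $\lb v=\lb x$, and $\lb x$ has only the reflexive future. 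So $(\ftwo)$ is trivial as well.

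For $(\Rref)$ and $(\Rtr)$: reflexivity survives the renaming. For transitivity, suppose $\accs uy$ and $\accs yw$ become $\accs ux$ and $\accs xw$. By transitivity of $\sq G$ we already have $\accsq uw$. The only genuinely new composites come from $\accs yw$ becoming $\accs xw$ combined with $\accs ux$. Since $\accsq xy$ and $\accsq yw$, we get $\accsq xw$ already, and then $\accsq uw$ follows from $\accsq ux$. The atoms $\trans{\interval xy}$ turn $\interval xy$ into a cluster. Transitivity with outside labels holds because any label reachable from $\lb y$ was already reachable from every label of the interval, and anything reaching $\lb x$ reaches the whole interval.

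The main obstacle is (iv): in each layer the clusters must form a tree under $\grel{G'}$. The cluster tree of $\layerof x$ in $\sq G$ contains the path $\lb x \to\cdots\to \lb y$. The rule collapses all clusters meeting $\interval xy$ together with $\lb y$ into one new cluster. This is a contraction of a connected subtree, namely the union of the path-related clusters between $\lb x$ and $\lb y$, into a single node. The children of $\lb y$ are re-attached to that node. Contracting a connected subtree of a tree yields a tree, but two points need checking. First, we have to confirm that the set being collapsed is exactly the path segment together with whatever lies between $\lb x$ and $\lb y$ in the clusters along it. Second, the parent of the new node must be the unique parent of $\clusterof x$. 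I would argue this by showing that any cluster $\lbc C$ with $\accsq{C}{\clusterof y}$ in $\sq G$ either satisfies $\accsq C{\clusterof x}$ or lies in $\interval xy$. This uses the tree property in $\sq G$: the predecessors of $\clusterof y$ form a chain, and $\clusterof x$ lies on it. Antisymmetry of the new relation follows because no edge leaves the collapsed set into one of its ancestors.
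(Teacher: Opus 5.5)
Your decomposition is the paper's: the layers and $\le$ are untouched because $\lb x$, $\lb y$ and the interval have no past and $\layerof x$ is topmost; $(\fone)$/$(\ftwo)$ are trivial inside a topmost layer; and (iv) goes via the branch from $\lb x$ to $\lb y$. Your $(\monl)$ argument, using the absence of non-reflexive $\le$-atoms at $\lb x,\lb y$, is a fine substitute for the paper's appeal to $\lb x\lbeq\lb y$.

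\textbf{The gap: transitivity of $R$.} You argue that labels reachable from $\lb y$ are reachable from all of $\interval xy$, and that whatever reaches $\lb x$ reaches the whole interval. This overlooks $R$-successors of interval labels that branch off the path to $\lb y$. Once $\trans{\interval xy}$ is added, every $\lb u\in\interval xy$ sees every $\lb v\in\interval xy$, so it must also see every $\lb w$ with $\accs vw$. The rule adds no such atoms.

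Here is a concrete counterexample. Take a one-layer sequent in which:
\begin{itemize}
\item the only $\le$-atoms are reflexive;
\item the $R$-atoms are the reflexive--transitive closure of $\accs xa$, $\accs ay$, $\accs xc$;
\item the only formulas are $\labelsb x{\DIA p}$ and $\labelsb y{\DIA p}$.
\end{itemize}
It is nice, $\lb y$ is a $\DIA$-repetition of $\lb x$, and $\interval xy=\set{\lb x,\lb a}$. The premise contains $\accs ax$ and $\accs xc$ but not $\accs ac$, so $(\Rtr)$ fails and the premise is not nice. Such side branches should be common in Algorithm~\ref{alg:sat}: the interval labels have their $\lef\DIA$-formulas processed before $\lb y$'s, so an interval label with several $\DIA$-formulas acquires $\DIA$-children off the path.

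The paper's own proof asserts the same thing (that the definition of $\trans{\interval xy}$ ensures transitivity). So this is a defect of the lemma and rule as stated, which your write-up inherits rather than introduces; the lemma cannot be proved verbatim.

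\textbf{The natural repair} is to close the premise under $R$-transitivity. That is, also add $\accs uw$ for every $\lb u\in\interval xy$ and every $\lb w$ with $\accs vw$ for some $\lb v\in\interval xy$. This is the same pattern as the clause adding $\accs{v_i}z$ in the $n$-unfolding $\sq G\interval xy^n$. With that amendment your proof goes through:
\begin{itemize}
\item The extra atoms lie in the topmost layer, so $(\fone)$/$(\ftwo)$ stay trivial. What you need there is that these labels have no \emph{future}; contrary to your phrasing, they may well have a past. Obligations arising from cross-layer $\le$-atoms have witnesses with a past, hence distinct from $\lb x$ and $\lb y$.
\item The incoming half of transitivity is your tree argument, stated explicitly: anything $R$-reaching a label of $\interval xy$ either $R$-reaches $\lb x$ or lies in $\interval xy$.
\item In (iv), the side branches of interval labels simply become children of the merged cluster.
\end{itemize}
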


\begin{proof}
    All changes to the $\rel$-relational atoms  are within the layer $\layerof x$ and the changes to the $\le$-relational atoms  affect neither the source nor the destination layer, hence, the structure of layers in the sequent is not changed and (i) and (iii) hold trivially. Property (ii), i.e, that the premise is structurally happy,  also follows immediately because by the definition of $\trans{\interval xy}$ we ensure transitivity of both $\grel G$ and $\sle G$, as well as $( \fone )$ and $(\ftwo)$ where we also use the fact that $\layerof x$ is a topmost layer. Additionally,  $(\monl)$ is preserved because $\lb x \lbeq \lb y$. Finally, (iv) holds because $\lb x$ and $\lb y$ are on the same $\rel$-branch within  $\layerof x=\layerof y$, which means that this layer remains a tree of clusters. 
\end{proof}

\begin{remark}
  Observe that unlike the other macro-rules defined in the previous section, the rule $\dialoopr$ defined in~\eqref{eq:dialoop} is not derivable. It is not even sound in the general case. We will use it during proof search and we will show how it can be eliminated in Section~\ref{sec:soundness}.
\end{remark}

Given a sequent $\sq G$, we define an order $\lford$ on the set $\labelsof G$ as the transitive closure of the following: $\lord xy$ iff either $\layerof x<\layerof y$, or $\layerof x=\layerof y$ and there are $\lb {x'},\lb{y'}\in\labelsof G$, such that  and $\futsq{x'}x$ and $\futsq{y'}y$ and $\lb{x'}$ is the parent of $\lb{y'}$.

On the set $\cF$ on formulas, we can define a total order using some G\"odel numbering. We can then combine the two to define an ordering on the labelled formulas in a sequent $\sq G$ by $\labels xA\lford\labels yB$ iff $\lord xy$, or $\lb x =\lb y$ and $\fm A\lford \fm B$.

The purpose of this ordering is to make sure that in the algorithm below, (i) a $\labelsb y{\DIA A}$ formula is only being looked at when all labels $\lb x$ with $\accsq xy$ have been made almost happy, and (ii) at each label the $\lef\DIA$-formulas are treated in the same order.

\begin{algorithm}[Layer Saturation]\label{alg:sat}
  For a given sequent $\sq G$, in which all non-topmost layers are almost happy, we compute a set~$\ssatof G$ of sequents, called the \defn{saturation} of $\sq G$, by constructing the sequence $\sqset S_0,\sqset S_1,\sqset S_2,\ldots$ as follows:
  \begin{enumerate}[(1)]
  \item\label{l:s1} Initialize the set $\sqset S_0$ with the semi-saturation of $\sq G$.
  \item\label{l:s2} If all sequents in $\sqset S_i$ are almost happy, then let $\ssatof G=\sqset S_i$ and terminate.
  \item\label{l:s3} Otherwise, pick a sequent $\sq H\in\sqset S_i$ that is not almost happy, and pick an unhappy $\labelsb y {\DIA A}$ in~$\sq H$ which is minimal for $\lford$. 
    \begin{enumerate}[(a)]
    \item\label{l:dialoop}  If $\lb y$ is a $\DIA$-repetition of a label $\lb x$, then let $\sq H'= \sq H \labelsubst{y}{x}\cup\trans{\interval xy}$.
    \item\label{l:dianew} Otherwise, let $\lb z$ be a fresh label and let $\sq H'=\sq H \cup 
      \set{\accs yz,\accs z z, \futs zz, \labelsb zA} \cup \set{ \accs v z \mid \accs vy \in \sq H}$.
    \end{enumerate}
  \item Let $\sqset S_i'$ be the semi-saturation of $\sq H'$. 
  \item Let $\sqset S_{i+1} =  (\sqset S_i \setminus \{ \sq H\} ) \cup\sqset S'_i$, and go to \ref{l:s2}.
  \end{enumerate}
\end{algorithm}

\begin{lemma}\label{lem:sat-term}
	Algorithm~\ref{alg:sat} terminates.
\end{lemma}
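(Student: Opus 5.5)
We argue as for classical $\Sfour$: every sequent produced can only contain finitely many labels in the layers being saturated, and every step of the algorithm strictly decreases a well-founded measure. Fix the input $\sq G$ and let $\Sigma$ be the finite set of subformulas of formulas occurring in $\sq G$. Semi-saturation only adds labelled formulas whose formula is in $\Sigma$, at existing labels; the $\bdiam$-step in~\ref{l:dianew} adds $\labelsb zA$ with $\fm A\in\Sigma$; the $\dialoopr$-step in~\ref{l:dialoop} only merges labels. Hence every label of every sequent in every $\sqset S_i$ carries a subset of $\Sigma\times\{\lefmark,\rigmark\}$, so there are at most $N=4^{|\Sigma|}$ equivalence classes of labels for $\lbeq$. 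Since all non-topmost layers are almost happy, and semi-saturation and step~(3) only touch labels of topmost layers (those without future), the number of layers and all non-topmost labels stay fixed.

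The core step is to bound the depth of $\rel$-chains of fresh labels inside a topmost layer. I would show that along any $\grel{}$-path $\lb{x_0},\lb{x_1},\ldots$ of labels created by~\ref{l:dianew}, each strictly $\rel$-above its predecessor, no two labels can be $\lbeq$-equivalent once the later one has an unhappy $\lef\DIA$. A label created in~\ref{l:dianew} has no past, because only $\futs zz$ is added. So if $\lb{x_j}\lbeq\lb{x_i}$ with $i<j$, and the labels in between were created freshly, then $\lb{x_j}$ is a $\DIA$-repetition of $\lb{x_i}$, and by $\lford$-minimality the algorithm would fire~\ref{l:dialoop} at $\lb{x_j}$ rather than create a successor. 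Equivalence classes can only grow by semi-saturation, and the $\lford$ ordering guarantees a label is extended only after its predecessors have been saturated. From this I expect that a label is only ever given a $\bdiam$-successor when its content differs from all its $\rel$-ancestors among fresh labels. This bounds chain length by $N$ plus the height contributed by the original labels of the layer.

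Branching is also bounded. Each label has at most $|\Sigma|$ formulas $\lef\DIA A$, and each is made happy by at most one fresh child; once happy it stays happy, since formulas are never removed and merging preserves the witness via $\trans{\interval xy}$. So the fresh labels form a finitely branching tree of bounded depth (clusters only collapse it), and every sequent reachable in a topmost layer has boundedly many labels, each with boundedly many formulas and relational atoms. There are therefore finitely many such sequents up to renaming. For termination of the overall loop, I would use the multiset ordering on $\{\mu(\sq H)\mid\sq H\in\sqset S_i\}$, where $\mu(\sq H)$ is the number of missing items relative to this finite maximal bound. Semi-saturation and~\ref{l:dianew} strictly increase content. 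The step~\ref{l:dialoop} reduces the label count but makes $\labelsb y{\DIA A}$ happy through the cluster without creating labels, and together with the depth bound it is charged to a separate lexicographic component: the number of unhappy $\lef\DIA$-formulas at minimal position. Replacing $\sq H$ by finitely many premises with smaller measure then gives termination by Dershowitz--Manna.

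The main obstacle is the interaction of merging with semi-saturation. After~\ref{l:dialoop}, $\lb x$ acquires new $\rel$-successors and may receive new input formulas via $\lef\BOX$, which breaks equivalences used earlier and re-opens previously happy diamonds. I would handle this by proving that, within one layer, the pair (set of formulas at each label, set of $\rel$-atoms restricted to equivalence classes) grows monotonically in a finite lattice. The measure is then taken lexicographically as (number of missing items in this finite abstraction, number of labels), both of which change in the right direction at each step.
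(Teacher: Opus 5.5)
Your first two paragraphs reproduce the paper's proof, which consists only of this observation. Every formula is a subformula of $\sq G$, so there are only exponentially many $\lbeq$-classes, and therefore every branch of the layer (as a tree of clusters) is eventually cut by a $\DIA$-repetition.

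You then correctly notice that this pigeonhole step is not enough as stated; the paper's proof does not discuss this. A label is compared with its fresh $R$-ancestors only at the moment it is extended. After a step (3a), however, the atoms $\trans{\interval xy}$ make the cluster members mutually $R$-related. Semi-saturation then copies the bodies of input $\BOX$-formulas and of output $\DIA$-formulas of lower cluster members to the intermediate labels. Through newly exposed boxes and $\lef\OR$-choices there, they can also reach $x$ and the $R$-successors of cluster members on side branches. So the contents of labels already sitting on an extended branch can grow afterwards, and ``pairwise distinct at extension time'' no longer bounds the length of a branch. Consequently the depth, cluster-size and label-count bounds your third paragraph relies on are not established.

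\textbf{Your repair does not close this gap.}

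First, the pair (formulas at each label, $R$-atoms modulo $\lbeq$) does not live in a fixed finite lattice, since labels are created and deleted. Its class-level part is also not monotone: a label that gains formulas changes its $\lbeq$-class, and class-level $R$-pairs witnessed only by that label disappear.

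Second, even where it is monotone, step (3b) need not enlarge it. The $\DIA$-repetition test only looks along a single $R$-path. Two $\lbeq$-equivalent labels on different branches (which can happen in a freshly lifted layer, where only input formulas are copied) can therefore grow subtrees with the same classes and the same class-level links. The second subtree leaves your first component unchanged while the number of labels grows, so the lexicographic measure increases.

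The measure of your third paragraph fails similarly. Step (3a) deletes a label together with its atoms, so the number of missing items goes up. Making the number of unhappy $\lef\DIA$-formulas the leading component does not help either, because the label created in step (3b) can carry several new unhappy $\lef\DIA$-formulas.

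What is needed is one of two things: a well-founded measure that tolerates labels changing class after a $\dialoopr$ step, or a proof that such loop-induced growth of contents can happen only finitely often along any branch. In the second case your pigeonhole argument applies afterwards.
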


\begin{proof}
  Any formula that can be added to a label during this process must be a subformula of~$\sq G$.  Let $n$ be the number of subformulas occurring in $\sq G$. Then there are at most $2^n$ different equivalence classes for the relation $\lbeq$ on labels. That means that in each branch of the layer (seen as a tree of clusters) we will eventually stop growing because of $\DIA$-repetitions.
\end{proof}

\begin{lemma}\label{lem:saturation}
  For a sequent $\sq G$, in which all non-topmost layers are almost happy, its saturation $\ssatof G$ is a finite set of almost happy sequents and comprises the leaves of a derivation tree constructed only from the rules $\satr$, $\bdiam$, and $\dialoopr$. Furthermore every $\sq H \in \ssatof G$ contains only subformulas of $\sq G$.
\end{lemma}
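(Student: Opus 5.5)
We argue by examining Algorithm~\ref{alg:sat} directly, tracking an invariant along the sequence $\sqset S_0,\sqset S_1,\ldots$. The invariant is that there is a derivation tree $\tr D_i$, built only from $\satr$, $\bdiam$ and $\dialoopr$, whose conclusion is $\sq G$ and whose leaves are exactly the sequents of $\sqset S_i$. We also maintain that every sequent in $\sqset S_i$ contains only subformulas of $\sq G$.

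For $i=0$, $\sqset S_0$ is a semi-saturation of $\sq G$, so a single $\satr$ instance (in fact $\ssatr$) with conclusion $\sq G$ gives $\tr D_0$. The rules $\lef\AND,\rig\AND,\lef\OR,\rig\OR,\lef\IMP,\lef\BOX,\rig\DIA$ only add immediate subformulas of formulas already present, so the subformula property holds.

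For the inductive step, let $\sq H\in\sqset S_i$ be the chosen leaf and $\labelsb y{\DIA A}$ the chosen formula.
\begin{itemize}
\item In case~\ref{l:dianew}, $\sq H'$ is exactly the premise of the $\bdiam$-rule~\eqref{eq:bdiam} applied to $\labelsb y{\DIA A}$ with fresh label $\lb z$ and parent $\lb y$. The side condition of $\bdiam$ is only freshness, which holds.
\item In case~\ref{l:dialoop}, $\lb y$ is a $\DIA$-repetition of $\lb x$ in $\sq H$. Then $\sq H'=\sq H\labelsubst yx\cup\trans{\interval xy}$ is exactly the premise of $\dialoopr$ as in~\eqref{eq:dialoop}.
\end{itemize}
In both cases we then stack one $\satr$ instance on $\sq H'$ with leaves $\sqset S_i'$. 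Grafting this onto the leaf $\sq H$ of $\tr D_i$ yields $\tr D_{i+1}$ with leaf set $(\sqset S_i\setminus\{\sq H\})\cup\sqset S_i'=\sqset S_{i+1}$. The subformula property is preserved: $\bdiam$ only adds $\labelsb zA$ with $\fm A$ a subformula of $\fm{\DIA A}$, substitution of labels does not change formulas, and $\satr$ only decomposes formulas.

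The definition of $\DIA$-repetition requires $\sq H$ to be nice, so we must also carry niceness along. We would assume $\sq G$ is nice, as it is wherever the algorithm is invoked. Lemmas~\ref{lem:satr-nice}, \ref{lem:bdiam-nice} and~\ref{lem:dialoop-nice} then propagate niceness to all sequents in the construction. Their side conditions (that modified labels and $\lb y$ have no future) hold because all modifications happen in topmost layers. This is the subtle point: the non-topmost layers are already almost happy, and the $\lford$-minimal unhappy $\lef\DIA$-formula can only live in a label without future. Indeed, semi-saturation of non-topmost labels adds nothing new, since by $(\monl)$ and almost-happiness their relevant formulas are already happy. Verifying this carefully is the step I expect to be the main obstacle.

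Finally, by Lemma~\ref{lem:sat-term} the construction stops at some $\sqset S_i$ with $\ssatof G=\sqset S_i$. By the test in step~\ref{l:s2}, every sequent in it is almost happy. Finiteness follows because each $\sqset S_{i+1}$ arises from $\sqset S_i$ by replacing one element with the finitely many leaves of a finite $\satr$ tree. Hence $\tr D_i$ is the required derivation.
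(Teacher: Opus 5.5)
Your proof is correct and follows the paper's argument. Both take termination from Lemma~\ref{lem:sat-term} and read step~(3a) as a $\dialoopr$ instance, step~(3b) as a $\bdiam$ instance, and the semi-saturation steps as $\satr$ instances; your invariant and the subformula and finiteness bookkeeping simply spell this out. The niceness detour, and with it the extra hypothesis that $\sq G$ is nice, is not needed: step~(3a) fires only when $\lb y$ is a $\DIA$-repetition of $\lb x$ in $\sq H$, which is exactly the side condition of $\dialoopr$, and $\satr$ and $\bdiam$ have no side conditions beyond freshness, so the facts about topmost layers belong to Lemma~\ref{lem:ssat-nice} and to termination, not to this lemma.
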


\begin{proof}
  By the previous lemma, Algorithm~\ref{alg:sat} terminates. Step~\ref{l:s2} is an application of $\satr$, Step~(3.a) is $\dialoopr$, and Step (3.b) is an instance of $\bdiam$.
\end{proof}

\begin{remark}\label{rem:sat}
	The saturation of a sequent is in general not uniquely defined since sequents can differ up to a renaming of labels. However, we can fix a strategy and naming scheme for fresh labels when applying the inference rules $\lef\AND,\rig\AND,\lef\OR,\rig\OR,\lef\IMP,\lef\BOX,\rig\DIA$, so that without loss of generality, we can for the context of this paper speak of \emph{the} saturation of a sequent $\sq G$, and denote it by~$\satof{\sq G}$.
\end{remark}

\begin{lemma}\label{lem:ssat-nice}
   Le $\sq G$ be a sequent in which all non-topmost layers are almost happy. If $\sq G$ is nice, then every sequent $\sq H\in \ssatof G$ is also nice.
\end{lemma}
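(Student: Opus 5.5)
The plan is to argue by induction along the run of Algorithm~\ref{alg:sat}, proving the invariant that every sequent in every set $\sqset S_i$ is nice. Since $\ssatof G=\sqset S_i$ for the final $i$, this gives the statement. Every step of the algorithm is an instance of one of the macro-rules $\satr$, $\bdiam$ or $\dialoopr$, read bottom-up (Lemma~\ref{lem:saturation}). So it suffices to check that each step satisfies the side conditions of the corresponding preservation lemma: Lemma~\ref{lem:satr-nice}, Lemma~\ref{lem:bdiam-nice} or Lemma~\ref{lem:dialoop-nice}.

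The key auxiliary fact is that the algorithm only ever touches labels in topmost layers. Every non-topmost layer of $\sq G$ is almost happy, and I would show that it stays so throughout. The rules used by semi-saturation ($\lef\AND,\rig\AND,\lef\OR,\rig\OR,\lef\IMP,\lef\BOX,\rig\DIA$) act only on a labelled formula that is unhappy. The remaining step acts on an unhappy $\lef\DIA$-formula. At an almost happy label, none of the formulas these rules act on can be unhappy, so no rule is ever applied there.

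There is one subtlety. One has to check that adding atoms in the topmost layer does not make formulas in lower layers unhappy. This holds because $\bdiam$ and $\dialoopr$ add $\rel$-atoms only inside the topmost layer $\layerof y$, and semi-saturation adds only labelled formulas and no relational atoms. Happiness of a formula at $\lb x$ depends only on atoms at $\lb x$ and on $\rel$/$\le$-successors of $\lb x$. Since layers stay disjoint, lower layers are unaffected.

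Given this, the cases go as follows. For step (1) and step (4), each application of $\satr$ modifies only labels without a future, using Proposition~\ref{prop:no_future_layer}: topmost means no future. Hence Lemma~\ref{lem:satr-nice} applies. A derivation tree of $\satr$-steps is handled by a simple induction on the tree, each single rule application being itself an instance.

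For step (3b), the label $\lb y$ lies in a topmost layer, so it has no future. Therefore Lemma~\ref{lem:bdiam-nice} applies, with $\lb y$ in the role of the parent.

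For step (3a), the premise is $\sq H\labelsubst{y}{x}\cup\trans{\interval xy}$. The side condition of the $\DIA$-repetition is checked by the algorithm itself. Lemma~\ref{lem:dialoop-nice} then yields niceness.

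Finally, $\sqset S_{i+1}$ replaces $\sq H$ by nice sequents, so the invariant persists. Termination is guaranteed by Lemma~\ref{lem:sat-term}.

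The main obstacle is the locality claim, namely that the algorithm never modifies a label that has a future. Here $\dialoopr$ deserves a second look, since substituting $\lb y$ by $\lb x$ could in principle affect $\le$-atoms. However, $\lb y$ has no future, and the labels in $\interval xy$ have no past. Hence the only $\le$-atoms involving $\lb y$ are reflexive ones and those coming from lower layers. This makes the substitution innocuous, as already used in Lemma~\ref{lem:dialoop-nice}.
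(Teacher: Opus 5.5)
Your proof is correct and takes the paper's route. The paper's proof just cites Lemma~\ref{lem:saturation} together with the preservation Lemmas~\ref{lem:satr-nice}, \ref{lem:bdiam-nice} and~\ref{lem:dialoop-nice}, and your invariant that only labels in topmost layers are ever touched is precisely the verification of their no-future side conditions, which the paper leaves implicit. (A minor precision: the footnote in the definition of $\DIA$-repetition already says that $\lb y$ itself has no past, so there are in fact no $\le$-atoms from lower layers into $\lb y$ at all.)
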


\begin{proof}
  This follows immediately from Lemmas~\ref{lem:saturation}, \ref{lem:satr-nice}, \ref{lem:bdiam-nice}, and~\ref{lem:dialoop-nice}.
\end{proof}

  \subsection{Creating a new layer}
  \label{sec:lifting}
       
	The rules $\rig\IMP$ and $\rig\BOX$ create a new layer, and in order to preserve niceness, our algorithm will immediately complete the layer, making the sequent structurally happy, by applying the structural rules of $\labISfs$. 
	We introduce two macro-rules, taking care of $\rig\IMP$ and $\rig\BOX$ plus the structural happiness. 
	For this, it is convenient to introduce a notion of \emph{lifting}. 
	
        We first discuss the cases in which a $\rig{\Box}$- or a $\rig\IMP$-rule is applied to a label $\lb{x_0}$ within a layer~$L$, such that $\lb{x_0}$ is not part of a non-singleton cluster, generating a new layer $\hat L$. This is done by two distinct constructions (\emph{$\rig\Box$-lifting} and \emph{singleton $\rig\IMP$-lifting}, respectively). For the case in which rule $\rig\IMP$  is applied to a label occurring within a non-singleton cluster, we need a slightly more complicated layer construction (\emph{cluster $\rig\IMP$-lifting}).

The following two constructions are exemplified in Figure~\ref{fig:lifting}. 
	For $m ,h \in \mathbb{N}$, we often abbreviate $1 \leq m \leq h$ into $m = 1..h$.

\begin{construction}[$\rig\BOX$-lifting]
  \label{def:box-lifting}
  Let $\sq G$ be a nice and almost happy sequent, and let $L$ be a layer in $\sq G$ containing an unhappy formula $\whiteff{x_0}{\BOX A}$. We can assume $L =\set{\lb{x_0},\lb{x_1},\ldots,\lb{x_l}}$ for some  $l\geq 0$. 
  Now let $\hL$ be a set of fresh labels $\set{\lb{\hx_0},\lb{\hx_1},\ldots, \lb{\hx_l},\lb{\hy}}$. We define $\lliftfw G{x_0}{\BOX A}$, the \defn{$\rig\BOX$-lifting} of $\sq G$ at $\whiteff{x_0}{\BOX A}$, to be the sequent containing the following formulas: 
  \begin{enumerate}
  \item
    for every $i=0..l$ and $i'=0..l$ the relational atom
    $\accs{\hx_i}{\hx_{i'}}$ whenever $\accsq{x_i}{x_{i'}}$;
  \item
    for every $i=0..l$:
    \begin{enumerate}
    \item for each  $\lb w \in \labelsof{G}$ the
      relational atom $\futs {w}{\hx_i}$ whenever $\futsq {w}{x_i}$;
    \item for each formula $\fm C$,
      the labelled formula
      $\labelsb{\hx_i}{C}$ whenever $\blackf{x_i}{C} G$;
    \item the relational atom $\accs{\hx_i}{\hy}$ whenever $\accsq{x_i}{x_0}$;
    \end{enumerate}
 
  \item
    for each $\lb\hx\in\hL$ the relational atom $\futs {\hx}{\hx}$; and
  \item the labelled formula $\labelsw \hy {A}$.
  \item the relational atom $\accs{\hy}{\hy}$;
  \end{enumerate}
  We can now define the following macro rule:
  \begin{equation}
    \label{eq:lift-box}
    \vlinf{\liftrbox}{
    }{
      \sq G, \labelsw x{\BOX A}}{
      \sq G,  \labelsw x{\BOX A}\cup\lliftfw {G} x{\BOX  A}
    } 
  \end{equation}
  where $\hL=\set{\lb{\hx_0},\lb{\hx_1},\ldots, \lb{\hx_l},\lb{\hy}}$ is a new layer added to the sequent.
  We say that  $\lb{\hy}$ is the \defn{suricata label of the layer $\hL$}, and the label $\lb {\hx_0}$ is the \defn{parent} of $\lb\hy$. 
\end{construction}

\begin{construction}[singleton $\rig\IMP$-lifting]
  \label{def:imp-lifting-singleton}
  Let $\sq G$ be a nice and almost happy sequent, and let $L$~be a layer in $\sq G$ containing an unhappy formula $\labelsw{x_0}{ A \IMP B}$ such that $\lb{x_0}$ forms a singleton cluster. We have $L =\set{\lb{x_0},\lb{x_1},\ldots,\lb{x_l}}$ for some  $l\geq 0$. 
  Now let $\hL$ be a set of fresh labels $\set{\lb{\hx_0},\lb{\hx_1},\ldots, \lb{\hx_l}}$. We define $\lliftfw G{x_0}{ A \IMP B}$, the \defn{$\rig\IMP$-lifting} of $\sq G$ at $\whiteff{x_0}{ A \IMP B}$, to be the sequent containing the following formulas: 
  \begin{enumerate}
  \item
    for every $i=0..l$ and $i'=0..l$ the relational atom
    $\accs{\hx_i}{\hx_{i'}}$ whenever $\accsq{x_i}{x_{i'}}$;
  \item
    for every $i=0..l$:
    \begin{enumerate}
    \item for each  $\lb w \in \labelsof{G}$ the
      relational atom $\futs {w}{\hx_i}$ whenever $\futsq {w}{x_i}$;
    \item for each formula $\fm C$,
      the labelled formula
      $\labelsb{\hx_i}{C}$ whenever $\blackf{x_i}{C} G$;
    \end{enumerate}
  \item
    for each $\lb\hx\in\hL$ the relational atom $\futs {\hx}{\hx}$; and
  \item the labelled formulas $\labelsb {\hx_0} {A}$ and
    $\labelsw {\hx_0} {B}$. 
  \end{enumerate}
  We can define the following macro rule:
  \begin{equation}
    \label{eq:lift-imp}
    \vlinf{\liftrim}{
    }{
      \sq G, \labelsw x{A \IMP B}}{
      \sq G,  \labelsw x{A \IMP B}\cup\lliftfw {G} x{A\IMP B}
    } 
  \end{equation}
  where $\hL=\set{\lb{\hx_0},\lb{\hx_1},\ldots, \lb{\hx_l}}$ is a new layer added to the sequent, and we say that $\lb{\hx_0}$ is the \defn{suricata label of $\hL$}. 
\end{construction}


\begin{figure}[!t]
	\begin{minipage}{0.45\textwidth}
			\begin{tikzpicture}[thick, every node/.style={scale=1.2},font = {\large}]
		
		\tikzstyle{node}=[circle,fill=black,inner sep=1.5pt]
		\tikzstyle{nonode}=[inner sep=0pt]
		\tikzstyle{access}=[blue]
		\tikzstyle{future}=[dashed,->]
		
		\node[] (Lh) at (1,5) {$\hat L$};
		\node[] (L) at (1,2) {$L$};
		
		\node[label=below:{$x_0$}] (x0) at (6,2) [node] {};
		\node[label=below:{$x_1$}] (x1) at (2,2) [node] {};
		\node[label=below:{$x_2$}] (x2) at (3,2) [node] {};
		\node[label=right:{$x_3$}] (x3) at (4,2.7) [node] {};
		\node[label=below:{$x_4$}] (x4) at (8,2) [node] {};
		\node[label=below:{$x_5$}] (x5) at (9,2.7) [node] {};
		\node[label=below:{$x_6$}] (x6) at (9.5,1.3) [node] {};
		%
		
		\node[label=above:{$\hat x_0$}] (x0') at (6,5) [node] {};
		\node[label=above:{$\hat  y$}] (y') at (7,5.7) [node] {};
		\node[label=above:{$\hat x_1$}] (x1') at (2,5) [node] {};
		\node[label=above:{$\hat x_2$}] (x2') at (3,5) [node] {};
		\node[label=right:{$\hat  x_3$}] (x3') at (4,5.7) [node] {};
		\node[label=above:{$\hat  x_4$}] (x4') at (8,5) [node] {};
		\node[label=above:{$\hat x_5$}] (x5') at (9,5.7) [node] {};
		\node[label=above:{$\hat x_6$}] (x6') at (9.5,4.3) [node] {};
		%

		\draw[access] (x1) -- (x2);
		\draw[access] (x2) -- (x0);
		\draw[access] (x2) -- (x3);
		\draw[access] (x0) -- (x4);
		\draw[access] (x4) -- (x5);
		\draw[access] (x4) -- (x6);
		
		\draw[access] (x1') -- (x2');
		\draw[access] (x2') -- (x0');
		\draw[access] (x2') -- (x3');
		\draw[access] (x0') -- (x4');
		\draw[access] (x0') -- (y');
		\draw[access] (x4') -- (x5');
		\draw[access] (x4') -- (x6');
		
		\draw[future] (x1) -- (x1');
		\draw[future] (x2) -- (x2');
		\draw[future] (x3) -- (x3');
		\draw[future] (x0) -- (x0');
		\draw[future] (x4) -- (x4');
		\draw[future] (x5) -- (x5');
		\draw[future] (x6) -- (x6');

	\end{tikzpicture}
	
	\end{minipage}
	\begin{minipage}{0.5\textwidth}
		\begin{center}
				\begin{tikzpicture}[thick, every node/.style={scale=1.2},font = {\large}]
		
		\tikzstyle{node}=[circle,fill=black,inner sep=1.5pt]
		\tikzstyle{nonode}=[inner sep=0pt]
		\tikzstyle{access}=[blue]
		\tikzstyle{future}=[dashed,->]
		
		\node[] (Lh) at (1,5) {$\hat L$};
		\node[] (L) at (1,2) {$L$};
		
		\node[label=below:{$x_0$}] (x0) at (6,2) [node] {};
		\node[label=below:{$x_1$}] (x1) at (2,2) [node] {};
		\node[label=below:{$x_2$}] (x2) at (3,2) [node] {};
		\node[label=right:{$x_3$}] (x3) at (4,2.7) [node] {};
		\node[label=below:{$x_4$}] (x4) at (8,2) [node] {};
		\node[label=below:{$x_5$}] (x5) at (9,2.7) [node] {};
		\node[label=below:{$x_6$}] (x6) at (9.5,1.3) [node] {};
		%
		
		\node[label=above:{$\hat x_0$}] (x0') at (6,5) [node] {};
		\node[label=above:{$\hat x_1$}] (x1') at (2,5) [node] {};
		\node[label=above:{$\hat x_2$}] (x2') at (3,5) [node] {};
		\node[label=right:{$\hat  x_3$}] (x3') at (4,5.7) [node] {};
		\node[label=above:{$\hat  x_4$}] (x4') at (8,5) [node] {};
		\node[label=above:{$\hat x_5$}] (x5') at (9,5.7) [node] {};
		\node[label=above:{$\hat x_6$}] (x6') at (9.5,4.3) [node] {};
		%

		\draw[access] (x1) -- (x2);
		\draw[access] (x2) -- (x0);
		\draw[access] (x2) -- (x3);
		\draw[access] (x0) -- (x4);
		\draw[access] (x4) -- (x5);
		\draw[access] (x4) -- (x6);
		
		\draw[access] (x1') -- (x2');
		\draw[access] (x2') -- (x0');
		\draw[access] (x2') -- (x3');
		\draw[access] (x0') -- (x4');'
		\draw[access] (x4') -- (x5');
		\draw[access] (x4') -- (x6');
		
		\draw[future] (x1) -- (x1');
		\draw[future] (x2) -- (x2');
		\draw[future] (x3) -- (x3');
		\draw[future] (x0) -- (x0');
		\draw[future] (x4) -- (x4');
		\draw[future] (x5) -- (x5');
		\draw[future] (x6) -- (x6');

	\end{tikzpicture}
	
		\end{center}
	\end{minipage}
	\caption{\textbf{Left}: $\rig\Box$-lifting of the layer $L$. \textbf{Right}: Singleton $\rig\IMP$-lifting of the layer $L$. 
	}
	\label{fig:lifting}
\end{figure}


If the label $\lb{x_0}$ is in a non-singleton cluster, the lifting is a bit more complicated. We need to duplicate the cluster of $\lb{x_0}$, as indicated in Figure~\ref{fig:mm-lifting}. The reason is that the suricata label cannot be in a cluster. Otherwise the rule would be unsound. This is formally defined below.


\begin{figure}[!t]
  \begin{center}
    	\begin{tikzpicture}[thick, every node/.style={scale=1.2},font = {\large}]

		\tikzstyle{node}=[circle,fill=black,inner sep=1.5pt]
		\tikzstyle{nonode}=[inner sep=0pt]
		\tikzstyle{access}=[blue]
		\tikzstyle{future}=[dashed,->]
		
		\node[] (Lh) at (-1,6) {$\hat L$};
		\node[] (L) at (-1,2) {$L$};
		
		\node[label=below:{$x_0$}] (x2) at (5.8,2.95) [node] {};
		\node[label=below:{$x_1$}] (x1) at (5.5,1.15) [node] {};
		\node[label=below:{$x_2$}] (x3) at (6.8,1.4) [node] {};
		\node[label=below:{$y_1$}] (y1) at (2,2) [node] {};
		\node[label=below:{$y_2$}] (y2) at (3,2) [node] {};
		\node[label=left:{$y_3$}] (y3) at (4,2.7) [node] {};
		\node[label=below:{$y_4$}] (y4) at (8,2) [node] {};
		\node[label=below:{$y_5$}] (y5) at (9,2.7) [node] {};
		\node[label=below:{$y_6$}] (y6) at (9.5,1.3) [node] {};
		%

		\node[label=above:{$\hat x_1'$}] (x1') at (3.5,5.15) [node] {};
		\node[label=above:{$\hat x_1''$}] (x1'') at (7.5,5.15) [node] {};
		\node[label=above:{$\hat x_0'$}] (x2') at (3.8,6.95) [node] {};
		\node[label=above:{$\hat x_0$}] (xhat) at (5.8,6) [node] {};
		\node[label=above:{$\hat x_0''$}](x2'') at (7.8,6.95) [node] {};
		\node[label=above:{$\hat x_2'\:$}] (x3') at (4.8,5.4) [node] {};
		\node[label=above:{$\hat x_2''\:$}] (x3'') at (8.8,5.4) [node] {};
		%
		\node[label=above:{$\hat y_1$}] (y1h) at (0,6) [node] {};
		\node[label=above:{$\hat y_2$}] (y2h) at (1,6) [node] {};
		\node[label=above:{$\hat y_3$}] (y3h) at (2,6.7) [node] {};
		\node[label=above:{$\hat y_4$}] (y4h) at (10,6) [node] {};
		\node[label=above:{$\hat y_5$}] (y5h) at (11,6.7) [node] {};
		\node[label=above:{$\hat y_6$}] (y6h) at (11.5,5.3) [node] {};
		%
			
		\draw[access] (y1) -- (5,2);
		\draw[access] (y2) -- (y3);
		\draw[access] (6,2) circle [radius=1cm];
		\draw[access] (7,2) -- (y4);
		\draw[access] (y4) -- (y5);
		\draw[access] (y4) -- (y6);
		
		\draw[access] (y1h) -- (3,6);
		\draw[access] (y2h) -- (y3h);
		\draw[access] (4,6) circle [radius=1cm];
		\draw[access] (5,6) -- (7,6);
		\draw[access] (8,6) circle [radius=1cm];
		\draw[access] (9,6) -- (y4h);
		\draw[access] (y4h) -- (y5h);
		\draw[access] (y4h) -- (y6h);
		
		\draw[future] (x1) -- (x1');
		\draw[future] (x1) -- (x1'');
		\draw[future] (x2) -- (x2');
		\draw[future] (x2) -- (xhat);
		\draw[future] (x2) -- (x2'');
		\draw[future] (x3) -- (x3');
		\draw[future] (x3) -- (x3'');
		\draw[future] (y1) -- (y1h);
		\draw[future] (y2) -- (y2h);
		\draw[future] (y3) -- (y3h);
		\draw[future] (y4) -- (y4h);
		\draw[future] (y5) -- (y5h);
		\draw[future] (y6) -- (y6h);

	\end{tikzpicture}
  \end{center}
  \caption{$\rig\IMP $-cluster lifting of the layer $L$.
 }
  \label{fig:mm-lifting}
\end{figure}


\begin{construction}[cluster $\rig\IMP$-lifting]
  \label{def:imp-lifting}
  Let $\sq G$ be a nice and almost happy sequent, and let $L$ be a layer in $\sq G$ containing an unhappy formula $\labelsw{x_0}{ A \IMP B}$ such that $\lb{x_0}$ belongs to a cluster $\lbc{C_x}=\set{\lb{x_0},\ldots,\lb{x_h}}$ with $h\geq 1$. 
    	Let $\set{\lb{y_1},\ldots,\lb{y_l}}=L\setminus \lbc{C_x}$, for $l\geq 0$. 
    	Now let $\hL$ be a set of fresh variables  
    	$ \set{\lb{\hy_1},\ldots, \lb{\hy_l},\lb{\hx_0}, \lb{\hx_0'},\ldots, \lb{\hx_h'},\lb{\hx_0''},\ldots, \lb{\hx_h''}}$. We define $\lliftfw G{x_0}{ A \IMP B}$, the \defn{$\rig\IMP$-cluster lifting} of~$\sq G$ at $\whiteff{x_0}{ A \IMP B}$, to be the sequent containing the following formulas: 
    	\begin{enumerate}
    			\item
    		for every $i,i'=0..l$ and for every $j,j'=1..h$:
    		\begin{enumerate}
    			\item 
    			relational atom $\accs{\hy_i}{\hy_{i'}}$ whenever $\accsq{y_i}{y_{i'}}$;
    			\item 
    			relational atoms $\accs{\hx'_j}{\hx'_{j'}}$ and $\accs{\hx''_j}{\hx''_{j'}}$ whenever $\accsq{x_j}{x_{j'}}$;
    			\item 
    			relational atoms $\accs{\hy_i}{\hx_{j}'}$ and
    			$\accs{\hy_i}{\hx''_{j}}$ whenever $\accsq{y_i}{x_{j}}$;
   				\item 
   				relational atom $\accs{\hy_i}{\hx_0}$ whenever $\accsq{y_i}{x_0}$;
    			\item 
    			relational atoms $\accs{\hx'_j}{\hy_i}$ and
    			$\accs{\hx''_j}{\hy_i}$ whenever $\accsq{x_j}{y_i}$;
    			\item 
   			relational atom $\accs{\hx_0}{\hy_i}$ whenever $\accsq{x_0}{y_i}$; 
    			\item 
    			relational atoms  $\accs{\hx'_j}{\hx_0}$ and 
    			$\accs{\hx_0}{\hx''_{j}}$;
    		\end{enumerate}
    		\item for every $i=0..l$, for every $j=1..h$, and for every  $\lb w \in \labelsof{G}$:
    	\begin{enumerate}
    		\item  relational atom $\futs {w}{\hy_i}$ whenever $\futsq {w}{y_i}$;
    		\item
    		relational atom $\futs {w}{\hx_0}$ whenever $\futsq {w}{x_0}$;
    		\item	 relational atoms $\futs {w}{\hx_j'}$ and $\futs
    		{w}{\hx_j''}$ whenever $\futsq {w}{x_j}$;
    	\end{enumerate}
    		\item 
    	for every $i=0..l$, for every $j=1..h$,  and  for every formula~$\fm C$:
    	\begin{enumerate}
    		\item
    		labelled formula $\labelsb{\hy_i}{C}$ whenever $\blackf
    		{y_i}{C} G$;
    		\item	labelled formulas $\labelsb{\hx'_j}{C} $ and $\labelsb{\hx''_j}{C}$ whenever $\blackf {x_j}{C} G$;
    		\item labelled formulas $\labelsb{\hx_0}{C}$ whenever $\blackf {x_0}{C} G$;
    	\end{enumerate}
    		\item
    		for every $\lb\hv\in\hL$ the relational atom $\futs {\hv}{\hv}$; and
    		\item the labelled formulas $\labelsb {\hx_0} {A}$ and $\labelsw{ \hx_0} {B}$, and the relational atom $\accs{\hx_0}{\hx_0}$. 
    	\end{enumerate}
        The inference rule $\liftrim$ is as shown in~\eqref{eq:lift-imp} above, where $\hL= \set{\lb{\hy_1},\ldots, \lb{\hy_l},\lb{\hx_0}, \lb{\hx_0'},\ldots, \lb{\hx_h'},\lb{\hx_0''},\ldots, \lb{\hx_h''}}$ is the newly added layer, and we say that  $\lb{\hx_0}$  is the \defn{suricata label of $\hL$}.
    \end{construction}
        
\begin{definition}[Proper sequent]
	A sequent is~$\sq{G}$ is called \defn{proper} if{f} it is nice and all its clusters  are singletons. 
\end{definition}

\begin{lemma}\label{prop:liftr}
  An instance of the $\liftrim$-rule or the $\liftrbox$-rule is derivable in $\labISfs$ if its conclusion is proper. 
\end{lemma}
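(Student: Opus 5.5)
We derive the premise from the conclusion bottom-up using rules of $\labISfs$. Since the conclusion is proper, every cluster is a singleton, so $\liftrim$ is an instance of the singleton $\rig\IMP$-lifting (Construction~\ref{def:imp-lifting-singleton}), and $\liftrbox$ an instance of Construction~\ref{def:box-lifting}. The layer $L=\set{\lb{x_0},\ldots,\lb{x_l}}$ of $\lb{x_0}$ is then, by niceness, the reflexive--transitive closure of a tree with respect to $\grel G$. The strategy is: first apply $\rig\IMP$ (resp.\ $\rig\BOX$) at $\lb{x_0}$ to create the fresh label $\lb{\hx_0}$ with $\futs{x_0}{\hx_0}$ (resp.\ fresh $\lb{\hx_0}$ and $\lb\hy$ with $\futs{x_0}{\hx_0}$, $\accs{\hx_0}{\hy}$). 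Then use $\fone$ and $\ftwo$ to propagate this new $\le$-edge through the tree of $L$, creating one fresh $\lb{\hx_i}$ per $\lb{x_i}$. Finally, close everything off with $\monl$, $\Lref$, $\Ltr$, $\Rref$, $\Rtr$, and weakening to match exactly the sequent of the premise. Since the premise is a superset of the conclusion and of the sequent our derivation produces, the admissible rule $\w$ (or simply reading the derivation with cumulative rules) handles any surplus in either direction.

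For the propagation step we proceed by induction along the tree structure of $L$, rooted at $\lb{x_0}$ but traversed in both directions. Suppose $\lb{x_i}$ already has its copy $\lb{\hx_i}$ with $\futs{x_i}{\hx_i}$. If $\lb{x_j}$ is a tree-neighbour with $\accsq{x_i}{x_j}$, apply $\fone$ to $\accs{x_i}{x_j}$ and $\futs{x_i}{\hx_i}$, obtaining fresh $\lb{\hx_j}$ with $\futs{x_j}{\hx_j}$ and $\accs{\hx_i}{\hx_j}$. If instead $\accsq{x_j}{x_i}$, apply $\ftwo$ to $\accs{x_j}{x_i}$ and $\futs{x_i}{\hx_i}$, obtaining $\lb{\hx_j}$ with $\futs{x_j}{\hx_j}$ and $\accs{\hx_j}{\hx_i}$. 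Because $L$ is a tree of singleton clusters, each label is reached exactly once, so no label needs two different copies; this is precisely where properness is used. (With a non-singleton cluster the cycle would force inconsistent copies, which is why the cluster construction differs.) In the $\rig\BOX$ case, the edges $\accs{\hx_i}{\hy}$ for $\accsq{x_i}{x_0}$ then follow from $\Rtr$ with $\accs{\hx_i}{\hx_0}$ and $\accs{\hx_0}{\hy}$, and $\accs\hy\hy$ comes from $\Rref$.

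Next we obtain the remaining relational atoms. The atom $\accs{\hx_i}{\hx_{i'}}$ for $\accsq{x_i}{x_{i'}}$ is obtained by $\Rtr$ along the tree path, since $\grel G$ on $L$ is the transitive closure of the tree edges, and reflexive atoms come from $\Rref$. The atom $\futs w{\hx_i}$ for $\futsq w{x_i}$ comes from $\Ltr$, and $\futs{\hx}{\hx}$ from $\Lref$. The input formulas $\labelsb{\hx_i}C$ come from $\monl$ applied to $\futs{x_i}{\hx_i}$. The output formulas $\labelsb{\hx_0}A$, $\labelsw{\hx_0}B$ (resp.\ $\labelsw\hy A$) are produced by the initial $\rig\IMP$ (resp.\ $\rig\BOX$) step.

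The main obstacle is the bookkeeping that shows the derived sequent contains no more relational atoms than the premise of the lifting. For instance, we must check that $\accs{\hx_i}{\hx_{i'}}$ is produced only when $\accsq{x_i}{x_{i'}}$. This is resolved by noting that every structural rule is used only in the cumulative, upward direction: any extra atoms can be removed by reading the derivation with $\w$ applied at its top, since the premise of the macro-rule is a subset of the derived leaf. Alternatively, we can verify that each generated atom is indeed in the premise by tracking the tree paths, using layeredness so that no spurious edges are created.
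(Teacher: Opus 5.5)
Your proposal is correct and is essentially the paper's argument, spelled out in more detail: one instance of $\rig\IMP$ (resp.\ $\rig\BOX$) at $\lb{x_0}$, then $\fone$/$\ftwo$ to copy the singleton-cluster layer tree, then $\Rtr$, $\Rref$, $\Ltr$, $\Lref$, $\monl$ to close off. The appeal to $\w$ is unnecessary, and $\w$ could only remove surplus from the derived leaf, never supply atoms missing from it. Applying exactly the instances you list already yields the premise, which is what strict derivability in $\labISfs$ requires.
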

\begin{proof}
  Any application of the $\liftr$ rule to a formula $\whiteff{x}{F} \in \sq G$ can be simulated by the rules of $\labISfs$: it suffices to apply to $\sq G$ one instance of rule $\rig{\IMP}$ or $\rig\BOX$ (depending on the shape of $\fm{F}$), followed by possibly multiple applications of rules $\fone,\ftwo$,  $\Lref, \Ltr$, $\Rref$, $\Rtr$, and $\monl$.
\end{proof}

\begin{lemma}\label{lem:liftr-nice}
  If the conclusion of a $\liftrim$- or $\liftrbox$-rule instance is nice, then so is its premise.
\end{lemma}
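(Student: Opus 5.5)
The plan is to check the four clauses of Definition~\ref{def:nice} for the premise $\sq G'=\sq G\cup\lliftfw G{x_0}{F}$, treating the three constructions ($\rig\BOX$-lifting, singleton $\rig\IMP$-lifting, cluster $\rig\IMP$-lifting) in parallel. The backbone is a map $\pi\colon\hL\to L$ sending each copy to its original: $\lb{\hx_i}\mapsto\lb{x_i}$, $\lb{\hy_i}\mapsto\lb{y_i}$, $\lb{\hx'_j},\lb{\hx''_j}\mapsto\lb{x_j}$, the $\rig\IMP$ suricata $\lb{\hx_0}\mapsto\lb{x_0}$, and the $\rig\BOX$ suricata $\lb\hy\mapsto\lb{x_0}$. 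Every $\le$-atom added has the form $\futs w{\hv}$ with $\futsq w{\pi(\hv)}$, or $\futs{\hv}{\hv}$, and no atom has a label of $\hL$ as source except the reflexive ones. Every $\rel$-atom added lies inside $\hL$, and $\pi$ maps it to an $\grel G$-atom; in the cluster case the extra edges $\accs{\hx'_j}{\hx_0}$ and $\accs{\hx_0}{\hx''_j}$ map into the cluster $\lbc{C_x}$.

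\emph{Layers and tree structure (i), (iii).} The new $\rel$-atoms connect labels of $\hL$ only, and $\hL$ is $\grel{}$-connected because $L$ is and the suricata is attached to the copy of its parent. Hence $\hL$ is exactly one new layer and the old layers are unchanged. Each label of $\hL$ has no future, so $\hL$ is topmost. The layers below $\hL$ are precisely those $\le$-below $L$ together with $L$ itself, so $\hL$ becomes a new child of $L$ and the layer order stays a tree. For layeredness, condition~1 holds because $\le$-atoms inside $\hL$ are reflexive only. Condition~2 reduces, via $\pi$ and the fact that nothing leaves $\hL$ under $\le$, to the same condition in $\sq G$ plus the observation that no label of $\hL$ is $\le$-below a label of $L$.

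\emph{Structural happiness (ii).} $(\Lref)$ and $(\Rref)$ are immediate; note that $\accs{\hy}{\hy}$ and $\accs{\hx_0}{\hx_0}$ are added explicitly. $(\Ltr)$ follows from $(\Ltr)$ in $\sq G$. $(\Rtr)$ inside $\hL$ follows from $(\Rtr)$ in $\sq G$ via $\pi$ together with the specific attachments of the suricata: in the $\rig\BOX$ case $\accs{\hx_i}{\hy}$ iff $\accsq{x_i}{x_0}$, and in the cluster case the $\lb{\hx'}$-copy lies below $\lb{\hx_0}$ and the $\lb{\hx''}$-copy above it.

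$(\monl)$: for a new atom $\futs w{\hv}$ we have $\futsq w{\pi(\hv)}$, so $\blackf wC{G}$ gives $\blackf{\pi(\hv)}C{G}$ by $(\monl)$ in $\sq G$. By construction the copy $\lb\hv$ then carries $\labelsb{\hv}{C}$. The $\rig\BOX$ suricata $\lb\hy$ receives no $\le$-atoms except the reflexive one, so it is unaffected.

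$(\fone)$ and $(\ftwo)$ are the main obstacle. For $(\fone)$ with $\accsq xy$ and $\futs x{\hv}$, the label $\lb x$ lies in a layer below $L$ or in $L$. I take $\lb u$ to be the copy of the witness $\lb{u_0}\in L$ given by $(\fone)$ in $\sq G$ for $\accsq xy$ and $\futsq x{\pi(\hv)}$. Then $\futsq y{u_0}$ yields $\futs y{\hat u_0}$, and an $\rel$-edge $\pi(\hv)\,\rel\,\lb{u_0}$ yields an edge between the chosen copies. In the cluster case I will have to choose the copy carefully, preferring the $''$-copy when going upward so that the edge from the suricata exists. For $(\ftwo)$ with $\accs{\hv}{\hat w}$ inside $\hL$ nothing is needed beyond reflexivity. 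For $\accsq xy$ and $\futs y{\hv}$, I apply $(\ftwo)$ in $\sq G$ to get $\lb{u_0}$ with $\futsq x{u_0}$ and $\accsq{u_0}{\pi(\hv)}$, and I need the chosen copy of $\lb{u_0}$ to reach $\lb\hv$. The delicate subcase is where $\lb\hv$ is the suricata. In the $\rig\BOX$ case this is harmless, since $\lb\hy$ has no incoming $\le$ except its own. In the cluster case $\lb{u_0}\in\lbc{C_x}$ forces taking the $'$-copy, which has an edge to $\lb{\hx_0}$.

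\emph{Cluster tree (iv).} The clusters of $\hL$ mirror those of $L$ except for the following changes. In the $\rig\BOX$ case $\lb\hy$ is an extra singleton leaf attached to the cluster of $\lb{\hx_0}$. In the cluster case $\lbc{C_x}$ is replaced by the chain ``$'$-copy, singleton $\lb{\hx_0}$, $''$-copy''. For this to remain a tree, the predecessors of $\lbc{C_x}$ must attach to the $'$-copy and the successors to the $''$-copy. I expect to verify this with the tree property of $L$, which gives $\lbc{C_x}$ a unique parent, although the construction also adds edges from each $\lb{\hy_i}$ to both copies; I will need to check that this is consistent with the tree condition, or argue that it concerns only the transitive closure.
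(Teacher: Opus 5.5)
Your route is the same as the paper's. The paper's proof only notes that the new layer is a fresh leaf of the layer tree, and that structural happiness and the cluster tree of $\hL$ hold ``by construction''. You make this explicit through the projection $\pi$. For $\liftrbox$ and singleton $\liftrim$ your checks go through, because there $\pi$ reflects $\rel$ on the copies of $L$ and the $\rig\BOX$ suricata has no outgoing edge except its loop.

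\textbf{The gap.} The genuine gap is the $(\Rtr)$ step for cluster $\rig\IMP$-lifting.
\begin{itemize}
\item In that case $\pi$ deliberately does \emph{not} reflect $\rel$: no atoms lead from $\lb{\hx_0}$, or from the $''$-copy, back to the $'$-copy, so that the suricata stays out of a cluster. Transitivity therefore cannot be pulled back from $\sq G$ along $\pi$; it has to be checked on the two-step paths inside $\hL$.
\item The very configuration you cite, $\accs{\hx'_j}{\hx_0}$ and $\accs{\hx_0}{\hx''_{j'}}$, demands $\accs{\hx'_j}{\hx''_{j'}}$. Construction~\ref{def:imp-lifting} contains no such atom. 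Clause 1(b) copies edges only inside each copy, and no other clause relates $C'=\{\lb{\hx'_0},\ldots,\lb{\hx'_h}\}$ to $C''=\{\lb{\hx''_0},\ldots,\lb{\hx''_h}\}$.
\item So, as literally written, the premise violates $(\Rtr)$ and is not nice. The paper's one-line proof passes over the same point.
\end{itemize}
The fix is to add $\accs{\hx'_j}{\hx''_{j'}}$ for all $j,j'$, i.e.\ to close $\hL$ under $\Rtr$. All other two-step paths are already closed. Since nothing leads from $C''$ back to $C'\cup\{\lb{\hx_0}\}$, the suricata remains a singleton cluster.

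\textbf{Index ranges.} The ranges $j=1..h$ in clauses 1(b), (c), (e), (g), 2(c) and 3(b) must be read as $j=0..h$. Otherwise $\lb{\hx'_0}$ and $\lb{\hx''_0}$ carry no $\rel$-atoms, violate $(\Rref)$, and form isolated layers. The copies of $\lb{x_0}$ you need in $(\fone)$ and $(\ftwo)$ then do not exist.

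\textbf{After the amendment.} The rest of your plan then closes.
\begin{itemize}
\item For a witness $\lb{u_0}\in\lbc{C_x}$, take the $''$-copy in $(\fone)$ and the $'$-copy in $(\ftwo)$. These are reached from, respectively reach, every relevant label of $\hL$. Note that your ``prefer the $''$-copy'' recipe, applied to a source in $C'$, already relies on the missing atoms.
\item For (iv), the atoms from predecessor copies $\lb{\hy_i}$ into $C''$, and from $C'$ to successor copies, are exactly the $\Rtr$-consequences of $\lb{\hy_i}\rel C'\rel\lb{\hx_0}\rel C''\rel\lb{\hy_{i'}}$.
\item So the cluster order on $\hL$ is that of $L$ with the node $\lbc{C_x}$ replaced by the chain $C'<\{\lb{\hx_0}\}<C''$. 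Every set of predecessors is still a chain, hence the order is a tree, as you suspected.
\end{itemize}
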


\begin{proof}
  The new layer forms a new leaf in the tree of layers of the sequent.
  Structural happiness follows immediately by the construction. Similarly for the tree strucure of the new layer. 
\end{proof}


\subsection{Lifting with Loops}
\label{sec:loops}

After layer saturation, the only unhappy formulas in sequent are the $\rig\IMP$- and the $\rig\BOX$-formulas. We use the lifting of layers, as implemented by the rules $\liftrbox$ and $\liftrim$, defined in~\eqref{eq:lift-box} and~\eqref{eq:lift-imp}, respectively.  We stop creating new layers, when we reach a layer that can be simulated by some past layer.

\begin{definition}[Simulation of layers]\label{def:simul-layers}
  Let $L'$ and $L$ be layers in a sequent~$\sq G$ (Def~\ref{def:layer}).
  A \defn{layer simulation} between~$L'$~and~$L$ is a non-empty binary relation $\simul \subseteq (L' \times L)\;\cap\sim$ such that for all $\lb{x'} \in L'$, and all $\lb x,\lb y \in L$,
  \begin{enumerate}
  \item [($\rn{S1}$)]   whenever $\lb {x'} \simul \lb x$ and $\accsq{x}{y}$, there exists $\lb{y'} \in L'$ such that 	$\accsq{x'}{y'}$ and $\lb {y'} \simul \lb y$, and 
  \item [($\rn{S2}$)]   whenever $\lb {x'} \simul \lb x$ and $\accsq{y}{x}$, there exists $\lb{y'} \in L'$ such that 	$\accsq{y'}{x'}$ and $\lb {y'} \simul \lb y$. 
	\end{enumerate}
  We say $L'$ \defn{simulates} $L$ if{f} there is a layer simulation $\simul$ between~$L'$~and~$L$.
  And we say that a layer $L$  \defn{is simulated} if{f} there is a layer~$L'$ in~$\sq G$, such that $L'< L$ and $L'$ simulates~$L$.
\end{definition}

\begin{proposition}
  \label{prop:sim_bij}
  Let $L'$ and $L$ be finite layers in a sequent\/~$\sq G$. If\/ $L'$ simulates $L$ then, for all $\lb x\in L $, there is a $\lb {x'}\in L'$ such that $\lb{x'}\simul\lb x$. %
\end{proposition}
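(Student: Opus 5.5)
Since $\simul$ is non-empty, there is a pair $\lb{x_0'}\simul\lb{x_0}$ with $\lb{x_0'}\in L'$ and $\lb{x_0}\in L$. The claim should then follow by propagating this single pair across all of $L$, using that a layer is an equivalence class of $\lrel G$, the reflexive–transitive closure of $\grel G\cup\grel G^{-1}$. Concretely, for an arbitrary $\lb x\in L$ we have $\accsqh{x_0}{x}$, so there is a finite sequence $\lb{z_0},\dots,\lb{z_n}$ of labels with $\lb{z_0}=\lb{x_0}$, $\lb{z_n}=\lb x$, and for each $0\le i<n$ either $\accsq{z_i}{z_{i+1}}$ or $\accsq{z_{i+1}}{z_i}$. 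Every $\lb{z_i}$ lies in $L$, because each step is an $\grel G$-edge in one direction or the other and $L$ is closed under $\lrel G$.

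I then argue by induction on $i$ that there is some $\lb{z_i'}\in L'$ with $\lb{z_i'}\simul\lb{z_i}$. The base case $i=0$ is the chosen pair. For the step, assume $\lb{z_i'}\simul\lb{z_i}$. If $\accsq{z_i}{z_{i+1}}$, condition ($\rn{S1}$) gives $\lb{z_{i+1}'}\in L'$ with $\accsq{z_i'}{z_{i+1}'}$ and $\lb{z_{i+1}'}\simul\lb{z_{i+1}}$. If instead $\accsq{z_{i+1}}{z_i}$, condition ($\rn{S2}$) gives such a $\lb{z_{i+1}'}$ with $\accsq{z_{i+1}'}{z_i'}$. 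Taking $i=n$ yields the required $\lb{x'}\in L'$ with $\lb{x'}\simul\lb x$.

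There is no real obstacle. The one point to state explicitly is that the chain witnessing $\accsqh{x_0}{x}$ is finite, which holds because reflexive–transitive closure is a finitary construction. The finiteness of $L'$ and $L$ is therefore not needed for this statement and merely reflects the paper's setting.
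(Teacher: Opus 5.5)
Your proof is correct and is essentially the paper's argument: you start from a pair in the nonempty $\simul$ and propagate it along a finite chain of $\grel G$- and $\grel G^{-1}$-steps inside the layer, using ($\rn{S1}$) for forward steps and ($\rn{S2}$) for backward steps. You also rightly point out that the intermediate labels stay in $L$, so the simulation clauses apply, and that the finiteness of the layers is never used.
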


\begin{proof}
	There is a layer simulation $\simul$ between~$L'$~and~$L$. 
	As $\simul \not= \emptyset$ there is $\lb x\in L$ and $\lb{x'}\in L'$ such that $\lb {x'} \simul \lb{x}$.
	Take $\lb y\in L$ such that $\lb x \grel G \lb y$, then by ($\rn{S1}$) we have that there is $\lb{y'}$ such that $\lb{y'}\simul\lb{y}$.
	Similarly, take $\lb y\in L$ such that $\lb x \grel{G}^{-1} \lb y$, then by ($\rn{S2}$) we have that there is $\lb{y'}$ such that $\lb{y'}\simul\lb{y}$.
	As $L$ is an equivalence class of $\lrel G$, i.e.~the transitive and reflexive closure of $\grel G \cup \grel{G}^{-1}$, we have by induction that for every $\lb z\in L$ there is $\lb{z'}\in L'$ such that $\lb{z'}\simul\lb z$.
\end{proof}

\begin{definition}[Lift of a sequent]\label{def:lift}
  Let $\sq G$ be a nice sequent that is almost happy and not axiomatic. Let $\unhappyof G$ be the set of unhappy labeled formulas $\labels xA$ in $\sq G$, such that $\layerof x$ (Def.~\ref{def:layer-of-label}) is not simulated (Def.~\ref{def:simul-layers}).
  We define the \defn{lift} of $\sq G$ to be the set
  \begin{equation}
    \label{eq:lift}
    \liftof G =\bigcup_{\labels xA\in\unhappyof G}\lliftf G{x}{A}
  \end{equation}
\end{definition}

\begin{remark}
  The lift $\liftof G$ is well-defined because in every sequent that is almost happy and not axiomatic, the only unhappy labeled formulas are of shape $\labelsw x{\BOX A}$ and $\labelsw x{A\IMP B}$, for which the lifting has been defined in Constructions~\ref{def:box-lifting}, \ref{def:imp-lifting-singleton}, and~\ref{def:imp-lifting}.
\end{remark}

\begin{figure}
  \begin{center}
  \includegraphics{figures/proof-search-gen.tex}
  \end{center}
  \caption{Proof search tree, alternating saturation and lifting steps}
  \label{fig:searchtree}
\end{figure}
The idea is to apply layer saturation again to the sequent $\sq G\cup\liftof G$ after lifting, and so on. The problem is, that the layers can grow and we might never come to an end. Therefore we need to create more loops, like we did for the $\DIA$ in~\eqref{eq:dialoop}.

For defining the correct loop conditions, we need to introduce a few more concepts. The basic idea of our proof search algorithm (to be spelled out in Section~\ref{sec:algorithm} below) is that we alternate between layer saturation (as described in Algorithm~\ref{alg:sat}) and lifting of layers. This lifting creates new layers which then will be saturated again, and so on. This creates a tree of growing sequents, as indicated in Figure~\ref{fig:searchtree}. In particular, whenever we create new layers via lifting (as in Definition~\ref{def:lift}), all layers below are saturated. However, for checking for a loop, we need to compare labels right after lifting, before saturation.

This motivates the following notation:

\begin{notation}
  Let $\sq G$ and $\sq H$ be sequents in a search tree as indicated in Figure~\ref{fig:searchtree}. 
  We write $\sq H\psbelow\sq G$ if $\sq H$ is below $\sq G$ on the path from the root to $\sq G$ in the search tree. In that case we have $\labelsof H\subseteq\labelsof G$. For disambiguation, we write $\lbG xH\lbeq\lbG yG$ if for all formulas $\fm A$, we have $\labelsb xA\in\sq H$ (resp.~$\labelsw xA\in\sq H$) iff $\labelsb yA\in\sq G$ (resp.~$\labelsw yA \in\sq G$). 
\end{notation}

 \begin{notation}[Equivalent clusters]\label{def:eqclusters}
  For two clusters $\lbc{C_1}$ and $\lbc{C_2}$, we write $\clsubG{C_1}H{C_2}G$ if   for every $\lb{z_1}\in\lbc{C_1}$ there is a $\lb{z_2}\in\lbc{C_2}$ with $\lbG{z_1}H\lbeq\lbG{z_2}G$. We write $\lbG{C_1}H\lbeq\lbG{C_2}G$ iff $\clsubG{C_1}H{C_2}G$ and $\clsubG{C_2}G{C_1}H$. Observe that $\lbeq$ is an equivalence relation on clusters, and that $\lbc{C_1}$ and $\lbc{C_2}$ do not need to be in the same sequent.
\end{notation}

Recall that in a nice sequent, every layer is a tree of clusters. This allows us to have the following definitions which give us the leaves of those trees, and embeddings for these layer trees.

\begin{definition}[Leaf]
  A label $\lb x$ in a sequent is called a \defn{leaf} if its cluster $\lbc{C_x}$ is a leaf of layer $\layerof x$ seen as a tree of clusters.
\end{definition}

\begin{definition}[Embedding]\label{def:t-embedding}
  Let $\sq H$ and $\sq G$ be nice sequents, and let $L_1\subseteq \sq H$ and  $L_2\subseteq \sq G$ be layers in those sequents. An injective function $\emb\colon L_1\to L_2$ is an \defn{(layer-tree) embedding} iff 
  \begin{enumerate}
  \item   for all $\lb x\in L_1$, we have $\lbG{C_x}H\lbeq\lbG{C_{\embof{x}}}G$, and 
  \item for all $\lb x,\lb y\in L_1$, we have $\accsqq[H]{C_x}{C_y}$ iff $\accsq{C_{\embof{x}}}{C_{\embof{y}}}$. 
  \end{enumerate}
\end{definition}

\begin{remark}
  To be technically precise, the embedding defined in Definition~\ref{def:t-embedding} is not a tree-embedding because of the transitivity of $\srel{G}$. It is a more general graph embedding: whenever there is an edge between $\lbc{C_u}$ and $\lbc{C_v}$, then there is a path between $\embof{C_u}$ and $\embof{C_v}$. This is exactly the kind of embedding to which Kruskal's tree theorem can be applied, as we will do in Section~\ref{sec:termination}.
\end{remark}

\begin{definition}[Tree of a label and their embedding]\label{def:tree-of-label}
  Let $\sq G$ be a sequent and $\lb x\in\labelsof G$ be a singleton cluster. We can define the \defn{tree of $\lb x$}, denoted by $\treeof x$ to be the subtree of $\layerof x$ (Def.~\ref{def:layer-of-label}) that is rooted at $\lb x$. Given two such labels $\lb x\in\labelsof H$ and $\lb y\in\labelsof G$, an \defn{embedding} from  $\treeof x$ into $\treeof y$ is an injective function $\emb\colon\treeof x\to\treeof y$ obeying the conditions of Definition~\ref{def:t-embedding} above.
\end{definition}

\begin{remark}
  Whenever we use the embeddings defined in Definition~\ref{def:tree-of-label}, the $\lb x$ and $\lb y$ will be the suricata labels of freshly lifted layers, i.e., they will contain the only output formula (i.e., with $\rigsym$-polarity) of their respective layer. Hence, it follows that $\embof{x}=\lb y$. And for this reason we did not include that condition in the definition. 
\end{remark}

We can now generalize the notion of repetition:

\begin{definition}[Repetition]\label{def:repetition}
  Let $\sq G$ be a sequent that is the result of a lifting step. Let $\lb x,\lb y\in\labelsof G$ such that $\lb y$ does not have a future. We say that $\lb y$ is a \defn{repetition} of $\lb x$ if $\lb x\neq \lb {y}$ and  $\vissq xy$ and there is a sequent $\sq H\psbelow\sq G$ such that $\sq H$ is the result of a lifting step and such that $\lb x\in\labelsof H$ and $\lb x$ does not have a future in $\sq H$ and $\lbG xH\lbeq \lbG yG$. We call this $\sq H$ a \defn{seed} of the repetition.
  \begin{itemize} 
  \item A repetition is a \defn{leaf repetition} if additionally $\lb y$ is a leaf of $\sq G$, and $\lb x$ is a leaf of $\sq H$, and $\clsubG{C_x}H{C_y}G$ and $\nfutsq{C_x}{C_y}$.
  \item A repetition is a \defn{$\BOX$-suricata repetition} if additionally $\lb x$ is the suricata of $\layerof x$ in $\sq H$ and $\lb y$ is the suricata of $\layerof y$ in $\sq G$, both coming from a $\liftrbox$-rule application (Construction~\ref{def:box-lifting}). 
  \item A repetition is a \defn{$\IMP$-suricata repetition}  if additionally $\lb x$ is the suricata of $\layerof x$ in $\sq H$ and $\lb y$ is the suricata of $\layerof y$ in $\sq G$, both coming from a $\liftrim$-rule application (Constructions~\ref{def:imp-lifting-singleton}, and~\ref{def:imp-lifting}), and additionally there is an embedding of $\treeof{x}$ in $\sq H$ into $\treeof{y}$ in $\sq G$.
  \item A repetition is a \defn{suricata repetition} if it is a $\BOX$-suricata repetition or a $\IMP$-suricata repetition.
  \end{itemize}
\end{definition}

\begin{definition}[$\LEAF$-repetition pair and $\leafloopr$]
	\label{def:leafloop}
  Let $\sq G$ be a nice sequent and let $\lb{x'},\lb y\in\labelsof G$ with $\accsq{x'}y$ and $\lb{x'}\notin \lbc{C_y}$. We say that  $\tuple{\lb {x'},\lb y}$ is a \defn{leaf($\LEAF$)-repetition pair} if there is $\lb x\in \labelsof G$ with $\futsq x{x'}$ such that $\lb y$ is a leaf repetition of $\lb x$, and the suricata $\lb{s_y}$ of $\layerof y$ is a suricata repetition of the suricata $\lb{s_x}$ of $\layerof x$, such that both repetitions have the same seed $\sq H\psbelow\sq G$ (Def.~\ref{def:repetition}) and we do \emph{not} have that $\lb{x'}\srel G\lb {s_y}\srel G\lb y$. With this we define the following loop rule:
  \begin{equation}
    \label{eq:leafloop}
    \vlinf{\leafloopr}{
      \proviso{where $\tuple{\lb {x'},\lb y}$ is a leaf repetition pair in $\sq G$}}{
      \sq G}{
      \sq G \labelsubst{y}{x'}\cup\trans{\interval {x'}y}
    } 
  \end{equation}
\end{definition}

\begin{definition}[$\BOX$-repetition pair and $\boxloopr$]\label{def:boxloop}
  Let $\sq G$ be a nice sequent and $\lb{x'},\lb y\in\labelsof G$ with $\accsq{x'}y$ and $\lb{x'}\notin \lbc{C_y}$. We say that  $\tuple{\lb {x'},\lb y}$ is a \defn{$\BOX$-repetition pair} if there is $\lb x\in \labelsof G$ with $\futsq x{x'}$, such that $\lb y$ is
  a $\BOX$-suricata repetition of  $\lb{x}$. With this we define the following loop rule:
  \begin{equation}
    \label{eq:boxloop}
    \vlinf{\boxloopr}{
      \proviso{where $\tuple{\lb {x'},\lb y}$ is a $\BOX$-repetition pair in $\sq G$}}{
      \sq G}{
      \sq G 
      \cup \trans{\interval {x'}y}
    } 
  \end{equation}
\end{definition}
Observe that Definition~\ref{def:repetition} ensures that $\lb y$ is a repetition of $\lb x$ in Definition~\ref{def:boxloop}.

\begin{definition}[$\IMP$-repetition pair and $\imploopr$]\label{def:imploop}
  Let $\sq G$ be a nice sequent and $\lb{x'},\lb y\in\labelsof G$ with $\accsq{x'}y$ and $\lb{x'}\notin \lbc{C_y}$. We say that  $\tuple{\lb {x'},\lb y}$ is a \defn{$\IMP$-repetition pair} if there is $\lb x\in \labelsof G$ with $\futsq x{x'}$, such that $\lb y$ is a $\IMP$-suricata repetition of $\lb x$.
  With this we define the following loop rule:
  \begin{equation}
    \label{eq:imploop}
    \vlinf{\imploopr}{
      \proviso{where $\tuple{\lb {x'},\lb y}$ is a $\IMP$-repetition pair in $\sq G$}}{
      \sq G}{
      \sq G\cup\trans{\interval {x'}y}
    } 
  \end{equation}
\end{definition}
\begin{remark}
  The $\boxloopr$ and $\imploopr$-rules do not follow the same pattern as the other $\looprule$-rules because we cannot identify $\lb y$ and $\lb{x'}$, as $\lb {y}$ contains a $\rigmark$-formula and $\lb{x'}$ does not. This is similar to the cluster $\rig\IMP$-lifting in Construction~\ref{def:imp-lifting}.
\end{remark}

\begin{definition}[$\TRI$-repetition pair and $\triangleloopr$]\label{def:triangleloop}
  Let $\sq G$ be a nice sequent and let $\lb{x'},\lb y\in\labelsof G$ with $\accsq{x'}y$ and $\lb{x'}\notin \lbc{C_y}$. We say that  $\tuple{\lb {x'},\lb y}$ is a \defn{triangle($\TRI$)-repetition pair} if there is $\lb x\in \labelsof G$ with $\futsq x{x'}$ and $\futsq xy$  such that $\lb {x'}$ and $\lb y$ are both repetitions of $\lb x$ (Def.~\ref{def:repetition}), and the suricata $\lb{s_y}$ of $\layerof y$ is a suricata repetition of the suricata $\lb{s_x}$ of $\layerof x$, such that all three repetition have the same seed $\sq H\psbelow\sq G$ (Def.~\ref{def:repetition}), and such that we do \emph{not} have that $\lb{x'}\srel G\lb {s_y}\srel G\lb y$. 
  With this we define the following loop rule:
  \begin{equation}
    \label{eq:triangleloop}
    \vlinf{\triangleloopr}{
      \proviso{where $\tuple{\lb {x'},\lb y}$ is a triangle repetition pair in $\sq G$}}{
      \sq G}{
      \sq G \labelsubst{y}{x'}\cup\trans{\interval {x'}y}
    } 
  \end{equation}
\end{definition}

\begin{remark}\label{rem:suricata-in-loop}
  The condition that the suricata $\lb {s_y}$ must not occur between $\lb{x'}$ and $\lb y$, as stated in Definition~\ref{def:leafloop} for a leaf repetition pair and in Definition~\ref{def:triangleloop} for a triangle repetition pair, is also crucial for $\BOX$-repetition pairs and $\IMP$-repetition pairs. But we do not need to state it in Definition~\ref{def:boxloop} and~\ref{def:imploop} because it follows from the other conditions.
\end{remark}

\begin{definition}[Twins and $\shrinkr$]\label{def:shrink}
  Let $\sq G$ be a nice sequent, and let $\lb{x},\lb y\in\labelsof G$ with $\lb x\neq\lb y$.
  We say that $\lb y$ is an \defn{immediate child} of $\lb x$ if $\accsq xy$ and for all $\lb z$ with   $\accsq xz \srel G \lb y$, we have either $\lb z=\lb x$ or $\lb z =\lb y$.
  We say that $\lb x$ and $\lb y$ are \defn{siblings} if either 
  \begin{itemize}
  \item
    they are in the same cluster, or
  \item there is a $\lb z$ such that both $\lb x$ and $\lb y$ are immediate children of $\lb z$.  
  \end{itemize}
  Finally, we say that  $\lb x$ and $\lb y$  are \defn{twins} if they are siblings and additionally $\lb x\lbeq \lb y$. We also define the following  macro rule:
  \begin{equation}
    \label{eq:shrink}
    \vlinf{\shrinkr}{
      \proviso{where $\lb {x}$ and $\lb y$ are twins in $\sq G$}}{
      \sq G}{
      \sq G \labelsubst{y}{x}
    } 
  \end{equation}
\end{definition}

\begin{definition}[Loop-happy]\label{def:loop-happy}
  A sequent $\sq G$ is \defn{loop-happy} if there are no repetition pairs (i.e., no $\LEAF$- (Def.~\ref{def:leafloop}), no $\BOX$- (Def.~\ref{def:boxloop}), no $\IMP$- (Def.\ref{def:imploop}), and no $\TRI$- (Def.~\ref{def:triangleloop}) repetition pairs) and no twins (Def.~\ref{def:shrink}) in $\sq G$.
\end{definition}

We have now all the ingredients to define the \emph{full lifting} whose purpose is to prevent the layers from growing infinitely by making the sequent loop-happy after lifting.

\begin{algorithm}[Full Lifting]\label{alg:liftsat}
  For a non-axiomatic almost happy (Def.~\ref{def:almost-happy}) sequent $\sq G$ we construct its \defn{full lifting} $\liftssatof G$ by constructing a sequence $\sq H_0,\sq H_1,\sq H_2,\ldots$ as follows:
  \begin{enumerate}[(1)]
  \item\label{i:lift} Let $\sq H_0= \sq G \mathop{\cup} \mathord{\liftof G}$ (Def.~\ref{def:lift})
  \item\label{i:loophappy}
    If $\sq H_i$ is loop-happy (Def.~\ref{def:loop-happy}), then let $\liftssatof G=\sq H_i$ and terminate.
  \item\label{l:imploop} Otherwise, if there is a $\BOX$- or $\IMP$-repetition pair $\tuple{\lb{x'},\lb y}$ in $\sq H_i$,\\ then let $\sq H_{i+1}=\sq H_i\cup\trans{\interval {x'}y}$ and go to \ref{i:loophappy}
  \item\label{l:substloop} Otherwise, if there is a leaf- or triangle-repetition pair $\tuple{\lb{x'},\lb y}$ in $\sq H_i$,\\ then let $\sq H_{i+1}=\sq H_i \labelsubst{y}{x'}\cup\trans{\interval {x'}y}$ and go to \ref{i:loophappy}
  \item\label{l:twin} Otherwise, if there is a pair of twins $\lb x$ and $\lb y$ in $\sq H_i$,\\ then let $\sq H_{i+1}=\sq H_i \labelsubst{y}{x}$ and go to \ref{i:loophappy}
  \end{enumerate}
\end{algorithm}

\begin{observation}
  If $\liftof G=\emptyset$ then no lifting is possible (either because there are no unhappy $\labelsw xA$ formulas left, or because all layers with such unhappy formulas are simulated). In that case we have $\liftssatof G=\sq G$.
\end{observation}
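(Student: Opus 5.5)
The plan is to run Algorithm~\ref{alg:liftsat} on $\sq G$ and check that it stops at the first test. By Definition~\ref{def:lift} and~\eqref{eq:lift}, $\liftof G$ is a union indexed by $\unhappyof G$. So $\liftof G=\emptyset$ holds exactly when $\unhappyof G=\emptyset$, i.e.\ when every unhappy formula of the almost happy, non-axiomatic $\sq G$ sits in a simulated layer. There are two cases. Either there are no unhappy formulas at all, because after layer saturation the only possibly unhappy ones are $\rig\IMP$- and $\rig\BOX$-formulas (Definition~\ref{def:almost-happy}) and $\sq G$ is non-axiomatic. Or all layers carrying such formulas are simulated in the sense of Definition~\ref{def:simul-layers}. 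In either case step~\ref{i:lift} gives $\sq H_0=\sq G\cup\emptyset=\sq G$ literally, as sets of relational atoms and labelled formulas.

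It then remains to see that step~\ref{i:loophappy} fires immediately, i.e.\ that $\sq H_0$ is loop-happy (Definition~\ref{def:loop-happy}). For the four kinds of repetition pairs, I will use that each of Definitions~\ref{def:leafloop}--\ref{def:triangleloop} requires some $\lb y$ to be a repetition (Definition~\ref{def:repetition}). That notion is only defined for a sequent that is the result of a lifting step, with $\lb y$ a label that has no future. When the lift is empty, no new lifting step produces $\sq H_0$, so no fresh label in a new topmost layer can play the role of $\lb y$. Every candidate pair from older layers was already consumed by the full lifting that produced the ancestor of $\sq G$; saturation afterwards only works inside topmost layers and does not create new suricata labels. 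Hence no $\LEAF$-, $\BOX$-, $\IMP$- or $\TRI$-repetition pair is available.

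The main obstacle is the twin condition. Saturation (Algorithm~\ref{alg:sat}) may create two immediate children of a label via $\bdiam$ that end up equivalent, and these would be twins in the sense of Definition~\ref{def:shrink}. My first attempt would be to show an invariant of Algorithm~\ref{alg:sat}: by the $\lford$-ordering and the $\dialoopr$ check, two fresh $\lef\DIA$-children of the same parent with identical formula sets cannot both survive. If this invariant does not hold in general, the fallback is to read the observation as the algorithmic convention it expresses. Steps~\ref{l:imploop}--\ref{l:twin} only serve to control the growth of newly lifted layers. So when $\liftof G=\emptyset$ the procedure is not run beyond step~\ref{i:lift}, and $\liftssatof G$ is set to $\sq H_0=\sq G$, consistent with the fact that nothing new has been added that could require a loop or a shrink.
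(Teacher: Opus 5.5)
The paper gives no proof of this observation. All it rests on is your first paragraph: $\liftof G$ is a union of nonempty liftings indexed by $\unhappyof G$, so $\liftof G=\emptyset$ iff $\unhappyof G=\emptyset$, and then $\sq H_0=\sq G\cup\emptyset=\sq G$.

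The gap is in what follows. Read literally, Algorithm~\ref{alg:liftsat} returns $\sq H_0$ only if $\sq H_0$ is loop-happy, and you do not establish this.

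Your argument against repetition pairs works only in its vacuous form. If $\sq H_0=\sq G\cup\emptyset$ does not count as ``the result of a lifting step'', Definition~\ref{def:repetition} never applies and there is nothing to check. Your supporting claim is unjustified: you say older candidate pairs were ``already consumed'' and that saturation cannot create new ones. But saturation adds leaves without future, adds formulas at existing labels (suricatas included), and forms clusters. So nothing you say excludes new $\lbeq$-coincidences or new embeddings into $\treeof{s_y}$. This is why Section~\ref{sec:agi-ex} insists that loop conditions are checked right after lifting, before saturation.

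For twins, the step you flag does not merely lack a proof; it fails. An almost happy, non-axiomatic $\sq G$ with $\liftof G=\emptyset$ can contain twins. Run Algorithm~\ref{alg:proofsearch} on $\fm F=(\DIA a\AND\DIA b)\IMP(\BOX(a\AND b)\IMP\BOT)$:
\begin{enumerate}
\item After one lifting and saturation, the label $\lb 1$ carrying $\labelsw 1{(\BOX(a\AND b)\IMP\BOT)}$ has two $\bdiam$-children $\lb 2,\lb 3$, with $\labelsb 2a$ and $\labelsb 3b$. These are not equivalent.
\item The next $\liftrim$-lifting copies them to immediate children $\lb{\hat 2},\lb{\hat 3}$ of the suricata $\lb{\hat 1}$, which carries $\labelsb{\hat 1}{\BOX(a\AND b)}$ and $\labelsw{\hat 1}{\BOT}$.
\item Saturation puts $\fm{a\AND b}$, $\fm a$ and $\fm b$ at both copies, so $\lb{\hat 2}\lbeq\lb{\hat 3}$ and they are twins. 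At the same time every formula of the resulting sequent $\sq G$ is happy.
\end{enumerate}
Hence $\liftof G=\emptyset$, but step~\ref{l:twin} of Algorithm~\ref{alg:liftsat} fires, and literally $\liftssatof G\neq\sq G$. These twins are lifted copies made equivalent by $\lef\BOX$-propagation. So no invariant about fresh $\lef\DIA$-children, which is your first attempt, could exclude them.

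The observation is therefore true only under a convention: when nothing is lifted, Algorithm~\ref{alg:liftsat} stops after step~\ref{i:lift} (equivalently, loop and twin checks concern only freshly lifted layers). That convention is your fallback, and it is what the paper implicitly relies on in Step~\ref{alg:block} of Algorithm~\ref{alg:proofsearch} and in Construction~\ref{con:Gstar}. You should present it as the justification from the outset, not as a retreat from an invariant that does not hold.
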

\begin{lemma}\label{lem:lifting}
  Given a nice, non-axiomatic, almost happy sequent $\sq G$, its full lifting $\liftssatof G$ exists and there is a derivation with premise  $\liftssatof G$ and conclusion $\sq G$, using only the rules $\liftrim$, $\liftrbox$, $\leafloopr$, $\boxloopr$, $\imploopr$, $\triangleloopr$, $\shrinkr$. Furthermore, $\liftssatof G$ contains only subformulas of $\sq G$.
\end{lemma}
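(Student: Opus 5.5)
The plan is to follow the pattern of Lemma~\ref{lem:saturation}: read off each step of Algorithm~\ref{alg:liftsat} as an instance of one of the listed rules, read bottom-up. The main work is to show that the loop of steps (2)--(5) terminates. Subformula preservation is then immediate.

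\textbf{Step~(1) and the subformula property.} Step~(1) forms $\sq H_0=\sq G\cup\liftof G$. This is a finite sequence of $\liftrbox$- and $\liftrim$-applications, one for each unhappy $\labels xA\in\unhappyof G$. Since $\sq G$ is almost happy and non-axiomatic, these are exactly the formulas of shape $\labelsw x{\BOX A}$ and $\labelsw x{A\IMP B}$, as noted in the remark after Definition~\ref{def:lift}. The liftings for different formulas use disjoint sets of fresh labels and only read $\sq G$, so they can be applied one after another. Each new layer is attached on top of a layer of $\sq G$. By Lemma~\ref{lem:liftr-nice} each intermediate sequent stays nice, so the loop rules are defined on them.

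Inspecting Constructions~\ref{def:box-lifting}, \ref{def:imp-lifting-singleton} and~\ref{def:imp-lifting}, every labelled formula in a lifting is either a copy of some $\labelsb{x_i}C\in\sq G$ or one of $\labelsw\hy A$, $\labelsb{\hx_0}A$, $\labelsw{\hx_0}B$. These are subformulas of the lifted formula. The later steps (3)--(5) only add relational atoms or rename labels, so they introduce no new formulas. Hence $\liftssatof G$ contains only subformulas of $\sq G$.

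\textbf{Steps~(3)--(5) as rule instances.} Step~(3) is literally the premise of $\boxloopr$ or $\imploopr$. Step~(4) is the premise of $\leafloopr$ or $\triangleloopr$. Step~(5) is the premise of $\shrinkr$. The side conditions hold by the case tests in the algorithm. These rules require a nice sequent, so I must show that each step preserves niceness. I would argue this as in Lemma~\ref{lem:dialoop-nice}:
\begin{itemize}
\item All changes stay inside a single topmost layer.
\item Adding $\trans{\interval{x'}y}$ merely collapses an $\rel$-chain into a cluster, which keeps the layer a tree of clusters.
\item Substituting equivalent labels ($\lbeq$) preserves $(\monl)$.
\end{itemize}
Stacking these instances bottom-up from $\sq G$ gives the required derivation with premise $\liftssatof G$.

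\textbf{Termination (main obstacle).} Existence of $\liftssatof G$ means the loop in Algorithm~\ref{alg:liftsat} stops. No step adds labels: step~(3) only adds atoms between existing labels, and steps (4) and (5) strictly decrease the number of labels. So it suffices to bound the number of consecutive step~(3) iterations.

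Each step~(3) adds at least one new $\rel$-atom. Indeed, $\lb{x'}\notin\lbc{C_y}$ and $\accsq{x'}y$ give $\lnot\,\accsq y{x'}$, but the atom $\accs y{x'}$ lies in $\trans{\interval{x'}y}$ whenever $\lb{x'}\in\interval{x'}y$. This needs checking: $\lb{x'}\neq\lb y$, and if $\lb y\notin\interval{x'}y$ one argues with another element of the interval. The set of $\rel$-atoms over a fixed finite label set is finite. Therefore the lexicographic measure (number of labels, number of missing $\rel$-atoms) strictly decreases at each iteration, and the loop terminates.

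The delicate point is exactly that step~(3) genuinely adds a new atom rather than re-adding existing ones. If this fails, I would instead argue that after $\boxloopr$/$\imploopr$ the pair $\tuple{\lb{x'},\lb y}$ lies in one cluster, so it is no longer a repetition pair. Since pairs are drawn from a finite set and clusters only merge, step~(3) can then fire only finitely often.
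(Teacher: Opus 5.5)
Your handling of Step~(1), of the subformula property, and of Steps~(4)--(5) is fine, since each of Steps~(4)--(5) removes a label. You are also right that Step~(3) adds atoms rather than shrinking the sequent.

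\textbf{The gap.} It lies in your claim that every Step~(3) adds a new $\rel$-atom. The atom $\accs y{x'}$ is \emph{never} in $\trans{\interval{x'}y}$, because $\lb y\notin\interval{x'}y$ by definition; $\trans{\interval{x'}y}$ only relates labels strictly below $\lb y$. Suppose no label lies strictly between the cluster of $\lb{x'}$ and $\lb y$, for instance when $\lb{x'}$ is the parent of the suricata $\lb y$. Then $\interval{x'}y=\lbc{C_{x'}}$, all of $\trans{\interval{x'}y}$ is already present, and Step~(3) yields $\sq H_{i+1}=\sq H_i$.

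This situation really arises. Suppose $\lb y$ is produced by $\liftrbox$ at a label $\lb{x_0}$ that is a future of an earlier $\BOX$-suricata $\lb x$ with the same output formula. Then $\tuple{\lb{\hx_0},\lb y}$ is a $\BOX$-repetition pair of exactly this kind, and this already happens in the run on~\eqref{eq:back-example}.

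Your fallback fails for the same reason. Step~(3) never adds an atom involving $\lb y$, so $\lb{x'}\notin\lbc{C_y}$ persists and the pair remains a $\BOX$-repetition pair. Hence $\sq H_i$ never becomes loop-happy, and Step~(3) re-fires on the same pair forever. Neither component of your measure moves.

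\textbf{The paper does not close this either.} Its proof simply asserts that Steps~(3)--(5) decrease the size of the sequent, which is false for Step~(3). What is missing is an explicit convention: a $\BOX$- or $\IMP$-repetition pair counts, both in the loop-happiness test of Step~(2) and in Step~(3), only if $\trans{\interval{x'}y}\not\subseteq\sq H_i$. This convention is implicit in how Step~(3) is used in Lemma~\ref{lem:sur-depth}, where only relative depth $1$ is needed. Under it, each Step~(3) adds at least one atom between existing labels. Your lexicographic measure (number of labels, number of missing $\rel$-atoms) then gives termination immediately.
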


\begin{proof}
  The number of unhappy formulas in $\sq G$ is bound by the size of $\sq G$. Hence $\sq H_0$ is well-defined. Then, each of the Steps~\ref{l:imploop}--\ref{l:twin} dercreases the size of the sequent, and hence this process terminates. Finally, Step~\ref{i:lift} is just an application of several instances of $\liftrim$ and $\liftrbox$ (one for each unhappy $\labelsw x{\BOX A}$ and  $\labelsw x{A\IMP B}$; Step~\ref{l:imploop} is a instance of $\imploopr$; Step~\ref{l:substloop} is an instance of $\leafloopr$, $\boxloopr$, or $\triangleloopr$; and Step~\ref{l:twin} is an instance of $\shrinkr$.
\end{proof}

\begin{lemma}\label{lem:lifting-nice}
  If $\sq G$ is nice, then so is its full lifting $\liftssatof G$.
\end{lemma}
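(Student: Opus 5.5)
We argue by induction along the sequence $\sq H_0,\sq H_1,\ldots$ constructed in Algorithm~\ref{alg:liftsat}, showing that each $\sq H_i$ is nice; since $\liftssatof G$ is the last element of this sequence, the claim follows. The statement should be read for $\sq G$ as in Algorithm~\ref{alg:liftsat}, i.e.\ non-axiomatic and almost happy.

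For the base case $\sq H_0=\sq G\cup\liftof G$, we apply Lemma~\ref{lem:liftr-nice} once for each unhappy formula in $\unhappyof G$. Each such lifting concerns a layer of $\sq G$ and adds a fresh layer $\hL$ whose labels are disjoint from those of all other liftings. Consequently, the individual liftings commute and can be performed one after another, and niceness is preserved at each step. One point to check is that the construction of a later lifting only reads $\sq G$ and is unaffected by the layers added earlier. This holds because the $\le$-atoms $\futs w{\hx_i}$ are added only for $\lb w\in\labelsof G$, and the new layers have no $\rel$-connections to each other.

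For the inductive step we distinguish the three kinds of steps. \emph{Steps (3), $\boxloopr$/$\imploopr$:} here only $\trans{\interval{x'}y}$ is added, with $\accsq{x'}y$ and all labels concerned lying in the single topmost layer $\layerof y$. Topmost-ness holds because $\lb y$ has no future, by the definition of repetition, together with Proposition~\ref{prop:no_future_layer}. The layer partition and the $\le$-tree on layers are therefore unchanged, which gives (i) and (iii). For (ii), $(\Rref)$ and $(\Rtr)$ are immediate, since we close a $\rel$-path into a cluster and the set $\interval{x'}y$ is convex along that path. $(\fone)$, $(\ftwo)$ and $(\monl)$ involve only $\le$-edges, and these cannot start from the topmost layer. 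Finally, (iv) holds because collapsing a path segment in a tree of clusters into one cluster again yields a tree. \emph{Step (4), $\leafloopr$/$\triangleloopr$:} we repeat the argument of Lemma~\ref{lem:dialoop-nice}. Substituting $\lb y$ by $\lb{x'}$ is harmless because both lie in the same topmost layer. For $(\monl)$, every label $\lb w$ with $\futsq w y$ must also satisfy $\futsq w{x'}$ with at least the same input formulas. This must be extracted from $\lbG xH\lbeq\lbG yG$ and from the requirement that $\lb x'$ is a repetition of $\lb x$ (respectively $\futsq x{x'}$) sharing the same seed.

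\emph{Step (5), $\shrinkr$:} twins are siblings, so identifying them merges two children of a common node, or two members of one cluster, into one. This keeps the cluster tree a tree and preserves transitivity. Because $\lb x\lbeq\lb y$, $(\monl)$ is preserved, and $(\fone)$, $(\ftwo)$ remain true since substitution maps witnesses to witnesses.

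The main obstacle will be layeredness (i) and the $(\monl)$/$(\ftwo)$ conditions in Step (4). After substitution, $\le$-edges that previously pointed to $\lb y$ from lower layers now point to $\lb{x'}$. We must show that this creates neither an $\le$-edge inside a layer nor a violation of condition~\ref{def:lay-sq-it2} of Definition~\ref{def:lay-sq}. Our first approach will be to show that, since $\lb y$ has no future and the same holds for $\lb{x'}$, which lies in the same topmost layer, every $\le$-predecessor of $\lb y$ lies in a strictly lower layer. Moreover, the seed conditions and the clause excluding $\lb{x'}\srel G\lb{s_y}\srel G\lb y$ ensure that these predecessors are related to $\lb{x'}$ compatibly with $(\ftwo)$. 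We then verify the remaining conditions by direct case analysis.
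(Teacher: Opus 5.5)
Your plan is the paper's: it proves this lemma by citing Lemma~\ref{lem:liftr-nice} for the liftings and asserting that $\leafloopr$, $\boxloopr$, $\imploopr$, $\triangleloopr$ and $\shrinkr$ preserve niceness. Two of your detailed checks, however, do not go through as written.

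The first is $(\Rtr)$ after adding $\trans{\interval{x'}y}$, in Step~(3) and equally in Step~(4). Convexity of the interval along the path handles only $\rel$-edges \emph{into} the new cluster: a predecessor of a member either lies in the interval or sees $\lb{x'}$, hence every member. It says nothing about edges \emph{out of} the cluster. Suppose a member $\lb u$ other than the parent of $\lb y$ has a child $\lb c$ off the path; for instance $\lb{x'}=\lb u=\lb{\hx_2}$, $\lb v=\lb{\hx_0}$, $\lb c=\lb{\hx_3}$, $\lb y=\lb{\hy}$ in the shape of Figure~\ref{fig:lifting}, left. Then the member $\lb v$ below $\lb u$ receives the new atom $\accs vu$. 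Together with $\accs uc$ this demands $\accs vc$, which lies neither in $\sq H_i$ nor in $\trans{\interval{x'}y}$. So with $\trans{\cdot}$ read literally, the premise is not structurally happy. You have to add the $\Rtr$-closure, i.e.\ $\accs vc$ for every member $\lb v$ whenever some member sees $\lb c$. These are exactly the atoms the $n$-unfolding adds for the copies $\lb{v_i}$, and the paper's Lemma~\ref{lem:dialoop-nice} glosses over the same point. With this closure, (iv) is unaffected, since $\rel$ on clusters is defined existentially.

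The second is Step~(4): you leave it as a plan, and the plan aims at the wrong target. Niceness needs none of the seed, $\lbeq$ or suricata conditions; those matter only for soundness (Section~\ref{sec:soundness}). Your own observation that $\lb{x'}$ and $\lb y$ lie in the same topmost layer $L$ already settles it, because $\labelsubst{y}{x'}$ renames $\lb y$ in \emph{every} atom and labelled formula.
\begin{itemize}
\item The layers and their order are unchanged, and an edge $\futs wy$ (necessarily with $\lb w\notin L$) simply becomes $\futs w{x'}$. This gives (i) and (iii).
\item The input formulas of $\lb w$ that sat at $\lb y$ now sit at $\lb{x'}$, which gives $(\monl)$.
\item Every instance of $(\fone)$, $(\ftwo)$ or $(\Ltr)$ is either the image of an old instance, witnessed by the image of the old witness, or has its $\le$-premise starting in $L$. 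In the latter case that premise is reflexive and the instance is witnessed trivially.
\end{itemize}

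Your $\shrinkr$ argument (``substitution maps witnesses to witnesses'') silently relies on the same fact, namely that the identified labels have no proper future. Identification also creates instances whose two premises come from different labels: for example $\accs xb$ with $\lb b$ a child of $\lb x$, together with $\futs xc$ obtained from $\futs yc$. For twins in an inner layer that are immediate children of a common label, $(\fone)$ can then fail, because the future $\lb c$ of $\lb y$ need not see any future of $\lb b$. You therefore need either to restrict Step~(5) to twins without proper futures, or to show that only such twins are ever shrunk.
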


\begin{proof}
  We already observed in Lemma~\ref{lem:liftr-nice} that  $\liftrim$ and $\liftrbox$ preserve niceness. And it is easy to see that also the rules $\leafloopr$, $\boxloopr$, $\imploopr$, $\triangleloopr$, and $\shrinkr$ do so.
\end{proof}

\begin{remark}
  Given a nice sequent $\sq G$, it is easy to see that $\liftssatof G$ contains only subformulas of $\sq G$. Furthermore, observe that each of the newly created layers has exactly one output formula $\labelsw xA$, and the label $\lb x$ with that output formula is the suricata label of that layer.
\end{remark}

\subsection{The Full Algorithm}
\label{sec:algorithm}

We are now going to combine the two processes of layer saturation and full lifting. We construct a sequence $\sqset S_0,\sqset S_1,\sqset S_2,\ldots$ of sets of nice and almost happy sequents (Definition~\ref{def:nice}), by alternating between layer saturation and layer lifting (see Figure~\ref{fig:searchtree}). One can think of such a set $\sqset S_i$ as the set of leaves of the current (incomplete) proof search tree after the saturation steps.
More precisely, for a given sequent $\sq G$, we compute its layer saturation $\ssatof G$, and then compute the full lifting $\liftssatof H$ of a sequent $\sq H\in\ssatof G$. For the sequent  $\liftssatof H$ we compute again the layer saturation, and so on, until we either have only axiomatic sequents, or we find a sequent from which we cannot make any progress anymore.


\begin{algorithm}[Proof search]\label{alg:proofsearch}
  For a given formula $\fm F$, we perform \defn{proof search} by constructing a sequence $\sqset S_0,\sqset S_1,\sqset S_2,\ldots$ of sets of nice sequents.
  \begin{enumerate}[(1)]
  \item For the formula $\fm F$, we define the sequent $\sq G(\fm F)=\futs rr, \accs rr,\labelsw rF$, and let  $\sqset S_0 \colonequals\satof{\sq G(\fm F)}$. 
  \item\label{l:ax} If all sequents in $\sqset S_i$ are axiomatic, then  terminate.\\
    The formula $\fm F$ is provable and we can give a proof of $\labelsw rF$ in $\labISfs$ (see Sect.~\ref{sec:soundness}).
  \item\label{la:lift} Otherwise, pick a non-axiomatic sequent $\sq G_i \in \sqset S_i$ and compute its full lifting $\liftssatof G_i$ (Algo.~\ref{alg:liftsat}). 
  \item \label{alg:block} If $\liftssatof G_i=\sq G_i$, then terminate.\\ 
  The formula $\fm F$ is not provable, and the sequent $\sq G_i$ defines a countermodel (see Sect.~\ref{sec:countermodel}).
  \item Otherwise, we replace $\sq G_i$ in the set by the saturation (Algo.~\ref{alg:sat}) of its full lifting (Algo.~\ref{alg:liftsat}), that is, set $\sqset S_{i+1}\colonequals\bigl(\sqset S_i\setminus\set{\sq G_i}\bigr)\cup\satof{\liftssatof G_i}$, and go to~\ref{l:ax}.
  \end{enumerate}
\end{algorithm}

\begin{observation}
  It is easy to see that indeed at each stage, all elements of a set $\sqset S_i$ are nice and almost happy. Niceness follows from Lemma~\ref{lem:lifting-nice} and Lemma~\ref{lem:ssat-nice}, and that the sequents are almost happy follows from the definition of $\satof{\cdot}$. It follows that Step~\ref{la:lift} is well-defined.
\end{observation}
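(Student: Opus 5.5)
The proof goes by induction on $i$, showing simultaneously that every $\sq H\in\sqset S_i$ is nice and almost happy. Well-definedness of Step~\ref{la:lift} then follows: a non-axiomatic $\sq G_i\in\sqset S_i$ is nice, non-axiomatic and almost happy, which is exactly the hypothesis of Algorithm~\ref{alg:liftsat} and Lemma~\ref{lem:lifting}.

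\emph{Base case.} The sequent $\sq G(\fm F)=\futs rr,\accs rr,\labelsw rF$ has a single label and a single layer. It is therefore trivially layered and structurally happy, its layers form a one-element tree, and its single layer is a one-element tree of clusters; hence it is nice. Its only layer is topmost, so the hypothesis of Algorithm~\ref{alg:sat} (all non-topmost layers almost happy) holds vacuously. Lemma~\ref{lem:ssat-nice} then gives niceness of every element of $\sqset S_0=\satof{\sq G(\fm F)}$, and Lemma~\ref{lem:saturation} gives almost happiness.

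\emph{Inductive step.} Suppose every element of $\sqset S_i$ is nice and almost happy, and let $\sq G_i$ be the picked non-axiomatic sequent. By Lemma~\ref{lem:lifting}, $\liftssatof{G_i}$ exists, and by Lemma~\ref{lem:lifting-nice} it is nice. Before applying Lemmas~\ref{lem:ssat-nice} and~\ref{lem:saturation} to $\liftssatof{G_i}$, we must check that all its non-topmost layers are almost happy. This is the main point of the argument. I would argue that every non-topmost layer of $\liftssatof{G_i}$ is an old layer of $\sq G_i$, and that its labels keep exactly their formulas and their $\rel$-neighbourhood. The checks are as follows.
\begin{itemize}
\item The $\liftrim$/$\liftrbox$ steps add only fresh labels forming new layers that sit $\le$-above old ones. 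The only atoms touching old labels are $\le$-atoms $\futs w{\hx}$ into the new layer. These are irrelevant to almost happiness, since the conditions for $\lef\BOX$, $\lef\DIA$ and $\rig\DIA$ refer only to $\srel G$, and the $\rig\IMP$/$\rig\BOX$ formulas are exempt anyway.
\item The loop steps $\boxloopr$, $\imploopr$, $\leafloopr$ and $\triangleloopr$ act on $\interval{x'}y$ with $\lb y$ having no future. Hence they act only inside a topmost layer $\layerof y$, which, because the repetitions refer to suricatas of freshly lifted layers, is one of the new layers.
\end{itemize}
Consequently every non-topmost layer is an old layer, and old layers were almost happy in $\sq G_i$. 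Saturation then applies, and $\sqset S_{i+1}$ again consists of nice, almost happy sequents; the untouched members of $\sqset S_i$ are covered by the induction hypothesis.

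The delicate case is $\shrinkr$, because twins could a priori lie in an old, saturated layer. I would handle it by showing that the substitution $\labelsubst yx$ preserves almost happiness whenever $\lb x\lbeq\lb y$ are siblings. For a $\lef\BOX$ formula at a common parent $\lb z$, the formulas at the merged label are the shared ones, so the requirement is still met. For a $\lef\DIA$ witness at $\lb y$, the same formula occurs at $\lb x$, so $\lb x$ serves as the witness. For a $\rig\DIA$ formula at $\lb z$, every $\rel$-successor of $\lb z$ after substitution was already a successor of $\lb z$ before, so the required $\rig{}$-formulas are still present. The $\rel$-successors of the merged label form the union of the successors of $\lb x$ and of $\lb y$. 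Since $\lb x$ and $\lb y$ carry the same formulas, the $\lef\BOX$ and $\rig\DIA$ requirements at the merged label are met on both parts of the union, because each was met at $\lb x$ and at $\lb y$ separately. If this case analysis becomes cumbersome, a fallback is to note that Algorithm~\ref{alg:sat} begins with semi-saturation. One would then only need the weaker fact that non-topmost $\lef\DIA$ formulas stay happy, which holds because a witness is only ever renamed and never removed.
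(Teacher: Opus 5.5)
Your proof is correct and follows the paper's route: induction over the stages, using Lemma~\ref{lem:lifting-nice} and Lemma~\ref{lem:ssat-nice} for niceness and Lemma~\ref{lem:saturation} for almost happiness, plus a check of the precondition of Algorithm~\ref{alg:sat} (all non-topmost layers almost happy), including the $\shrinkr$ case, which the paper leaves implicit. Two small remarks. First, that check only needs the loop rules to act inside a topmost layer; it does not need that layer to be newly created, which the definition of repetition does not guarantee. Second, your fallback would not work, since semi-saturating labels that have a future is outside the scope of Lemma~\ref{lem:satr-nice} and can break $(\monl)$; your direct $\shrinkr$ argument makes it unnecessary anyway.
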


\begin{figure}[t]
	\includegraphics[scale=0.8]{figures/agi-ex.tex}
	\caption{Proof search on \kurucz, where we use the following abbreviations:\\
		 $\lef{\fm \Delta} = \{ \lef{\fm{ \DIA b \AND \DIA c} },\, \lef{\fm {\DIA b}},\, \lef{\fm{\DIA c}}\}$;\\
		$\lef{\fm \Sigma} = \big\{ \lef{\fm{\BOX \big( a \IMP (\DIA b \AND \DIA c) \big)}},\, \lef{\fm{\BOX \big((d \IMP \bot) \IMP \bot \big)}},\, \lef{\fm{\BOX \big((e \IMP \bot) \IMP \bot \big)}}\big\}$; \\
		${\fm \Pi} = \{\lef{\fm{a \IMP (\DIA b \AND \DIA c)}},\, \lef{\fm{(d \IMP \bot) \IMP \bot}} ,\, \lef{\fm{(e \IMP \bot) \IMP \bot}},\, \rig{\fm{d \IMP \bot}},\, \rig{\fm{e \IMP \bot}}\}$.\\
		Moreover, we take the formulas in~$\fm \Pi $ to be in all the nodes (i.e.,~labelled with all the labels) greater than or equal to~$\lb 1$, as explicitly shown in node $\lb 1$ but omitted  in nodes $\lb 2$--$\lb{25}$. Boldface nodes are suricata nodes, and the nodes in the grey area are added in a layer saturation step. 
	}
	\label{fig:agi-ex}
\end{figure}

\section{Kurucz's counterexample}
\label{sec:agi-ex}

In this section we report a counterexample to our previous decision algorithm for $\ISfour$, published in~\cite{girlando2023intuitionistic}. The counterexample, found by Agi Kurucz, consists of a formula on which the algorithm from~\cite{girlando2023intuitionistic} would run forever. 
We call this formula \emph{\kurucz}. 
We discuss the counterexample here to acknowledge the fault in our decision algorithm from~\cite{girlando2023intuitionistic}, and also to showcase how the new version of the algorithm presented above (Algorithm~\ref{alg:proofsearch}) instead blocks proof search. 

\kurucz is displayed below. We take $\fm G$~to denote its antecedent:
\[
\fm{F} = \overbrace{\fm{ \Big( \BOX \big( a \IMP (\DIA b \AND \DIA c) \big) \AND \BOX \big((d \IMP \bot) \IMP \bot \big)  \AND \BOX \big((e \IMP \bot) \IMP \bot \big)\Big)}}^{\fm {G}} \fm{ \IMP \bot} 
\]

Figure~\ref{fig:agi-ex} depicts the evolution of a sequent, which grows as we move (bottom-up) along a branch of a proof search tree having the (nice and almost happy) sequent $\futs{0}{0},\accs 00,\labelsw{0}{F}$ at the root. 
Proof search follows Algorithm~\ref{alg:proofsearch}, which shares the same basic steps with our faulty algorithm from~\cite{girlando2023intuitionistic}. 
The first rule applied to the sequent is $\liftrim$, which generates a new label~$\lb{1}$, and the sequent $\futs{0}{0},\accs 00, \futs{0}{1}, \futs 11,\accs 11,\labelsw{0}{F},\labelsb{1}{G}, \labelsw{1}{\bot}$. Then, more formulas get labelled with $\lb 1$ within a $\satr$ step, and so on. 

We use colors to identify labels as ``the same'', meaning that they label the same formulas, i.e., $\lb x$ and $\lb y$ share the same color if{f} $\eqlab{x}{y}$. 
At a first glance, our sequent contains only finitely many colors, which are nevertheless combined in a peculiar way, making it difficult to find exact repetitions between layers, which is what would make proof search terminate at step \ref{alg:block} of Algorithm~\ref{alg:proofsearch}. Indeed, we will need to use our loop rules to `collapse' layers and find exact repetitions. 

The graph represent only \emph{a part} of the sequent as it evolves along a branch --- namely a single $\leq$-branch. Observe that formulas~$\rig{\fm{d \IMP \bot}}$~and~$\rig{\fm{e \IMP \bot}}$ occur in all the nodes greater than or equal to~$\lb 1$. However, in the graph we do not show the result of applying~$\liftrim$ to each of these formulas, but only to some of them (namely, those occurring in nodes~$\lb 0$, $\lb 1$, $\lb 2$, $\lb 7$,~and~$\lb {10}$). 
The respective suricata labels are indicated in boldface in the figure, and they are $\lb 1$, $\lb 2$, $\lb {5}$, $\lb{ 13}$, and $\lb {17}$. 

The part of the sequent showcased in Figure~\ref{fig:agi-ex} is already enough to demonstrate non-termination of the algorithm from~\cite{girlando2023intuitionistic}. The non-termination is due to the interaction of rules $\liftrim$ and $\bdiam$, which can make a sequent ``grow'' (bottom-up) in two different ways. Every application of $\liftrim$ makes a sequent grow ``vertically'', and whenever this happens, we lose all the formulas~$\fmw a$, and in the newly introduced nodes we instead choose~$\lef{\fm \Delta}$, whence formulas~$\fmb{\DIA b}$~and~$\fmb{\DIA c}$ ``reappear'' in the sequent. These formulas make the sequent grow ``horizontally'', as rule~$\bdiam$ can further be applied to them. Our algorithm from~\cite{girlando2023intuitionistic} does not have a loop rule to block such non-terminating behaviour. 

Instead, the loop rules introduced in Section~\ref{sec:loops} restrict the growth of layers. 
In particular, one instance of $\leafloopr$ fires at the topmost layer displayed in Figure~\ref{fig:agi-ex}. 
First observe that to determine whether a $\leafloopr$ is applicable, we consider a sequent at specific stages of its construction --- namely, sequent $\sq H$ in Figure~\ref{fig:agi-ex-rep} (left) right after a lifting step $\liftrim$ has been applied and sequent $\sq G$ in Figure~\ref{fig:agi-ex-rep} (right) right after another lifting step $\liftrim$ has been applied, and in both cases before the layer saturation step begins. All the leaves of the layers, as displayed in Figure~\ref{fig:agi-ex}, are created by applications of $\bdiam$ in the layer saturation step. These are the nodes in the grey area in the figure, 
and, hence, they are absent in the topmost layers of sequents $\sq H$ and $\sq G$ used to verify the applicability of $\leafloopr$.  
With this in mind, note that $\tuple{\lb{18}, \lb{20}}$ is a leaf-repetition pair in $\sq G$ (so $\lb {x'} = \lb{18}$ and $\lb y = \lb{20}$, using the notation from Def.~\ref{def:leafloop}). If we take $ \lb x = \lb 6 $, we have that $\futsq{6}{18}$, and that $\lb{20}$ is a leaf repetition of $\lb{6}$, since $\lb{6}$ is a leaf of its layer in $\sq H$, and $\lb{20}$ is a leaf of its layer in $\sq G$,  and $\eqlab{6}{20}$. Moreover, the suricata $\lb{17}$ of $\layerof{20}$ is a suricata repetition of the suricata $\lb{5}$ of $\layerof{6}$: we define an embedding $\emb$ from $\treeof{5}$ in $\sq H$ to $\treeof{17}$ in $\sq G$ by setting $\embof{\lb 5} \colonequals \lb{17}$, $\embof{\lb 6}\colonequals \lb{20}$, and $\embof{\lb{7}} \colonequals \lb{23}$. Finally,   the suricata $\lb{17}$ of $\layerof{20}$ is in the correct position, as it does \emph{not} hold that $\lb{18} \srel{\sq G} \lb{17} \srel{\sq G} \lb{20}$. 

Thus, rule $\leafloopr$ can be applied to $\sq G$ in  Figure~\ref{fig:agi-ex-rep} (right). In the premise of the rule, label~$\lb{20}$ gets identified with label $\lb{18}$, thus creating a fresh cluster  $\lbc{ C} = \{ \lb{18}, \lb{19}\}$. Next, the algorithm proceeds with the application of the layer saturation step, which would create label $\lb{25}$ but would not create label $\lb{24}$ due to the newly created cluster, and proof search would continue. However, the layer now contains a cluster, which identifies a repeating behaviour, and will eventually prevent the layer from growing further.

\begin{figure}[t]
\begin{minipage}[b]{0.33\textwidth}
	\includegraphics[scale=0.7]{figures/agi-ex-rep-h.tex}
\end{minipage}
\begin{minipage}[b]{0.66\textwidth}
	\includegraphics[scale=0.7]{figures/agi-ex-rep-g.tex}
\end{minipage}
	\caption{Verification of the applicability of $\leafloopr$ to the topmost layer of Figure~\ref{fig:agi-ex}, where we use the following abbreviations:\\
		 $\lef{\fm \Delta} = \{ \lef{\fm{ \DIA b \AND \DIA c} },\, \lef{\fm {\DIA b}},\, \lef{\fm{\DIA c}}\}$;\\
		$\lef{\fm \Sigma} = \big\{ \lef{\fm{\BOX \big( a \IMP (\DIA b \AND \DIA c) \big)}},\, \lef{\fm{\BOX \big((d \IMP \bot) \IMP \bot \big)}},\, \lef{\fm{\BOX \big((e \IMP \bot) \IMP \bot \big)}}\big\}$; \\
		${\fm \Pi} = \{\lef{\fm{a \IMP (\DIA b \AND \DIA c)}},\, \lef{\fm{(d \IMP \bot) \IMP \bot}} ,\, \lef{\fm{(e \IMP \bot) \IMP \bot}},\, \rig{\fm{d \IMP \bot}},\, \rig{\fm{e \IMP \bot}}\}$;\\
		$\lef{\fm \Pi} = \{\lef{\fm{a \IMP (\DIA b \AND \DIA c)}},\, \lef{\fm{(d \IMP \bot) \IMP \bot}} ,\, \lef{\fm{(e \IMP \bot) \IMP \bot}}\}$.\\
		The left part of the figure shows the seed $\sq H$, using the notation from Def.~\ref{def:leafloop}, whereas the right part of the figure shows a sequent $\sq G$, preceding the one in Figure~\ref{fig:agi-ex} in the proof-search tree. Sequent $\sq G$ has  a leaf-repetition pair $\tuple{\lb{18}, \lb{20}}$. Boldface nodes are suricata nodes. The coloring represents the embedding $\emb$ of~$\treeof{5}$ in~$\sq H$ into $\treeof{17}$ in $\sq G$, as required by the definition of $\IMP$-suricata repetition in Def.~\ref{def:repetition}.
	}
	\label{fig:agi-ex-rep}
\end{figure}

\section{Termination}
\label{sec:termination}

This section is dedicated to showing that the algorithm presented in Section~\ref{sec:algorithm} does indeed terminate. The basic idea will be to look at branches of proof search trees as indicated in Figure~\ref{fig:searchtree}. A sequence of sequents that occur along such a branch will be called an \emph{evil sequence}. The point of the termination proof will be to show that such a sequence cannot be infinite.

\subsection{Evil Sequences}
\label{sec:evil}

\begin{notation}
	For two sequents $\sq G,\sq H$, we write $\sq G\sseqto\sq H$ if $\sq H\in\ssatof G$, we write $\sq G\lseqto\sq H$ if $\sq H=\liftssatof G$. 
\end{notation}

\begin{definition}\label{def:evil}
  An \defn{evil sequence} $\evils$ is a sequence of non-axiomatic nice sequents $\sq G_0'$, $\sq G_0$, $\sq G_1'$, $\sq G_1$, $\sq G_2'$, $\sq G_2$, \ldots, such that 
  \begin{equation}
    \label{eq:evil}
    \sq G_0'\sseqto \sq G_0 \lseqto \sq G_1'\sseqto \sq G_1 \lseqto \sq G_2'\sseqto \sq G_2 \lseqto \cdots \lseqto \sq G_i' \sseqto\sq G_i \lseqto \cdots
  \end{equation}
  and such that there is a sequence of layers
  $L_0',L_0,L_1',L_1,L_2',L_2,\ldots$, such that for every $i\ge 0$,
  we have that $L_0,L_1,\ldots,L_i$ are layers of $\sq G_i$ with
  $L_0<L_1<\cdots<L_i$, and that $L_i'\subseteq L_i$ 
  is a layer of $\sq G_i'$, and that $L_i$ is not simulated (Def.~\ref{def:simul-layers}). 
  For a given formula $\fm F$, we say that $\evils$ is \defn{$\fm F$-evil} if additionally $\sq G_0'=\futs rr, \accs rr,\labelsw rF$. We will often write $\sq G_i'(\evils)$, $\sq G_i(\evils)$, $L_i'(\evils)$, $L_i(\evils)$ for    $\sq G_i'$, $\sq G_i$, $L_i'$, $L_i$, respectively.
\end{definition}

The basic idea of the termination argument will be that the proof search Algorithm~\ref{alg:proofsearch} terminates if and only if every evil sequence is finite.
In the following, we are therefore going to show that there is no infinite evil sequence.

\begin{lemma}\label{lem:eqlabels}
	Le $\fm F$ be a formula and let $\evils$ be an $\fm F$-evil sequence. 
	The number of equivalence classes for the relation $\lbeq$ on labels (Def.~\ref{def:eq-labels}) that can occur in a sequent $\sq G$ in $\evils$ is bounded by $2^{\subf{F}+1}$ where ${\subf{F}}$ is the number of subformulas of $\fm F$. 
\end{lemma}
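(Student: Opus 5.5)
The plan is to show that the $\lbeq$-class of any label $\lb x$ in a sequent $\sq G$ of $\evils$ is determined by a subset of a fixed finite set whose size is at most $\subf{F}+1$ (counting, say, $\fm\BOT$ as an extra possible formula), with each element carrying a bounded amount of polarity information. By Definition~\ref{def:eq-labels}, the class of $\lb x$ is determined by the pair of sets $\set{\fm A\mid\blackf xAG}$ and $\set{\fm A\mid\whitef xAG}$. So it suffices to bound which formulas can occur at a label, and how they can be distributed over the two polarities.

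The first step is a subformula property along $\evils$. I will argue by induction along the sequence \eqref{eq:evil} that every labelled formula in every $\sq G_i'$ and $\sq G_i$ is a subformula of $\fm F$ (possibly together with $\fm\BOT$). The base case is $\sq G_0'=\futs rr,\accs rr,\labelsw rF$. The step from $\sq G_i'$ to $\sq G_i$ is covered by Lemma~\ref{lem:saturation}, and the step from $\sq G_i$ to $\sq G_{i+1}'$ by Lemma~\ref{lem:lifting}. Both lemmas state that only subformulas of the conclusion sequent occur. This is also clear directly, since every logical rule in Figure~\ref{fig:labIKp} introduces only immediate subformulas, while the lifting constructions and the $\loopr$/$\shrinkr$ rules only copy or identify existing labelled formulas.

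The second step is the counting. Each label is described by two subsets of a set of size at most $\subf F$, which naively gives $4^{\subf F}$ classes. To reach $2^{\subf F+1}$ I will use the polarity structure of the search. Output formulas are never copied by lifting; only input formulas are copied, as in Constructions~\ref{def:box-lifting}--\ref{def:imp-lifting}. Moreover, the $\rigsym$-formulas at a label are generated by the rules $\rig\AND,\rig\OR,\rig\DIA$ from a single output formula: either the suricata's formula, or one propagated via $\rig\DIA$ from the output formulas at $\rel$-predecessors within the layer. The aim is therefore to show that the $\rigsym$-part of a label contributes at most one extra bit of freedom on top of a subset of subformulas. Concretely, I would try to map each label to a pair (set of input subformulas, set of output subformulas) and inject it into subsets of a set of size $\subf F+1$, for instance by treating each subformula together with its polarity as a signed subformula and using the fact that a formula essentially occurs at most with one polarity at a non-axiomatic label.

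I expect the exact constant to be the main obstacle. The crude bound $2^{2\subf F}$ follows immediately from the two steps above, and for the termination argument any finite bound suffices. If the sharp count $2^{\subf F+1}$ cannot be justified cleanly, I would fall back on the statement with the bound $2^{2\cdot\subf F}$ and note that only finiteness is used later.
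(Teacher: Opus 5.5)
Your first step is exactly the paper's proof. The paper cites Lemmas~\ref{lem:saturation} and~\ref{lem:lifting} to conclude that labels in an $\fm F$-evil sequence carry only subformulas of $\fm F$, and then simply states the bound. The gap is in your second step, the sharpening to $2^{\subf F+1}$, which would not go through as sketched. There are three concrete problems.

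(i) Output formulas at a label are not all generated from a single output formula by $\rig\AND,\rig\OR,\rig\DIA$. The left premise of $\lef\IMP$ adds $\labelsw xA$ for every input $\labelsb x{A\IMP B}$ on which that branch is taken.

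(ii) It is false that a formula occurs with at most one polarity at a non-axiomatic label, since non-axiomaticity only constrains atoms and $\fmb\BOT$. Take $\fm{\BOX(p\IMP q)\IMP\BOX(p\IMP q)}$. The suricata $\lb\hy$ of the $\liftrbox$-layer carries $\labelsw{\hy}{p\IMP q}$ and, by $\lef\BOX$, also $\labelsb{\hy}{p\IMP q}$. The $\lef\IMP$-branch adding $\labelsw{\hy}p$ is then non-axiomatic and almost happy.

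(iii) Even granting one polarity per formula, you get three states per subformula, hence up to $3^{\subf F}$ classes. This already exceeds $2^{\subf F+1}$ once $\subf F\ge 2$, so no injection into subsets of a set of size $\subf F+1$ follows.

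The paper's own proof does not justify the constant either. From the subformula property alone, a class is determined by an input set and an output set, which gives $2^{\subf F}\cdot 2^{\subf F}=2^{2\subf F}$.

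The sound version of your signed-subformula idea is polarity preservation relative to $\fm F$. Every rule of $\labISfs$, every lifting construction, and every $\loopr$ or $\shrinkr$ step gives a labelled formula the polarity of some occurrence of it in $\fm F$. The only polarity flip is at the antecedent of an implication, in both $\lef\IMP$ and $\rig\IMP$. This bounds the number of classes by $2^N$, where $N$ is the number of signed subformulas of $\fm F$. In general $N\le 2\subf F$. If $\subf F$ counts subformula occurrences, then $N\le\subf F$, and in that reading the stated bound does hold.

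The later uses of this lemma (Lemmas~\ref{lem:eqclusters}, \ref{lem:finbranch}, \ref{lem:sur-depth}, \ref{lem:sur-leaf}, \ref{lem:Lsize}) are pigeonhole and Kruskal arguments that need only some finite bound. So your fallback $2^{2\subf F}$ is the correct conclusion of the argument you share with the paper; only the explicit value of $\maxlayerf F$ changes.
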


\begin{proof}
	By Lemma~\ref{lem:lifting} the full lifting $\liftssatof G$ only contains subformulas of $\sq G$ and by Lemma~\ref{lem:saturation} any sequent $\sq H$ from the saturation $\ssatof G$ only contains subformulas of $\sq G$ (with input or output polarity). 
	Hence, any label in a sequent $\sq G$ in an $\fm F$-evil sequence can only contain subformulas of $\fm F$. 
\end{proof}

\begin{lemma}\label{lem:eqclusters}
	Le $\fm F$ be a formula and let $\evils$ be an $\fm F$-evil sequence. 
	The number of equivalence classes for the relation $\lbeq$ on clusters (Notation~\ref{def:eqclusters}) that can occur in a sequent $\sq G$ in $\evils$ is bounded by $2^{2^{\subf{F}+1}}$ where ${\subf{F}}$ is the number of subformulas of $\fm F$. 
\end{lemma}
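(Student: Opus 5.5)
The plan is to reduce the statement to Lemma~\ref{lem:eqlabels} by a counting argument on the $\lbeq$-classes of labels. Notation~\ref{def:eqclusters} says that two clusters $\lbG{C_1}H$ and $\lbG{C_2}G$ are equivalent exactly when each label of one has a $\lbeq$-equivalent label in the other. So I would associate with every cluster $\lbc C$ occurring in some sequent of $\evils$ its \emph{type}
\[
\tau(\lbc C)=\set{[\lb z]_{\lbeq}\mid \lb z\in\lbc C},
\]
the set of $\lbeq$-classes of its members. Here $\lbeq$ is read across sequents, as in $\lbG xH\lbeq\lbG yG$. In other words, $[\lb z]$ is determined by the set of polarized formulas labelled by $\lb z$, which makes it independent of the ambient sequent.

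First I will check that $\lbG{C_1}H\lbeq\lbG{C_2}G$ holds if and only if $\tau(\lbc{C_1})=\tau(\lbc{C_2})$. Unfolding the definition, $\clsubG{C_1}H{C_2}G$ says precisely that $\tau(\lbc{C_1})\subseteq\tau(\lbc{C_2})$. The converse direction gives the reverse inclusion, so equality of types is equivalent to cluster equivalence. As a consequence, the map sending the class of $\lbc C$ to $\tau(\lbc C)$ is well defined and injective.

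Next I will bound the codomain. By Lemma~\ref{lem:eqlabels}, the labels occurring in sequents of $\evils$ fall into at most $N=2^{\subf{F}+1}$ classes. To see that these classes are absolute and not relative to a single sequent, I would recall the reasoning behind that lemma: a class is determined by which of the at most $\subf{F}$ subformulas occur with which polarity, and these choices range over a fixed finite set. Each type is therefore a (nonempty) subset of a set of size at most $N$. This gives at most $2^N=2^{2^{\subf{F}+1}}$ equivalence classes of clusters.

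The main obstacle I expect is the uniformity just mentioned. Lemma~\ref{lem:eqlabels} is phrased as a bound per sequent, whereas cluster equivalence compares clusters across different sequents. I would handle this by observing that the bound comes from the fixed finite set of possible polarized-subformula profiles, which is shared by all sequents in $\evils$. I would state this explicitly and keep the rest of the argument as plain counting.
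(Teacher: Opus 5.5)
Your argument is correct and is essentially the paper's own proof, spelled out in more detail: the paper likewise observes that a $\lbeq$-class of clusters is a set of sets of (polarized) subformulas of $\fm F$, i.e.\ a set of label classes, and then counts subsets. One caveat, which the paper shares: the exact exponent comes from Lemma~\ref{lem:eqlabels}, and a direct count of polarized-subformula profiles as in your uniformity remark gives at most $2^{2\subf{F}}$ label classes rather than $2^{\subf{F}+1}$; this does not affect the finiteness that the termination argument actually needs.
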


\begin{proof}
	An equivalence class for the relation $\lbeq$ on clusters in $\sq G$ is a set of sets of formulas occuring in $\sq G$. 
	If $\sq G$ occurs in an $\fm F$-evil sequence then it can only contain subformulas of $\fm F$.
\end{proof}

\begin{corollary} These are implied by the preceeding lemmas:
	\begin{itemize}
		\item In the step from  $L'_{i}(\evils)$ to $L_{i}(\evils)$ (saturation step) a branch can only grow by at most $2^{\subf F+1}$.
		\item In the step from $L_{i}(\evils)$ to $L'_{i+1}(\evils)$ (lifting step) a branch can only grow if the suricata label is on that branch.
	\end{itemize}
\end{corollary}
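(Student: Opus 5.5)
The corollary has two claims, and I will prove them separately. Both rest on Lemma~\ref{lem:eqlabels}, which bounds the number of $\lbeq$-classes of labels, and on the way Algorithms~\ref{alg:sat} and~\ref{alg:liftsat} act on a single layer.

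For the first claim, fix a branch of $L_i(\evils)$, seen as a path in the tree of clusters. Take the part of this branch that is not already in $L'_i(\evils)$: it consists of labels created by the $\bdiam$-steps (3.b) of Algorithm~\ref{alg:sat}. Each such label $\lb z$ is created as a fresh $\srel{}$-successor of the label $\lb y$ carrying the minimal unhappy $\labelsb y{\DIA A}$. The layer being saturated is topmost, so every label on the new part of the branch has no future. Since these labels are freshly created inside a topmost layer, none of them has a past either.

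Now suppose the new part of the branch contained more than $2^{\subf F+1}$ labels. By pigeonhole, two of them $\lb x\srel{G}\lb y$ with $\lb x\neq\lb y$ would satisfy $\lb x\lbeq\lb y$. Every $\lb z$ between them also lies on the new part, so it has no past. Hence $\lb y$ would be a $\DIA$-repetition of $\lb x$ at the moment $\lb y$ acquires the unhappy $\lef\DIA$-formula that would extend the branch further. At that point step (3.a) fires in place of (3.b), and the branch does not grow past $\lb y$.

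One detail needs care: $\lbeq$ at the time of comparison versus at the end of saturation. Semi-saturation of a label only adds formulas determined by its own formulas and its $\srel{}$-predecessors. Because of the $\lford$-minimality ordering, all predecessors are already almost happy when $\lb y$ is examined, so equivalence is judged on the final formula sets. This gives the bound $2^{\subf F+1}$ on the growth.

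For the second claim, the plan is to inspect how $L'_{i+1}$ arises from $L_i$ by Construction~\ref{def:box-lifting}, \ref{def:imp-lifting-singleton} or~\ref{def:imp-lifting}, followed by the loop rules and $\shrinkr$ in Algorithm~\ref{alg:liftsat}.

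In the $\rig\BOX$-lifting and the singleton $\rig\IMP$-lifting, the new layer is an isomorphic copy $\lb{\hx_i}$ of $L_i$ under $\srel{}$. The only exception is the extra suricata $\lb\hy$ in the $\BOX$ case. So a branch not containing the suricata has the same length as its preimage.

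In the cluster $\rig\IMP$-lifting, the cluster $\lbc{C_x}$ is replaced by the chain $\lbc{C'}\srel{}\lb{\hx_0}\srel{}\lbc{C''}$. This lengthens exactly those branches through $\lb{\hx_0}$, which is the suricata.

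Finally, the steps $\imploopr$, $\boxloopr$, $\leafloopr$, $\triangleloopr$ and $\shrinkr$ only add cluster edges $\trans{\interval{x'}{y}}$ or identify labels. These operations never lengthen a branch of the cluster tree.

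The main obstacle is the first claim's timing argument: I must check that $\lbeq$ is evaluated on stable formula sets and that the "no past" condition really holds for all intermediate labels on the new segment.
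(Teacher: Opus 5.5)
Your plan is the argument the paper has in mind. The paper gives no separate proof: it points to Lemma~\ref{lem:eqlabels} and the $\DIA$-repetition check (as in Lemma~\ref{lem:sat-term}), and gives a one-line reason for each bullet inside the proof of Lemma~\ref{lem:Lsize}.

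Your second bullet is correct for the statement as written. You get it by inspecting Constructions~\ref{def:box-lifting}--\ref{def:imp-lifting} and observing that the loop and $\shrinkr$ steps of Algorithm~\ref{alg:liftsat} only collapse paths into clusters or merge siblings. The paper's reason is different: away from the suricata no new $\lef\DIA$-formulas arise, and the copied ones are already happy. That is really a claim about the saturation that follows the lifting. This stronger form is what Lemma~\ref{lem:Lsize} uses, and your structural argument, which stops at $L'_{i+1}$, does not give it.

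The genuine gap is the timing step you yourself single out in the first bullet. By $\lford$-minimality, all $\rel$-predecessors of $\lb y$ are almost happy when $\lb y$ is examined, but this does not make the formula sets final.
\begin{itemize}
\item An instance of step (3.a), with $\lb{y'}$ a repetition of $\lb{x'}$, adds $\trans{\interval{x'}{y'}}$.
\item The labels of $\interval{x'}{y'}$ are all already examined, since each has a child on the path to $\lb{y'}$. They now acquire their path-descendants as new $\rel$-predecessors.
\item The semi-saturation that follows adds new $\lef\BOX$- and $\rig\DIA$-consequences to them. Via $\lef\BOX$ at these labels (and transitivity), these also reach their descendants.
\end{itemize}
Since $\lford$ only orders ancestors before descendants, such a step can happen after your branch has been expanded. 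For example, take a label $\lb p$ with two children, expand the first subtree deeply, and only afterwards find a $\DIA$-repetition of $\lb p$ in the second subtree. Labels in the first subtree, examined long before, can then change. Consequently, $\lb x\lbeq\lb y$ in the final $L_i$ does not imply $\lb x\lbeq\lb y$ at the moment step (3) chose between (3.a) and (3.b) at $\lb y$, and your contradiction is not reached. You would need either to run the pigeonhole argument on the formula sets at the respective check times, or to show that such late growth cannot affect the branch under consideration. The paper is silent on this point as well.

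Two smaller repairs are also needed.
\begin{itemize}
\item Do the pigeonhole over the nodes of the cluster tree, one representative per node. Two equivalent labels inside the same $\dialoopr$-cluster give no contradiction.
\item If the pair you find has $\lb y$ as the last node of the branch, the contradiction is not that the branch stops growing at $\lb y$. Rather, $\lb y$ shares with $\lb x$ the unhappy $\lef\DIA$-formula that $\lb x$ expanded, so $\lb y$ itself would have been merged into $\lb x$ by (3.a).
\end{itemize}
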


\begin{lemma}\label{lem:finbranch}
  Le $\fm F$ be a formula, let $\evils$ be an $\fm F$-evil sequence, and $\sq G\in\evils$. Then every layer $L$ in $\sq G$ is finitely branching (when seen as a tree of clusters).
\end{lemma}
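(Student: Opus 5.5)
A short plan suffices, because every sequent in $\evils$ is finite. I will argue by induction along the evil sequence $\sq G_0'\sseqto\sq G_0\lseqto\sq G_1'\sseqto\cdots$ and show that every sequent occurring in it has finitely many labels. In a finite layer every cluster has finitely many immediate successors in the tree of clusters (Def.~\ref{def:nice}, item (iv)), so finite branching follows. The actual content is therefore a finiteness invariant for each step of the construction, together with a remark giving an explicit bound on the branching degree.

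The base case is $\sq G_0'=\futs rr,\accs rr,\labelsw rF$, which has a single label. For the step $\sq G_i'\sseqto\sq G_i$, Lemma~\ref{lem:saturation} states that the saturation $\ssatof{G_i'}$ is a finite set of sequents. It also states that these sequents are the leaves of a finite derivation tree built from $\satr$, $\bdiam$ and $\dialoopr$, whose finiteness follows from termination (Lemma~\ref{lem:sat-term}). Each rule instance adds at most one fresh label ($\bdiam$), adds none ($\satr$), or removes one ($\dialoopr$). Hence every $\sq G_i\in\ssatof{G_i'}$ has finitely many labels whenever $\sq G_i'$ does. For the step $\sq G_i\lseqto\sq G_{i+1}'$, Lemma~\ref{lem:lifting} gives the full lifting through a finite derivation from $\liftrim$, $\liftrbox$ and the loop and $\shrinkr$ rules. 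Each lifting copies a finite layer, at most doubling a cluster plus one suricata label (Constructions~\ref{def:box-lifting}, \ref{def:imp-lifting-singleton}, \ref{def:imp-lifting}), and there are finitely many unhappy formulas to lift. The remaining rules only add relational atoms or identify labels. So finiteness is preserved here as well.

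To give a bound that does not depend on $i$, I would also note the following. Within a layer, the $\shrinkr$ rule removes twins, so no two immediate children of a cluster are $\lbeq$-equivalent. The algorithm applies $\shrinkr$ at the end of every full lifting, so at those stages the number of immediate children of any cluster is bounded by the number of $\lbeq$-classes, which is at most $2^{\subf F+1}$ by Lemma~\ref{lem:eqlabels}. This is the bound the Kruskal argument needs later. The saturation step can add new children through $\bdiam$. However, it adds at most one child per unhappy $\fmb{\DIA A}$ at each label, and there are at most $\subf F$ such formulas, so the degree stays bounded by $2^{\subf F+1}+\subf F$ after saturation too.

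The main obstacle is making sure the identification rules ($\leafloopr$, $\triangleloopr$, $\dialoopr$) do not merge several subtrees under one cluster in a way that increases its degree without bound. I would handle this by checking that each such rule only collapses a single $\rel$-path into a cluster, through the edges $\trans{\interval{x'}y}$. Such a collapse adds, for each label removed from the path, at most that label's own children, and the total number of labels stays finite. So finiteness, and hence finite branching, holds in all cases.
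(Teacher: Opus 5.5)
Your induction showing that every sequent along $\evils$ has finitely many labels is correct. It proves the lemma as literally worded, since a finite layer is trivially a finitely branching tree of clusters. But this is a genuinely different and much weaker route than the paper's. The paper does not argue finiteness at all. It uses $\shrinkr$ to get a bound that is \emph{uniform} along the whole evil sequence. Two $\lbeq$-equivalent labels in one cluster are twins, so a cluster has at most $2^{\subf{F}}$ labels. Two $\lbeq$-equivalent immediate children of one label are twins, so each label has at most $2^{\subf{F}}$ immediate children. Hence a cluster has at most $4^{\subf{F}}$ children. That uniform bound is what the lemma is used for later. Lemma~\ref{lem:sur-depth} needs it for the K\"onig-style extraction of a subsequence whose suricatas all lie on one branch. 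Per-snapshot finiteness does not help there, because a sequence of finite trees can have unboundedly growing degree. Lemma~\ref{lem:Lsize} also explicitly imports the numerical bound from this proof. So your first two paragraphs establish only a vacuous version of the statement.

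Your third paragraph aims at the right, uniform statement, but two of its steps do not hold.
\begin{itemize}
\item $\shrinkr$ does not ensure that no two children of a \emph{cluster} are $\lbeq$-equivalent. Siblings are labels in the same cluster or immediate children of a common \emph{label}. Two equivalent children attached to different labels of the same cluster are therefore not twins, and both survive. The correct count is (labels per cluster) times (children per label), which is exactly the product in the paper's proof, not the number of $\lbeq$-classes.
\item The post-saturation bound $2^{\subf{F}+1}+\subf{F}$ is unjustified. No $\shrinkr$ is applied during saturation. A $\dialoopr$ step turns the whole path $\interval xy$ into a single cluster, which then inherits the children of every label on that path. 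Your final paragraph controls this only up to finiteness, not by a bound independent of the position in $\evils$.
\end{itemize}
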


\begin{proof}
  Because of the $\shrinkr$-rule every cluster $\lbc C$ can have at most $2^{\subf{F}}$ labels, and every label $\lb x\in\lbc C$ can have at most $2^{\subf{F}}$ immediate children. Consequently, $\lbc C$ can have at most $4^{\subf{F}}$ children in the layer-tree.
\end{proof}

\subsection{Bounding the Length of a Branch}
\label{sec:bounding}

In this section we will make heavy use of Kruskal's tree theorem~\cite{Kruskal:60} which can be formulated in our setting as follows.

\begin{lemma}[Kruskal's Theorem]\label{lem:kruskal}
  Let $\sq G_1,\sq G_2,\ldots$ be an infinite sequence of nice sequents and let $\lb{x_k}\in\labelsofx{\sq G_k}$ for every $k\ge1$. Then there are $1\le i<j$ such that there is an embedding from   $\treeof{x_i}$ into $\treeof{x_j}$.
\end{lemma}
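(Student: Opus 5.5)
The plan is to reduce the statement to the classical Kruskal tree theorem for finite rooted trees whose nodes carry labels from a well-quasi-order, with the usual inf-preserving (homeomorphic) embeddings. The two facts doing the work are: clusters of the layer tree $\treeof{x}$ only need to be matched up to the equivalence $\lbeq$ of Notation~\ref{def:eqclusters}, and injectivity inside a cluster only requires a size comparison. I implicitly assume, as is the case in the intended application, that each $\lb{x_k}$ is a singleton cluster, so that $\treeof{x_k}$ is defined (Definition~\ref{def:tree-of-label}).

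\textbf{Step 1: turn each $\treeof{x_k}$ into a finite labelled tree.} Since $\sq G_k$ is nice, $\layerof{x_k}$ is a tree of clusters under $\srel{G_k}$, and $\treeof{x_k}$ is the subtree rooted at the singleton cluster $\set{\lb{x_k}}$. It is finite because sequents are finite sets. Give each cluster node $\lbc C$ the tag $(\tau(\lbc C),\sizeof{\lbc C})$, where $\tau(\lbc C)$ is the set of $\lbeq$-types of the labels in $\lbc C$, i.e.\ the pairs (input formulas, output formulas) at each label. Order tags by $(\tau,n)\preceq(\tau',n')$ iff $\tau=\tau'$ and $n\le n'$.

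Whether this order is a well-quasi-order is the main point to check. If the formulas occurring in the $\sq G_k$ are drawn from a finite set (as in an $\fm F$-evil sequence, Lemma~\ref{lem:eqclusters}), the first component ranges over a finite set ordered by equality, and $\preceq$ is a product of two wqos, hence a wqo. If not, I would restrict to an infinite subsequence on which this holds, or compare type-sets by equality after passing to such a subsequence. I expect this to be the main obstacle in making the statement literally as general as written, and in the application it is covered by Lemma~\ref{lem:eqclusters}.

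\textbf{Step 2: apply Kruskal's theorem.} The classical theorem gives $i<j$ and an injective map $f$ from the clusters of $\treeof{x_i}$ to those of $\treeof{x_j}$ with two properties. First, $f$ respects tags: $\lbc C$ and $f(\lbc C)$ have the same type-set and $\sizeof{\lbc C}\le\sizeof{f(\lbc C)}$. Second, $f$ preserves infima, so it preserves and reflects the ancestor relation. Within a nice layer, $\srel{G}$ on the clusters of a rooted subtree is reflexive and transitive (Proposition~\ref{prop:order_clusters}), so it coincides with the ancestor order. Hence $\lbc{C_1}\srel{G_i}\lbc{C_2}$ iff $f(\lbc{C_1})\srel{G_j}f(\lbc{C_2})$.

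\textbf{Step 3: lift $f$ to labels.} For each cluster $\lbc C$ fix an injection $\iota_{\lbc C}\colon\lbc C\to f(\lbc C)$, which exists by the size comparison. Define $\embof{z}=\iota_{\lbc{C_z}}(\lb z)$. Injectivity of $\emb$ follows because $f$ is injective and clusters are disjoint. Condition~1 of Definition~\ref{def:t-embedding} holds because equal type-sets mean exactly $\lbG{C_z}{G_i}\lbeq\lbG{C_{\embof z}}{G_j}$, and $\lbc{C_{\embof z}}=f(\lbc{C_z})$. Condition~2 is the ancestor equivalence from Step~2 applied to $\lbc{C_z}$ and $\lbc{C_w}$. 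Therefore $\emb$ is an embedding of $\treeof{x_i}$ into $\treeof{x_j}$.
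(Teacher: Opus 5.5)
Your proposal follows the paper's route, made explicit. The paper proves this lemma in one sentence: $\treeof{x_1},\treeof{x_2},\ldots$ meets the hypotheses of Kruskal's theorem. Your Steps~1--3 supply what that sentence leaves to the reader: you tag each cluster with its set of $\lbeq$-types and its size, you use that inf-preserving embeddings preserve and reflect the ancestor order, and you inject each cluster into its image.

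\textbf{The fallback in Step~1 cannot work.} When unboundedly many formulas occur, the statement is false, so there is no suitable subsequence to pass to. Take $\sq G_k=\set{\futs{x_k}{x_k},\accs{x_k}{x_k},\labelsb{x_k}{a_k}}$ with pairwise distinct atoms $\fm{a_k}$.
\begin{itemize}
\item Each $\sq G_k$ is nice.
\item $\treeof{x_k}$ consists of the single singleton cluster $\set{\lb{x_k}}$.
\item Condition~1 of Definition~\ref{def:t-embedding} would require $\set{\lb{x_i}}$ and $\set{\lb{x_j}}$ to carry the same formulas, which never happens.
\end{itemize}
So the lemma needs an extra hypothesis: only finitely many formulas, hence finitely many cluster types, occur in the $\sq G_k$. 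This hypothesis holds in every application. There the sequents come from an $\fm F$-evil sequence, and Lemma~\ref{lem:eqclusters} gives the bound. The paper's one-line proof relies on it tacitly as well.

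\textbf{Minor point.} $\srel{G}$ on the clusters of $\treeof{x}$ coincides with the ancestor order because of condition~(iv) of Definition~\ref{def:nice}, namely that the clusters form a tree under $\srel{G}$. Reflexivity and transitivity alone do not give this.
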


\begin{proof}
  The sequence $\treeof{x_1},\treeof{x_2},\ldots$ meets exactly the conditions of Kruskal's Theorem~\cite{Kruskal:60}.
\end{proof}

\begin{notation}
	Let $\sq G$ be a sequent. For a label $\lb x\in\labelsof G$ we denote by $\futuresof x=\set{\lb{x'}\in\labelsof G\mid \futsq x{x'}}$ the set of its \defn{futures}.
\end{notation}

\begin{definition}[Depth within a layer and relative depth]\label{def:depth}
  Let $\lb x\in\labelsof G$ be a label in a nice sequent (Def.~\ref{def:nice}). The \defn{depth} of $\lb x$, denoted by $\ldepthof x$ is the length of the path from the root of $\layerof x$ (when seen as a tree of clusters) to the cluster that contains $\lb x$. For labels $\lb x,\lb y\in\labelsof G$ we define their \defn{relative depth}, denoted by $\lreldepth xy$, as follows:
  \begin{equation}
    \label{eq:reldepth}
    \lreldepth xy =\left\{
    \begin{array}{ll}
      \ldepthof y-\max\set{\ldepthof{x'}\mid \lb{x'}\in\futuresof x\cap \layerof y}
      &
      \text{if $\vissq x y$ (Def.~\ref{not:fully-labelled}) } \strut
      \\[.5ex]
      -1
      &
      \text{otherwise}\strut
    \end{array}
    \right.
  \end{equation}
  For an evil sequence $\evils$, we write $\ldepthi{\evils}$ for the depth of the suricata label of $L_i'(\evils)$ 
  (which might not be the same as the depth of the suricata label of $L_i(\evils)$ given that the suricata in $L_i'(\evils)$ is never in a cluster while it could be in one $L_i(\evils)$). 
  And for $i <j$, we write $\lreldepthij ij\evils$ for $\lreldepth{s_i}{s_j}$ where $\lb{s_i}$ and $\lb{s_j}$ are the suricata labels of  $L_i'(\evils)$ and $L_j'(\evils)$, respectively. 
  Finally, we write $\lreldepthi{\evils}$ for $\max\set{\lreldepth{s_i}z\mid z\in L_i'(\evils)}=\max\set{\ldepthof z-\ldepthof{s_i}\mid \lb z\in L_i'(\evils)\mbox{~and~}\accsq {s_i}z}$, i.e., the maximal distance from $\lb{s_i}$ to a leaf of $\treeof{s_i}$.
\end{definition}

\begin{lemma}\label{lem:sur-depth}
  Let $\fm F$ be a formula and let $\evils$ be an $\fm F$-evil sequence. Then there is a number $\depthf{F}\in\Nat$, such that for all $i\ge0$, we have $\ldepthi{\evils}\le \depthf F$.
\end{lemma}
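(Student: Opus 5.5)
The plan is to bound the depth of the suricata $\lb{s_i}$ of $L_i'(\evils)$ by looking at the $\grel{}$-path of clusters in $L_i'(\evils)$ from the root of the layer down to $\lb{s_i}$. We will show that if this path is too long, then it contains a repetition pair (Definitions~\ref{def:leafloop}, \ref{def:boxloop}, \ref{def:imploop}, \ref{def:triangleloop}) or a pair of twins. This contradicts the fact that $\sq G_i'=\liftssatof{G_{i-1}}$ is loop-happy, since Algorithm~\ref{alg:liftsat} only stops at Step~\ref{i:loophappy}. Candidate value: $\depthf F = 2\cdot N\cdot(N+1)$, where $N = 2^{2^{\subf F+1}}$ is the bound of Lemma~\ref{lem:eqclusters} on equivalence classes of clusters; the factor $2$ absorbs the extra levels created by the suricata (Construction~\ref{def:box-lifting}) and by the duplicated cluster in the cluster $\rig\IMP$-lifting (Construction~\ref{def:imp-lifting}).

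First I would record how the path arises. By Constructions~\ref{def:box-lifting}--\ref{def:imp-lifting}, $L_i'$ is a copy of $L_{i-1}$ (with the cluster of the parent possibly doubled) plus the suricata. So the path to $\lb{s_i}$ is, up to at most two extra levels, the $\le$-image of the path to the parent $\lb{x}$ of $\lb{s_i}$ in $L_{i-1}$. Every label on that path is $\le$-above its preimage and is $\lbeq$-equivalent to it on the input side, because lifting copies all $\lefmark$-formulas. Also, by the first item of the corollary after Lemma~\ref{lem:eqclusters}, the saturation step from $L_{i-1}'$ to $L_{i-1}$ adds at most $2^{\subf F+1}$ levels, and only below labels that have no past in $L_{i-1}'$ (the $\lford$-ordering and the $\dialoopr$ side condition). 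Hence the path decomposes into at most $i$ segments, each produced by one saturation phase, and each segment of length at most $2^{\subf F+1}$.

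The core step is pigeonhole. If the path has more than roughly $N(N+1)$ clusters, then two segments, created in phases $j<k$, contain clusters $\lbc{C_{x}}$ and $\lbc{C_{y}}$ with $\lbG{C_x}{G_j'}\lbeq\lbG{C_y}{G_k'}$ at the corresponding positions below their suricatas. Moreover the suricatas $\lb{s_j}$ and $\lb{s_k}$ are equivalent, and by Lemma~\ref{lem:kruskal} applied along $\evils$ we may additionally assume that $\treeof{s_j}$ embeds into $\treeof{s_k}$. From this I would build the data of a repetition:
\begin{itemize}
\item take $\sq H=\sq G_j'$ as the seed;
\item take $\lb{x'}$ to be the $\le$-image in the current layer of the earlier label $\lb x$;
\item take $\lb y$ to be the later one.
\end{itemize}
Then $\vissq xy$ holds by ($\fhat$)-structural happiness and the nesting $L_j<L_k$. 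Depending on whether $\lb y$ is the suricata itself (giving a $\BOX$- or $\IMP$-repetition pair), a leaf, or an internal label that is $\le$-above an equivalent $\lb x$ together with $\lb{x'}$ (giving a $\TRI$-repetition pair), one of the four loop conditions applies. Otherwise two equivalent siblings appear, which are twins for $\shrinkr$. In every case $\sq G_i'$ is not loop-happy, a contradiction.

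The main obstacle is verifying the side conditions of the loop definitions, in particular that all repetitions share the same seed and that we do not have $\lb{x'}\srel G\lb{s_y}\srel G\lb y$. The suricata $\lb{s_y}$ lies at the bottom end of the path, below both labels, so the required order fails. I would handle the same-seed issue by choosing $j$ minimal among the phases realizing the repeated pattern. If that fails, I would fall back to a coarser bound obtained by also pigeonholing over the triple (cluster class of $\lb x$, class of $\lb{s_j}$, Kruskal-embedding witness), at the cost of a larger but still finite $\depthf F$.
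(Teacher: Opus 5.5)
Your proposal has a genuine gap at its core step. You aim for an explicit bound $2N(N+1)$ from a finite pigeonhole over cluster classes on one path. But the ingredient that turns an equivalent pair into an actual loop condition is the embedding of $\mathsf{T}(s_j)$ into $\mathsf{T}(s_k)$, which an $\IMP$-suricata repetition needs and leaf- and $\TRI$-repetition pairs inherit. That embedding is available only through Kruskal's theorem (Lemma~\ref{lem:kruskal}), and Kruskal says nothing about a bounded finite family. Among any fixed number of suricata trees there may be no embeddable pair. Moreover, the pair $(j,k)$ produced by your pigeonhole need not be the pair produced by Kruskal, so you cannot ``additionally assume'' both. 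Pigeonholing over ``Kruskal-embedding witnesses'' is not a finite pigeonhole either. Kruskal can only be used as in the paper: assume $\ldepthi{\evils}$ is unbounded, use finite branching (Lemma~\ref{lem:finbranch}) to put the suricatas on one branch, and extract infinite subsequences. Then $\depthf F$ is shown to exist rather than computed. Your instantiation of the repetition is also off. A label created during the saturation of phase $j$ does not occur in your seed $\sq G'_j$, and it has a future in every later lifting result. Also, the suricata that the leaf/$\TRI$ conditions refer to is that of the current layer, $s_i$, not $s_k$.

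More importantly, even recast as a contradiction argument, your decomposition of the path into saturation segments misses one of the two ways the suricata can get deeper. A branch can come to contain several futures of the same label. Cluster $\IMP$-lifting (Construction~\ref{def:imp-lifting}) puts $C'$, the suricata and $C''$ (two copies of $C_x$) on one branch, and this repeats from phase to phase. So clusters on the path are not attributable to distinct phases, and no constant factor absorbs this growth. The paper handles it by splitting on the relative depth $\lreldepth{s_i}{s_j}$ of successive suricatas:
\begin{itemize}
\item If it is $\geq 1$ infinitely often, the paper passes to a subsequence with relative depth $\geq 2$. It then obtains a $\BOX$-repetition pair or (via Kruskal) an $\IMP$-repetition pair, which by Step~\ref{l:imploop} of Algorithm~\ref{alg:liftsat} would have forced relative depth $1$.
\item If it is eventually $\leq 0$, the growth must come from some label $x$ acquiring unboundedly many futures in one $\lbeq$-class. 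Kruskal on the suricata trees then yields two such futures forming a $\TRI$-repetition pair with a common seed, which Step~\ref{l:substloop} would have collapsed.
\end{itemize}
Your proposal has no counterpart of this second case, and that is the real missing idea.
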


\begin{proof}
  By way of contradiction, assume that there is no such $\depthf{F}$, that is, for every $d\in\Nat$, there is $i\ge 0$ such that $\ldepthi{\evils} > d$.
  This means that $\evils$ is infinite and also has an infinite subsequence $\evils'$ in which the depth of the suricata increases at each step.  By Lemma~\ref{lem:finbranch} $L'_i$ is finitely branching. This means, that we can find a subsequence $\evils''$ of $\evils'$ such that additionally all suricata labels are on the same branch. 
  
  There are two cases to consider: 
  \begin{enumerate}
  \item There is an infinite sequence of natural numbers $0=k_0<k_1<k_2<k_3<\cdots$, such that for every $i>0$ we have that $k_{i+1}$ is the smallest number with
  \begin{equation}
    \label{eq:depinc0p}
    \lreldepthij{k_i}{k_{i+1}}{\evils}\ge1
    \quad.
  \end{equation}
  i.e., there is a subsequence of $\evils$ such that the relative depth between the suricatas of two consecutive steps is always positive. Next, we let $h_i=k_{2i}$ for all $i>0$. Then we have 
  \begin{equation}
    \label{eq:depinc0}
    \lreldepthij{h_i}{h_{i+1}}{\evils}\ge2
    \quad.
  \end{equation}
  Now  let $\lb{s_i}$ be the suricata 
  label of $L'_{h_i}(\evils)$, 
  for every $i>0$. 
  By the Pigeonhole Principle, among those suricatas must be either infinitely many $\BOX$-suricatas or infinitely many $\IMP$-suricatas.
   \begin{itemize}
  \item In the first case, we will find by Lemma~\ref{lem:eqlabels} 
  two numbers $i<j$ such that there is a $\BOX$-repetition pair (Def.~\ref{def:boxloop}) between $\lb{s_j}$ and a future of $\lb{s_i}$, which by  Algorithm~\ref{alg:liftsat}\ref{l:substloop} would give us $\lreldepthij ij\evils=\lreldepth{s_i}{s_j}=1$, contradicting~\eqref{eq:depinc0}. 
\item 
  In the second case, we have an infinite subsequence of trees 
  \begin{equation}
    \treeof{s_0}  
      \;,\;
    \treeof{s_1}\;,\;
    \treeof{s_2}\;,\;
    \ldots
  \end{equation}
  where each $\lb{s_i}$ is a $\IMP$-suricata and where $\treeof{s_i}$ is a subtree of $L'_{h_i}$ in $\sq G'_{h_i}$ for every $i>0$.
  We can now apply Lemma~\ref{lem:kruskal}, to obtain $i<j$ such that there is an embedding $\emb$ from   $\treeof{s_i}$ into $\treeof{s_j}$ (such that in particular $\embof{s_i} = \lb{s_j}$ because the $\rigsym$-formula is necessarily located at both $\lb{s_i}$ and~$\lb{s_j}$).
  This means that each element of the set $\futuresof{s_i}\cap L'_{h_{j}}(\evils)$ forms an $\IMP$-repetition pair (Def.~\ref{def:imploop}) with $\lb{s_j}$. 
  This means that by Algorithm~\ref{alg:liftsat}\ref{l:imploop} we have $\lreldepthij ij\evils=\lreldepth{s_i}{s_j}=1$, contradicting~\eqref{eq:depinc0}. 
   \end{itemize}
   This contradicts the existence of a subsequence with~\eqref{eq:depinc0p}.
 \item Consequently, there is some $N\in\Nat$ such that for all $N<i<j$ we have that
  \begin{equation}
    \label{eq:depinc0z}
    \lreldepthij ij{\evils}=\lreldepth{s_i}{s_j}\le0
    \quad,
  \end{equation}
  which remains true for all for all $N<i<j$ with $\sq G'_i,\sq G'_j\in \evils''$. Since $\ldepthi{\evils}=\ldepthof{s_i}$ is growing at each step and the sequents are all nice, we must have that the number of futures of certain labels are increasing. In other words, for every $m\in\Nat$, we have some label $\lb{x_i}\in\labelsofx{\sq G'_i(\evils)}$, such that $\accsqx[\sq G'_i]{x_i}{s_i}$ and $\sizeof{\futuresof {x_i}\cap L'_j(\evils)}>m$. Now observe that we cannot have an infinite subsequence $N<k_1<k_2<k_3<\cdots$, such that for all $j>0$:
  \begin{equation}
    \label{eq:xminc}
    \accsqx[\sq G'_{k_j}]{x_{k_j}}{s_{k_j}}
    \qquand
    \sizeof{\futuresof {x_{k_j}}\cap L'_{k_{j+1}}(\evils)}>1
    \qquand
    \forall i<j.\;\lb{x_{k_j}}\notin\futuresof{x_{k_i}}
  \end{equation}
  as this would contradict~\eqref{eq:depinc0z}.
  Hence, we can find some $\lb x \in L'_i$ for some $N<i$ whose number of futures is growing unboundedly. More precisely, we can find an infinite sequence $i<k_1<k_2<k_3<\cdots$ such that  
  \begin{equation}\label{eq:futinc1}
    \sizeof{\futuresof x\cap L'_{k_1}(\evils)}
    \;<\;
    \sizeof{\futuresof x\cap L'_{k_2}(\evils)}
    \;<\;
    \sizeof{\futuresof x\cap L'_{k_3}(\evils)}
    \;<\;
    \cdots
  \end{equation}
  Since by Lemma~\ref{lem:eqlabels}, the number of equivalence classes of $\lbeq$ for labels is bounded, we can also assume that there is an equivalence class $[\lb y]=\set{\lb{y'}\mid \eqlab{y'}{y}}$ such that we have a subsequence of natural numbers $0<h_1<h_2<h_3<\cdots$, such that
  \begin{equation}\label{eq:futinc1p}
    2\;\le\;\sizeof{\futuresof x\cap[\lb y]\cap L'_{h_1}(\evils)}
    \;<\;
    \sizeof{\futuresof x\cap[\lb y]\cap L'_{h_2}(\evils)}
    \;<\;
    \sizeof{\futuresof x\cap[\lb y]\cap L'_{h_3}(\evils)}
    \;<\;
    \cdots
  \end{equation}
  is strictly increasing at each step and more precisley that each $h_{j+1}$ is the smallest number such that
  \begin{equation}\label{eq:futinc2}
  \sizeof{\futuresof x\cap[\lb y]\cap L'_{h_j}(\evils)}<
  \sizeof{\futuresof x\cap[\lb y]\cap L'_{h_{j+1}}(\evils)}
  \end{equation}
  Now  let $\lb{s_{h_j}}$ be the suricata
  label of $L'_{h_j}(\evils)$ in $\sq G'_{h_j}(\evils)$, for every $j>0$.
  In the sequence $\lb{s_{h_1}},\lb{s_{h_2}},\ldots$ there are either infinitely many $\BOX$-suricatas or infinitely many $\IMP$-suricatas. We then therefore pick a subsequence consisting of only $\BOX$-suricatas or only $\IMP$-suricatas. Thus, we now assume without loss of generality, the sequence $\lb{s_{h_1}},\lb{s_{h_2}},\ldots$ either contains only $\BOX$-suricatas or only $\IMP$-suricatas.
  In both cases we have a sequence of trees (Def.~\ref{def:tree-of-label})
  \begin{equation}
    \treeof{s_{h_1}}\;,\;
    \treeof{s_{h_2}}\;,\;
    \treeof{s_{h_3}}\;,\;
    \ldots
  \end{equation}
  By Lemma~\ref{lem:kruskal}, there are $i<j$ such that there is an embedding from   $\treeof{s_{h_i}}$ into $\treeof{s_{h_j}}$. Since we have by~\eqref{eq:futinc1} that $\sizeof{\futuresof x\cap[\lb y]\cap L'_{h_i}(\evils)}<\sizeof{\futuresof x\cap[\lb y]\cap L'_{h_{j}}(\evils)}$ there must be some $\lb{x_i}\in \futuresof x\cap[\lb y]\cap L'_{h_i}(\evils)$ with $\sizeof{\futuresof {x_i}\cap[\lb y]\cap L'_{h_{j}}(\evils)}\ge 2$. Now, any two distinct elements of that set
  form a triangle repetition pair (Def.~\ref{def:triangleloop}). Observe that in the case of all $\lb{s_{h_i}}$ being $\BOX$-suricatas, all $\treeof{s_{h_i}}$ are singletons, and therefore the triangle repetition condition is fulfilled in both cases.
  This means that by Algorithm~\ref{alg:liftsat}\ref{l:substloop} we have 
  $\sizeof{\futuresof x\cap[\lb y]\cap L_{h_{j}}(\evils)} \ge \sizeof{\futuresof x\cap[\lb y]\cap L'_{h_{j+1}}(\evils)}$,
  contradicting \eqref{eq:futinc2}, 
  because $\sizeof{\futuresof x\cap[\lb y]\cap L'_{h_{j}}(\evils)} = \sizeof{\futuresof x\cap[\lb y]\cap L_{h_{j}}(\evils)}$ in any case.
  Therefore, we cannot have our infinite sequence with~\eqref{eq:futinc1}.
  \end{enumerate}
  Consequently, $\ldepthi{\evils}$ cannot grow unboundedly and $\depthf{F}$ must exist.
\end{proof}

\begin{lemma}\label{lem:sur-leaf}
  Let $\fm F$ be a formula and let $\evils$ be an $\fm F$-evil sequence. Then there is a number $\heightf F\in\Nat$, such that for all $i\ge0$, we have $\ldepthi{\evils}+\lreldepthi{\evils}\le \heightf F$.
\end{lemma}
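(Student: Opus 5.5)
Since Lemma~\ref{lem:sur-depth} already bounds $\ldepthi{\evils}$ by $\depthf F$, it suffices to bound the height of the subtree below the suricata, i.e., to find a bound $B$ with $\lreldepthi{\evils}\le B$ for all $i$. We then take $\heightf F=\depthf F+B$. I would argue by contradiction, mirroring the structure of the proof of Lemma~\ref{lem:sur-depth}. Suppose $\lreldepthi{\evils}$ is unbounded. Then $\evils$ is infinite, and there is a subsequence along which the maximal distance from $\lb{s_i}$ to a leaf of $\treeof{s_i}$ strictly increases. Only $\IMP$-suricatas are relevant here, since for $\BOX$-suricatas $\treeof{s_i}$ is a singleton (the suricata $\lb\hy$ in Construction~\ref{def:box-lifting} has no children), so $\lreldepthi{\evils}=0$. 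By Lemma~\ref{lem:finbranch} (finite branching) and K\"onig-style thinning, I would pass to a further subsequence in which the deepest branches below the suricatas are ``aligned'', i.e., project to a common $\le$-chain of clusters in earlier layers.

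The key observation is where new depth below the suricata can come from. In a lifting step (Constructions~\ref{def:imp-lifting-singleton} and~\ref{def:imp-lifting}), the tree below $\lb{\hx_0}$ is a copy of the tree below $\lb{x_0}$ in $L_i$, possibly enlarged by one level through cluster duplication. In the subsequent saturation step, by the Corollary after Lemma~\ref{lem:eqclusters}, a branch grows by at most $2^{\subf F+1}$ before a $\DIA$-repetition blocks it. Hence an unbounded $\lreldepthi{\evils}$ forces the leaf-branches below successive suricatas to be extended repeatedly through saturation in $L'_i\to L_i$, and then copied upward by lifting.

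I would then look at the leaves $\lb{y_i}$ of maximal depth in $\treeof{s_i}$, taken in $L'_{i}$ before saturation. By Lemma~\ref{lem:eqlabels}, infinitely many of them lie in one $\lbeq$-class. By Lemma~\ref{lem:eqclusters}, infinitely many lie in one cluster class as well. Applying Lemma~\ref{lem:kruskal} to the trees $\treeof{s_{i}}$ yields $i<j$ with an embedding $\treeof{s_i}\to\treeof{s_j}$, so $\lb{s_j}$ is a suricata repetition of $\lb{s_i}$ with seed $\sq G'_i$. Combined with $\lb{y_j}$ being a leaf repetition of $\lb{y_i}$ from the same seed, this gives a $\LEAF$-repetition pair $\tuple{\lb{x'},\lb{y_j}}$ with $\lb{x'}$ a future of $\lb{y_i}$ on the branch to $\lb{y_j}$. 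Algorithm~\ref{alg:liftsat}\ref{l:substloop} would then have collapsed this branch into a cluster, contradicting the assumed growth of depth. The auxiliary conditions should follow from the same thinning: that $\lb{s_j}$ does not lie between $\lb{x'}$ and $\lb{y_j}$ holds because we chose the deepest branches strictly below the suricata, and $\nfutsq{C_{y_i}}{C_{y_j}}$ is arranged by selecting the pair appropriately.

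\textbf{Main obstacle.} I expect the hardest part to be ensuring that the leaf repetition and the suricata repetition share a \emph{single} seed, and that a valid $\lb{x'}$ with $\futsq{y_i}{x'}$, $\accsq{x'}{y_j}$ and $\lb{x'}\notin\lbc{C_{y_j}}$ exists. This requires that the relative depth between the two suricatas is nonpositive eventually, as in case 2 of Lemma~\ref{lem:sur-depth}, so that the future of $\lb{y_i}$ sits strictly above $\lb{y_j}$ in the same layer. To get this, I would first reuse the dichotomy from Lemma~\ref{lem:sur-depth}: the case $\lreldepthij ij\evils\ge1$ infinitely often is already excluded there. The remaining case then places futures of old leaves on the branch to new leaves, which yields the required $\lb{x'}$.
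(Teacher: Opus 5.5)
Your reduction to bounding $\lreldepthi{\evils}$ is fine, and your leaf-repetition argument matches the first case of the paper's proof (a subsequence of deepest leaves with $\lreldepth{z_i}{z_j}>0$). The gap is that you never handle depth growth in which the new deepest leaf does \emph{not} lie strictly beyond the futures of the old one, i.e.\ where $\lreldepth{y_i}{y_{i+1}}=0$ from some point on.

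Your ``key observation'' that unbounded $\lreldepthi{\evils}$ forces repeated extension of leaf branches by saturation does not follow. As you note yourself, a cluster $\rig\IMP$-lifting (Construction~\ref{def:imp-lifting}) inserts extra levels (the copies $\hx'$, $\hx_0$, $\hx''$ of the cluster) above everything lying below $\lbc{C_x}$, and this can recur at every lifting step. Depth can therefore grow by inflating the interior of the branch between the suricata and the leaf, with the new deepest leaf itself being (in the cluster of) a $\le$-future of the old one. In that situation:
\begin{itemize}
\item a label $\lb{x'}$ with $\futsq{y_i}{x'}$, $\accsq{x'}{y_j}$ and $\lb{x'}\notin\lbc{C_{y_j}}$ need not exist;
\item the condition $\nfutsq{C_{y_i}}{C_{y_j}}$ can fail;
\item hence no leaf-repetition pair need exist, and step~\ref{l:substloop} of Algorithm~\ref{alg:liftsat} gives no contradiction.
\end{itemize}
Appealing to $\lreldepthij ij\evils\le 0$ cannot repair this. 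That fact concerns the suricatas and says nothing about where the futures of $\lb{y_i}$ sit relative to $\lb{y_j}$. In the paper it is used, together with $\ldepthof{z_1}>\depthf F$, only to guarantee that $\vissq{z_i}{s_j}$ never holds. That is what keeps $\lb{s_j}$ off the segment between $\lb{x'}$ and $\lb{z_j}$; your justification of that condition via ``deepest branches strictly below the suricata'' does not establish it.

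What is missing is the paper's second case. Suppose $\lreldepth{z_i}{z_{i+1}}=0$ eventually while $\ldepthof{z_i}$ keeps increasing. Then niceness forces some label $\lb{x_i}$ with $\accsq{s_i}{x_i}\srel{G}\lb{z_i}$ to acquire unboundedly many futures in later layers. You thin to a single $\lbeq$-class of such futures and apply Kruskal's theorem to the trees $\treeof{s_{h_i}}$ to obtain the suricata repetition. This yields two futures of the same label forming a triangle-repetition pair, so step~\ref{l:substloop} of Algorithm~\ref{alg:liftsat} contradicts the strict increase, exactly as in Case~2 of Lemma~\ref{lem:sur-depth}. The $\triangleloopr$-rule exists precisely to collapse this kind of interior growth. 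Without this argument your proof only bounds growth that happens beyond the old leaves.
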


\begin{proof}
  Observe that $\ldepthi{\evils}+\lreldepthi{\evils}$ is the length of the longest branch in $L'_i$ that contains the suricata $\lb{s_i}$, and that the lemma puts a bound on the branches of the layer trees that contain the suricata label.
  
  As in the previous proofs, we assume by contradiction, that no such $\heightf F$ exists, and therefore $\evils$ must be infinite, having a subsequence in which $\ldepthi{\evils}+\lreldepthi{\evils}$ increases at each step. 

  Let $0<k_1<k_2<k_3<\cdots$ be the indices of that subsequence. Let $\lb{s_i}$ the suricata of layer $L'_{k_i}(\evils)$ and let $\lb{z_i}\in L'_{k_i}(\evils)$ with $\lreldepthi{\evils}=\lreldepth{s_i}{z_i}$. In particular, each $\lb{z_i}$ is a leaf of $\treeof{s_i}$ and therefore also of $L'_{k_i}$.
  By Lemma~\ref{lem:finbranch}, we can select our subsequence such that $\vissq {z_i} {z_j}$ for all $i<j$. Note that then $\lreldepth{z_i}{z_{j}}<0$ is not possible.
  There are now two cases to consider:
  \begin{enumerate}
  \item We can select our subsequence such that
    \begin{equation}
      \label{eq:zizj}
      \lreldepth{z_i}{z_j}>0\qquad\mbox{for $i<j$}.
    \end{equation}
    Observe that by Lemma~\ref{lem:sur-depth}, we can also assume that our subsequence starts late enough such that we never have $\vissq {z_i} {s_j}$ for $i<j$ (otherwise we would have infinitely often \hbox{$\lreldepth{s_i}{s_j}>0$}, which has been ruled out in the first half of the proof of  Lemma~\ref{lem:sur-depth}). In particular, we can enforce that $\ldepthof{z_1}>\depthf{F}$.
    Furthermore, by Lemma~\ref{lem:eqclusters}, we can assume that the subsequence is chosen such that $\clusterof{z_i}\lbeq\clusterof{z_j}$ (which means in particular $\clusterof{z_i}\clsubsymb\clusterof{z_j}$) for all $i<j$. 
    As before, this gives us an infinite sequence of trees  $\treeof{s_1},\treeof{s_2},\treeof{s_3},\ldots$, such that we get by Lemma~\ref{lem:kruskal} an embedding from   $\treeof{s_i}$ into $\treeof{s_j}$ for some $i<j$. Therefore, all elements of $\futuresof{z_i}\cap L'_{k_{j}}(\evils)$ form a leaf-repetition pair with $\lb{z_j}$.
    (Observe that we have ensured that $\vissqx[\sq G'_j] {z_i} {z_j}$ and that for all $\lb{z'}\in \futuresof{z_i}$ we do \emph{not} have that $\lb{z'}\srelx{\sq G'_j}\lb {s_j}\srelx{\sq G'_j}\lb {z_j}$ because we do not have $\vissqx[\sq G'_j] {z_i} {s_j}$.) 
    But this means that because of  Algorithm~\ref{alg:liftsat}\ref{l:substloop} we should have $\lreldepth{z_i}{z_j}=0$, which is a contradiction.
  \item There is an $N\in\Nat$ such that
  \begin{equation}
    \label{eq:Nzi}
    \lreldepth{z_i}{z_{i+1}}=0 \qquad\mbox{for $N<i$}
  \end{equation}
  we proceed similarly to the case~2 in the proof of Lemma~\ref{lem:sur-depth}. Since $\ldepthi{\evils}+\lreldepthi{\evils}=\ldepthof{z_i}$ is growing at each step and the sequents are all nice, we must have that the number of futures of certain labels are increasing. In other words, for every $m\in\Nat$, we have $N<i<j$ and some label $\lb{x_i}\in\labelsofx{\sq G'_i(\evils)}$, such that $\accsqx[\sq G'_i]{s_i}{x_i}\srelx{\sq G'_i}\lb{z_i}$ and $\sizeof{\futuresof {x_i}\cap L'_j(\evils)}>m$. From here we proceed exactly as in case~2 in the proof of Lemma~\ref{lem:sur-depth}, the only difference being that $\accsqx[\sq G'_i]{s_i}{x_i}\srelx{\sq G'_i}\lb{z_i}$ instead of $\accsqx[\sq G'_i]{x_i}{s_i}$, equally making sure that the suricata does not interfere in the creation of the triangle repetition pair.
  \end{enumerate}
In both cases we get a contradiction to the existence of an infinite sequence with increasing $\ldepthof{z_i}$. Therefore the bound $\heightf F$ does exist.
\end{proof}

\begin{lemma}\label{lem:Lsize}
  Let $\fm F$ be a formula and let $\evils$ be an $\fm F$-evil sequence. Then there is a number $\maxlayerf F\in\Nat$, such that for all $i\ge0$, we have $\sizeof{L_{i}(\evils)}\le\maxlayerf F$.
\end{lemma}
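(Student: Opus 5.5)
Since $L_i(\evils)$ is a finite tree of clusters, bounding $\sizeof{L_i(\evils)}$ amounts to bounding three quantities uniformly in $i$: the size of each cluster, the branching of the cluster tree, and the length of every branch. The first two are already handled. By the argument of Lemma~\ref{lem:finbranch} (absence of twins after $\shrinkr$), a cluster has at most $2^{\subf F}$ labels and at most $4^{\subf F}$ children in the layer tree. A tree of clusters with branching at most $b=4^{\subf F}$, height at most $h$ and cluster size at most $c=2^{\subf F}$ has at most $c\cdot\sum_{k\le h}b^k$ labels. It therefore suffices to find a uniform bound $h$ on the height of the layer trees $L_i(\evils)$, and then set $\maxlayerf F$ to this expression.

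To bound the height, I would first treat $L'_i(\evils)$ and then the saturation step $L'_i\to L_i$. By the Corollary after Lemma~\ref{lem:eqclusters}, saturation lengthens each branch by at most $2^{\subf F+1}$, so a height bound for $L'_i$ transfers to $L_i$ up to that additive constant. For $L'_{i+1}$, which arises from $L_i$ by lifting, the same Corollary says branches grow only on the branch through the suricata label. The loop rules and $\shrinkr$ only identify labels or add cluster edges, so they never increase heights. Consider a branch of $L'_{i+1}$:
\begin{itemize}
\item if it contains the suricata, Lemma~\ref{lem:sur-leaf} bounds its length by $\heightf F$;
\item otherwise, it is the copy of a branch of $L_i$ under the lifting constructions, which have the same tree shape up to duplicating the cluster of the lifted label, and that branch is a branch of $L'_i$ extended by saturation.
\end{itemize}
This gives the recurrence $\mathrm{ht}(L'_{i+1})\le\max\bigl(\heightf F,\ \mathrm{ht}(L'_i)+2^{\subf F+1}\bigr)$, which by itself does not yield a uniform bound.

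The main obstacle is thus a non-suricata branch that survives many lifting steps while saturation extends it each time. My first attempt would be to argue that saturation cannot extend such a branch again. The copied labels carry exactly the input formulas of their originals. Their $\lef\DIA$-formulas were already made happy in $L_i$, and the copies of the witnesses are present in the lifted layer, so those formulas stay happy. Hence saturation in layer $L'_{i+1}$ can only create new labels below labels whose formula sets changed. Formula sets change only by propagation from the suricata, through $\lef\BOX$ along $\srel G$ or through new $\lef\IMP$-consequences at the suricata, and these affect only labels $\srel G$-reachable from the suricata. I therefore expect that every branch of $L_{i+1}$ avoiding $\treeof{s_{i+1}}$ has the same length as its pre-image in $L_i$. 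Every other branch passes through $\treeof{s_{i+1}}$, and its part from the root to the suricata's cluster has length at most $\depthf F$ by Lemma~\ref{lem:sur-depth}. Its remainder below the suricata is bounded by $\heightf F$ plus one saturation increment $2^{\subf F+1}$.

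By induction on $i$, every branch of $L_i(\evils)$ then has length at most $h=\heightf F+2^{\subf F+1}$, plus the constant height of $L_0$. The delicate point is justifying that $\lef\DIA$-happiness is preserved under lifting, in particular for cluster $\rig\IMP$-lifting, where the cluster is duplicated. I would check this case by case against Constructions~\ref{def:box-lifting}--\ref{def:imp-lifting}.
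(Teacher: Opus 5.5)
Your decomposition is the paper's: cluster size and branching come from the argument of Lemma~\ref{lem:finbranch}, and the height is at most $\heightf F+2^{\subf F+1}$. That height bound uses Lemma~\ref{lem:sur-leaf} for branches through the suricata, plus one saturation increment, under the claim that branches avoiding the suricata never grow. You rightly see that the recurrence alone gives nothing and that everything hinges on this claim; the paper states it in one sentence (``no new $\lef\DIA$-formulas can be added''). But the argument you give for it is wrong. The weak spot is not $\lef\DIA$-happiness, which is indeed inherited through the copied witnesses.

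The lifting constructions copy only the $\lefmark$-formulas of $L_i$; the only $\rigmark$-formula in $L'_{i+1}$ is the one at the new suricata. Hence any $\labelsb{u}{A\IMP B}$ that was made happy in $L_i$ by the left premise of $\lef\IMP$, i.e.\ by $\labelsw{u}{A}$, is unhappy again at its copy $\lb{\hat u}$. This holds for every $\lb u$ in the layer, not only for labels $R$-reachable from the suricata. The semi-saturation of the lifted layer then has a premise containing $\labelsb{\hat u}{B}$. Suppose $\fm B$ is, or has a conjunct, $\fm{\DIA C}$, and no $R$-successor of $\lb{\hat u}$ carries $\fmb C$. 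Then a $\bdiam$ step creates a new child of $\lb{\hat u}$; $\dialoopr$ cannot fire, since $\lb{\hat u}$ has a past. Likewise, new $\lef\BOX$-formulas at $\lb{\hat u}$ propagate to its descendants. So branches avoiding $\treeof{s_{i+1}}$ can become longer than their pre-images. This is exactly the mechanism of Kurucz's formula in Section~\ref{sec:agi-ex}, where the $\fmw a$ are lost and $\lef{\fm\Delta}$ is chosen instead.

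Moreover, a leaf created in one round can take the $\lef B$-branch in the next round and spawn a further child. So an off-suricata branch may a priori grow in every round, and your inductive step is invalid. A local ``nothing changes outside $\treeof{s_{i+1}}$'' argument therefore cannot work; you need a global argument for these branches too. One option is a repetition argument in the style of Lemma~\ref{lem:sur-leaf}, showing that $\leafloopr$ eventually cuts such growth. Its side condition only forbids $\lb{s_y}$ from lying between $\lb{x'}$ and $\lb y$, so it can fire off the suricata branch. The paper's one-line justification does not cover this case either, so citing it does not close the gap.
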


\begin{proof}
  Recall that $L'_{i}(\evils)$ and $L_{i}(\evils)$ are trees of clusters, and by Lemmas~\ref{lem:eqlabels} and~\ref{lem:eqclusters} and the `shrink' of twins
  in Algorithm~\ref{alg:liftsat}\ref{l:twin}, the number of immediate children 
  that a node 
  can have in these trees is bounded by $2^{\subf F+1}\times 2^{\subf F+1}=2^{2\subf F+2}$ (see proof of Lemma~\ref{lem:finbranch}).
  Next, observe that in the step from  $L'_{i}(\evils)$ to $L_{i}(\evils)$ (saturation step) a branch can only grow by at most $2^{\subf F+1}$, again because of Lemmas~\ref{lem:eqlabels}. 
  And in the step from $L_{i}(\evils)$ to $L'_{i+1}(\evils)$ (lifting step) a branch can only grow if the suricata label is on that branch, as otherwise no new $\lef\DIA$-formulas can be added,
  and all the existing ones are by definition happy.
  But by Lemma~\ref{lem:sur-leaf} the length of such a branch is bounded by $\heightf F$. 
  This gives the bound $(2^{2\subf F+2})^{\heightf F+2^{\subf F+1}}$ for the size of the tree of $L_{i}(\evils)$, and therefore  $\maxlayerf F=2^{2^{\subf F+1}}\times2^{(2\subf F+2)(\heightf F+2^{\subf F+1})}$ is the desired bound for  $\sizeof{L_{i}(\evils)}$.
\end{proof}

\begin{theorem}\label{thm:termination}
  The proof search algorithm (Algorithm~\ref{alg:proofsearch}) terminates.
\end{theorem}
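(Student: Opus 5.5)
We reduce termination of Algorithm~\ref{alg:proofsearch} to the non-existence of an infinite $\fm F$-evil sequence, and then use Lemma~\ref{lem:Lsize} to bound the number of possible layers up to equivalence, so that an infinite branch must contain a simulated layer.

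First we organise the run of the algorithm as a tree. Every execution of step~(5) replaces a sequent $\sq G_i$ by the finitely many sequents in $\satof{\liftssatof G_i}$. Both operations terminate: saturation by Lemma~\ref{lem:sat-term}, full lifting by Lemma~\ref{lem:lifting}. Hence the set of all sequents ever produced forms a finitely branching tree, as in Figure~\ref{fig:searchtree}. If the algorithm did not terminate, this tree would be infinite, and König's lemma would give an infinite branch
\[\sq G_0'\sseqto \sq G_0 \lseqto \sq G_1'\sseqto \sq G_1 \lseqto \cdots\]
of non-axiomatic nice sequents, where $\sq G_0'=\futs rr,\accs rr,\labelsw rF$. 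On this branch we have $\liftssatof{G_i}\neq\sq G_i$ at every step, since otherwise step~(4) would stop. So each step lifts at least one unhappy formula whose layer is not simulated.

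Next we turn this branch into an infinite $\fm F$-evil sequence by choosing the chain of layers $L_0<L_1<\cdots$. At stage $i$, the new layers created by $\liftof{G_i}$ are all lifted from layers of $\sq G_i$ that are not simulated. Their number is finite, bounded by the number of unhappy output formulas, and the layers of each sequent form a tree under $\le$ by niceness. So the lifted layers of all stages, ordered by "is lifted from", form a finitely branching tree, with a layer of one stage persisting (possibly saturated) into later stages. Since infinitely many liftings occur, König's lemma again yields an infinite $\le$-chain of layers $L_0'\subseteq L_0 < L_1'\subseteq L_1<\cdots$, where each $L_i$ is not simulated at the moment it is lifted. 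Here we must check that the loop rules and $\shrinkr$ only identify labels inside a layer. This holds because all of them substitute along $\rel$-edges or between siblings, so the layer structure is preserved and the chain is well defined. We reindex so that the chain witnesses Definition~\ref{def:evil}.

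Finally we derive a contradiction. By Lemma~\ref{lem:Lsize}, every $L_i$ has at most $\maxlayerf F$ labels. By Lemma~\ref{lem:eqlabels}, each label is one of at most $2^{\subf F+1}$ $\lbeq$-classes. So there are only finitely many isomorphism types of structures $(L_i,\srel{G},\lbeq\text{-class})$. By pigeonhole there are $i<j$ such that $L_i$ and $L_j$ have the same type, and $L_i<L_j$ holds in $\sq G_j$. The isomorphism, taken as a relation, is a layer simulation: it lies inside $\sim$ and satisfies ($\rn{S1}$) and ($\rn{S2}$) because it preserves $\srel G$ in both directions. Therefore $L_j$ is simulated, contradicting Definition~\ref{def:evil}.

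The main obstacle is this last comparison. Equivalence of labels is tested on formula content, and the content of $L_i$ can still change after stage $i$ through monotonicity, which propagates input formulas along $\le$, and through saturation. The remedy is to compare at the moment $L_j$ is considered for lifting, using the saturated, almost happy versions of both layers. We also need $L_i$'s labels to be unchanged by later steps: they have a future, and the loop rules and saturation act only on topmost layers. The second delicate point is the extraction of the single chain of layers, which needs the finite branching of liftings.
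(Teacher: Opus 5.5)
Your proposal is correct and follows the same route as the paper. Both phases of the algorithm terminate (Lemmas~\ref{lem:sat-term} and~\ref{lem:lifting}), so non-termination would yield an infinite $\fm F$-evil sequence; its layers are bounded by Lemma~\ref{lem:Lsize}, which forces some layer on the chain to be simulated, a contradiction. The only difference is that you make explicit what the paper's proof leaves implicit: the K\"onig's-lemma extraction of the branch and of the chain of layers, the pigeonhole over finitely many isomorphism types, and the observation that an isomorphism preserving $\srel G$ and $\lbeq$ is a layer simulation.
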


\begin{proof}
  First, observe that each step in Algorithm~\ref{alg:proofsearch} terminates, i.e., the computation of the saturation $\satof\cdot$ (Algorithm~\ref{alg:sat}) terminates (Lemma~\ref{lem:sat-term}), and the computation of the lifting $\liftsatof{(\cdot)}$ (Algorithm~\ref{alg:liftsat}) terminates (Lemma~\ref{lem:lifting}).

  Then, the only way for non-termination to occur would therefore be if there were an evil sequence that alternated between the steps of Algorithm~\ref{alg:proofsearch} infinitely many times.
  That would imply that the layers of the sequents within the evil sequence would be growing unboundedly.
  However, since by Lemma~\ref{lem:Lsize}, there is a bound for the size $\sizeof{L_{i}(\evils)}$ of a layer $L_{i}(\evils)$ of an evil sequence $\sigma$, there must be eventually a simulation (Def.~\ref{def:simul-layers}) which triggers termination at step~(4) of Algorithm~\ref{alg:proofsearch}. 
  Therefore, there is no infinite evil sequence $\evils$, and Algorithm~\ref{alg:proofsearch} must terminate eventually.
\end{proof}

\section{Completeness} 
\label{sec:countermodel}	

\begin{construction}\label{con:Gstar}
  Assume we initiate the algorithm with a formula~$\fm F$. If we terminate at Step~4, we have found a saturated sequent $\sq G_i$ such that $\whitef rF{G_\mathit{i}}$ and $\liftssatof G_i=\sq G_i$. This means that each topmost layer 
of $\sq G_i$  is either happy or  simulated. 
Thus,  for each  unhappy layer $L$ there is some happy inner layer $L'$ that  simulates $L$ via a simulation $\simul_L$. We now define $\sq G_i^\star$ to be obtained from $\sq G_i$ by adding a relational atom $\futs x{x'}$  whenever $\lb {x'}\simul_L \lb x$ for some unhappy layer $L$, and by closing the result under transitivity of $\le$. 
\end{construction}

\begin{lemma}\label{lem:counter-happy}
  The sequent  $\sq G_i^\star$ is happy.
\end{lemma}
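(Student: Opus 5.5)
To prove that $\sq G_i^\star$ is happy, I would split happiness into its two parts from Definition~\ref{def:happysequent}: \emph{label happiness} and \emph{structural happiness}. Throughout I use the fact, from Algorithm~\ref{alg:proofsearch}, that $\sq G_i$ is nice, almost happy and non-axiomatic, and that $\liftssatof G_i=\sq G_i$.

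\textbf{Step 1: labels in layers other than unhappy topmost ones.} Since $\sq G_i$ is almost happy and non-axiomatic, the only unhappy formulas are of the form $\labelsw x{A\IMP B}$ or $\labelsw x{\BOX A}$. Passing to $\sq G_i^\star$ adds only $\le$-atoms and never removes atoms. So every formula that was happy in $\sq G_i$ stays happy, provided its happiness condition is existential or concerns only $\rel$. For $\lef\BOX$ and $\rig\DIA$ the condition quantifies over $\rel$-successors, and no $\rel$-atoms were added, so these stay happy. For $\lef\IMP$ and $\lef\DIA$ happiness is local. For $\rig\IMP$ and $\rig\BOX$ adding $\le$-edges only helps.

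Now take an unhappy output formula at $\lb x$. By Lemma~\ref{lem:lifting} and the termination condition $\liftssatof G_i=\sq G_i$, the set $\unhappyof{G_i}$ is empty. So either $\layerof x$ is simulated, or the formula has already been made happy inside $\sq G_i$ through an existing lift. Here I must also check that the loop rules did not destroy witnesses: after $\boxloopr$ or $\imploopr$ the witness $\lb y$ is still $\vissq{x'}{y}$-reachable, which is exactly what those rules are designed to guarantee.

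\textbf{Step 2: labels in a simulated unhappy layer $L$ with simulating layer $L'$.} For $\lb x\in L$ pick $\lb{x'}\simul_L\lb x$; Proposition~\ref{prop:sim_bij} guarantees one exists. Since $\lb{x'}\lbeq\lb x$, an unhappy $\labelsw x{A\IMP B}$ gives $\labelsw {x'}{A\IMP B}$, which is happy because $L'$ is happy. Its witness $\lb{w}$ with $\futsq{x'}w$ becomes a witness for $\lb x$ after we add $\futs x{x'}$ and close under $\Ltr$. The $\rig\BOX$ case is the same, using $\futs x{x'}$ and then $\vissq{x'}{z}$. All other formulas at $\lb x$ were already happy and remain so by Step~1.

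\textbf{Step 3: structural happiness.} $(\Lref)$, $(\Rref)$ and $(\Rtr)$ are inherited from $\sq G_i$, and $(\Ltr)$ holds by construction. For $(\monl)$ with a new edge $\futs x{x'}$, $\lbeq$ transfers all input formulas. For composite edges $\futs w{x'}$ obtained by transitivity, I would use monotonicity of $\lb w$ into $\lb x$ in $\sq G_i$ and then into $\lb{x'}$.

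The main obstacle is $(\fone)$ and $(\ftwo)$ for the new edges. For $(\fone)$: given $\accsq xy$ and $\futs x{x'}$, clause (S1) yields $\lb{y'}\simul_L\lb y$ with $\accsq{x'}{y'}$, and the new edge $\futs y{y'}$ closes the square. For $(\ftwo)$: given $\accsq yx$ and $\futs x{x'}$, clause (S2) yields $\lb{y'}$ with $\accsq{y'}{x'}$ and $\futs y{y'}$.

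These squares must then be composed along the transitive closure, mixing new edges with old $\le$-edges. I would handle this by induction on the length of the $\le$-chain, using $(\fone)$ and $(\ftwo)$ of $\sq G_i$ for the old segments. The delicate case is when chains pass through several simulated layers, or through $L'$ and then into further layers above $L'$; here I would rely on structural happiness of $\sq G_i$ in the region above $L'$.
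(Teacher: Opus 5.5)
Your proposal is correct and follows essentially the same route as the paper. Formulas that are already happy stay happy because only $\le$-atoms are added. An unhappy output at $\lb x$ in a simulated layer takes over the witness of some $\lb{x'}\simul_L\lb x$ via $\futs x{x'}$, $\lb{x'}\lbeq\lb x$ and $(\Ltr)$. Structural happiness follows from $\rn{S1}$, $\rn{S2}$ and $\simul_L\subseteq\lbeq$ on the new edges. The ``delicate case'' you flag is not actually delicate: since $\rel$ is unchanged, the one-edge squares for $(\monl)$, $(\fone)$ and $(\ftwo)$ compose along any $\le$-chain, which is exactly why the paper only checks the generating edges. Likewise, your loop-rule remark in Step~1 is superfluous, because $U(\sq G_i)=\emptyset$ already places every unhappy formula of $\sq G_i$ in a simulated layer.
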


\begin{proof}
	We need to show that  $\sq G_i^\star$ is structurally happy and all its formulas are happy. Since no new labels were added, $(\Lref)$- and $(\Rref)$-structural happiness is preserved. $(\Rtr)$-structural happiness is preserved because no new $R$-links were added. $(\Ltr)$-structural happiness is explicitly enforced. To show that $(\monl)$-, $(\fone)$-, and $(\ftwo)$-structural happiness are preserved, it is sufficient to demonstrate them for each of the  $\le$-links added before the transitive closure. For the \mbox{$(\fone)$-,} and $(\ftwo)$-structural happiness these are exactly the simulation conditions $\rn{S1}$ and $\rn{S2}$ from Definition~\ref{def:simul-layers}. For the $(\monl)$-simulation this follows from the fact that $\simul_L \subseteq \lbeq$ for all  $L$, which means that whenever $\blackf{x}AG$ and $\futsq x{x'}$ because $\lb{x'} \simul_L \lb x$ for some $L$, we have $\lb x \lbeq \lb{x'}$ and, hence, $\blackf{x'}AG_i^\star$. This completes the proof that $\sq G_i^\star$ is structurally happy. Since no formulas or links were removed, all formulas happy in $\sq G_i$ remain happy in $\sq G_i^\star$ (here we again rely on the local formulation of happiness for $\blackf x{A \IMP B} G_i^\star$ in Def.~\ref{def:happyformula}). All formulas from the inner layers of saturated sequent $\sq G_i$ were happy, as were all formulas apart from possibly some $\fmw{F}$ of one of the two shapes $\fmw{A \IMP B}$ or $\fmw{\BOX A}$ in topmost layers.   For any such  unhappy $\whitef x{F}G_i$ from some topmost layer~$L$ of $\sq G_i$, there is an inner layer $L'$ and, by Prop.~\ref{prop:sim_bij}, there is a label $\lb{x'} \in L'$ such that $\lb{x'}\simul_L\lb x$, hence, $\futsq x{x'}$.  At the same time $\lb x \lbeq \lb{x'}$ so $\whitef{x'}{F}G_i$  is happy in its inner layer. If $\fmw{F}=\fmw{A\IMP B}$, we must have $\blackf{z}{A}G_i$, $\whitef{z}{B}G_i$, and $\futsqp{x'}{z}G_i$. Then $\futsq x z$ by $(\Ltr)$-structural happiness of $\sq G_i^\star$, which makes $\whitef{x}{F}G_i^\star$ happy. Finally, if $\fmw{F}=\fmw{\Box A}$, we must have $\whitef{z}{A}G_i$, $\futsqp{x'}{u}G_i$, and $\accsqp{u}{z}G_i$. Then $\futsq x u$ by $(\Ltr)$-structural happiness of $\sq G_i^\star$ and $\accsq{u}{z}$ by construction, which makes $\whitef{x}{F}\sq G_i^\star$ happy. 
\end{proof}

\begin{theorem}[restate = countermodel, name = ] 
	\label{thm:step4}
	If Algorithm~\ref{alg:proofsearch} terminates in Step~4, then
	formula~$\fm F$ is not a theorem of\/~$\ISfour$.
\end{theorem}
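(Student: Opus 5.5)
The plan is to build an explicit countermodel out of the sequent on which the algorithm stops, and then use soundness of $\ISfour$ with respect to birelational models. Suppose Algorithm~\ref{alg:proofsearch} terminates in Step~4 on a non-axiomatic, nice, almost happy sequent $\sq G_i$ with $\liftssatof G_i=\sq G_i$. I would pass to the sequent $\sq G_i^\star$ of Construction~\ref{con:Gstar}. By Lemma~\ref{lem:counter-happy}, $\sq G_i^\star$ is happy. Theorem~\ref{thm:completeness} then says that $\modelof{G_i^\star}$ is a birelational model, in which every input formula is forced at its label and every output formula fails at its label.

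The key steps, in order, are these.

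\textbf{Step 1.} Check that $\whitef rF{G_i^\star}$. The root formula $\labelsw rF$ is present in $\sq G(\fm F)$. None of the macro-rules used by the algorithm ($\satr$, $\bdiam$, $\liftr$, the loop rules and $\shrinkr$) ever removes a labelled formula; the cumulative rules always keep the principal formula. The only operations that rename labels are the substitutions in $\dialoopr$, $\leafloopr$, $\triangleloopr$ and $\shrinkr$. In each of these the label $\lb y$ being replaced has no future and lies in a topmost layer, or is a leaf or sibling label in a non-root layer. So the root label $\lb r$, which lives in the bottom layer and has futures as soon as any lifting occurs, is never substituted away. Hence $\labelsw rF$ survives in $\sq G_i$. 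Construction~\ref{con:Gstar} only adds $\le$-atoms, so it survives in $\sq G_i^\star$ as well. If no lifting ever happened, $\lb r$ is trivially still present.

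\textbf{Step 2.} Apply Theorem~\ref{thm:completeness} to get $\nforce{\modelof{G_i^\star}}rF$.

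\textbf{Step 3.} Check that the frame is one of the frames in Theorem~\ref{thm:plotkin}, i.e.\ that $\srel{G_i^\star}$ is reflexive and transitive. This is exactly the $(\Rref)$ and $(\Rtr)$ clauses of structural happiness, which hold because $\sq G_i^\star$ is happy. Then $\fm F$ is not valid in this frame, so by Theorem~\ref{thm:plotkin} $\fm F$ is not a theorem of $\ISfour$. As a byproduct, the countermodel is finite.

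The main obstacle is Step~1: I must argue carefully that the label $\lb r$ and its output formula are never destroyed by the unsound loop substitutions. If this turns out delicate, I would argue more weakly that some label carrying $\fmw F$ persists. The relevant facts are that substitution $\labelsubst yx$ moves every formula of $\lb y$ onto $\lb x$, and that $\lb x\lbeq\lb y$ or the substitution merges onto an equivalent label. So the set of labelled formulas, up to renaming, never loses $\fmw F$, and falsity of $\fm F$ at that label suffices. Everything else is immediate from Lemma~\ref{lem:counter-happy} and Theorem~\ref{thm:completeness}.
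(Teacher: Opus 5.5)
Your proposal is correct and follows the paper's proof: you pass to $\sq G_i^\star$, use Lemma~\ref{lem:counter-happy} and Theorem~\ref{thm:completeness} to falsify $\fm F$ at a label carrying $\fmw F$, and conclude with Theorem~\ref{thm:plotkin}; the paper leaves your explicit check of $(\Rref)$ and $(\Rtr)$ implicit. Your Step~1 is more careful than the paper's remark that no step removes a label, since the substitution steps do remove labels: your fallback observation that substitutions only rename, so some label always carries $\fmw F$, closes this cleanly, and your claims about the root layer (including the ``trivial'' no-lifting case) really rest on the fact that $\lb r$ only ever carries output formulas, so its layer stays $\set{\lb r}$ and no loop or $\shrinkr$ step can replace it.
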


\begin{proof}
  If the algorithm terminates in Step~4, then $\liftssatof G_i=\sq G_i$ for some non-axiomatic sequent $\sq G_i \in \sqset{S}_i$. By Lemma~\ref{lem:counter-happy}, the sequent $\sq G_i^\star$ constructed in Construction~\ref{con:Gstar} is happy. Furthermore, it is easy to see that the label $\lb r$, the only label in~$\sq G(\fm F)$ is still present in $\sq G_i$, and therefore also in $\sq G_i^\star$. Indeed, no step in the algorithm removes a label. Hence, we have $\whitef rFG_i^\star$. By Theorem~\ref{thm:completeness}, we have therefore that 
$\nforce{\mmodelof{\sq G_i^\star}}rF$. Therefore, by Theorem~\ref{thm:plotkin}, $\fm F$~is not a theorem in $\ISfour$.%
\end{proof}

\section{Soundness}
\label{sec:soundness}

\begin{definition}
  We define the following three proof systems: 
$$
\begin{array}{rcl}
  \labISloops&=&\set{
    \id,\lef\BOT,
    \satr, \bdiam,\liftrim, \liftrbox,
    \dialoopr,\leafloopr, \boxloopr, \imploopr,\triangleloopr,\shrinkr}\\
  \labISloop&=&\set{
    \id,\lef\BOT,
    \satr, \bdiam,\liftrim, \liftrbox,
    \dialoopr, \leafloopr, \boxloopr, \imploopr,\triangleloopr}\\
  \labISnoloop&=&\set{
    \id,\lef\BOT,
    \satr,\bdiam,\liftrim, \liftrbox}
\end{array}
$$
A \defn{nice} derivation in one of these systems is a derivation in which every occurring sequent is nice. A \defn{proof} is a derivation tree in which every leaf is axiomatic. 
\end{definition}

\begin{lemma}
  Let $\fm F$ be a formula, and let $\deri$ be a derivation in $\labISloops$ or $\labISloop$ or $\labISnoloop$ with conclusion   $\futs rr, \accs rr,\labelsw rF$. Then $\deri$ is nice.
\end{lemma}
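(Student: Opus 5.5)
The plan is an induction on the distance from the root: every sequent in $\deri$ is nice. Derivations are built bottom-up, so every premise of a rule instance is nice whenever its conclusion is.

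The base case is the endsequent $\futs rr, \accs rr,\labelsw rF$. It has a single label $\lb r$, hence a single layer and a single (singleton) cluster. Layeredness (Def.~\ref{def:lay-sq}) holds vacuously, because both of its conditions need two distinct labels. Structural happiness (Def.~\ref{def:struct-sound-seq}) holds because the reflexive atoms are present, there are no $\fmb{}$-formulas for $(\monl)$, and transitivity, $(\fone)$ and $(\ftwo)$ are satisfied trivially by $\lb r$ itself. The tree conditions (iii) and (iv) are trivial for one node.

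For the inductive step I go through the rules of the system one by one.
\begin{itemize}
\item The axioms $\id$ and $\lef\BOT$ have no premises, so there is nothing to show.
\item For $\satr$ I use Lemma~\ref{lem:satr-nice}.
\item For $\bdiam$ I use Lemma~\ref{lem:bdiam-nice}.
\item For $\dialoopr$ I use Lemma~\ref{lem:dialoop-nice}.
\item For $\liftrim$ and $\liftrbox$ I use Lemma~\ref{lem:liftr-nice}.
\item For $\leafloopr$, $\boxloopr$, $\imploopr$, $\triangleloopr$ and $\shrinkr$ I use the argument sketched in Lemma~\ref{lem:lifting-nice}.
\end{itemize}
If that argument is not precise enough, I will make it explicit in the same style as Lemma~\ref{lem:dialoop-nice}. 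All of these rules only add the cluster atoms $\trans{\interval{x'}y}$ and/or identify $\lb y$ with a label $\lb{x'}$ on the same $\rel$-branch of a topmost layer. The identified labels are $\lbeq$-equivalent, or in the $\boxloopr$/$\imploopr$ case no identification happens at all. So the layer structure is unchanged and $(\monl)$ survives. Collapsing a segment of an $\rel$-branch into a cluster keeps the tree of clusters a tree, and substituting a twin merges siblings, which again keeps it a tree.

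The main obstacle is the side conditions of the earlier lemmas. Lemma~\ref{lem:satr-nice} needs the modified labels to have no future, and Lemma~\ref{lem:bdiam-nice} needs $\lb x$ to have no future. To meet these, I will strengthen the induction hypothesis. In a derivation from the endsequent, every $\satr$ and $\bdiam$ instance acts only on topmost layers, because $\liftrim$ and $\liftrbox$ are applied only to almost-happy sequents and the inner layers stay happy afterwards.

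Strictly speaking, an arbitrary derivation could apply $\satr$ to an inner label. So the first thing I will try is to observe that $\satr$ never creates labels or relational atoms; it only adds formulas. Niceness talks only about relational atoms, except for $(\monl)$. Thus the real danger is a new $\fmb{}$-formula at a label that has a future, which could break $(\monl)$. I expect to handle this either by restricting to derivations in which such instances do not occur (as produced by the algorithm), or by appealing to the cumulative $\monl$-closure of the sequent.

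The remaining conditions, the loop rules' requirement that the pair lies in a topmost layer and that loops have the correct form, are guaranteed by the definitions of the repetition pairs, which require $\lb y$ to have no future.
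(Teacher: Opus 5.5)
\textbf{The side-condition gap.} Your plan is the paper's own. Its proof simply cites the rule-by-rule preservation lemmas; its citation list even omits Lemma~\ref{lem:bdiam-nice}, which you rightly include. You have also correctly spotted what that one-line proof passes over: the side conditions of Lemmas~\ref{lem:satr-nice} and~\ref{lem:bdiam-nice}. But your proposal does not close this gap, and for the statement as written it cannot be closed, because the lemma is false for arbitrary derivations.

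Your argument that inner layers ``stay happy'' does not help. The cumulative rules carry no unhappiness side condition, so $\satr$ and $\bdiam$ may still fire at inner labels. Concretely, take $\fm F=\fm{(a\OR b)\IMP(c\IMP d)}$:
\begin{itemize}
\item Apply $\liftrim$ to get $\lb s$ with $\labelsb s{a\OR b}$.
\item Keep the $\labelsb sa$-branch of $\lef\OR$.
\item Apply $\liftrim$ to $\labelsw s{c\IMP d}$, getting $\lb t$ with $\futs st$, $\labelsb t{a\OR b}$ and $\labelsb ta$.
\end{itemize}
Every sequent so far is nice. Yet a further $\satr$-instance applying $\lef\OR$ to $\labelsb s{a\OR b}$ has a premise containing $\labelsb sb$ but not $\labelsb tb$, so $(\monl)$ fails. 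Likewise, $\bdiam$ applied to a happy $\labelsb x{\DIA A}$ with $\futs x{\hat x}$, $\lb{\hat x}\neq\lb x$, adds $\accs xy$ for a fresh $\lb y$ whose only future is itself, so $(\fone)$ has no witness.

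Your second remedy, appealing to the $\monl$-closure of the conclusion, cannot work: that closure says nothing about an input formula newly added in the premise. Only your first remedy is viable, and it must become part of the statement. Restrict to derivations in which every $\satr$-instance modifies only labels without future and every $\bdiam$-instance has a principal label without future. Then check two things. First, Algorithm~\ref{alg:proofsearch}, run with a strategy as in Remark~\ref{rem:sat} that decomposes only unhappy formulas, produces such derivations. Second, the transformations of Section~\ref{sec:soundness}, which need niceness of the derivations they output, stay in this class (e.g.\ via condition~\ref{emb:nofut} of Definition~\ref{def:embedding} in Lemma~\ref{lem:embedding-b}).

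\textbf{The $\shrinkr$ case.} Your explicit argument for $\shrinkr$ is also insufficient. Twins are siblings, not labels on one $\rel$-branch, and they need not lie in a topmost layer. ``Merging siblings keeps a tree'' addresses only condition~(iv). Take $\fm F=\fm{\DIA\DIA c\IMP(d\IMP e)}$:
\begin{itemize}
\item After $\liftrim$ yields $\labelsb s{\DIA\DIA c}$, apply $\bdiam$ to it twice. This gives twins $\lb x,\lb y$, each with only $\fm{\DIA c}$ as input formula.
\item Apply $\bdiam$ once at each, creating $\accs xw$ and $\accs y{w'}$.
\item Lift at $\labelsw s{d\IMP e}$.
\end{itemize}
All these steps act only on labels without future. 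The $\shrinkr$-premise identifying $\lb y$ with $\lb x$ then contains $\accs x{w'}$ and $\futs x{\hat x}$. There is no $\lb u$ with $\futs{w'}u$ and $\accs{\hat x}u$, since only $\lb{\hat y}$ sees $\lb{\hat w'}$. So your restricted class must also constrain $\shrinkr$, e.g.\ to twins without future. That in turn has to be reconciled with Algorithm~\ref{alg:liftsat}, which searches for twins in the whole sequent.
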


\begin{proof}
  This follows immediately from Lemma~\ref{lem:satr-nice}, Lemma~\ref{lem:dialoop-nice} Lemma~\ref{lem:liftr-nice}, and Lemma~\ref{lem:lifting-nice}. 
\end{proof}

Now assume Algorithm~\ref{alg:proofsearch} terminates with Step~\ref{l:ax}. By Lemmas~\ref{lem:saturation} and~\ref{lem:lifting}, it follows that we have a nice proof of our formula $\fm F$ in~$\labISloops$. The purpose of this section is to transform this nice  $\labISloops$-proof into a $\labISnoloop$-proof. By Lemmas~\ref{lem:satr}, \ref{lem:bdia} and \ref{prop:liftr}, it then follows that we also have a proof of $\fm F$ in~$\labISfs$.

The first step is to remove all instances of the $\shrinkr$-rule.

\begin{lemma}[Shrink elimination]\label{lem:shrinkelim}
  Let $\der$ be a nice proof of a sequent $\sq G$ in $\labISloops$. Then there is a nice proof of $\sq G$ in $\labISloop$.
\end{lemma}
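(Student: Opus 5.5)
We eliminate the $\shrinkr$-instances one at a time, always taking a topmost one (an instance with no $\shrinkr$ above it), and show that the subproof above it can be rewritten into a $\shrinkr$-free nice proof of its conclusion. The engine is a \emph{label-duplication} (inverse substitution) lemma. Let $\lb x$ and $\lb y$ be twins in a nice sequent $\sq G$, and let $\der'$ be a nice $\labISloop$-proof of $\sq G\labelsubst{y}{x}$. Then we can build a nice $\labISloop$-proof of $\sq G$ whose height is not larger. In effect, this proves that the rule $\vlinf{}{}{\sq G}{\sq G\labelsubst{y}{x}}$ is hp-admissible in $\labISloop$ on nice sequents whenever $\lb x,\lb y$ are twins. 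With that in hand, each topmost $\shrinkr$ can be replaced, and induction on the number of $\shrinkr$-instances finishes the argument.

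The lemma is proved by induction on the height of $\der'$, with a case analysis on its last rule $\rr$. Being twins means $\lb x\lbeq\lb y$ and $\lb x,\lb y$ are siblings. So $\sq G$ is just $\sq G\labelsubst{y}{x}$ in which the subtree (or cluster member) at $\lb x$ has been duplicated. The duplicate $\lb y$ has the same formulas, the same $\rel$-predecessors, and its own copies of the $\rel$-successors and $\le$-futures; this is where niceness is used.

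The cases go as follows.
\begin{itemize}
\item Axioms $\id$ and $\lef\BOT$: these persist, because an axiomatic pair at $\lb x$ in the substituted sequent comes from one at $\lb x$ or at $\lb y$ in $\sq G$.
\item $\satr$: every step acting on $\lb x$ is mirrored by the same step on $\lb y$ (and on the corresponding copies in the duplicated subtree), so the twin relation is maintained in the premisses. This uses hp-admissibility of $\w$ to absorb any extra formulas.
\item $\bdiam$, $\liftrim$, $\liftrbox$: the fresh label or new layer created at $\lb x$ is copied for $\lb y$. For the lifting rules we lift $\sq G$ itself, whose new layer $\hL$ contains copies $\lb{\hx},\lb{\hy}$ that are again twins in the premise (or one of them is the suricata, handled by weakening). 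We then apply the induction hypothesis.
\item Loop rules $\dialoopr$, $\leafloopr$, $\boxloopr$, $\imploopr$, $\triangleloopr$: we check that if $\tuple{\lb{x'},\lb v}$ is a repetition pair in $\sq G\labelsubst{y}{x}$, then a corresponding pair exists in $\sq G$. It is chosen on the branch through $\lb x$, or through $\lb y$ if the pair lies in the $\lb y$-copy. Equivalence classes, leaves, clusters, seeds and embeddings are all unaffected by duplicating a twin: the embedding conditions of Def.~\ref{def:t-embedding} only see $\lbeq$-classes and the cluster order, and duplication of a sibling subtree preserves embeddability of the seed tree. After applying the loop rule, the copy of the identified segment on the other branch remains a twin subtree, and we conclude by the induction hypothesis.
\end{itemize}

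The main obstacle is the loop-rule case. The side conditions mention the seed $\sq H$ earlier in the derivation, and the condition that the suricata $\lb{s_y}$ is not between $\lb{x'}$ and $\lb y$. So we must argue that duplicating $\lb y$ neither destroys a repetition pair nor makes the suricata condition fail. I would first strengthen the induction to carry along the whole branch: twins are duplicated consistently in every earlier sequent where $\lb x$ occurs, so seeds correspond as well. If this fails, the fallback is to simply keep $\lb y$ as a weakened, inert copy, i.e.\ to perform the loop only on the $\lb x$-branch and use hp-admissibility of $\w$ in the reverse direction, justified semantically by $\lb x\lbeq\lb y$.
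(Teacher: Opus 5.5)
Your route is the one the paper has in mind; its proof is a single sentence, an induction on height ``similar to the admissibility of weakening''. The step that carries it does not go through.

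\textbf{Where the induction breaks.} In the $\satr$ case you mirror every step at $\lb x$ on $\lb y$, ``so the twin relation is maintained''. For the branching rules $\lef\OR$, $\rig\AND$, $\lef\IMP$ this is false. Applying the rule at both twins yields two \emph{mixed} premisses in which $\lb x$ and $\lb y$ received different components. There $\lb x\not\lbeq\lb y$, so your induction hypothesis (stated for twins) does not apply. Nor does $\w$ help, since it only deletes material.

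Your picture of $\sq G$ as ``$\sq G\labelsubst yx$ with the subtree at $\lb x$ duplicated'' hides this. Twins share only their own formulas, not their subtrees or futures. So in $\sq G\labelsubst yx$ the merged $\lb x$ sees successors that in $\sq G$ only $\lb y$ sees. A rule at the merged $\lb x$ aimed at them must be simulated at $\lb y$, which on a mixed premiss lacks the needed formula.

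The same problem affects the other cases:
\begin{itemize}
\item In the suricata subcase, ``handled by weakening'' fails: restoring twinhood would require \emph{adding} $\fmb A,\fmw B$ to the other copy.
\item In the loop case, your semantic fallback is unavailable, because the loop rules are unsound.
\end{itemize}

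\textbf{The duplication lemma is false.} Let $\sq G$ have:
\begin{itemize}
\item labels $\lb z,\lb x,\lb y,\lb t,\lb{t'}$ and only reflexive $\le$-atoms;
\item $R$ the reflexive--transitive closure of $\accs zx,\accs zy,\accs xt,\accs y{t'}$;
\item formulas $\labelsb x{(\DIA d\IMP\BOX a)}$, $\labelsb y{(\DIA d\IMP\BOX a)}$, $\labelsw ta$, $\labelsb{t'}d$.
\end{itemize}
This sequent is nice, and $\lb x,\lb y$ are twins: immediate children of $\lb z$ with the same formulas.

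In $\sq G\labelsubst yx$ the label $\lb x$ sees both $\lb t$ and $\lb{t'}$, and one $\satr$ step proves it. Apply $\lef\IMP$; then $\rig\DIA$ towards $\lb{t'}$ on the left and $\lef\BOX$ towards $\lb t$ on the right; close both with $\id$.

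But $\sq G$ itself is falsified by the model on these five worlds with $\le$ the identity, $V(\lb y)=\{a\}$, $V(\lb{t'})=\{a,d\}$, and all other valuations empty. This is exactly your mixed premiss with $\labelsw x{(\DIA d)}$ and $\labelsb y{(\BOX a)}$.

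No input $\DIA$, no output $\IMP$ or $\BOX$, and no lifted layer ever occurs above $\sq G$. Hence every $\labISloop$-derivation of $\sq G$ consists of sound $\satr$ steps and axioms. So $\sq G$ has a nice $\labISloops$-proof but no $\labISloop$-proof.

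The lemma therefore fails for arbitrary nice proofs, and the paper's one-line induction has the same hole. A correct version would have to use invariants of the particular sequents on which Algorithm~\ref{alg:liftsat} applies $\shrinkr$, or handle $\shrinkr$ differently. Neither argument does this.
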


\begin{proof}
  This is a straightforward induction on the height of the derivation, similar to the admissibility of weakening, observing that the reduction preserves ``niceness''.
\end{proof}

From now on, we write $\loopr$ for any instance of one of the rules
$\dialoopr$, $\leafloopr$, $\imploopr$, $\boxloopr$, $\triangleloopr$.
And we are going to show how to eliminate every instance of a $\loopr$-rule from a derivation~$\deric$ in $\labISloop$, which will result in a derivation in $\labISnoloop$. 

\subsection{Unfolding}
\label{sec:unfolding}

Observe that each $\loopr$-rule instance has as premise either  $\sq G \labelsubst{y}{x}\cup\trans{\interval {x}y}$ or $\sq G\cup\trans{\interval {x}y}$, when the conclusion is $\sq G$.
Recall that in this case we must have $\accsq xy$ and that we have defined $\interval xy=\set{\lb v\mid \accsq xv \mbox{ and } \accsq vy \mbox{ and } \lb v\neq\lb y}$ (see Notation~\ref{not:subst}). This creates a cluster, and in the following, we show how to unfold this cluster.

\begin{notation}
  Let $\sq G$ be a sequent with $\accsq xy$.
  For every $n\ge0$, we define the set $\interval xy^n=\set{\lb{v_i}\mid \lb v\in\interval xy\mbox{ and }1\le i\le n}$ (i.e., $\interval xy^0=\emptyset$). We now define $\sq G\interval xy^n$ to be the sequent obtained from~$\sq G$ by adding the following, where we always assume that  $\lb z\in\labelsof{G}\setminus\interval xy$ and $\lb u,\lb v\in\interval xy$:
  \begin{itemize}
  \item $\accs {u_i}{v_i}$ for every $i=1..n$ if $\accs uv\in\sq G$;
  \item $\accs {u}{v_i}$ for every $i=1..n$;
  \item $\accs {u_i}{v_j}$ for all $1\le i<j\le n$;
  \item $\accs z{v_i}$ for every  $i=1..n$ and $\lb v\in\interval xy$ if there is a $\lb u\in\interval xy$ with  $\accs zu\in\sq G$;
  \item $\accs {v_i}z$ for every  $i=1..n$ and $\lb v\in\interval xy$ if there is a $\lb u\in\interval xy$ with  $\accs uz\in\sq G$.
  \item $\labelsb {v_i}A$ for every $i=1..n$ if $\labelsb vA\in\sq G$;
  \item $\labelsw {v_i}A$ for every $i=1..n$ if $\labelsw vA\in\sq G$;
  \end{itemize}
  We will call the sequent $\sq G\interval xy^n$ the \defn{$n$-unfolding} of a sequent of shape $\sq G \labelsubst{y}{x}\cup\trans{\interval {x}y}$ or $\sq G\cup\trans{\interval {x}y}$.
  Observe that $\sq G\interval xy^0=\sq G$.
\end{notation}

\begin{notation}
  For a sequent $\sq G$, we define $\toplabelsof G$ to be the set of labels in a topmost layer, i.e., $\toplabelsof G=\set{x\in\labelsof G\mid\mbox{$x$ has no future in $\sq G$}}$. For two sequents $\sq G$ and $\sq H$, we write $\sq G\topeq\sq H$ iff there is a bijection $\emb\colon\toplabelsof G\to\toplabelsof H$, preserving the formulas and $\rel$, i.e., for all $x\in \toplabelsof G$ we have $\lb x\lbeq\lb{\embof x}$, and  for all $x,y\in \toplabelsof G$ we have $\accsq xy$ iff $\accsqq[H]{\embof x}{\embof y}$.
\end{notation}

\begin{lemma}
  The $\weakr$-rule is height-preserving admissible for $\labISnoloop$.
\end{lemma}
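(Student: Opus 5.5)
We prove, by induction on the height $n$ of a derivation $\deri$ in $\labISnoloop$ of $\sq G$ (a proof, i.e.\ all leaves axiomatic), that $\sq G,\sq H$ has a derivation of height at most $n$, where $\sq H$ is any set of relational atoms and labelled formulas such that $\sq G,\sq H$ is again nice. Keeping this niceness side condition is essential, because the macro-rules of $\labISnoloop$ are only defined, or only behave as intended, on nice sequents. Before starting, we rename all fresh labels occurring in $\deri$ (those introduced by $\bdiam$, $\liftrim$, $\liftrbox$) so that none of them occurs in $\sq H$. Renaming a fresh label is height-preserving, since every rule is closed under injective label substitution.

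In the base case, $\sq G$ is the conclusion of $\id$ or $\lef\BOT$. The witnessing formulas $\labelsb xa,\labelsw xa$ or $\labelsb x\BOT$ are still present in $\sq G,\sq H$, so $\sq G,\sq H$ is again axiomatic and is proved by the same rule.

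In the inductive step, the last rule of $\deri$ is one of $\satr$, $\bdiam$, $\liftrim$, $\liftrbox$. For $\satr$ we split the semi-saturation tree into its constituent rules $\lef\AND,\rig\AND,\lef\OR,\rig\OR,\lef\IMP,\lef\BOX,\rig\DIA$. These rules are cumulative and context-independent: each instance with conclusion $\sq K$ and premisses $\sq K_j$ is also an instance with conclusion $\sq K,\sq H$ and premisses $\sq K_j,\sq H$. We therefore add $\sq H$ throughout the tree and apply the induction hypothesis to each leaf, which gives a derivation of the same height. For $\bdiam$ we add $\sq H$ to the conclusion and to the premise. The premise must then also contain $\accs vy$ for every $\accs vx\in\sq H$, and we obtain it by applying the induction hypothesis with $\sq H$ extended by these atoms. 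The atoms do no harm, since they are exactly what the premise of $\bdiam$ on the larger sequent requires.

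The main obstacle is the lifting rules. The premise of $\liftrim$ or $\liftrbox$ is built from the whole layer $L$ of the principal label: all labels of $L$ are copied, together with their $\rel$-atoms, their $\lefmark$-formulas, and all $\le$-atoms ending in $L$. If $\sq H$ adds formulas, labels or relational atoms touching $L$, the lifting computed on $\sq G,\sq H$ therefore differs from the lifting computed on $\sq G$ and then weakened by $\sq H$. Our plan is to use the premise of the macro-rule on $\sq G,\sq H$, namely $\sq G,\sq H\cup\lliftf{(G,H)}{x}{A}$. This premise equals $\sq G\cup\lliftf{G}{x}{A}$ extended by a set $\sq H'$ consisting of $\sq H$, the copies of the new $\lefmark$-formulas and $\rel$-atoms of $\sq H$ inside $L$, and copies of the new labels together with their $\le$-atoms. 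The hard part is to check that $\sq H'$ has only this shape, so that the induction hypothesis applies to $\sq H'$. This holds if the copies of the new labels in $L$ can be attached to the lifted layer while keeping the tree-of-clusters structure. The resulting sequent is nice, since it is the premise of a lifting applied to the nice sequent $\sq G,\sq H$ (Lemma~\ref{lem:liftr-nice}). If it is not possible to keep the new labels of $\sq H$ out of the principal layer in general, we would restrict $\sq H$ to formulas and atoms that do not alter the layer structure, since this is the form in which weakening is used in the unfolding argument.
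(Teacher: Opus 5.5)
Your overall route (induction on the height, a case split on the last macro-rule, and, for $\bdiam$, enlarging the weakening set by the atoms $\accs vy$ with $\accs vx\in\sq H$) is what the paper's one-line proof (``by induction on the height'') intends. Your axiom, $\satr$ and $\bdiam$ cases follow the right pattern.

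\textbf{The lifting case is not proved.} You assert that the $\liftrbox$/$\liftrim$-premise computed on $\sq G,\sq H$ contains the one computed on $\sq G$. You then make this conditional and offer to restrict $\sq H$ instead. That fallback proves a weaker lemma, and the weaker lemma does not cover the use of $\weakr$ in Lemma~\ref{lem:unfold}. There $\weakr$ deletes (bottom-up) a whole copy of $\interval xy$. So $\sq H$ consists of new labels, with their formulas and $\rel$-atoms, inside a topmost layer, which is exactly a layer that later liftings in the smaller proof copy.

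What is missing is a monotonicity argument:
\begin{itemize}
\item Every clause of Constructions~\ref{def:box-lifting}, \ref{def:imp-lifting-singleton} and~\ref{def:imp-lifting} has the form ``add \dots\ whenever \dots\ is in $\sq G$''.
\item The layer $\layerof{x_0}$ computed in $\sq G,\sq H$ contains $\layerof{x_0}$ computed in $\sq G$.
\item Hence, after renaming each fresh $\lb{\hx_i}$ (and $\lb{\hy}$) of the old lifting to the corresponding copy in the new one, the new premise contains the old premise together with $\sq H$.
\item The induction hypothesis then applies to the difference, whatever its shape. Its niceness is just Lemma~\ref{lem:liftr-nice}, so your worry about attaching the new labels while keeping the tree of clusters does not arise.
\end{itemize}
One case needs real care and you do not mention it: $\sq H$ may put $\lb{x_0}$ into a non-singleton cluster. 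Then the singleton $\rig\IMP$-lifting of $\sq G$ must be embedded into the cluster $\rig\IMP$-lifting of $\sq G,\sq H$. A label that joined the cluster below $\lb{x_0}$ goes to a primed copy; one that joined above $\lb{x_0}$ goes to a double-primed copy.

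\textbf{The niceness side condition is not preserved by your inductive steps}, so the induction hypothesis cannot always be invoked.
\begin{itemize}
\item Suppose $\sq H$ gives a future $\lb{x'}\neq\lb x$ to a label $\lb x$ that is principal in a $\bdiam$-step of the given proof. Then the weakened premise violates $(\fone)$ for the new atom $\accs xy$. This is why Lemma~\ref{lem:bdiam-nice} requires $\lb x$ to have no future.
\item Likewise a weakened $\satr$-leaf can violate $(\monl)$, e.g.\ a $\lef\OR$-branch containing $\labelsb xA$ but not $\labelsb{x'}A$ (cf.\ Lemma~\ref{lem:satr-nice}).
\end{itemize}
If you drop niceness instead, you must read the macro-rules as the operations given by their defining clauses. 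That means ignoring the preconditions stated in the constructions (almost happy conclusion, unhappy principal formula). Under those literal preconditions the lemma fails. For instance, $\futs rr,\accs rr,\labelsb r{\BOX a},\labelsb ra,\labelsw r{\BOX a}$ is proved by $\liftrbox$, $\satr$, $\id$. After adding $\labelsb r{\DIA b}$, a $\bdiam$ and a $\satr$ step are needed before any lifting.

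So you need to commit to one consistent reading. The liberal one, with arbitrary $\sq H$, is the one under which the paper's routine induction is meant to go through, and under it your $\satr$ and $\bdiam$ cases work as written. The alternative is a restriction on $\sq H$ that is preserved by every case, lifting included, and that still covers the uses in Lemma~\ref{lem:unfold}.
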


\begin{proof}
  By induction on the height of the proof.
\end{proof}

\begin{lemma}[Unfolding Lemma]
  \label{lem:unfold}
  Let $\sq G$ be a sequent with $\accsq xy$. If $\sq G \labelsubst{y}{x}\cup\trans{\interval {x}y}$ or $\sq G\cup\trans{\interval {x}y}$ has a proof $\derib$ in $\labISnoloop$, then there is an $n_0\ge0$ such that for every $n\ge n_0$, every sequent $\sq G'$ with $\sq G'\topeq\sq G\interval xy^n$ has a proof $\hderib$ in $\labISnoloop$, such that $\height\hderib\le\height\derib$. 
\end{lemma}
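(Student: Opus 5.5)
We will prove the Unfolding Lemma by induction on the height of the proof $\derib$ of the looped sequent $\sq K$, where $\sq K$ is either $\sq G\labelsubst{y}{x}\cup\trans{\interval xy}$ or $\sq G\cup\trans{\interval xy}$. As the threshold we take $n_0:=\height\derib$, or a fixed multiple of it. The intuition is that every rule instance of $\labISnoloop$ in $\derib$ uses, at the cluster $\interval xy$, only finitely many ``cyclic passages'' through the cluster. Each passage can be replaced by a step forward along the unfolded chain $\lb{v_1},\lb{v_2},\ldots,\lb{v_n}$ of copies. A proof of height $h$ can therefore consume at most $h$ copies. The invariant we carry along is the following. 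Given a proof of height $h$ of some sequent $\sq K'$ that contains a cluster $V$, and $n\ge h$, every sequent $\sq G'$ with $\sq G'\topeq\sq K'\interval{\cdot}{\cdot}^n$ (cluster $V$ unfolded $n$ times) has a proof of height at most $h$. Because of the $\topeq$ relation, we may match the topmost labels of $\sq G'$ with those of the $n$-unfolding only up to a formula- and $\rel$-preserving bijection. This is exactly what lets the statement absorb renamings and the label identifications $\labelsubst{y}{x}$.

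\textbf{Base case.} For the axiomatic leaves $\id$ and $\lef\BOT$: if $\sq K$ is axiomatic at a label $\lb v$, then either $\lb v$ or its copy is present in $\sq G'$ with the same formulas, since the copies $\lb{v_i}$ carry the formulas of $\lb v$ and the bijection preserves $\lbeq$. Hence $\sq G'$ is axiomatic as well.

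\textbf{Inductive step.} We case on the last rule $\rr$ of $\derib$.
\begin{itemize}
\item If $\rr$ acts only on labels outside the cluster, or on non-topmost layers, we apply the same rule to $\sq G'$ and use the induction hypothesis on the premise(s) with the same $n$.
\item For $\satr$ and $\lef\BOX$/$\rig\DIA$ steps, we apply the corresponding rule at every copy $\lb{v_i}$ as well. These rules are cumulative and local, so the resulting premises are again unfoldings of the premises of $\rr$ up to $\topeq$.
\item For $\bdiam$ applied at some $\lb v$ in the cluster, we apply it at the matching copy. The fresh label's incoming $\rel$-edges (all $\lb u$ with $\accs uv$) then match the edge pattern prescribed for unfolded labels.
\item For $\liftrim$/$\liftrbox$, the new layer lifts the whole current topmost layer, including the copies. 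The premise is then an unfolding of the lifted premise of $\rr$, where the lifted cluster is unfolded the same $n$ times. This uses the $\topeq$ clause, because only topmost labels have to correspond.
\end{itemize}
Height is preserved because each step is matched by one step; the weakening lemma above, which is hp-admissible for $\labISnoloop$, is used to discard the extra copies whenever they are not needed.

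\textbf{Expected main obstacle.} The hard case is when $\derib$ uses the cyclic edges of $\trans{\interval xy}$ in a way that the acyclic chain cannot reproduce at the same position. The typical instance is $\lef\BOX$ or $\rig\DIA$ along an edge $\accs vu$ pointing ``backwards'' in the cluster. In the unfolding, the only edges available from $\lb{v_i}$ go to $\lb{u_j}$ with $j>i$ (or $j=i$ when $\accs uv\in\sq G$). We handle this by always shifting to the next copy, which costs one index per rule application along a branch. To make this work, the induction hypothesis must be stated for arbitrary starting offsets, i.e.\ for a tail $\lb{v_k},\ldots,\lb{v_n}$ of the chain with $n-k\ge h$, rather than for the full unfolding. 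Checking that the premise after the shift is still $\topeq$ to an unfolding of the premise of $\rr$ is the delicate bookkeeping step. In particular, the suricata label must not lie inside the cluster, which is guaranteed by the side conditions of the loop rules (Remark~\ref{rem:suricata-in-loop}). This step also ensures that $n_0$ depends only on $\height\derib$.
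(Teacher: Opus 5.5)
Your overall strategy matches the paper's: induction on $\height\derib$, replaying each rule at the copies, shifting one copy forward for backward edges of $\lef\BOX$/$\rig\DIA$, and discarding unused copies by weakening. The $\satr$ case, however, has a genuine gap. You claim that applying a rule at every copy yields unfoldings of the premises of $\rr$. This fails for the branching rules $\lef\OR$, $\rig\AND$, $\lef\IMP$ at a label $\lb w\in\interval xy$. Applying $\lef\OR$ to $\labelsb{w}{A\OR B}$ at all copies produces exponentially many premises in which some copies of $\lb w$ carry $\fm A$ and others $\fm B$. An unfolding of either premise of $\rr$, however, requires all copies to carry the same formulas, so the induction hypothesis does not apply to these mixed premises.

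The missing idea is a pigeonhole argument. Let $n^1_0,n^2_0$ be the thresholds given by the induction hypothesis for the two premises, and take $n_0=2\max(n^1_0,n^2_0)-1$. Then every mixed premise contains at least $n^1_0$ copies that chose $\fm A$ or at least $n^2_0$ copies that chose $\fm B$. Weaken away every copy of $\interval xy$ that made the other choice; what remains is again an unfolding up to $\topeq$, so the induction hypothesis applies. Afterwards the weakenings must be permuted upwards and the block recompressed into a single $\satr$ to keep $\height{\hderib}\le\height\derib$. A whole $\satr$ block counts as one unit of height yet may contain many nested branchings. So your choice $n_0:=\height\derib$ (or a fixed multiple), and the slogan that each step consumes at most one copy, are not justified. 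The threshold has to be computed recursively from those supplied by the induction hypothesis.

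The $\liftrim$ case is also mishandled when $\lb w$ lies in a non-singleton cluster, and Remark~\ref{rem:suricata-in-loop} does not rule this case out. In $\derib$ the rule is then a cluster $\rig\IMP$-lifting (Construction~\ref{def:imp-lifting}). Its premise contains \emph{two} copies of the cluster, one seeing the new suricata and one seen by it, not a single lifted cluster. In $\sq G'$ the copies of $\lb w$ are singletons, so you must apply the singleton lifting at \emph{one} chosen copy; applying it at all copies would break height preservation. The lifted chain then splits at that copy, and its two parts must be matched separately with unfoldings of the two clusters. This requires the induction hypothesis for both clusters, $n_0=n_0'+n_0''+1$, with the lifting applied at copy number $n_0'+1$ and the remaining labels of that copy weakened away. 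In particular, your invariant that $n\ge h$ suffices does not propagate through this step: the premise needs roughly $h-1$ copies for each of two clusters, hence about $2(h-1)+1$ in the conclusion.

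A smaller point of the same kind concerns $\bdiam$ and $\liftrbox$ at a cluster label. The matching copy must be the last one, $\lb{w_n}$, and that $n$-th copy must then be weakened away. The reason is that its labels $\lb{u_n}$ see $\lb{w_n}$, and hence the new label, only when $\accs uw\in\sq G$.
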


\begin{proof}
  We proceed by induction on $\height\derib$ and make a case analysis on the bottom-most rule instance $\rr$ in $\derib$.
  \begin{itemize}
  \item $\rr$ is $\idr$ or $\lef\BOT$. Then we can apply the corresponding axiom rule to $\sq G\interval xy^0$, and we are done by taking $n_0=0$.
  \item $\rr$ is $\satr$. Then the last rule in $\derib$ has the following shape, for $k \geq 0$:
  $$
  \vliiinf{\satr}{}{\sq G}{\sq H_1 }{\cdots}{\sq H_k }
  $$
  Since $\sq G$ contains $\trans{\interval{x}{y}}$, each premise $\sq H_i$ also contains $\trans{\interval{x}{y}}$, and we can apply the induction hypothesis. 
  So, there are $n_0^1, \dots, n_0^k$ such that, for each $i \leq k$, it holds that any sequent $\sq H'_i \topeq \sq H_i \interval{x}{y}^{m_i} $ is derivable, for $m_i \geq n_0^i$. We need to find an $n_0$ such that any sequent $\sq G' \topeq \sq G\interval{x}{y}^n$ is derivable, for any $n \geq n_0$. 
  
  We first decompose the instance of $\satr$ in a subderivation $\deric$ consisting only of rules in $\set{\lef\AND,\rig\AND,\lef\OR,\rig\OR,\lef\IMP,\lef\BOX,\rig\DIA}$. 
  We find our $n_0$ and construct a proof of an arbitrary $\sq G'$ simultaneously, reasoning by induction on $\height{\deric}$, and distinguishing cases depending on the last rule applied in $\deric$. We use the same induction hypothesis as in the main lemma, only `restricted' to $\deric$. 
  	\begin{itemize}
  	\item Each leaf of $\deric$ is one of the sequents $\sq H'_i$ detailed above, derivable by assumption.  
  	\item The last rule in $\deric$ is an instance of $\lef\AND$, applied to a formula $\labelsb{w}{A \AND B}$ in $\sq G$. 
  	Let us call $\sq H$ the premise of $\lef\AND$. 
  	By induction hypothesis, there is a $n_0' $ such that any sequent $\sq H' \topeq \sq H \interval{x}{y}^n$ is derivable, for any $n \geq n_0'$.  We let $n_0 = n_0' $, and consider an arbitrary $\sq G' \topeq \sq G\interval{x}{y}^n$, for $n \geq n_0$. In case $\lb w\notin \interval{x}{y}$, we apply rule $\lef\AND$ to formula $\labelsb w{A \AND B} $ in $\sq G'$. The resulting sequent is derivable by induction hypothesis. Otherwise, if $\lb w \in \interval{x}{y}$, then $\sq G'$ contains $n$ labels $\labelsb{w_1}{A \AND B}, \cdots, \labelsb{w_n}{A \AND B} $. We apply $n$ instances of rule $\lef\AND$ to these formulas, obtaining a sequent which is derivable by induction hypothesis. 
  	\item The last rule in $\deric$ is an instance of $\rig\OR$. This case is similar to the previous one.
  	\item	The last rule in $\deric$ is an instance of $\lef\OR$, applied to formula $\labelsb{w}{A \OR B}$ in $\sq G$. 
  	Let $\sq H_1$ and $\sq H_2$ be the left- and right-most premises of $\lef\OR$, containing formulas $\labelsb{w}{A}$ and $\labelsb{w}{B}$ respectively. The induction hypothesis gives us $n^1_0, n^2_0 \geq 0$. In case $\lb w \notin \interval{x}{y}$, we take $n_0 = \max(n_1, n_2)$. To obtain a proof of any sequent $\sq G' \topeq \sq G\interval{x}{y}^n$, for any $n \geq n_0$, we apply rule $\lef\OR$ to formula  $ \labelsb w{A \OR B}$ in $\sq G'$. Both premisses are derivable by induction hypothesis. 
  	If instead $w \in \interval{x}{y}$, we take $n_0 = 2 \cdot \max(n^1_0, n^2_0) -1$. Consider an arbitrary $n\ge n_0$ and $\sq G'$ as defined as above. It contains $n$ `copies' of $\lb w$, that is, formulas $\labelsb{w_1}{A \OR B}, \dots, \labelsb{w_n}{A \OR B}$. We apply rule $\lef\OR$ to each of these formulas, obtaining $2^n$ premises, each still containing $n$ `copies' of $\lb w$, with formulas $\labelsb{w_1}{F_1}, \dots, \labelsb{w_n}{F_n}$, for $\fm{F_i} \in \{ \fm A, \fm B\}$. 
  	Moreover, because of the way $n$ was defined, every premise either contains at least $n^1_0$ copies of $\lb w$ labelling formula $\fm A$, or it contains at least $n^2_0$ copies of $\lb w$ labelling formula $\fm B$. 
  	In the former case, we apply $\w$ to delete every copy of $\interval{x}{y}$ in which (the copy of) $\lb w$ labels $B$. 
  	In the latter case, we proceed symmetrically. In either case, the resulting sequent is derivable by induction hypothesis. 
  	\item The last rule in $\deric$ is an instance of $\rig\AND$ or $\lef\IMP$. This is similar to the previous case.
  	\item The last rule in $\deric$ is an instance of $\vlinf{\lef\BOX}{}{\sq K, \accs w {k}, \labelsb w{\BOX A} }{\sq K, \accs wk,  \labelsb w{\BOX A}, \labelsb kA }$~.
  	Let $\sq G \labelsubst{y}{x}\cup\trans{\interval {x}y}$ or $\sq G\cup\trans{\interval {x}y}$ be the conclusion of $\lef\Box$. We consider the second case here, the first being similar. Then let  $\sq H\cup\trans{\interval {x}y}$ be the premise of $\lef\Box$.
  	We obtain $n_0'$ by inductive hypothesis. In case $\lb k \notin \interval{x}{y}$, we set $n_0 = n'_0$ and derive any sequent $\sq G' \topeq \sq G\interval{x}{y}^n$, for $n \geq n_0$ by applying $\lef\Box$ to some $\sq H' \topeq \sq H\interval{x}{y}^n$ that we have by induction hypothesis. Otherwise, if $\lb k \in \interval{x}{y}$, we distinguish the following cases: 
  	\begin{itemize}
  		\item If $\lb w \notin \interval{x}{y}$, or if $\lb w \in \interval{x}{y}$ and $\accsq{w}{k}$, then  we proceed as in the case of $\lef \AND$, by letting $n_0= n'_0$ and applying $n$ instances of rule $\lef\BOX$. 
  		\item If $\lb w \in \interval{x}{y}$ and it does \emph{not} hold that $\accsq{w}{k}$, we set $n_0 = n'_0 + 1$. To derive $\sq G'$, we apply rule $\lef\BOX$ to every pair of formulas $\accsqq[\hG]{w_i}{k_{i+1}}, \labels{w_i}{\BOX A}$, for $1 \leq i < n$. We thus apply $n-1 $ occurrences of rule $\lef\Box$. In the resulting sequent, every copy of $\lb{k}$, except the first one, labels formula $\fmb{A}$. We can then apply $\w$ (bottom-up) to 'delete' the first copy of $\interval{x}{y}$, giving us  $\sq H' \topeq \sq H\interval{x}{y}^{n-1}$, which is provable by induction hypothesis.
  	\end{itemize} 
  	\item The last rule in $\deric$ is an instance of $\rig\DIA$. This
  	is similar to the previous case.
  \end{itemize}
  This concludes our subproof. 
  We have constructed our $n_0$ and we have a proof $\hat \deric$ of a sequent $\sq G' \topeq \sq G \interval{x}{y}^n$, for any $n \geq n_0$. However, $\hat \deric$ is not our desired derivation: in particular, $\height{\hat \deric} \geq \height{\deric}$. The derivation $\hat \deric$ might contain instances of $\w$. We first permute all occurrences of $\w$ upwards in $\hat\deric$ as much as possible. We then obtain a subderivation $\hat \deric '$ which consists only of rules in $\set{\lef\AND,\rig\AND,\lef\OR,\rig\OR,\lef\IMP,\lef\BOX,\rig\DIA}$. We compress $\hat \deric'$ into a single instance of $\satr$ and, thanks to height-preserving admissibility of $\w$, we obtain a derivation of $\sq G' $ bounded by $\height{\derib}$.

  \item $\rr$ is $\bdiam$. Then last rule applied in $\derib$ has the following shape, where $\lb z$ is fresh:  
  $$
  \vlinf{\bdiam}{}{\sq K, \labelsb w{\DIA A} }{\sq K, \labelsb w{\DIA A} \cup \set{ \accs wz, \accs zz,\futs z,  \labelsb zA} \cup \set{ \accs v z \mid \accs v w \in \sq G} }
  $$
  Let us call $\sq H$ the premise of $\bdiam$, and $\sq G$ its conclusion. Since $\sq H$ also contains the interval $\interval xy$, we can apply the induction hypothesis to it, obtaining that there is an $n_0'$ such that, for every $n \geq n_0$, every sequent $\sq H'$ with $\sq H '\topeq \sq H$ is derivable. We distinguish two cases:
  \begin{itemize}
  \item If $\lb w\notin\interval xy$, we set $n_0 = n_0'$ and we consider an arbitrary sequent $\sq G'$ such that $\sq G' \topeq \sq G \interval{x}{y}^n$, for $n \geq n_0$. A proof of $\sq G'$ can be obtained by applying rule $\bdiam$ to formula $\labelsb{w}{\DIA A}$ in $\sq G'$. The premise of the rule is derivable by induction hypothesis. 
  \item 
    If $\lb w\in\interval xy$, we let $n_0=n_0'+1$, where $n_0'$ comes from the induction hypothesis. We have $n \geq n_0$ copies of $\lb w$ in $\sq G'\topeq\sq G\interval xy^n$, and we apply the rule $\bdiam$ to formula $\labelsb{w_{n}}{\DIA A}$, which is the `last' unfolding copy of $\lb w$ occurring in $\sq G\interval xy^n$. The resulting sequent $\sq H'$ contains a fresh label $\labelsb{z}{A}$, together with the relevant relational atoms. From $\sq H'$ we can with the  $\weakr$-rule obtain a sequent $\sq H''\topeq\sq H\interval xy^{n-1}$ (by deleting the $n$th copy of $\interval xy$ in $\sq H'$), which is provable in $\labISnoloop$ by induction hypothesis.
  \end{itemize}
  \item $\rr$ is  $\liftrim$. Let $\labelsw w{A\IMP B}$ be the formula to which the rule is applied.  There are two subcases:
    \begin{itemize}
    \item If $\lb w\notin\interval xy$, then we can apply the induction hypothesis to the premise of $\liftrim$, which is of shape $\sq H \labelsubst{y}{x}\cup\trans{\interval {x}y}$ or $\sq H\cup\trans{\interval {x}y}$ for some $\sq H$, as the cluster $\interval {x}y$ is not touched by the rule. This gives us $n_0$ and a proof for every $\sq H'\topeq \sq H\interval xy^{n}$ for $n\ge n_0$. A proof of $\sq G'\topeq\sq G\interval xy^n$ can be obtained by applying $\liftrim$, as its premise some $\sq H'\topeq \sq H\interval xy^{n}$.
   \item If $\lb w\in\interval xy$, then by Construction~\ref{def:imp-lifting}, the premise of  $\liftrim$ is of shape $\sq H\cup\trans{\interval{\hx'}{\hy'}}\cup\trans{\interval{\hx''}{\hy''}}$ where $\trans{\interval{\hx'}{\hy'}}$ and $\trans{\interval{\hx''}{\hy''}}$ are both copies of $\trans{\interval xy}$ (with the $\rigsym$-formulas removed) and we also have $\interval{\hx'}{\hy'}\srel{H}\lb\hw\srel{H}\interval{\hx''}{\hy''}$ where $\lb\hw$ is the suricata of the new layer with $\labelsb{\hw}A,\labelsw{\hw}B\in\sq H$. We can apply the induction hypothesis twice, giving us $n'_0,n''_0$ such that we have a proof of $\sq H'\topeq\sq H\interval{\hx'}{\hy'}^{n'}\interval{\hx''}{\hy''}^{n''}$ for every $n'\ge n'_0$ and $n''\ge n''_0$. We let $n_0=n'_0+n''_0+1$. Now assume we have $n\ge n_0$ and $\sq G'\topeq\sq G\interval xy^n$. We can apply $\liftrim$ to the $(n'_0+1)$th copy of $\lb w$ in $\interval xy^n$. Let the resulting premise be $\sq G''$, to which we apply the $\weakr$-rule to delete all other labels of the $(n'_0+1)$th copy of $\interval xy$, which results in some $\sq H'\topeq\sq H\interval{\hx'}{\hy'}^{n'_0}\interval{\hx''}{\hy''}^{n''}$ for some $n''\ge n''_0$.
    \end{itemize}
       \item  If $\rr$ is a $\liftrbox$. Let  $\labelsw w{\Box A}$ be the formula to which the rule is applied. Again, there are two subcases. 
       \begin{itemize}
       	\item  If $\lb w\notin\interval xy$, then we just apply the induction hypothesis, as in the case for $\liftrim$. 
       	\item  If $\lb w\in\interval xy$,  we let $n_0=n_0'+1$, where $n_0'$ comes from the induction hypothesis. We have $n \geq n_0$ copies of $\lb w$ in $\sq G'\topeq\sq G\interval xy^n$, and we apply the rule $\liftrbox$ to the formula $\labelsw{w_{n}}{\BOX A}$, which is the `last' unfolding copy of $\lb w$ occurring in $\sq G\interval xy^n$. The resulting sequent $\sq H'$ contains a fresh label $\labelsw{z}{A}$, together with the relevant relational atoms. From $\sq H'$ we can with the  $\weakr$-rule obtain a sequent $\sq H''\topeq\sq H\interval xy^{n-1}$ (by deleting the $n$th copy of $\interval xy$ in $\sq H'$), which is provable in $\labISnoloop$ by induction hypothesis.
          \qedhere
       \end{itemize}
  \end{itemize}
\end{proof}

In order to make use of this Unfolding Lemma, we have to produce a derivation which has a premise some $\sq G'\topeq\sq G\interval xy^n$ for any given $n\ge 0$. This is the purpose of the following section.


\begin{figure}[!t]
  $$
  \hskip-5em
  \scalebox{.85}{$
  \vlderivation{
    \vltrf{\mbox{$\deri$}}{\accs rr,\futs rr,\labelsw rF}{
      \vlhy{\sq H_1\;\ldots\;\sq H_{p-1}\hskip-.5em}}{
      \vlin{\dialoopr\!}{}{\sq G_1}{
        \vltr{\mbox{$\derib$}}{\sq G^\ast}{
          \vlhy{\sq H_p}}{
          \vlhy{\!\ldots\!}}{
          \vlhy{\sq H_q}}}}{
      \vlhy{\hskip-2em\sq H_{q+1}\;\ldots\;\sq H_k}}{1.3}
  }
  =\quad
  \vlderivation{
    \vltrf{\mbox{$\deri''$}}{\accs rr,\futs rr,\labelsw rF}{
      \vlhy{\sq H_1\;\ldots\;\sq H_{l-1}\hskip-5.5em}}{
      \vlin{\bdiam}{}{\sq G_0}{
        \vltrf{\mbox{$\deri'$}}{\sq G_0'}{
          \vlhy{\sq H_l\;\ldots\;\sq H_{p-1}\hskip-.5em}}{
          \vlin{\dialoopr}{}{\sq G_1}{
        \vltr{\mbox{$\derib$}}{\sq G^\ast}{
          \vlhy{\sq H_p}}{
          \vlhy{\!\ldots\!}}{
          \vlhy{\sq H_q}}}}{
          \vlhy{\hskip-2em\sq H_{q+1}\;\ldots\;\sq H_h}}{.5}}}{
      \vlhy{\hskip-6.5em\sq H_{h+1}\;\ldots\;\sq H_k}}{.5}
  }
  \qquad\to
  \vlstemheight=0pt
  \vlderivation{
    \vltrf{\mbox{$\deri''$}}{\accs rr,\futs rr,\labelsw rF}{
      \vlhy{\sq H_1\;\ldots\;\sq H_{l-1}\hskip-6em}}{
      \vlin{\bdiam}{}{\sq G_0}{
        \vltrf{\mbox{$\deri'$}}{\sq G_0'}{
          \vlhy{\sq H_l\;\ldots\;\sq H_{p-1}\hskip-6.5em}}{
          \vlin{\bdiam}{}{\sq G_1}{
            \vltrf{\mbox{$\deri_2'$}}{\sq G'_1}{
              \vlhy{\sq H_{l,2}\;\ldots\;\sq H_{p-1,2}\hskip-8em}}{
              \vlde{}{}{\sq G_2}{
                \vlde{}{}{\vdots}{
                  \vlin{\bdiam}{}{\sq G_{n-1}}{
                    \vltrf{\mbox{$\deri_n'$}}{\sq G'_{n-1}}{
                      \vlhy{\sq H_{l,n}\;\ldots\;\sq H_{p-1,n}\hskip-2em}}{
                      \vltr{\mbox{$\hat\derib$}}{\sq G_n}{
                        \vlhy{\sq K_1}}{
                        \vlhy{\!\ldots\!}}{
                        \vlhy{\sq K_m}}}{
                      \vlhy{\hskip-2em\sq H_{q+1,n}\;\ldots\;\sq H_{h,n}}}{.5}}}}}{
              \vlhy{\hskip-8em\sq H_{q+1,2}\;\ldots\;\sq H_{h,2}}}{.5}}}{
          \vlhy{\hskip-7.5em\sq H_{q+1}\;\ldots\;\sq H_h}}{.5}}}{
      \vlhy{\hskip-7em\sq H_{h+1}\;\ldots\;\sq H_k}}{.5}
  }
  $}
  $$
  \caption{Reduction of a $\dialoopr$-instance}
  \label{fig:dialoop-unfolding}
\end{figure}

\begin{figure}[!t]
  $$
  \hskip-5em
  \scalebox{.85}{$
  \vlderivation{
    \vltrf{\mbox{$\deri$}}{\accs rr,\futs rr,\labelsw rF}{
      \vlhy{\sq H_1\;\ldots\;\sq H_{p-1}\hskip-1.5em}}{
      \vlin{\liftr}{}{\sq G_1}{
        \vlin{\Xloopr}{}{\sq G'_1}{
        \vltr{\mbox{$\derib$}}{\sq G^\ast}{
          \vlhy{\sq H_p}}{
          \vlhy{\!\ldots\!}}{
          \vlhy{\sq H_q}}}}}{
      \vlhy{\hskip-2em\sq H_{q+1}\;\ldots\;\sq H_k}}{1.3}
  }
  \;=\; 
  \vlderivation{
    \vltrf{\mbox{$\deri''$}}{\accs rr,\futs rr,\labelsw rF}{
      \vlhy{\sq H_1\;\ldots\;\sq H_{l-1}\hskip-5.5em}}{
      \vlin{\liftr}{}{\sq G_0}{
        \vltrf{\mbox{$\deri'$}}{\sq G_0'}{
          \vlhy{\sq H_l\;\ldots\;\sq H_{p-1}\hskip-1.75em}}{
          \vlin{\liftr}{}{\sq G_1}{
            \vlin{\Xloopr}{}{\sq G'_1}{
              \vltr{\mbox{$\derib$}}{\sq G^\ast}{
                \vlhy{\sq H_p}}{
                \vlhy{\!\ldots\!}}{
                \vlhy{\sq H_q}}}}}{
          \vlhy{\hskip-2em\sq H_{q+1}\;\ldots\;\sq H_h}}{.5}}}{
      \vlhy{\hskip-6em\sq H_{h+1}\;\ldots\;\sq H_k}}{.5}
  }
  \quad\to\quad
  \vlstemheight=0pt
  \vlderivation{
    \vltrf{\mbox{$\deri''$}}{\accs rr,\futs rr,\labelsw rF}{
      \vlhy{\sq H_1\;\ldots\;\sq H_{l-1}\hskip-6em}}{
      \vlin{\liftr}{}{\sq G_0}{
        \vltrf{\mbox{$\deri'$}}{\sq G_0'}{
          \vlhy{\sq H_l\;\ldots\;\sq H_{q-1}\hskip-6.5em}}{
          \vlin{\liftr}{}{\sq G_1}{
            \vltrf{\mbox{$\deri_2'$}}{\sq G'_1}{
              \vlhy{\sq H_{l,2}\ldots\sq H_{p-1,2}\hskip-7.5em}}{
              \vlin{\liftr}{}{\sq G_2}{
                \vlde{}{}{\sq G_2'}{
                  \vlde{}{}{\vdots}{
                    \vltrf{\mbox{$\deri_n'$}}{\sq G'_{n-1}}{
                      \vlhy{\sq H_{l,n}\ldots\sq H_{p-1,n}\hskip-1.25em}}{
                      \vlin{\liftr}{}{\sq G_n}{
                        \vltr{\mbox{$\hat\derib$}}{\sq G'_n}{
                          \vlhy{\sq K_1}}{
                          \vlhy{\!\ldots\!}}{
                          \vlhy{\sq K_m}}}}{
                      \vlhy{\hskip-2em\sq H_{q+1,n}\ldots\sq H_{h,n}}}{.5}}}}}{
              \vlhy{\hskip-8.25em\sq H_{p+1,2}\ldots\sq H_{h,2}}}{.5}}}{
          \vlhy{\hskip-7.5em\sq H_{p+1}\;\ldots\;\sq H_h}}{.5}}}{
      \vlhy{\hskip-6.5em\sq H_{h+1}\;\ldots\;\sq H_k}}{.5}
  }
  $}
  $$
  \caption{Reduction of a $\Xloopr$-instance, where $\Xloopr$ is one of $\triangleloopr$, $\leafloopr$, $\boxloopr$, or $\imploopr$, and  where $\liftr$ is one of $\liftrim$ or $\liftrbox$.} 
  \label{fig:liftloop-unfolding}
\end{figure}

\subsection{Embeddings of Sequents and Derivations}
\label{sec:embedding}

\begin{notation}
  In $\rr_1$ and $\rr_2$ be two rule instances in a derivation $\deri$. We write $\rr_1\below\rr_2$ or $\rr_2\sabove\rr_1$ iff $\rr_1$ occurs below $\rr_2$ in the derivation tree of $\deri$.
\end{notation}

Assume that we have a $\loopr$-rule instance $\lr$ in a derivation $\deric$, whose conclusion is the nice sequent $\accs rr,\futs rr,\labelsw rF$,
as indicated on the left in Figures~\ref{fig:dialoop-unfolding} and~\ref{fig:liftloop-unfolding}, where $\sq H_1,\ldots,\sq H_k$ are the axiomatic premises of $\deric$.
  If $\lr$ is a $\dialoopr$-instance in a derivation $\deric$, then there is, by the $\DIA$-loop condition, a $\bdiam$-instance $\sr$ 
  with $\sr\below\lr$ such that $\sr$ creates the label $\lb x$ to which the $\dialoopr$-instance loops back. This is shown in the middle derivation in Figure~\ref{fig:dialoop-unfolding}.
We call the rule instance $\sr$ the \defn{seed-rule} of $\lr$.

For all other  $\loopr$-rule instances ($\triangleloopr$, $\leafloopr$, $\boxloopr$, or $\imploopr$), there must be an $\liftrim$- or $\liftrbox$-instance $\sr$ with $\sr\below\lr$, creating the layer that contains the suricata $\lb{s_x}$ that is mentioned in the corresponding loop condition. This is indicated in the middle derivation in Figure~\ref{fig:liftloop-unfolding}.
Again, we call this instance of  $\liftrim$ or $\liftrbox$ the \defn{seed-rule} of $\lr$.\footnote{In Definition~\ref{def:repetition} we used `seed' to denote a sequent, which is the premise of the seed-rule we defined here.}

In this section we show how to construct the derivation shown on the right of Figures~\ref{fig:dialoop-unfolding} and~\ref{fig:liftloop-unfolding}, where $\hderib$ and $n$ are given by Lemma~\ref{lem:unfold}.

\begin{notation}
  Let $\sq H\below\sq G$ and $\lb x,\lb y\in\labelsof G$ with $\layerof x\le\layerof y$ and also $\layerof x\subseteq \labelsof H$. We write $\emb\colon\layerxof xH\to\layerxof yG$ to express that we have a mapping from the layer-tree $\layerof x$ in $\sq H$ into the layer-tree $\layerof y$ in $\sq G$ (see Def.~\ref{def:t-embedding}). 
\end{notation}

\begin{definition}[$\emb$-repetition pair and $\Rloopr$ and $\Sloopr$]
  \label{def:eloop}
  Let $\sq G$ be a nice sequent and let $\lb{x'},\lb y\in\labelsof G$ with $\accsq{x'}y$ and $\lb{x'}\notin \lbc{C_y}$. We say that  $\tuple{\lb {x'},\lb y}$ is an \defn{$\emb$-repetition pair} if there is $\lb x\in \labelsof G$ with $\futsq x{x'}$ such that $\lb y$ is a repetition of $\lb x$ with some seed  $\sq H\psbelow\sq G$ (Def.~\ref{def:repetition}), such that there is an embedding $\emb\colon\layerxof xH\to\layerxof yG$ with $\embof{x}=\lb y$.
With this we define the following two $\loopr$-rules:
  \begin{equation}
    \label{eq:Rloop}
    \begin{array}{l}
    \vlinf{\Rloopr}{
      \proviso{where $\tuple{\lb {x'},\lb y}$ is a $\emb$-repetition pair in $\sq G$, and $\lb y\neq \lb{s_y}$}}{
      \sq G}{
      \sq G \labelsubst{y}{x'}\cup\trans{\interval {x'}y}
    }\\\\
    \vlinf{\Sloopr}{
      \proviso{where $\tuple{\lb {x'},\lb y}$ is a $\emb$-repetition pair in $\sq G$, and $\lb y= \lb{s_y}$}}{
      \sq G}{
      \sq G\cup\trans{\interval {x'}y}
    }
    \end{array}
  \end{equation}
\end{definition}

\begin{lemma}\label{lem:Rloop}
  The rules $\leafloopr$ and $\triangleloopr$ are instances of $\Rloopr$, and the rules $\imploopr$ and $\boxloopr$ are instances of $\Sloopr$.
\end{lemma}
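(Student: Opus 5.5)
This is a matter of unfolding definitions. For each of the four rules we take its side condition and produce the data required by Definition~\ref{def:eloop}: an $\lb x$ with $\futsq x{x'}$, a seed $\sq H\psbelow\sq G$ such that $\lb y$ is a repetition of $\lb x$ with seed $\sq H$, and an embedding $\emb\colon\layerxof xH\to\layerxof yG$ with $\embof{x}=\lb y$. We then check that the premise shapes agree. $\leafloopr$ and $\triangleloopr$ substitute $\lb y$ by $\lb{x'}$, so they should be $\Rloopr$ with $\lb y\neq\lb{s_y}$. $\boxloopr$ and $\imploopr$ only add $\trans{\interval{x'}y}$ and have $\lb y$ a suricata, so they should be $\Sloopr$ with $\lb y=\lb{s_y}$. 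In every case $\accsq{x'}y$ and $\lb{x'}\notin\lbc{C_y}$ are shared hypotheses.

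\textbf{Suricata rules.} Take $\boxloopr$. Here $\lb y$ is a $\BOX$-suricata repetition of $\lb x$, so $\lb y=\lb{s_y}$, and the repetition and its seed $\sq H$ come directly from Definition~\ref{def:repetition}. For the embedding, note that in the seed $\lb x$ is the suricata of a freshly $\liftrbox$-lifted layer, where it is an $\rel$-leaf. The natural map then sends the layer copy in $\sq H$ onto the corresponding copy in $\sq G$, with $\lb x\mapsto\lb y$. Checking that this map preserves $\lbeq$ on clusters and preserves $\rel$ is where Construction~\ref{def:box-lifting} is used. If the full layer is not needed, I would read $\layerxof xH$ as the tree of $\lb x$, which is the singleton $\{\lb x\}$; then $\emb(\lb x)=\lb y$ is trivially an embedding, because $\lb x\lbeq\lb y$ and reflexivity of $\rel$ is preserved.

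For $\imploopr$, the definition of $\IMP$-suricata repetition already supplies an embedding of $\treeof x$ into $\treeof y$, and it maps $\lb x$ to $\lb y$ by the remark after Definition~\ref{def:tree-of-label}. So here too $\lb y=\lb{s_y}$ and the rule is an instance of $\Sloopr$.

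\textbf{Leaf rule.} For $\leafloopr$, $\lb y$ is a leaf repetition of $\lb x$ with seed $\sq H$, and $\lb{s_y}$ is a suricata repetition of $\lb{s_x}$ with the same seed. The embedding $\emb'$ between the trees of the suricatas either comes with the definition or is the trivial one in the $\BOX$ case. I would compose or adjust $\emb'$ so that it sends $\lb x$ to $\lb y$, using that both are leaves with $\clsubG{C_x}H{C_y}G$. We also need $\lb y\neq\lb{s_y}$. This holds because $\lb{s_y}$ is not between $\lb{x'}$ and $\lb y$, while $\accsq{x'}y$; if $\lb y$ were $\lb{s_y}$, it would carry a $\rigsym$-formula and would fall under the $\Sloopr$ case instead.

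\textbf{Triangle rule.} $\triangleloopr$ is analogous: $\lb y$ is a repetition of $\lb x$ with the common seed, the suricata repetition gives the embedding, and the non-betweenness condition rules out $\lb y=\lb{s_y}$.

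\textbf{Main obstacle.} The delicate step is exhibiting an embedding with $\embof x=\lb y$ in the leaf and triangle cases, since the given embedding is rooted at the suricatas, not at $\lb x$. I would first try to obtain it by restricting the suricata embedding to the relevant subtree, using the cluster-equivalence condition to redirect $\lb x$ to $\lb y$. If that fails, I would define the embedding only on $\{\lb x\}$ as a trivial tree, reading the definition through Definition~\ref{def:tree-of-label}.
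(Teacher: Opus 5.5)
You handle the shared bookkeeping correctly: the witness $\lb x$ with $\futsq x{x'}$, the seed $\sq H$, and the split between $\lb y=\lb{s_y}$ and $\lb y\neq\lb{s_y}$. For the latter, in the leaf and triangle cases, say explicitly that you use reflexivity of $\rel$: if $\lb y=\lb{s_y}$, then $\lb{x'}\srel G\lb{s_y}\srel G\lb y$. Drop the circular remark about ``falling under the $\Sloopr$ case''.

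The one substantive requirement is an embedding $\emb\colon\layerxof xH\to\layerxof yG$ with $\embof x=\lb y$, and you never produce it. Definition~\ref{def:eloop} asks for an embedding of the \emph{whole layer} of $\lb x$ in the seed, not of $\treeof x$ or of $\set{\lb x}$. This matters because Construction~\ref{con:lay-to-seq} extends that map by the identity outside the layer, to get the sequent embedding needed in Lemma~\ref{lem:embedding-b}. That extension fails for a partial map. Concretely:
\begin{itemize}
\item Your fallback of embedding only $\set{\lb x}$ is not admissible.
\item Your $\imploopr$ case covers only $\treeof x$. It misses the other labels of the lifted layer, e.g.\ the $\rel$-predecessors $\lb{\hx_i}$ of the suricata in Constructions~\ref{def:imp-lifting-singleton} and~\ref{def:imp-lifting}.
\item In the $\boxloopr$ case, the ``natural map onto the corresponding copy'' is never specified.
\end{itemize}

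In the leaf and triangle cases the gap is more serious. The suricata repetition gives an embedding $\treeof{s_x}\to\treeof{s_y}$ sending $\lb{s_x}$ to $\lb{s_y}$. But $\lb x$ need not lie in $\treeof{s_x}$ at all, so ``restricting and redirecting'' that map yields nothing.

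The paper's argument has three ingredients you are missing:
\begin{itemize}
\item The bulk of the layer embedding comes from the monotonicity condition: right after a lifting step only the suricata carries a $\rigsym$-formula, and $\lefmark$-formulas are preserved.
\item On $\treeof x$ the map is built separately. For $\leafloopr$ this is trivial, since $\lb x$ and $\lb y$ are leaves with $\clsubG{C_x}H{C_y}G$. For $\triangleloopr$ it comes from the extra hypothesis $\futsq xy$, which you never use.
\item The hypothesis that we do not have $\lb{x'}\srel G\lb{s_y}\srel G\lb y$ is what keeps the part of the map on $\treeof x$ compatible with the part on $\treeof{s_x}$.
\end{itemize}
You use that last hypothesis only to get $\lb y\neq\lb{s_y}$. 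So the step of assembling a single layer embedding from these pieces is missing, and that step is where the lemma has its content.
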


\begin{proof}
  Since  $\layerof x$ in $\sq H$ and $\layerof y$ in $\sq G$ both come from a lifting step, $\lb{s_x}$ and $\lb{s_y}$ contain the only $\rigsym$-formulas in $\sq H$ and $\sq G$ respectively. In the cases of $\imploopr$ and $\boxloopr$ we immediately obtain the embedding $\emb$ from the monotonicity condition, as $\lefmark$-formulas are preserved in the premise.
  In the case of $\triangleloopr$ we additionally observe that $\futsq xy$, which defines $\embof{z}$ for all $\lb z$ with $\accsqq[H]xz$. For $\leafloopr$, this part is trival as $\lb x$ and $\lb y$ are both leaves. Finally, we cannot have that $\lb{x'}\srel G\lb {s_y}\srel G\lb y$ because $\emb$ embeds $\treeof{x}$ into $\treeof{y}$ and  $\treeof{s_x}$ into $\treeof{s_y}$.
\end{proof}

Instead of $\labISloop$, we can now consider the system:
$$
\labISloopp\;=\;\set{
  \id,\lef\BOT,
  \satr, \bdiam,\liftrim, \liftrbox,
  \dialoopr, \Rloopr, \Sloopr}
$$
 Clearly, any derivation in $\labISloop$ is also a derivation in  $\labISloopp$. 
  
We now extend the notion of \emph{embedding} from layer-trees to sequents.

\begin{definition}[Sequent Embedding]
	\label{def:embedding}
	Let $\sq H$~and~$\sq G$ be nice sequents. A \defn{(sequent) embedding} $\emb\colon\sq H\to\sq G$ is an injective function $\emb\colon\labelsof H\to\labelsof G$ obeying the following conditions,  for all $\lb x,\lb y\in\labelsof H$:
	\begin{enumerate}[($\emb$1),leftmargin=2.2em]
		\item\label{emb:formula} 
		  $\lbG{C_x}H\lbeq\lbG{C_{\embof x}}G$;
		\item\label{emb:R}  
		$\accsqq[H]{C_x}{C_y}$ if and only if $\accsqq[G]{C_{\embof x}}{C_{\embof y}}$; 
		\item\label{emb:layer}  
		$\futsqq[H]{C_x}{C_y}$ if and only if $\futsqq[G]{C_{\embof x}}{C_{\embof y}}$; 
		\item\label{emb:nopast}  
		If $\lb x$~does not have a past in~$\sq H$  then $\emb(\lb x)$~does not have a past in~$\sq G$;
		\item\label{emb:nofut}  
		If $\lb x$~does not have a future in~$\sq H$  then $\emb(\lb x)$~does not have a future in~$\sq G$;
	\end{enumerate}
\end{definition}

\begin{lemma}[First Embedding Lemma]
	\label{lem:embedding:ssat}
  Let $\sq H$ and $\sq G$ be nice sequents such that there is an embedding\/ $\emb\colon\sq H\to\sq G$. If there is a derivation~$\deri$ using only rules in $\set{\lef\AND,\rig\AND,\lef\OR,\rig\OR,\lef\IMP,\lef\BOX,\rig\DIA}$ with endsequent~$\sq H$ and premises\/ $\sq H_1,\ldots,\sq H_n$, then there is a derivation~$\deri'$ in $\set{\lef\AND,\rig\AND,\lef\OR,\rig\OR,\lef\IMP,\lef\BOX,\rig\DIA}$ with conclusion\/~$\sq G$ and premises\/ $\sq G_1,\ldots,\sq G_n$ such that for every $i=1..n$, we have that $\emb\colon\sq H_i\to \sq G_i$ is an embedding. 
\end{lemma}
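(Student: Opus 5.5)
We proceed by induction on the height of $\deri$ (equivalently, on the number of rule instances in $\deri$), transporting each rule instance along $\emb$ from the bottom up. If $\deri$ has height $0$, then $n=1$, $\sq H_1=\sq H$, and we take $\deri'$ to be the trivial derivation with $\sq G_1=\sq G$. Otherwise, let $\rr$ be the bottom-most rule instance of $\deri$, with principal labelled formula at label $\lb w$ and, for $\lef\BOX$ and $\rig\DIA$, principal relational atom $\accs wk$. We apply the \emph{same} rule in $\sq G$ to the corresponding data at $\embof w$ and $\embof k$. The key claim is that this rule instance is applicable in $\sq G$ and that $\emb$ remains an embedding from each premise of $\rr$ into the corresponding premise of the new instance. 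The statement then follows by applying the induction hypothesis to each premise and gluing the resulting derivations.

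For applicability, condition \ref{emb:formula} transports the principal formula, but only up to clusters. It gives $\lbG{C_w}H\lbeq\lbG{C_{\embof w}}G$, but not $\lbG{w}H\lbeq\lbG{\embof w}G$ directly. To get the label-level statement, we observe that in the nice sequents under consideration all labels in a cluster carry the same formulas, by $(\monl)$ together with the fact that $\sle{G}$-edges are reflexive only within layers. If this observation fails in general, we instead strengthen the invariant: we carry along that $\emb$ additionally preserves the formula sets of individual labels. This stronger invariant is what we will establish for the premises. For $\lef\BOX$ and $\rig\DIA$ we also need $\accsq{\embof w}{\embof k}$. From $\accsqq[H]{w}{k}$ we get $\accsqq[H]{C_w}{C_k}$, hence $\accsq{C_{\embof w}}{C_{\embof k}}$ by \ref{emb:R}. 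Since clusters are $\grel G\cap\grel G^{-1}$-classes and $(\Rtr)$ holds, this gives $\accsq{\embof w}{\embof k}$ itself.

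The invariant for the premises is then checked per rule. The rules $\lef\AND,\rig\OR$ have one premise, which adds formulas at $\lb w$; the rules $\rig\AND,\lef\OR,\lef\IMP$ branch, adding formulas at $\lb w$ in each premise; and $\lef\BOX,\rig\DIA$ add one formula at $\lb k$. We pair the $i$-th premise on the $\sq H$ side with the $i$-th premise on the $\sq G$ side. No labels or relational atoms are created, so \ref{emb:R}--\ref{emb:nofut} and injectivity are inherited unchanged from $\sq H,\sq G$ to the premises.

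The main obstacle is \ref{emb:formula} for the premises. Adding a formula at $\lb w$ on one side and at $\embof w$ on the other preserves label-wise equivalence. Cluster equivalence $\lbeq$ (Notation~\ref{def:eqclusters}), however, is a two-sided matching, and $\clusterof{\embof w}$ may contain labels not in the image of $\emb$ that were matched to $\lb w$. For this reason we rely on the label-level invariant. Where clusters are non-singleton, we apply the rule simultaneously to every label of $\clusterof{\embof w}$ equivalent to $\embof w$ (respectively, to every $R$-successor within the relevant clusters for $\lef\BOX/\rig\DIA$). These applications are sequential instances of the same rule, so $\deri'$ may be taller than $\deri$. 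This is allowed, since the lemma claims no height bound. After these applications both clusters have been updated consistently, and $\lbeq$ on clusters is restored.
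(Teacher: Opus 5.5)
Your skeleton is the paper's proof. You argue by induction on the height of $\deri$ and apply the same rule at $\embof w$ (and $\embof k$). The conditions \ref{emb:R}--\ref{emb:nofut} and injectivity carry over unchanged, because none of the seven rules creates labels or relational atoms. Your derivation of $\accsq{\embof w}{\embof k}$ from \ref{emb:R} and $(\Rtr)$ is a correct detail the paper leaves implicit. The paper, however, simply reads \ref{emb:formula} as giving $\lb x\lbeq\embof x$ for every label (``by definition of embedding, there is a label $\embof x$ with $\lb x\lbeq\embof x$''). The extra machinery you add to handle the cluster-level formulation is where your proposal breaks, in three places.

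\begin{itemize}
\item \emph{Clusters are not homogeneous.} $(\monl)$ propagates formulas along $\le$, but distinct labels of one layer are $\le$-incomparable by layeredness. The loop rules build clusters $\trans{\interval xy}$ out of labels that need not be $\lbeq$-equivalent. This is precisely why $\lbeq$ on clusters is defined as a two-sided matching of label types.
\item \emph{The stronger invariant is not available.} Carrying label-wise preservation along is not licensed by the hypothesis; it amounts to proving the lemma for a stronger notion of embedding.
\item \emph{The simultaneous-application patch fails.} For $\rig\AND$, $\lef\OR$, $\lef\IMP$, applying the rule at $k\ge 2$ labels yields $2^k$ premises. These include mixed ones ($\fm A$ at one label, $\fm B$ at another) that correspond to no $\sq H_i$, so you no longer get exactly $n$ premises paired with $\sq H_1,\ldots,\sq H_n$. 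Moreover, suppose $\clusterof w$ contains some $\lb u\lbeq\lb w$ that $\deri$ does not touch. Then $\embof u$ lies in $\clusterof{\embof w}$ and gets updated, which breaks your own label-wise invariant at $\lb u$ and leaves $\lb u$ without a $\lbeq$-partner.
\end{itemize}

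In fact, with \ref{emb:formula} taken literally at the cluster level, no patch can work, because the statement then fails. Take $\sq H=\{\futs ww,\accs ww,\labelsb w{a\OR b}\}$ and $\sq G=\{\futs pp,\futs rr,\accs pp,\accs pr,\accs rp,\accs rr,\labelsb p{a\OR b},\labelsb r{a\OR b}\}$, with $\embof w=\lb p$. Both sequents are nice, and $\emb$ satisfies \ref{emb:formula}--\ref{emb:nofut}: the cluster $\{\lb p,\lb r\}$ matches $\{\lb w\}$ because all three labels carry the same formulas. Let $\deri$ be a single $\lef\OR$ at $\lb w$, so $n=2$. Only $\lef\OR$ is ever applicable from $\sq G$, so a derivation with exactly two premises is a single $\lef\OR$ at $\lb p$ or at $\lb r$. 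In either case, the untouched label of $\{\lb p,\lb r\}$ has no $\lbeq$-partner in the premise's cluster $\{\lb w\}$. The right repair is to replace \ref{emb:formula} by the label-level condition $\lb x\lbeq\embof x$, as the paper's proof implicitly does. Under that condition your plain transport argument, without any simultaneous applications, is already complete, since the same formula is added at $\lb w$ and $\embof w$ (resp.\ at $\lb k$ and $\embof k$).
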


\begin{proof}
  We proceed by induction on the height of $\deri$. If no rules are applied to $\sq H$, then $\deri$ consists of a single node, $\sq H$, whence  $\emb\colon\sq H\to\sq G$ is given by assumption. 
  For the inductive step, we distinguish cases according to the last rule in $\set{\lef\AND,\rig\AND,\lef\OR,\rig\OR,\lef\IMP,\lef\BOX,\rig\DIA}$  applied in $\deri$. Since none of these rules  introduces (bottom-up) fresh variables, all the cases are treated similarly. We only show the case in which the last rule applied to $\sq H$ is $\lef{\OR}$. The derivation $\deri$ is as follows: 
  $$
  \vlderivation{
  \vliin{\lef{\OR}}{}{\sq H, \labelsb{x}{A \OR B}}{
  \vltr{\deri_1}{\sq H, \labelsb{x}{A \OR B}, \labelsb x A}{\vlhy{\quad}}{\vlhy{}}{\vlhy{}}
	}{
\vltr{\deri_2}{\sq H, \labelsb{x}{A \OR B}, \labelsb x B}{\vlhy{}}{\vlhy{\quad }}{\vlhy{}}	
	}
	}
  $$
  By assumption, there is an embedding $\emb\colon\sq H\to\sq G$. By definition of embedding, there is a label $\embof{x} \in \labelsof{G}$ such that $\lb x\lbeq \embof{x}$, whence $\labels{\embof{x}}{A \OR B}$ occurs in $\sq G$. We construct the following derivation $\deri'$: 
    $$
  \vlderivation{
  	\vliin{\lef{\OR}}{}{\sq G, \labelsb{\embof{x}}{A \OR B} }{
  		\vltr{\deri'_1}{\sq G, \labelsb{\embof{x}}{A \OR B}, \labelsb{\embof x} A}{\vlhy{\quad}}{\vlhy{}}{\vlhy{}}
  	}{
  		\vltr{\deri'_2}{\sq G, \labelsb{\embof x}{A \OR B}, \labelsb{ \embof x} B}{\vlhy{}}{\vlhy{\quad }}{\vlhy{}}	
  	}
  }
  $$
  We now verify that $\emb$ is an embedding between the premises of $\lef\OR$ in $\deri $ and $\deri'$, i.e.
  $$\emb\colon\sq H, \labelsb{x}{A \OR B}, \labelsb x A \rightarrow  \sq G, \labelsb{\embof{x}}{A \OR B}, \labelsb{\embof x} A
  \quand
  \emb\colon\sq H, \labelsb{x}{A \OR B}, \labelsb x B \rightarrow  \sq G, \labelsb{\embof{x}}{A \OR B}, \labelsb{\embof x} B
  \;.
  $$
  Since no new label is introduced in the premise of $\lef\OR$, all the conditions immediately follow. We conclude by applying the inductive hypothesis to the premises of $\lef\OR$ in $\deri$. 
\end{proof}

\begin{construction}\label{con:lay-to-seq}
  Let $\deri$ be a nice derivation in $\labISloopp$, and $\sq G$ and $\sq H$ be two sequents occurring in $\deri$ with $\sq G\below\sq H$. Let $L_1$ be a topmost layer of $\sq G$ and $L_2$ be a topmost layer of $\sq H$ such that $L_1\le L_2$ in $\sq H$ and such that there is a layer embedding $\emb\colon \layerx 1G\to\layerx 2H$. Then this embedding can be extended to a sequent embedding $\embh\colon\sq G\to\sq H$ by letting $\embofh w=\embof w$ if $\lb w\in L_1$ and $\embofh w=\lb w$ otherwise. 
\end{construction}

Let us now come back to the derivation in the middle of Figure~\ref{fig:dialoop-unfolding}, where we singled out a (topmost) instance of $\dialoopr$ and its seed $\bdiam$:
\begin{equation}
  \label{eq:dia-loop+seed}
  \vlinf{\dialoopr}{}{
    \sq G}{
    \sq G \labelsubst{y}{x}\cup\trans{\interval xy}
  } 
  \qquad
  \vlinf{\bdiam}{\proviso{$\lb z$  fresh}}{\sq G, \labelsb x{\DIA A} }{\sq G, \labelsb x{\DIA A} \cup \{ \accs xz, \accs zz,\futs zz,  \labelsb zA\} \cup \{ \accs v z \mid \accs v x \in \sq G\} }
\end{equation}
Employing the terminology from Figure~\ref{fig:dialoop-unfolding}, let $ \sq G_0$ and $\sq G_0'$ be the conclusion and premise of $\bdiam$ respectively, and $\sq G_1$ and $\sq G_1'$ be the conclusion and premise of $\dialoopr$ respectively. 
Observe that during proof search using Algorithm~\ref{alg:proofsearch} both rule applications above must happen in the same saturation step. Let $L$ be the layer which contains the fresh label $\lb z$ created by $\bdiam$ in $\sq G_0'$. Then, the cluster created in the premise of rule $\dialoopr$ 
is in the same layer $L$. In fact, we have $\sq G_0\below\sq G_1$ and a (layer) embedding $\emb\colon\layerxxof{}{\sq G_0}\to\layerxxof{}{\sq G_1}$ with $\embof x=\lb y$, where $\lb x$ and $\lb y$ are as indicated in~\eqref{eq:dia-loop+seed} above. We extend this embedding to a sequent embedding $\embh\colon\sq G_0\to\sq G_1$, following Construction~\ref{con:lay-to-seq} above.

If we look at the derivation in the middle of Figure~\ref{fig:liftloop-unfolding}, we have a similar situation. By Lemma~\ref{lem:Rloop}, each $\Xloopr$ in that figure is and instance of $\Rloopr$ or $\Sloopr$. Let $\layerof x$ be the layer in $\sq G_0'$ created by the lower $\liftr$-instance and let $\layerof y$ be the layer in $\sq G_1'$ created by the upper $\liftr$-instance. Then we have $\layerof x\le\layerof y$ in $\sq G_1'$ and we have a (layer) embedding $\emb\colon\layerxxof x{\sq G_0'}\to\layerxxof y{\sq G_1'}$ with $\embof x=\lb y$, where $\lb x$ and $\lb y$ are as in Definition~\ref{def:eloop}. As before, this can be extended to  a sequent embedding $\embh\colon\sq G_0'\to\sq G_1'$ as indicated in Construction~\ref{con:lay-to-seq} above.

\begin{lemma}[Second Embedding Lemma]
  \label{lem:embedding-b}
  Let $\deri$ be a nice derivation in $\labISloopp$, let $\sq G$ be a sequent occuring in $\deri$, let $\deri'$ be the subsequent of $\deri$ rooted at $\sq G$, let $\sq G_1,\ldots,\sq G_n$ be the premises of $\deri'$, and let $\sq H=\sq G_i$ for some $i=1..n$. Furthermore, assume $L_1$ is a topmost layer in $\sq G$, and $L_2$ is a topmost layer in $\sq H$, such that $L_1\le L_2$ in $\sq H$, and such that we have a layer embedding $\emb\colon\layerx 1G\to\layerx 2H$. Then there is a derivation $\deri_2'$ in $\labISloopp$ with conclusion $\sq H$ and premises  $\sq H_1,\ldots,\sq H_n$, such that there are sequent embeddings $\embh_i\colon \sq G_i\to\sq H_i$, which all extend the embedding $\embh\colon \sq G\to\sq H$ determined by $\emb\colon\layerx 1G\to\layerx 2H$ and Construction~\ref{con:lay-to-seq}.
\end{lemma}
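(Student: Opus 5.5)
We prove the Second Embedding Lemma by induction on the height of the subderivation $\deri'$ rooted at $\sq G$, carrying along a sequent embedding $\embh\colon\sq G\to\sq H$. The inductive invariant is that $\embh$ is a sequent embedding (Definition~\ref{def:embedding}) which agrees with $\emb$ on $L_1$ and is the identity elsewhere, i.e.\ obtained by Construction~\ref{con:lay-to-seq}. Conditions \ref{emb:formula}--\ref{emb:nofut} for this initial $\embh$ will be checked directly. On $L_1$ they come from the layer embedding together with $L_1\le L_2$ in $\sq H$ and the fact that both layers are topmost. Outside $L_1$ they hold because $\sq G\below\sq H$, so the non-topmost part of $\sq G$ survives unchanged into $\sq H$ (here we use niceness and that saturation and lifting only modify topmost labels). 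The case $\deri'$ of height $0$ is then immediate with $\deri_2'=\sq H$ and $\embh_1=\embh$.

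For the inductive step we case split on the bottommost rule $\rr$ of $\deri'$ and replay it on the image side.
\begin{itemize}
\item If $\rr$ is $\satr$, we decompose it into the basic rules and invoke the First Embedding Lemma~\ref{lem:embedding:ssat}, which gives a $\satr$-instance on $\sq H$ with premises into which the premises of $\rr$ embed.
\item If $\rr$ is $\id$ or $\lef\BOT$, the axiomatic formulas are transported by \ref{emb:formula}, so $\sq H$ is closed by the same rule.
\item If $\rr$ is $\bdiam$ applied to $\labelsb w{\DIA A}$, we apply $\bdiam$ to $\labelsb{\embofh w}{\DIA A}$ in $\sq H$ and extend $\embh$ by sending the fresh label to the fresh label. \ref{emb:R} holds because the new $R$-atoms are determined by the $R$-predecessors of the parent, which correspond by \ref{emb:R}. \ref{emb:nofut} is needed to know $\embofh w$ is topmost, so the rule is applicable and niceness is preserved by Lemma~\ref{lem:bdiam-nice}.
\item If $\rr$ is $\liftrim$ or $\liftrbox$, we apply the same lifting at $\embofh w$. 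The new layer is a copy of the whole topmost layer, so the new layer of $\sq G$ embeds into that of $\sq H$ via the layer embedding induced by $\embh$ restricted to the old layer. The $\le$-atoms are matched using \ref{emb:layer} and the fact that futures are copied (clause 2(a) of the constructions). In the cluster lifting case, the two copies $\hx'$, $\hx''$ of the cluster must be sent to the corresponding copies.
\item If $\rr$ is a $\loopr$-rule ($\dialoopr$, $\Rloopr$, $\Sloopr$), we apply the corresponding loop rule to the image pair $\tuple{\embofh{x'},\embofh y}$. The substitution $\labelsubst{y}{x'}$ and the added cluster $\trans{\interval{x'}y}$ are then mirrored, and $\embh$ descends to the quotient.
\end{itemize}

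The main obstacle is this last case. We must verify that the image of a repetition pair is again a repetition pair of the same kind in $\sq H$. That requires the same seed, the same $\lbeq$-classes, no past (\ref{emb:nopast}) and no future (\ref{emb:nofut}), the leaf and suricata positions, and the existence of the embedding of $\treeof{s_x}$ into the image tree. For the last point we compose the given embedding with $\embh$. We also need to check that the side condition ``not $\lb{x'}\srel{}\lb{s_y}\srel{}\lb y$'' is reflected by \ref{emb:R}. A further concern is injectivity after substitution: if $\embh$ maps $\lb y$ and $\lb{x'}$ to the images, identifying them on both sides keeps the map well defined. Only injectivity needs care, and it holds because the identification happens simultaneously on both sides.

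Finally, the premises $\sq H_1,\ldots,\sq H_n$ are the premises of the replayed derivation. Each $\embh_i$ extends $\embh$ by construction, since at every step we only enlarge the domain with fresh labels or pass to quotients compatible with the identification.
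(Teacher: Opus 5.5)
Your skeleton (induction over $\deri'$, the First Embedding Lemma for $\satr$, fresh-to-fresh extension for $\bdiam$ and the two liftings, replaying each loop rule on the image labels) is the same as the paper's. The gap is in the step you flag as the main obstacle: showing that the replayed $\Rloopr$/$\Sloopr$ instances are legal. You ask for ``the same seed'' and obtain the witnessing layer embedding by composing the given one with $\embh$.

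Take a loop instance at $\sq K$ in $\deri'$ whose $\emb$-repetition pair has seed $\sq K'$, base label $\lb x$ and layer embedding $\emb\colon\layerxxof{x}{\sq K'}\to\layerxof{y}{K}$. Let $\sq L$ be its copy in $\deri'_2$, with $\embh_1\colon\sq K\to\sq L$.
\begin{itemize}
\item Your argument works when $\sq K'=\sq G$ or $\sq K'\below\sq G$. Then $\lb x\le\embofhn1{x'}$ in $\sq L$ follows from \ref{emb:layer}, since $\embh_1$ extends $\embh$, and $\embh_1\circ\emb$ is the required embedding.
\item It fails when the seed-rule lies inside $\deri'$, i.e.\ $\sq G\below\sq K'$. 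This can happen in the application to loop elimination, for loops in $\deri'$ whose seed lies above the seed of the loop being removed. Here $\lb x$ is a label freshly created in $\deri'$, whereas $\embofhn1{x'}$ is a future of the \emph{copy} $\embofhn1{x}$ created by the replayed seed-rule in $\deri'_2$. In general $\lb x$ is not below $\embofhn1{x'}$, so the old seed does not witness an $\emb$-repetition pair at all.
\end{itemize}

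The missing idea is to switch seeds in this second case. Use the copy $\sq L'$ of $\sq K'$ in $\deri'_2$ (so $\sq H\below\sq L'\below\sq L$), together with the embedding $\embh_2\colon\sq K'\to\sq L'$ produced when the seed-rule was replayed. Transport $\emb$ along $\embh_2$ and $\embh_1$ to get a layer embedding from the layer of $\embofhn2{x}$ in $\sq L'$ into that of $\embofhn1{y}$ in $\sq L$; on the image of $\embh_2$ it is $\embh_1\circ\emb\circ\embh_2^{-1}$. This only makes sense because the embeddings built along a branch of $\deri'_2$ extend one another: $\embh_1$ extends $\embh_2$, and both extend $\embh$. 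Your proposal does not maintain this invariant; you only record that the final $\embh_i$ extend $\embh$. This is why the paper first builds $\deri'_2$ and then checks the loop side conditions in a second pass, split on whether the seed lies at or below $\sq G$ or above it.

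Three smaller points.
\begin{itemize}
\item Several conditions you list belong to $\leafloopr$ and $\triangleloopr$, not to $\Rloopr$/$\Sloopr$: leaf positions, the embedding of $\treeof{s_x}$, and the failure of $\lb{x'}\rel\lb{s_y}\rel\lb y$.
\item In $\labISloopp$ you only need the $\emb$-repetition pair conditions plus $\lb y\neq\lb{s_y}$ resp.\ $\lb y=\lb{s_y}$. That weakening is deliberate: an embedding need not send leaves to leaves, so you cannot expect to preserve a repetition pair ``of the same kind'' in the original sense.
\item The image pair lives in some $\sq L$ above $\sq H$, not in $\sq H$ itself.
\end{itemize}
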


\begin{proof}
  The basic idea is to apply in $\deri'_2$ every rule applied  in $\deri'$, thus `copying' $\deri'$ into $\deri'_2$. Moreover, $\deri'_2$ carries around all the extra labels that are in 
  	 $\sq H$ but not in $\sq G$, and the embedding tells us to which of the `new' labelled formula a rule needs to be applied.
  For this, we proceed in two steps. First, we construct $\deri'_2$ by induction on $\deri'$, and by distinguishing cases according to the last rule applied in $\deri'$. In this first step, we do not verify the loop conditions for $\Rloopr$, and $\Sloopr$ (we only apply the rules). Then, after having built $\deri'_2$, we confirm that the side conditions are satisfied.
  \begin{itemize}
  \item Case $\satr$:
    $$
    \vliiinf{\satr}{}{\sq G}{\sq G_1}{\ldots}{\sq G_n}
    $$
    Suppose that there is an embedding $\embh\colon\sq G\to\sq H$. The premises $\sq G_1, \dots, \sq G_n$ of $\sq G$ are obtained by applying rules in    $\set{\lef\AND,\rig\AND,\lef\OR,\rig\OR,\lef\IMP,\lef\BOX,\rig\DIA}$ to $\sq G$. We can therefore apply Lemma~\ref{lem:embedding:ssat}, and obtain a derivation of $\sq H$ whose premises $\sq H_1,\ldots,\sq H_n$ are such that for every $i=1..n$, we have that $\embh\colon\sq G_i\to \sq H_i$ is an embedding. We set $\embh_i = \embh$, for $i=1..n$. We can then conclude by inductive hypothesis. 
		\item Case $\liftrbox$:
		$$
		\vlinf{\liftrbox}{
		}{
			\sq G, \labelsw{x_0}{\BOX A}}{
			\sq G, \labelsw{x_0}{\BOX A}\cup\lliftfw {G}{x_0}{\BOX  A}
		} 
		$$
		Suppose that there is an embedding $\embh\colon\sq G, \labelsw{x_0}{\BOX A}\to\sq H, \labelsw{\embof{x_0}}{\BOX A}$, with $\embofh{x_0} = \lb{w_0}$. We can therefore apply rule $\liftrbox$ to sequent $\sq H, \labelsw{w_0}{\BOX A}$, obtaining the following sequent, which we shall prove to be derivable by induction hypothesis: 
		$$
		\vlinf{\liftrbox}{
		}{
			\sq H, \labelsw {w_0} {\BOX A}}{
			\sq H, \labelsw {w_0}{\BOX A}\cup\lliftfw {H}{w_0}{\BOX  A}
		} 
		$$
		Recall that, if $\layerof{x_0} = \{  \lb{x_0}, \lb{x_1}, \dots, \lb{x_l}\}$, for some $l \geq 0$, the sequent $\lliftfw {G}{x_0}{\BOX  A}$ contains $l+1$ fresh labels, namely $\layerof{\hx_0}  = \{ \lb{\hx_0}, \lb{\hx_1}, \dots, \lb{\hx_l},\lb\hy\}$. 
		Similarly, in $\sq H$,  if $\layerof{{w_0}} = \{ \lb{w_0}, \lb{w_1}, \dots, \lb{w_m}\}$, for some $m \geq l$, then the sequent $\lliftfw {H}{w_0}{\BOX  A}$ contains $m + 1$ fresh labels,  namely $\layerof{\hw_0}  = \{ \lb{\hw_0}, \lb{\hw_1}, \dots, \lb{\hw_m},\lb{\hat v}\}$. 
		
		We define the embedding $\embh_1\colon \sq G, \labelsw{x_0}{\BOX A}\cup\lliftfw {G}{x_0}{\BOX  A} \rightarrow 	\sq H, \labelsw {w_0}{\BOX A}\cup\lliftfw {H}{w_0}{\BOX  A}$ as follows. For any label $\lb z $ such that $\lb z\in \labelsof{G}$, we set $\embofhn{1}{z} = \embofh{z}$. For the fresh labels, we set $\embofhn{1}{\hy} = \lb{\hat v}$ and, for any label $\lb{ \hx_i} \in \layerof{\hx_0}$ (for $i \leq l$), we let $\embofhn{1}{\hx_i} = \lb{\hw_j}$, where $\lb{\hw_j}$ is the fresh label in $\layerof{\hw_0}$ (for $j \leq m$) such that $\embofh{x_i}= \lb{w_j}$. 

		To conclude this case, we need to verify that $\emb_1$ is indeed an embedding. All the conditions immediately follow from the definition of $\rig\BOX$-lifting (Construction~\ref{def:box-lifting}), and the fact that $\embh_1$ is an extension of $\embh$. 

		\item Case $\liftrim$:
		$$
		\vlinf{\liftrim}{
		}{
			\sq G, \labelsw{x_0}{A \IMP B}}{
			\sq G,  \labelsw{x_0}{A \IMP B}\cup\lliftfw {G}{x_0}{A\IMP B}
		} 
		$$
		We distinguish two cases depending on whether $\lb x_0$ belongs to a singleton cluster or to a non-singleton cluster. If  $\lb x_0$ belongs to a singleton cluster, we proceed similarly to the case for~$\liftrbox$. 
		Now assume $\lb{x_0}$ belongs to a cluster $\clusterof{x_0} = \{ \lb{x_0}, \lb{x_1}, \dots, \lb{x_h}\}$, for $h \geq 1$.  
		Suppose that there is an embedding $\embh\colon\sq G, \labelsw{x_0}{A\IMP B}\to\sq H, \labelsw{\embofh{x_0}}{A \IMP B}$, and that $\embofh{x_0} = \lb{w_0}$. First observe that if $\lb{x_0}$ belongs to a non-singleton cluster $\clusterof{x}$ in $\sq G$, then $\lb{w_0}$ will also belong to a non-singleton cluster $\clusterof{w_0} = \{ \lb{w_0}, \lb{w_1}, \dots, \lb{w_k}\}$, for $k \geq h$, in $\sq H$. We can therefore apply the cluster $\rig\IMP$-lifting (Construction~\ref{def:imp-lifting}) to $\sq H$, obtaining the following derivation:  
			$$
		\vlinf{\liftrim}{
		}{
			\sq H, \labelsw{w_0}{A \IMP B}}{
			\sq H,  \labelsw{w_0}{A \IMP B}\cup\lliftfw {H}{w_0}{A\IMP B}
		} 
		$$
		Recall that, in Construction~\ref{def:imp-lifting}, we ``duplicate'' the cluster $\clusterof{x_0} $. So the fresh labels in $\lliftfw {G}{x_0}{A\IMP B}$ are $ \set{\lb{\hy_1},\ldots, \lb{\hy_l},\lb{\hx_0}, \lb{\hx_0'},\ldots, \lb{\hx_h'},\lb{\hx_0''},\ldots, \lb{\hx_h''}}$, where $\set{\lb{y_1},\ldots,\lb{y_l}}=\layerof{x_0}\setminus \clusterof{x_0}$, for $l\geq 0$. Below is a picture, for $h = 2$: 
		 \begin{center}
				\begin{tikzpicture}[thick, every node/.style={scale=1.2},font = {\large}]

		\tikzstyle{node}=[circle,fill=black,inner sep=1.5pt]
		\tikzstyle{nonode}=[inner sep=0pt]
		\tikzstyle{access}=[blue]
		\tikzstyle{future}=[dashed,->]
		
		\node[] (Lh) at (-1,6) {$\hat L$};
		\node[] (L) at (-1,2) {$L$};
		
		\node[label=below:{$x_0$}] (x2) at (5.8,2.95) [node] {};
		\node[label=below:{$x_1$}] (x1) at (5.5,1.15) [node] {};
		\node[label=below:{$x_2$}] (x3) at (6.8,1.4) [node] {};
		\node[label=below:{$y_1$}] (y1) at (2,2) [node] {};
		\node[label=below:{$y_2$}] (y2) at (3,2) [node] {};
		\node[label=left:{$y_3$}] (y3) at (4,2.7) [node] {};
		\node[label=below:{$y_4$}] (y4) at (8,2) [node] {};
		\node[label=below:{$y_5$}] (y5) at (9,2.7) [node] {};
		\node[label=below:{$y_6$}] (y6) at (9.5,1.3) [node] {};
		%

		\node[label=above:{$\hat x_1'$}] (x1') at (3.5,5.15) [node] {};
		\node[label=above:{$\hat x_1''$}] (x1'') at (7.5,5.15) [node] {};
		\node[label=above:{$\hat x_0'$}] (x2') at (3.8,6.95) [node] {};
		\node[label=above:{$\hat x_0$}] (xhat) at (5.8,6) [node] {};
		\node[label=above:{$\hat x_0''$}](x2'') at (7.8,6.95) [node] {};
		\node[label=above:{$\hat x_2'\:$}] (x3') at (4.8,5.4) [node] {};
		\node[label=above:{$\hat x_2''\:$}] (x3'') at (8.8,5.4) [node] {};
		%
		\node[label=above:{$\hat y_1$}] (y1h) at (0,6) [node] {};
		\node[label=above:{$\hat y_2$}] (y2h) at (1,6) [node] {};
		\node[label=above:{$\hat y_3$}] (y3h) at (2,6.7) [node] {};
		\node[label=above:{$\hat y_4$}] (y4h) at (10,6) [node] {};
		\node[label=above:{$\hat y_5$}] (y5h) at (11,6.7) [node] {};
		\node[label=above:{$\hat y_6$}] (y6h) at (11.5,5.3) [node] {};
		%
			
		\draw[access] (y1) -- (5,2);
		\draw[access] (y2) -- (y3);
		\draw[access] (6,2) circle [radius=1cm];
		\draw[access] (7,2) -- (y4);
		\draw[access] (y4) -- (y5);
		\draw[access] (y4) -- (y6);
		
		\draw[access] (y1h) -- (3,6);
		\draw[access] (y2h) -- (y3h);
		\draw[access] (4,6) circle [radius=1cm];
		\draw[access] (5,6) -- (7,6);
		\draw[access] (8,6) circle [radius=1cm];
		\draw[access] (9,6) -- (y4h);
		\draw[access] (y4h) -- (y5h);
		\draw[access] (y4h) -- (y6h);
		
		\draw[future] (x1) -- (x1');
		\draw[future] (x1) -- (x1'');
		\draw[future] (x2) -- (x2');
		\draw[future] (x2) -- (xhat);
		\draw[future] (x2) -- (x2'');
		\draw[future] (x3) -- (x3');
		\draw[future] (x3) -- (x3'');
		\draw[future] (y1) -- (y1h);
		\draw[future] (y2) -- (y2h);
		\draw[future] (y3) -- (y3h);
		\draw[future] (y4) -- (y4h);
		\draw[future] (y5) -- (y5h);
		\draw[future] (y6) -- (y6h);

	\end{tikzpicture}
		\end{center}
		Similarly, the sequent $\lliftfw {H}{w_0}{A\IMP B}$ contains a duplicate of the cluster  $\clusterof{w_0}$, where the fresh labels are $ \set{\lb{\hz_1},\ldots, \lb{\hz_m},\lb{\hw_0}, \lb{\hw_0'},\ldots, \lb{\hw_k'},\lb{\hw_0''},\ldots, \lb{\hw_k''}}$, where $\set{\lb{z_1},\ldots,\lb{z_m}}=\layerof{w_0}\setminus \clusterof{w_0}$, for some $m\geq l$. 
		
		We define $\embh_1: 	\sq G,  \labelsw{x_0}{A \IMP B}\cup\lliftfw {G}{x_0}{A\IMP B} \rightarrow 			\sq H,  \labelsw{w_0}{A \IMP B}\cup\lliftfw {G}{w_0}{A\IMP B}$ as follows: 
		\begin{itemize}
			\item For any  label $\lb v $ such that $\lb v\in \labelsof{G}$, we set $\embofhn{1}{v} = \embofh{v}$; 
			\item For any label $\lb {\hy_i} \in \lliftfw {G}{x_0}{A\IMP B}$ for $i = 1..l$, we set $\embofhn{1}{\hy_i} = \lb {\hz_j}$, where $\lb {\hz_j}$ is such that $\embofh{y_i}=\lb{z_j}$, for some $j=1..m$; 
			\item For any label $\lb {\hx'_i} \in \lliftfw {G}{x_0}{A\IMP B}$ for $i = 1..h$, we set $\embofhn{1}{\hx'_i} =\lb{ \hw'_j}$, where $\lb{ \hw'_j}$ is such that  $\embofh{x_i}=\lb{w_j}$, for some $j=1..k$; 
			\item For any label $\lb {\hx''_i} \in \lliftfw {G}{x_0}{A\IMP B}$ for $i = 1..h$, we set $\embofhn{1}{\hx''_i} =\lb{ \hw''_j}$, where $\lb{ \hw''_j}$ is such that  $\embofh{x_i}=\lb{w_j}$, for some $j=1..k$; 
			\item $\embofhn{1}{\hx_0} = \lb{\hw_0}$.
		\end{itemize}
		To conclude the proof for this case, we need to verify that $\embh_1$ is indeed an embedding. But it is easy to see that Conditions~\ref{emb:formula}--\ref{emb:nofut} follow immediately.
		\item Case 	$\bdiam$: 
		$$
		\vlinf{\bdiam}{\proviso{where $\lb y$  fresh}}{\sq G, \labelsb x{\DIA A} }{\sq G, \labelsb x{\DIA A} \cup \{ \accs xy, \accs y y, \futs y y, \labelsb yA\} \cup \{ \accs v y \mid \accs v x \in \sq G\} }
		$$
		Suppose that there is an embedding $\embh\colon\sq G, \labelsb x{\DIA A}\to\sq H, \labelsb{\embofh{x}}{\DIA A}$. We construct the following derivation, where $\lb {y'}$ is fresh, by applying rule $\bdiam$ to formula $\labelsb{\embofh{x}}{\DIA A}$:
		$$
		\vlderivation{
			\vlin{\bdiam}{}{\sq H, \labelsb{\embofh{x}}{\DIA A}}{
					\vlhy{\sq H', \labelsb{\embofh{x}}{\DIA A}, \accs{\embofh{x}}{y'}, \accs{y'}{y'}, \futs{ y'}{y'}, \labelsb{y}{A}  \cup \{ \accs v {y'} \mid \accs{v}{\embofhn{}{x}}  \in \sq H\}  }
				}
		}
		$$
        Let $\sq G_1$ be the premiss of sequent $\sq G, \labelsb x{\DIA A}$, and let $\sq H_1$ be the premise of sequent $\sq H, \labelsb{\embofh{x}}{\DIA A}$. 
		We define the embedding $\embh_1  : \sq G _1 \rightarrow \sq H_1$ as follows. Let $\embofhn{1}{x} = \embofh{x}$ and, for any other label $\lb w$ occurring in $\sq G$, let $\embofhn{1}{w} = \embofh{w}$. Finally, take $\embofhn{1}{y} =\lb{ y'}$. 
		It is easy to check that $\embh_1$ satisfies all properties of embedding from Definition~\ref{def:embedding}. 
		
		\item Case $\dialoopr$: 
		$$
		 \vlinf{\dialoopr}{
			\proviso{where $\lb y$ is a $\DIA$-repetition of $\lb x$ in $\sq G$}}{
			\sq G}{
			\sq G \labelsubst{y}{x}\cup\trans{\interval xy}
		} 
		$$	
		Suppose that there is an embedding $\embh\colon\sq G\to\sq H$. Since rule $\dialoopr$ has been applied to $\sq G$, it holds that $\lb y$ is a $\DIA$-repetition of $\lb x$ in $\sq G$. By definition, this means that $\lb y$ is a repetition of $\lb x$, that $\lb x \in \layerof{{y}}$, and that $\lb{x}$ does not have a past in $\sq G$. In order to apply rule $\dialoopr$ to the labels $\embofh{x}$ and $\embofh{y} \in \sq H$, we need to check that $\embofh{y}$ is a $\DIA$-repetition of $\embofh{x}$ in $\sq H$: we have  $\embofh{y}\lbeq\lb y\lbeq \lb x\lbeq\embofh x$, and $\embofh{{x}} \in \layerof{\embofh{y}}$ because $\layerof{x} \sleq G \layerof{y}$, whence $\layerof{\embofh x} \sleq H \layerof{\embofh y}$, and we have $\embofh{x}$ does not have a past in $\sq H$ by \ref{emb:nopast}.

		We can therefore apply the following rule: 
		$$
		\vlinf{\dialoopr}{
		}{
			\sq H}{
			\sq H \labelsubst{\embofh{y}}{\embofh{x}}\cup\trans{\interval{\embofh{x}}{\embofh{y}}}
		} 
		$$	
		Next, we define an embedding $\embh_1: \sq G \labelsubst{y}{x}\cup\trans{\interval xy} \rightarrow 			\sq H \labelsubst{\embofh{y}}{\embofh{x}}\cup\trans{\interval{\embofh{x}}{\embofh{y}}}$. Since no new labels are introduced when applying rule  $\dialoopr$ to $\sq H$, we set $\embh_1 = \embh \setminus \set{ \tuple{\lb y, \embofh y}}$. 
		It is immediate to check that $\embh_1$ is indeed the desired embedding. 
		
		\item Cases $\Rloopr$ and $\Sloopr$: 
		$$
		  \vlinf{\Rloopr}{}{
		    \sq G}{
		    \sq G \labelsubst{y}{x'}\cup\trans{\interval {x'}y}
		  }
                  \qquand
		\vlinf{\Sloopr}{}{
		  \sq G}{
		  \sq G\cup\trans{\interval {x'}y}
		} 
		$$
                We have an embedding $\embh\colon\sq G\to\sq H$, and apply the rules as follows:
		$$
		\vlinf{\Rloopr}{
		}{
		  \sq H}{
		  \sq H \labelsubst{\embofh{y}}{\embofh{x'}}\cup\trans{\interval{\embofh{x'}}{\embofh{y}}}
		}
                \qquand
		\vlinf{\Sloopr}{
		}{
		  \sq H}{
		  \sq H\cup\trans{\interval{\embofh{x'}}{\embofh{y}}}
		}
		$$	
                In the first case, we have $\embh_1 = \embh \setminus \set{ \tuple{\lb y, \embofh y}}$ and in the second case we have $\embh_1 = \embh$. 
                We can therefore conclude by induction hypothesis. 

  \end{itemize}
  This finishes the construction of $\deri'_2$. Let us now verify the correct applications of $\Rloopr$ and $\Sloopr$ in $\deri'_2$. 
  We only verify $\Rloopr$, the case for $\Sloopr$ being similar. 
  Let 
  $$
  \vlinf{\Rloopr}{}{
  \sq K}{
    \sq K \labelsubst{y}{x'}\cup\trans{\interval {x'}y}}
  $$
  be an instance of $\Rloopr$ in $\deri'$, and let
  $$
  \vlinf{\Rloopr}{
  }{
    \sq L}{
    \sq L \labelsubst{\embofhn1{y}}{\embofhn1{x'}}\cup
    \trans{\interval{\embofhn1{x'}}{\embofhn1{y}}}
  }
  $$
  be the corresponding instance of $\Rloopr$ in $\deri'_2$, where $\embh_1\colon\sq K\to\sq L$ is the embedding determined by the construction above. 
  We have that $\sq G= \sq K$ or $\sq G\below\sq K$, and that $\sq H = \sq L$ or  $\sq H\below \sq L$. 
   and we have that $\tuple{\lb {x'},\lb y}$ is an $\emb$-repetition pair in $\sq K$. By Definition~\ref{def:eloop}, this means that there is a seed $\sq K'\below \sq K$ in $\deri$, and we have a layer $\layerof x$ in $\sq K'$ with a layer embedding $\emb\colon\layerxxof x{\sq K'}\to\layerxof yK$ with $\embof x=\lb y$. 
   We have two possible cases:
  \begin{itemize}
  \item $\sq K'=\sq G$ or $\sq K'\below\sq G$.  Then we also have $\sq K'\below \sq L$ because $\sq G\below\sq H\below\sq L$. 
  We have a layer embedding $\emb'\colon \layerxxof x{\sq K'}\to\layerxxof{\embofhn1y}{\sq L}$.  
  This follows by composing two layer embeddings, namely $\emb\colon\layerxxof x{\sq K'}\to\layerxof yK$, which we have by assumption, and $\embh_1 \colon \layerxof yK =  \layerxxof {x'}{\sq K}\to\layerxof {\embofhn1{y}}L$. This latter layer embedding is the restriction of the sequent embedding $\embh \colon \sq K \to \sq L$ which we constructed in our inductive proof above. It is easy to see that the composition of two layer embeddings is also a layer embedding, and so we get $\emb'\colon \layerxxof x{\sq K'}\to\layerxxof{\embofhn 1 y}{\sq L}$.  
 This layer embedding can be extended to a sequent embedding, and we therefore obtain that  $\embofhn1{x'}$ is a future of $\lb x$, as sequent embeddings preserve futures, as guaranteed by \ref{emb:layer} of Def.~\ref{def:embedding}. Hence, $\tuple{\embofhn1{x'},\embofhn1{y}}$ is an $\emb$-repetition pair in $\sq L$. 
  \item $\sq G\below\sq K'$. Then there must be an $\sq L'$ with $\sq H\below\sq L'\below\sq L$. 
  Furthermore, there is an embedding $\embh_2\colon\sq K'\to\sq L'$, which we obtain by restricting the sequent embedding $\embh_1 \colon \sq K \to \sq L$ which we have by construction. Observe that, since $\sq G \below \sq K'\below  \sq H \below \sq L'$, the embedding $\embh_2\colon\sq K'\to\sq L'$ is also an extension of the embedding  $\embh\colon\sq G\to\sq H$. 
  We need a layer embedding $\emb'\colon\layerxxof{\embofhn2x}{\sq L'}\to\layerxxof{\embofhn1y}{\sq L}$. This can be constructed from $\emb$ in a canonical way because $\embh_1$ and $\embh_2$ are both extension of $\embh\colon\sq G\to\sq H$, and in particular, $\embh_1$ is an extension of~$\embh_2$. 
  \qedhere
  \end{itemize}
  \end{proof}

\subsection{Loop Reduction}
\label{sec:loop reduction}

\begin{construction}[Single Loop Reduction]\label{con:singleloop}
Let us now come back to Figures~\ref{fig:dialoop-unfolding} and~\ref{fig:liftloop-unfolding}.  In both cases we can by Lemma~\ref{lem:embedding-b} construct the subderivation $\deri'_2$ on the right from the subderivation $\deri'$ in the middle. We can repeat the same process to get $\deri'_3$ from $\deri'_2$, and so on until $\deri'_n$, where $n\ge0$ is given to us by Lemma~\ref{lem:unfold}. That lemma also allows us to replace $\derib$ by $\hderib$ because in Figure~\ref{fig:dialoop-unfolding} we have that $\sq G_n\topeq\sq G_1\interval xy^n$, where $\lb x$ and $\lb y$ are as in the $\dialoopr$-instance (see~\eqref{eq:dialoop}) to be eliminated, and in Figure~\ref{fig:liftloop-unfolding} we have that $\sq G'_n\topeq\sq G'_1\interval{x'}y^n$, where $\lb x'$ and $\lb y$ are as in the $\Rloopr$- and $\Sloopr$-instance (see~\eqref{eq:Rloop}) to be eliminated. Thus, we have shown how to eliminate an upmost $\loopr$-instance from the derivation.
\end{construction}

However, observe that in both cases, all $\loopr$-instance that occur in $\deri'$ are now $n$-times duplicated. Nonetheless, we will show below that all $\loopr$-instance in the derivation can be eliminated in the way described in~Construction~\ref{con:singleloop}. 

\begin{definition}
  Let $\lr$ be a $\loopr$-instance in a derivation $\deri$, such that there is no $\lr'$ with $\lr\below\lr'$. Then we call $\lr$ an \defn{upmost loop}. Next, let $\lr_1$ and $\lr_2$ be $\loopr$-instances in $\deri$, and let $\sr_1$ and $\sr_2$ be their respective seed-rules. We define the following:  
  \begin{itemize}
  \item $\lr_1\lonion\lr_2$ iff $\lr_1\sabove\lr_2\sabove\sr_2\sabove\sr_1$ in the derivation tree.
  \item $\lr_1\ldom\lr_2$ iff  $\sr_1\sabove \sr_2$ and $\lr_2\notsabove\lr_1$ and $\sr_1\notsabove\lr_2$. 
  \item $\lr_1\lbigger\lr_2$ iff either 
    \begin{itemize}
    \item $\lr_1\lonion\lr_2$, or
    \item $\lr_1\ldom\lr_2$ and there is no $\lr_3$ with $\lr_2\ldom\lr_3\lonion\lr_1$, or
    \item $\lr_1\ldominv\lr_2$ and $\lr_1\notbelow\lr_2$ and there is a $\lr_3$ with $\lr_1\ldom\lr_3\lonion\lr_2$.
    \end{itemize}
  \item $\lr_1\lsib\lr_2$ iff $\sr_1=\sr_2$  (i.e., both $\loopr$-instances have the same seed-rule). In this case we call $\lr_1$ and~$\lr_2$ \defn{(loop) siblings}.
  \end{itemize}
  We define $\lht{\lr}$, called the \defn{(loop) height} of $\lr$, to be the maximal $n$ such that we have a chain $\lr=\lr_1\lbigger\cdots\lbigger\lr_n$. And we define $\loopsib{\lr}=\sizeof{\set{\lr'\mid \lr\lsib\lr'}}$
to be the number of siblings of $\lr$.
\end{definition}

\begin{figure}
	\begin{center}
			\includegraphics[scale =0.65]{figures/lasagna.tex}
	\end{center}
	\caption{From left to right: the relation $\lr_1\lonion\lr_2$ (so $\lr_1 \lbigger \lr_2$); 
	the relation $\lr_1\ldom\lr_2$,   case  $\lr_1 \sabove \lr_2$ (so $\lr_1 \lbigger \lr_2$); 
	the relation $\lr_1\ldom\lr_2$,  case $\lr_1 \notsabove \lr_2$ (so $\lr_1 \lbigger \lr_2$); 
	and the relation $\lr_2 \lbigger \lr_1$ (note the inversion),  case  $\lr_2\ldom \lr_3$ and $\lr_3 \lonion \lr_1$. 
	Dotted arrows represent the $\below$-relation, and solid lines link a loop rule to its seed. }
	\label{fig:lasagna}
\end{figure}

Figure~\ref{fig:lasagna} gives a visual representation of $\lr_1\lonion\lr_2$,  $\lr_1\ldom\lr_2$ and $\lr_1 \lbigger \lr_2$ defined above. Intuitively, the dominant relation between two loop rules $\lr_1$ and $\lr_2$ is $\ldom$, where  $\lr_1$ is `bigger' than $\lr_2$ (in symbols $\lr_1\lbigger\lr_2$) if its seed $\sr_1$ is higher up in the derivation than the seed $\sr_2$ of $\lr_2$. 
However, it might happen that the loop rule $\lr_1$ with higher seed occurs within a loop $\lr_3$ such that $\lr_3 \lonion \lr_1$. If this case, and if $\lr_2 \ldom \lr_3$, we set instead that $\lr_2\lbigger\lr_1$.

\begin{theorem}[Loop Elimination]
  \label{thm:unfold}
  Let $\sq G$ be the sequent $\futs rr, \accs rr,\labelsw rF$ for some formula $\fm F$. If there is a $\labISloop$-proof~$\der$ with endsequent $\sq G$, then there is also a $\labISnoloop$-proof $\hder$ of~$\sq G$.\looseness=-1
\end{theorem}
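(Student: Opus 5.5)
We argue by well-founded induction, repeatedly applying the Single Loop Reduction (Construction~\ref{con:singleloop}) to a carefully chosen $\loopr$-instance until none are left. First, by Lemma~\ref{lem:Rloop}, we regard $\der$ as a nice proof in $\labISloopp$, so every $\loopr$-instance is a $\dialoopr$, $\Rloopr$ or $\Sloopr$ instance with a well-defined seed-rule. Niceness of all intermediate derivations is preserved by the constructions of Lemma~\ref{lem:embedding-b}, since they only copy rule applications along sequent embeddings.

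To each proof we assign a measure: the multiset, over all $\loopr$-instances $\lr$, of the pairs $(\lht{\lr},\loopsib{\lr})$, ordered lexicographically. We compare such multisets by the Dershowitz--Manna extension. Since derivations are finite and $\lht{\cdot}$ is bounded by the number of loops, the order is well-founded. We then always reduce a $\loopr$-instance $\lr$ that is \emph{maximal} for $\lbigger$ among the upmost loops. Concretely, we choose an upmost $\lr$ (no loop above it) whose height $\lht{\lr}$ is minimal, so that it is not $\lbigger$-above anything that the reduction will copy in a harmful way.

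When Construction~\ref{con:singleloop} is applied to such an $\lr$ with seed $\sr$, the subderivation $\deri'$ between $\sr$ and $\lr$ is copied $n$ times via Lemma~\ref{lem:embedding-b}. The subderivation $\derib$ above $\lr$ is replaced by $\hderib$ from Lemma~\ref{lem:unfold}. Because $\lr$ is upmost, $\derib$ and hence $\hderib$ contain no loops, so $\lr$ itself disappears. The loops that get duplicated are exactly those $\lr'$ occurring in $\deri'$ or in the side branches between $\sr$ and $\lr$. The copies of such an $\lr'$ keep the same seed when $\sr'\below\sr$, and get copied seeds when $\sr'$ lies inside $\deri'$.

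The main obstacle is to show that every duplicated loop $\lr'$ has a strictly smaller measure entry than the removed $\lr$. We would do this by a case analysis using the relations $\lonion$ and $\ldom$. If $\sr'$ lies inside the copied region and $\lr'$ also does, then $\lr\lonion$-type nesting forces $\lr\lbigger\lr'$. If $\sr'\below\sr$, then $\lr\ldom\lr'$ or the inverse-case clause of $\lbigger$ applies. This clause was designed precisely so that the reduced loop dominates everything it duplicates, giving $\lht{\lr'}<\lht{\lr}$, or equal height with strictly fewer siblings. The delicate part is the interaction where a copied seed turns one loop into many siblings; there, $\loopsib{\cdot}$ must decrease.

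We expect this verification to need care in the third clause of $\lbigger$. We would first try to prove that $\lbigger$ is acyclic and that reduction never creates new $\lbigger$-edges pointing upward from copies. Once the measure decreases at each step, termination yields a loop-free nice proof, that is, a $\labISnoloop$-proof $\hder$ of $\sq G$.
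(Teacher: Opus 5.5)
Your measure (the multiset of pairs $\tuple{\lht{\lr},\loopsib{\lr}}$ under the multiset ordering) and your overall plan match the paper's: repeatedly apply Construction~\ref{con:singleloop} to an upmost loop, and show that every loop it duplicates has smaller rank. The gap is in the step that carries the argument, namely \emph{which} upmost loop you reduce. Your first phrasing, ``maximal for $\lbigger$ among the upmost loops'', is actually the right criterion. For two upmost loops every clause of $\lbigger$ involving $\lonion$ is vacuous, so among upmost loops $\lr_1\lbigger\lr_2$ holds iff $\lr_1\ldom\lr_2$. The paper correspondingly reduces an upmost $\lr$ that no other upmost loop $\ldom$-dominates. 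But your ``concretely'' choice, an upmost $\lr$ of \emph{minimal} height, is the opposite selection, and it breaks the measure. Take a seed $\sr_2$ with a branching rule above it. On one branch put a seed $\sr_1$ followed by its upmost loop $\lr_1$; on the other branch put an upmost loop $\lr_2$ seeded by $\sr_2$. Then $\lr_1\ldom\lr_2$, hence $\lr_1\lbigger\lr_2$ and $\lht{\lr_1}>\lht{\lr_2}$, so your rule picks $\lr_2$. But the copied region $\deri'$ of $\lr_2$ contains the whole branch through $\sr_1$ and $\lr_1$. Reducing $\lr_2$ therefore trades one entry of height $\lht{\lr_2}$ for $n$ copies of $\lr_1$'s entry, and these have the larger height $\lht{\lr_1}$. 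For $n\ge 2$ the multiset increases. Your stated rationale is also reversed: you need $\lr$ to be $\lbigger$-\emph{above} every non-sibling loop it duplicates.

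The case analysis also misses exactly the case where the selection matters. For $\lr'$ in $\deri'$ with $\sr'\sabove\sr$ you appeal to $\lonion$-nesting. However, $\lr\lonion\lr'$ requires $\lr\sabove\lr'$, so this only covers loops on the branch below $\lr$. When $\lr'$ sits on a side branch of $\deri'$, the paper uses the selection criterion. Such an $\lr'$ cannot be upmost, since it would then $\ldom$-dominate $\lr$. So there is a loop $\lr''$ above $\lr'$, which the paper argues has its seed below $\sr$. This yields $\lr\ldom\lr''\lonion\lr'$, hence $\lr\lbigger\lr'$ by the third clause. The remaining cases are roughly as you indicate. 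If $\sr'\below\sr$, then $\lr\ldom\lr'$, and the second clause applies because nothing is $\lonion$-above an upmost loop. If $\sr'=\sr$, the loops are siblings, and their copies keep the height of $\lr$ but have one sibling fewer. Until you fix the selection and supply the side-branch case, the claimed decrease of the measure is unsupported.
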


\begin{proof}
  First observe that when a $\loopr$-instance $\lr$ is duplicated in Construction~\ref{con:singleloop}, then each copy $\lr_i$ (in $\deri'_2,\ldots\deri'_n$, see Figures~\ref{fig:dialoop-unfolding} and~\ref{fig:liftloop-unfolding}) has the same height as $\lr$. The reason is that the seed-rule $\sr_i$ of each $\lr_i$ is either in $\deri''$, in which case it is not duplicated, or in $\deri'$, in which case the seed-rule is duplicated together with $\lr$. 
  So, we never have a $\loopr$-rule instance in $\deri'_i$ having  its seed-rule in $\deri'_j$, with $i,j \leq n $ and $i\neq j$.

  For a $\loopr$-rule instance $\lr$ in $\deri$, 
  we define its rank $\looprank{\lr}$ to be the lexicographic pair $\tuple{\lht{\lr},\loopsib{\lr}}$, and use as induction measure the multiset of the ranks of all $\loopr$-instances in the derivation (using the standard multiset ordering).

  We apply Construction~\ref{con:singleloop} to an upmost $\loopr$-rule instance $\lr$ such that there is no other upmost $\loopr$-rule instance $\lr'$ with $\lr\ldominv\lr'$. Then for all $\loopr$-rule instances $\lr'$ occurring in $\deri'$ we either have $\lr\lbigger\lr'$ or  $\lr\lsib\lr'$: If $\sr\sabove\sr'$ then  $\lr\ldom\lr'$, and since $\lr$ is upmost, there cannot be a $\lr''\lonion\lr$. If $\sr=\sr'$ then $\lr\lsib\lr'$. And if $\sr\below\sr'$ (that is,  $\lr'\ldom\lr$) then there must be an $\lr''$ such that $\lr'\below\lr''$ (as otherwise would contradict: there is no other upmost $\loopr$-rule instance $\lr'$ with $\lr\ldominv\lr'$ ) but also such that $\sr''\below\sr$ (as otherwise would contradict: $\lr$ being upmost) implying $\lr\ldom\lr''\lonion\lr'$. 

  If $\lr\lbigger\lr'$ then $\lht{\lr}>\lht{\lr'}$. 
  If $\lr\lsib\lr'$ then all copies $\lr'_1,\ldots,\lr'_n$ in $\deri'_1,\ldots,\deri'_n$, respectively, have $\lht{\lr'_i}=\lht{\lr}$. 
  However, $\loopsib{\lr'_i}=\loopsib{\lr'}-1$ has been reduced by one since $\lr$ is no more in the derivation.
  Hence, after eliminating $\lr$ with Construction~\ref{con:singleloop}, all newly introduced $\loopr$-rule instances have lower rank than the $\loopr$-rule $\lr$ which has been eliminated.
  Consequently, we eventually reach a loop-free derivation in $\labISnoloop$.
\end{proof}

\begin{theorem}
  \label{thm:step2}
  If Algorithm~\ref{alg:proofsearch} terminates in Step~\ref{l:ax}, then
  formula~$\fm F$ is a theorem of\/~$\ISfour$.
\end{theorem}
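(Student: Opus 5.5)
The plan is to assemble the soundness chain already prepared in this section. Suppose Algorithm~\ref{alg:proofsearch} terminates in Step~\ref{l:ax}, i.e., all sequents in some $\sqset S_i$ are axiomatic. First I will read off from the run of the algorithm a derivation tree with conclusion $\futs rr,\accs rr,\labelsw rF$. Each passage from $\sqset S_j$ to $\sqset S_{j+1}$ replaces a leaf $\sq G_j$ by the sequents of $\satof{\liftssatof G_j}$. By Lemma~\ref{lem:lifting}, $\sq G_j$ is connected to $\liftssatof G_j$ by a derivation using only $\liftrim$, $\liftrbox$, the $\loopr$-rules and $\shrinkr$. By Lemma~\ref{lem:saturation}, the sequents of $\satof{\liftssatof G_j}$ are the leaves of a derivation built from $\satr$, $\bdiam$ and $\dialoopr$ with conclusion $\liftssatof G_j$. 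The initial set $\sqset S_0=\satof{\sq G(\fm F)}$ is handled the same way. Stacking these pieces gives a derivation in $\labISloops$ whose leaves are exactly the sequents of the final $\sqset S_i$, all axiomatic. Closing each leaf with $\id$ or $\lef\BOT$ turns this into a $\labISloops$-proof, and by the niceness lemma at the start of this section it is nice.

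Next I remove the auxiliary rules. Lemma~\ref{lem:shrinkelim} yields a nice $\labISloop$-proof of the same endsequent. Theorem~\ref{thm:unfold} (Loop Elimination) then gives a $\labISnoloop$-proof $\hder$ of $\futs rr,\accs rr,\labelsw rF$.

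Finally I translate into $\labISfs$ by replacing each macro-rule instance with its derivation. $\satr$ is handled by Lemma~\ref{lem:satr}, $\bdiam$ by Lemma~\ref{lem:bdia}, and $\liftrim$, $\liftrbox$ by Lemma~\ref{prop:liftr}. The hypothesis of Lemma~\ref{prop:liftr} is that the conclusion is \emph{proper}: nice with all clusters singletons. Niceness is inherited, but clusters only ever arise from $\loopr$-premises via $\trans{\interval xy}$. Since $\hder$ contains no $\loopr$-instance and begins at a single label, no non-singleton cluster is ever created: $\Rref$ and $\Rtr$ closure in $\bdiam$ adds edges only towards fresh labels, and the liftings copy the singleton structure. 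So every sequent in $\hder$ is proper, and in particular no cluster $\rig\IMP$-lifting occurs. This yields a $\labISfs$-proof of $\labelsw rF$, with the extra atoms $\futs rr,\accs rr$ removed by one $\Lref$ and one $\Rref$ step at the bottom. By Theorem~\ref{thm:labIKs}, together with renaming of the label, $\fm F$ is a theorem of $\ISfour$.

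The main obstacle is the properness check in the final step: I must argue inductively along $\hder$ that no rule of $\labISnoloop$ creates a non-singleton cluster. A mutual $\rel$-cycle can only come from a loop premise, so this should be routine. The first step also needs care in verifying that the algorithm's bookkeeping literally forms a derivation tree, including identifying each leaf of the final $\sqset S_i$ as closable by an axiom.
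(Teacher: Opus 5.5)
Your proposal is correct and follows essentially the same route as the paper. You assemble a nice $\labISloops$-proof from Lemmas~\ref{lem:saturation} and~\ref{lem:lifting}, then remove $\shrinkr$ and the loop rules via Lemma~\ref{lem:shrinkelim} and Theorem~\ref{thm:unfold}. You observe that all sequents in the resulting $\labISnoloop$-proof are proper, so the macro-rules translate into $\labISfs$, and you finish with one $\Rref$, one $\Lref$ and Theorem~\ref{thm:labIKs}. Your explicit remarks on why no rule of $\labISnoloop$ creates a cluster, and on renaming the root label, are small details the paper leaves implicit.
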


\begin{proof}
  By Lemmas~\ref{lem:saturation} and~\ref{lem:lifting}, there is a derivation in $\labISloops$ of the sequent $\futs rr, \accs rr,\labelsw rF$. By Lemma~\ref{lem:shrinkelim} and Theorem~\ref{thm:unfold}, there is also a proof $\deri$ in $\labISnoloop$ of that sequent. Since no rule in $\labISnoloop$ creates a cluster, all sequents occurring in $\deri$ are proper. Hence, $\deri$ can be translated into a derivation $\deri'$ in $\labISfs$, from which we can (by adding one instance of $\Rref$ and one instance of $\Lref$) obtain a proof in $\labISfs$ of $\labelsw rF$. Hence, by Theorem~\ref{thm:labIKs}, $\fm F$ is a theorem of $\ISfour$.
\end{proof}

\begin{theorem}
  $\ISfour$ is decidable and has the finite model property.
\end{theorem}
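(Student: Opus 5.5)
The plan is to assemble the statement from the three main results of the preceding sections, namely termination, completeness and soundness of Algorithm~\ref{alg:proofsearch}. Given a formula $\fm F$, we run Algorithm~\ref{alg:proofsearch} on $\sq G(\fm F)=\futs rr,\accs rr,\labelsw rF$. By Theorem~\ref{thm:termination}, the run halts after finitely many steps. Each individual step is effective: saturation terminates by Lemma~\ref{lem:sat-term}, full lifting terminates by Lemma~\ref{lem:lifting}, and the side conditions are decidable checks on finite sequents. These checks are happiness, repetition pairs, twins, and layer simulations, where simulations are finite relations between finite layers and embeddings are finite maps, so all of them can be found by exhaustive search. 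The algorithm can only stop at Step~\ref{l:ax} or at Step~4, so it suffices to show that its answer is correct in each case.

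If it stops at Step~\ref{l:ax}, then Theorem~\ref{thm:step2} tells us that $\fm F$ is a theorem of $\ISfour$. This goes via the $\labISloops$-proof, shrink elimination (Lemma~\ref{lem:shrinkelim}) and loop elimination (Theorem~\ref{thm:unfold}), and it even yields an explicit $\labISfs$-proof. If it stops at Step~4, then Theorem~\ref{thm:step4} tells us that $\fm F$ is not a theorem. Hence the algorithm decides theoremhood, which gives decidability.

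For the finite model property, the key observation is that in the Step~4 case the countermodel is $\mmodelof{\sq G_i^\star}$ from Construction~\ref{con:Gstar}. Its worlds are the labels of the finite sequent $\sq G_i$, so it is finite. By Lemma~\ref{lem:counter-happy} and Theorem~\ref{thm:completeness}, it is a birelational model with reflexive and transitive $\rel$ (guaranteed by the $\Rref$ and $\Rtr$ structural happiness), and it falsifies $\fm F$ at $\lb r$. Now take any non-theorem $\fm F$. Termination means the algorithm halts, and soundness (Theorem~\ref{thm:step2}) rules out halting at Step~\ref{l:ax}, so it must halt at Step~4 and produce this finite countermodel. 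Therefore every non-theorem is refuted in a finite $\ISfour$-model. Together with Theorem~\ref{thm:plotkin}, this is exactly the finite model property.

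The only point that needs care is the claim that every step of the algorithm is computable. I expect this to be routine but the least explicitly stated part. The main content of the proof is already in Theorems~\ref{thm:termination}, \ref{thm:step4} and~\ref{thm:step2}, so the proof itself should be short.
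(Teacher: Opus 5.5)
Your proposal is correct and follows essentially the same route as the paper: combine termination (Theorem~\ref{thm:termination}) with Theorems~\ref{thm:step2} and~\ref{thm:step4}, and note that the countermodel $\mmodelof{\sq G_i^\star}$ is finite. Your extra remarks, on why each step is effective and on why a non-theorem must halt at Step~4, only make explicit what the paper leaves implicit.
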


\begin{proof}
  By Theorem~\ref{thm:termination}, our proof search algorithm terminates for every formula $\fm F$, and we either obtain a proof of $\fm F$ (by Theorem~\ref{thm:step2}) or a countermodel for $\fm F$ (by Theorem~\ref{thm:step4}). Furthermore, the model $\mmodelof{\sq G_i^\star}$ constructed in the proof of Theorem~\ref{thm:step4} is finite.
\end{proof}

\section{The case of IK4}
\label{sec:IK4}

The logic $\IKfour$ is obtained fron $\IK$ by adding the $\vax$-axiom, as shown in~\eqref{eq:vax}, i.e., it is $\ISfour$ without the $\tax$-axiom. Then, Theorem~\ref{thm:plotkin} has to be adjusted by removing the reflexivity of $\rrel$:

\begin{theorem}[Completeness~\cite{fischer-servi:84,plotkin:stirling:86}]\label{thm:plotkin-IK4}
  A formula~$\fm A$ is a theorem of\/~$\IKfour$ if and only if $\fm A$~is valid in every birelational frame~$\langle W, \rel, \le \rangle$ where $\rel$~is transitive.
\end{theorem}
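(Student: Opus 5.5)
The plan is to prove the two directions separately: soundness by a direct frame-correspondence check, and completeness by a canonical-model construction over prime theories. The argument follows Theorem~\ref{thm:plotkin}, with reflexivity of $\rel$ dropped throughout.

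\textbf{Soundness.} I would proceed by induction on Hilbert derivations. Rules $\necr$ and $\mpr$ preserve validity in any birelational frame; for $\necr$ this uses that truth at every world implies truth at every $\rel$-successor. Each of $\kax[1]$--$\kax[5]$ is valid in every birelational frame satisfying $(\fone)$ and $(\ftwo)$:
\begin{itemize}
\item $\kax[1]$ and $\kax[5]$ are immediate;
\item $\kax[2]$ and $\kax[3]$ use only the local clause for $\DIA$;
\item $\kax[4]$ is where $(\ftwo)$ is needed, to pull a future of an $\rel$-successor back to an $\rel$-successor of a future.
\end{itemize}
For the $\vax$-axiom, I would argue as follows.
\begin{itemize}
\item $\fm{\BOX A\IMP\BOX\BOX A}$: assume $\futs w{w'}$, $\accs{w'}u$, $\futs u{u'}$ and $\accs{u'}v$. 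By $(\ftwo)$ there is $\lb t$ with $\futs{w'}t$ and $\accs t{u'}$. Transitivity of $\rel$ then gives $\accs tv$, and transitivity of $\le$ gives $\futs wt$, so $\force{\M}vA$.
\item $\fm{\DIA\DIA A\IMP\DIA A}$: this follows from transitivity of $\rel$ alone.
\end{itemize}

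\textbf{Completeness.} I would take $W$ to be the set of prime $\IKfour$-theories: deductively closed, consistent, and with the disjunction property. On $W$, set $\futs\Gamma\Delta$ iff $\Gamma\subseteq\Delta$. Set $\accs\Gamma\Delta$ iff both of the following hold:
\begin{itemize}
\item $\fm{\BOX A}\in\Gamma$ implies $\fm A\in\Delta$;
\item $\fm A\in\Delta$ implies $\fm{\DIA A}\in\Gamma$.
\end{itemize}
Let $V(\Gamma)$ consist of the atoms in $\Gamma$; monotonicity of $V$ is then trivial. Transitivity of $\rel$ follows from $\vax$: push $\fm{\BOX A}$ to $\fm{\BOX\BOX A}$ for the first clause, and collapse $\fm{\DIA\DIA A}$ to $\fm{\DIA A}$ for the second.

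The truth lemma $\force{\M}\Gamma A\iff\fm A\in\Gamma$ is routine for the propositional connectives, using the intuitionistic prime-extension (Lindenbaum) lemma. For the modal cases:
\begin{itemize}
\item \emph{$\DIA$ case.} Given $\fm{\DIA A}\in\Gamma$, I extend $\set{\fm B\mid\fm{\BOX B}\in\Gamma}\cup\set{\fm A}$ to a prime theory. This theory must avoid the set $\set{\fm C\mid\fm{\DIA C}\notin\Gamma}$, which is closed under disjunction by $\kax[3]$; consistency of the extension is secured by $\kax[2]$ and $\kax[5]$.
\item \emph{$\BOX$ case.} Given $\fm{\BOX A}\notin\Gamma$, I first use $\kax[4]$ to find a prime $\Gamma'\supseteq\Gamma$ with $\fm{\BOX A}\notin\Gamma'$. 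I then find a $\Delta$ with $\accs{\Gamma'}\Delta$ and $\fm A\notin\Delta$, by a dual extension argument again driven by $\kax[2]$--$\kax[4]$.
\end{itemize}

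\textbf{Main obstacle.} The hardest step is verifying $(\fone)$ and $(\ftwo)$ for the canonical frame. For $(\ftwo)$, given $\accs\Gamma\Delta\subseteq\Delta'$, I would build $\Gamma'\supseteq\Gamma$ with $\accs{\Gamma'}{\Delta'}$ by prime-extending $\Gamma\cup\set{\fm{\DIA C}\mid\fm C\in\Delta'}$ while avoiding the set $\set{\fm{\BOX B}\mid\fm B\notin\Delta'}$. The consistency argument here needs $\kax[4]$ combined with $\kax[2]$.

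For $(\fone)$, given $\accs\Gamma\Delta$ and $\Gamma\subseteq\Gamma'$, I would prime-extend $\Delta\cup\set{\fm B\mid\fm{\BOX B}\in\Gamma'}$ while avoiding $\set{\fm C\mid\fm{\DIA C}\notin\Gamma'}$.

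If these direct extensions fail, I would fall back on the Plotkin–Stirling device: take pairs of a prime theory with a compatible ``$\DIA$-witness'' set, or close $\rel$ as $\le;\rel;\le$. In the latter case I would re-check that transitivity survives, since it follows from $\vax$ together with $(\fone)$ and $(\ftwo)$.

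Finally, a non-theorem $\fm A$ is excluded from some prime theory, and by the truth lemma that theory is a world of this canonical model falsifying $\fm A$.
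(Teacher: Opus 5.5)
\paragraph{Overall assessment.} Your route is the standard argument behind the results of Fischer Servi and Plotkin--Stirling, which the paper cites without proof: soundness by checking frame conditions, and completeness via the canonical model of prime theories with $\Gamma R\Delta$ iff $\{B\mid \BOX B\in\Gamma\}\subseteq\Delta$ and $\{\DIA C\mid C\in\Delta\}\subseteq\Gamma$. Most of your steps go through:
\begin{itemize}
\item the soundness checks (you should add that the intuitionistic propositional axioms hold by persistence, which is where $\fone$ is used);
\item transitivity of the canonical $R$ from the two conjuncts of $\vax$;
\item the $\DIA$ case of the truth lemma;
\item both direct confluence arguments, so your fallback is unnecessary.
\end{itemize}
For $\ftwo$, note that $\{\BOX B\mid B\notin\Delta'\}$ is not closed under disjunction. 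However, $\BOX B_1\OR\BOX B_2$ derives $\BOX(B_1\OR B_2)$ and $\Delta'$ is prime. Several hypotheses $\DIA C_i$ can be merged into $\DIA(C_1\AND\cdots\AND C_m)$ via $\kax[2]$. Then $\kax[4]$ gives $\BOX(C\IMP B)\in\Gamma$, hence $C\IMP B\in\Delta\subseteq\Delta'$ and $B\in\Delta'$, a contradiction.

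\paragraph{The gap: the $\BOX$ case of the truth lemma.} This step fails in the order you state it. Your first step only produces a prime $\Gamma'\supseteq\Gamma$ with $\BOX A\notin\Gamma'$, which $\Gamma'=\Gamma$ already satisfies. For such a $\Gamma'$ the required $\Delta$ need not exist. If $\BOX(A\OR C)\in\Gamma'$ and $\DIA C\notin\Gamma'$, then every $\Delta$ with $\Gamma' R\Delta$ contains $A\OR C$ but not $C$, hence contains $A$. Such $\Gamma'$ with $\BOX A\notin\Gamma'$ exist, because $\BOX(a\OR c)\IMP(\BOX a\OR\DIA c)$ is not a theorem of $\IKfour$. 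It fails at $w$ in the following model:
\begin{itemize}
\item worlds $w,w',v,v',v''$;
\item $w\le w'$ and $v\le v''$;
\item $wRv$, $w'Rv'$, $w'Rv''$;
\item $a$ true at $v$ and $v''$, and $c$ true at $v'$.
\end{itemize}
Its $R$ is transitive and it satisfies $\fone$ and $\ftwo$.

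\paragraph{The fix.} The real constraint on $\Gamma'$ is that it contain $\DIA C$ for every $C$ in the not-yet-chosen $\Delta$, so reverse the order.
\begin{enumerate}
\item Extend $\{B\mid\BOX B\in\Gamma\}$ to a prime $\Delta$ avoiding $A$. This is possible because that set does not derive $A$; otherwise $\BOX A\in\Gamma$ by $\necr$ and $\kax[1]$.
\item Obtain $\Gamma'$ by prime-extending $\Gamma\cup\{\DIA C\mid C\in\Delta\}$ while avoiding $\{\BOX B\mid B\notin\Delta\}$.
\end{enumerate}
The justification for step 2 is verbatim your $\ftwo$ argument, which only used $\{B\mid\BOX B\in\Gamma\}\subseteq\Delta$. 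This yields $\Gamma\subseteq\Gamma' R\Delta$ with $A\notin\Delta$, and with this repair your proof is complete.
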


Following the work of \cite{mar:mor:str:2021} and the discussion in Section~\ref{sec:proof:system}, it is easy to see that the proof system in Figure~\ref{fig:labIKp} for $\ISfour$ becomes a proof system for $\IKfour$ if we remove the rule $\Rref$. Let us call the resulting proof system~$\labIKfs$.

It is now rather straightforward to adapt our decision procedure for $\ISfour$, presented in Section~\ref{sec:macro-rules}, to the case of $\IKfour$. Most importantly, the macro rules $\bdiam$ and $\liftrbox$ and $\liftrim$ have to change, as the reflexive $\rrel$-atom has to be dropped for every newly created label. In the case of $\bdiam$ in~\eqref{eq:bdiam}, this would be the relation atom $\accs y y$ in the premise. For the same reason, we have to start the proof search with the sequent $\futs rr,\labelsw rF$ instead of $\futs rr, \accs rr,\labelsw rF$.

Another important point is that the suricata label of a layer must never be in a loop. For that reason, in a large part of this paper we make a distinction between singleton clusters and non-singleton clusters. In the case of $\IKfour$, this distinction is more subtle, as a singleton cluster $\set{\lb z}$, can be with or without reflexive loop $\accs z z$. Only if this reflexive loop is absent, we treat the label $\lb z$ as singleton cluster. If $\accs z z$ is in the sequent, then $\set{\lb z}$ is treated as non-singleton cluster. This concerns in particular the macro rule $\liftrim$ in Section~\ref{sec:lifting}. More precisely, if we have $\labelsw{x_0}{ A \IMP B}$ and $\accs{x_0} {x_0}$ is \emph{not} in the sequent, then we use Construction~\ref{def:imp-lifting-singleton}. If $\accs{x_0} {x_0}$ is present in the sequent, then we use Construction~\ref{def:imp-lifting}. And whenever we use Construction~\ref{def:imp-lifting}, we have to drop the relational atom $\accs{\hx_0}{\hx_0}$ in the last clause.

The same applies to the loop conditions for the $\leafloopr$-, $\boxloopr$-, $\imploopr$-, and $\triangleloopr$-rules. We already stated (see Remark~\ref{rem:suricata-in-loop}) that we must not have $\accsq {x'}{s_y}\srel G\lb y$, where $\lb{s_y}$ is the surcata of the topmost layer containing $\lb{x'}$ and $\lb y$. We also have to ensure that $\lb x\in\interval xy$ and $\lb y\notin\interval xy$. For $\IKfour$ we must additionally demand that $\lb{x'}\neq\lb{s_y}$ and $\lb{y}\neq\lb{s_y}$, which automatically follows for $\ISfour$ because of reflexivity. 

Then, termination (Section~\ref{sec:termination}) and completeness (Section~\ref{sec:countermodel}) and soundness (Section~\ref{sec:soundness}) of the modified algorithm follows in the same way as for $\ISfour$. We therefore have:

\begin{theorem}
  $\IKfour$ is decidable and has the finite model property.
\end{theorem}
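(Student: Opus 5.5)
The plan is to obtain the result for $\IKfour$ by running the same three-part argument as for $\ISfour$, namely termination, completeness via countermodels, and soundness via loop elimination. Throughout, $\ISfour$ is replaced by $\IKfour$ as described above: $\labISfs$ becomes $\labIKfs$, and the macro-rules $\bdiam$, $\liftrbox$, $\liftrim$ are modified so that they no longer add reflexive $\rrel$-atoms. Proof search starts from $\futs rr,\labelsw rF$, and a cluster $\set{\lb z}$ counts as non-singleton exactly when $\accs zz$ is present.

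First I check the preliminary lemmas. The modified macro-rules must be derivable in $\labIKfs$; this is the analogue of Lemmas~\ref{lem:satr}, \ref{lem:bdia}, \ref{prop:liftr}, and the proofs are the same with $\Rref$ omitted. The rules must also preserve niceness, where ``structurally happy'' now drops $(\Rref)$. On that point, Proposition~\ref{prop:order_clusters} only needs reflexivity on clusters, and we may define clusters via $\grel G\cap\grel G^{-1}$ together with the singleton classes. Happy sequents still give birelational models with transitive $\rel$ by Theorem~\ref{thm:completeness}, and these are models in the sense of Theorem~\ref{thm:plotkin-IK4}.

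Termination goes through essentially verbatim. Lemmas~\ref{lem:eqlabels}--\ref{lem:finbranch} only count formula-equivalence classes, and Lemmas~\ref{lem:sur-depth}--\ref{lem:Lsize} use Kruskal's theorem together with the existence of repetition pairs. The one thing to check is that the extra side conditions $\lb{x'}\neq\lb{s_y}$ and $\lb y\neq\lb{s_y}$ do not block the repetition pairs produced in those proofs. In both cases of Lemma~\ref{lem:sur-depth} and Lemma~\ref{lem:sur-leaf}, the pairs are taken with relative depth at least $2$, or away from the suricata branch, so the suricata is never one of the two endpoints. Completeness follows exactly as in Construction~\ref{con:Gstar} and Lemma~\ref{lem:counter-happy}: simulations yield $(\fone)$ and $(\ftwo)$, and no reflexivity is ever invoked, so Theorem~\ref{thm:plotkin-IK4} gives a finite countermodel.

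Soundness is the delicate part, and I expect it to be the main obstacle. The Unfolding Lemma~\ref{lem:unfold} and the two Embedding Lemmas have to be redone with irreflexive labels. In the $\lef\BOX$ and $\rig\DIA$ cases of Lemma~\ref{lem:unfold}, the subcase ``$\lb w\in\interval xy$ but not $\accsq wk$'' now also covers $\lb w=\lb k$ when there is no reflexive atom. The index-shifting argument (using $\accs{w_i}{k_{i+1}}$ and weakening away the first copy) handles this uniformly, so I would carry it through there. The second point is the cluster $\rig\IMP$-lifting without $\accs{\hx_0}{\hx_0}$: I must confirm that the suricata is never placed inside an unfolded cluster. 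This is precisely what the added conditions $\lb{x'},\lb y\neq\lb{s_y}$ guarantee, and it ensures that every sequent in the unfolded derivation is proper. Those sequents then translate into $\labIKfs$ as in Theorem~\ref{thm:step2}, which yields decidability and the finite model property.
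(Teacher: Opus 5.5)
Your overall route is the paper's. Section~\ref{sec:IK4} simply reruns termination, completeness and soundness with the modified macro-rules and loop conditions, and your outline does the same. However, the one check you actually carry out in the termination paragraph is wrong as stated.

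In a $\BOX$- or $\IMP$-repetition pair $\tuple{\lb{x'},\lb y}$, Definition~\ref{def:repetition} requires $\lb y$ itself to be the suricata of $\layerof y$. So $\lb y=\lb{s_y}$ always; this is why Lemma~\ref{lem:Rloop} files $\boxloopr$ and $\imploopr$ under $\Sloopr$. Case~1 of Lemma~\ref{lem:sur-depth} uses exactly such pairs $\tuple{\lb{x'},\lb{s_j}}$, with $\lb{x'}$ a future of $\lb{s_i}$. So there the suricata \emph{is} an endpoint. If you impose $\lb y\neq\lb{s_y}$ on all four loop rules, as your termination check does, then $\boxloopr$ and $\imploopr$ can never fire. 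Step~\ref{l:imploop} of Algorithm~\ref{alg:liftsat} becomes dead, and case~1 of Lemma~\ref{lem:sur-depth} loses its contradiction.

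The extra conditions $\lb{x'}\neq\lb{s_y}$, $\lb y\neq\lb{s_y}$ are the $\IKfour$ replacement for ``not $\lb{x'}\srel G\lb{s_y}\srel G\lb y$''. That is a hypothesis only of the leaf and triangle pairs, i.e.\ the $\Rloopr$-type rules (cf.\ Remark~\ref{rem:suricata-in-loop}). For the $\Sloopr$-type rules you only need $\lb{s_y}\notin\interval{x'}{y}$. This is automatic, because the interval excludes $\lb y$ and $\lb{x'}\notin\clusterof y$. Your relative-depth observation does give $\lb{x'}\neq\lb{s_j}$ in that case, and that is the only part that matters.

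A second modification is missing from your list, though the paper states it. In $\IKfour$ the interval $\interval{x'}{y}$ must be redefined so that it contains $\lb{x'}$, because $\accs{x'}{x'}$ need not be present. Otherwise, in $\dialoopr$, $\leafloopr$ and $\triangleloopr$ the substitution of $\lb y$ by $\lb{x'}$ turns $\accs vy$ into $\accs v{x'}$ as soon as some $\lb v$ strictly between them has $\accs vy$. Since also $\accs{x'}v$, the premise is not $\rel$-transitive without $\accs{x'}{x'}$. Niceness then fails, and the niceness lemmas are what termination, the countermodel construction and the Unfolding Lemma all rest on. With these two corrections your sketch matches the paper's argument.
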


\section{Conclusions}
\label{sec:conclusions}

We introduced a decision algorithm for logics $\ISfour$ and $\IKfour$, based on a labelled calculus for the logics, from~\cite{mar:mor:str:2021}. As usual, invertibility of all the rules guarantee that a countermodel can be `read off' one sequent on which proof search fails. 
Usual proof search algorithms, however, might include complex procedures to extract a countermodel from such a failed sequent. 
The main novelty of our approach is the introduction of loop rules, which introduce clusters into our sequents. These rules, which are not, in general, sound, allow to `shrink' a layer, by encoding some information that is safe to repeat within a cluster, while maintaining  the desirable structural properties of the sequent, such as structural saturation. 
This ensures that a countermodel can be immediately extracted from a failed sequent. 
However, some post-possessing needs to be done to transform a proof with loop rules (and clusters) into a proof without loop rules (and clusters). The process of loop reduction does exactly this --- even though the size of the resulting proof can explode, we can safely eliminate the unsound rules from the proof. 

In future work, we wish to further refine and streamline the presentation of this draft, and submit our contribution to a journal. We also plan to formalize our proof in a proof assistant, such as Rocq or Lean. 



	\bibliographystyle{plain}
	\bibliography{references}

\end{document}